\documentclass[aos,preprint]{imsart}

\RequirePackage[OT1]{fontenc}

\RequirePackage{latexsym,amsmath,amsthm,amsfonts,mathtools,multirow,caption,float,graphicx,mathrsfs,algorithm,subcaption,hyperref,url,color}

\RequirePackage[authoryear]{natbib}
\RequirePackage[psamsfonts]{amssymb}
\usepackage[noend]{algpseudocode}

% settings
%\pubyear{2005}
%\volume{0}
%\issue{0}
%\firstpage{1}
%\lastpage{8}
\arxiv{arXiv:0000.0000}

\startlocaldefs
\numberwithin{equation}{section}
\newtheorem{conj}{Conjecture}
\newtheorem{thm}[conj]{Theorem}
\newtheorem{cor}[conj]{Corollary}
\newtheorem{prop}[conj]{Proposition}
\newtheorem{lemma}[conj]{Lemma}

\def\eps{\varepsilon}
\def\wh{\widehat}

\def\to{\rightarrow}

\def\EE{\mathbb{E} }
\def\PP{\mathbb{P} }
\def\RR{\mathbb{R}}
\def\E{\mathcal{E}}
\def\C{\mathcal{C}}
\def\Z{\mathcal{Z}}
\def\iff{\Longleftrightarrow}
\def\wt{\widetilde}
\def\rs{\!\!\!}
\DeclareMathOperator*{\argmin}{arg\,min}
\DeclarePairedDelimiter{\ceil}{\lceil}{\rceil}
\DeclarePairedDelimiter{\floor}{\lfloor}{\rfloor}
\def\qm{q\wedge m}
\makeatletter
\def\BState{\State\hskip-\ALG@thistlm}
\makeatother

\captionsetup{font={footnotesize},skip=5pt,width=0.75\textwidth}
\endlocaldefs

\begin{document}
	
	\begin{frontmatter}
		\title{Adaptive estimation of the rank of the coefficient matrix in high dimensional multivariate response regression models}
		\runtitle{Adaptive estimation of the reduced rank}
		
		\begin{aug}
			\author{\fnms{Xin} \snm{Bing}\corref{} \ead[label=e1]{xb43@cornell.edu}}
			\and
			\author{\fnms{Marten H.} \snm{Wegkamp}	\ead[label=e2]{marten.wegkamp@cornell.edu}}
			
			\runauthor{X. Bing and M.H. Wegkamp}
			
			\affiliation{Cornell University}
			
			\address{
				Xin Bing\\
				Department of Statistical Science\\ 
				Cornell University\\
				301D Malott Hall\\
				Ithaca, New York 14853-3801\\
				United States of America\\
				\printead{e1}}
			
			\address{
				Marten H. Wegkamp\\
				Department of Mathematics \&\\
				Department of Statistical Science\\
				Cornell University\\
				432 Malott Hall\\ 
				Ithaca, New York 14853-3801\\
				United States of America\\
				\printead{e2}}
		\end{aug}
		
		\begin{abstract}
			We consider the multivariate response regression problem with a regression coefficient matrix of  low, unknown rank. In this setting, we analyze a new criterion for selecting the optimal reduced rank. This criterion differs notably from the one proposed in \cite{BSW} in that it does not require  estimation of the unknown variance of the noise, nor does it depend on a delicate choice of a tuning parameter. 
			We develop an iterative, fully data-driven procedure, that adapts to the optimal signal-to-noise ratio. This procedure finds the true rank in a few steps with overwhelming probability.
			At each step, our estimate increases, while at the same time  it does not exceed the true rank. Our finite sample results  hold for any sample size and any dimension, even when the number of responses and of covariates grow much faster than the number of observations. We perform an extensive simulation study that confirms our theoretical findings. The new  method performs better and is more stable than the procedure of   \cite{BSW} in both low-  and high-dimensional settings.		\end{abstract}
		\begin{keyword}[class=MSC]
			\kwd[Primary ]{62H15}
			\kwd{62J07}
		\end{keyword}
		
		\begin{keyword}
			\kwd{Multivariate response regression}
			\kwd{reduced rank estimator}
			\kwd{self-tuning}
			\kwd{adaptive rank estimation}
			\kwd{rank consistency}
			\kwd{dimension reduction}
			\kwd{oracle inequalities}
		\end{keyword}
	\end{frontmatter}
	
	\section{Introduction}
	\subsection{Background}
	We study the multivariate response regression model
	\[ Y = XA+E \in\RR^{n\times m}\]
	with $X\in \RR^{n\times p}$ of   $\text{rank}(X)=q$ and $A\in \RR^{p\times m}$ of unknown $\text{rank}(A)=r$.
	We assume that the entries $E_{ij}$ of  $E$ are i.i.d. $N(0,\sigma^2)$ distributed with $\sigma^2<\infty$.
	Section \ref{sec_extension} discusses extensions to general, heavy tailed distributions of the errors $E_{ij}$. 
	
	Standard least squares estimation is tantamount to regressing each response on the predictors separately, thus  ignoring the multivariate nature of the possibly correlated responses. %, see, for instance,  Izenman \cite{i08} for a discussion of this phenomenon.
	In large dimensional settings ($m$ and $p$ are large relative to the sample size $n$), it is desirable to achieve a dimension reduction  in the coefficient matrix $A$.
	One popular way of achieving this goal, is to find a common subset of $s\le p$ covariates that are relevant for prediction, using 
	penalized least squares with a $\ell_1/\ell_2$ (group lasso) type penalty on the regression coefficients, see, for instance, \cite{BSW2012, bg,lounici, ob,yuanlin}
	to recover the support of the set of $s$ rows for which $A$ is non-zero. 
	
	Reduced rank regression is a different approach to achieve necessary dimension reduction.	The main premise is that $A$ has low rank, so that we can write $A=A_1 A_2$ for a $p\times r$ matrix $A_1$ and a $r\times m$ matrix $A_2$. Then only few linear combinations $X^*= XA_1$ of $X$ are needed to explain the variation of $Y$. \cite{i75} coined  the term {\em reduced-rank regression} for this class of models, but its history  dates back to  \cite{a51}. 
	There are many works on this topic in the classical setting of {\em fixed} dimensions $m$ and $p$, and sample size $n\to\infty$, see \cite{a99,r78,r73,r74} and more recently, \cite{a02}. A comprehensive overview on reduced rank regression is given by \cite{rv98}.
	Only recently, the high-dimensional case has been discussed: \cite{BSW,BSW2012,Giraud,GiraudBook,nw,rt}.
	
	The main topic of this paper is the estimation of the unknown rank. Determination of the rank of the coefficient matrix is the first key step for the estimation of $A$. 
	For known rank $r$, \cite{a99} derives the asymptotic distribution of the reduced rank regression coefficient matrix estimator $\wh A_r$
	in the asymptotic setting with $m,p$ fixed and $n\to\infty$. The estimator $\wh A_k$ is the matrix corresponding to minimizing the squared Frobenius or Hilbert-Schmidt norm  $\| Y - XB\|^2$
	over all $p\times m$ matrices $B$ of rank $k$ and has a closed form, due to the Eckart-Young theorem \citep{EckartY,s}.
	It is crucial to have the true rank $k=r$ for obtaining a good fit for both $\| XA - X \wh A_k\|^2$ and $\| A- \wh A_k\|^2$.
	%{\color{blue} Deleted figure.} 
	In general, however, the rank $r$ is unknown {\em a priori.}
	The classical approach to estimate the  rank $r$ uses the likelihood ratio test, see \cite{a51}. 
	An elementary calculation shows that this statistic  coincides with 
	Bartlett's test statistic as a consequence of  the relation between reduced rank regression and canonical correlation analysis, see \cite{a51,rv}.
	Our main goal in this study is to 
	develop a non-asymptotic method to estimate $r$ that is easy to compute,  adaptively from the data, and valid for any values of $m, n$ and $p$, especially when the number of predictors $p$ and the number of responses $m$ are large.  The resulting estimator of $A$ can then be used  to construct a possibly  much smaller number of new transformed predictors, or  the most important canonical variables based on the original $X$ and $Y$, see \cite[Chapter 6]{i08}
	for a  historical account.
	Under weak assumptions on the signal, our estimate of $r$ can be shown to be equal to $r$ with overwhelming probability, to wit, $1-\exp(-\theta_1 mn)-\exp(-\theta_2(m+q))$, for some positive, finite constants $\theta_1,\theta_2$, so that the selection error is small compared to the overall error in estimating $A$.
	
	\subsection{Recent developments}
	
	\cite{BSW,BSW2012} proposed
	\begin{eqnarray}
	\label{een}
	\min_A \left\{ \| Y-XA\| ^2 + \mu\cdot \text{rank}(A) \right\}
	\end{eqnarray}
	and recommended the choice $\mu=C(\sqrt{m}+\sqrt{q})^2\sigma^2$ with constant $C>1$ for the tuning parameter $\mu$. %% Here $\| \cdot \| $ is the Frobenius norm.
	In particular, 
	\cite{BSW}  established a convenient closed form for the rank$(\wh A)= \wh r$ of the matrix $\wh A$
	that minimizes criterion (\ref{een}). They gave  sufficient conditions on the level of the smallest non-zero singular value of $XA$ to guarantee  that $\wh r$  consistently estimates
	$r $. 
	The disadvantage of this method is that a value for $\sigma^2$, in addition to the tuning parameter $C$,  is required for $\mu$. \cite{BSW} proposed to use the unbiased estimator 
	\begin{equation}\label{sigmaTilde}
	\widetilde{\sigma}^2 := \frac{\|Y-PY\|^2}{nm-qm}
	\end{equation} based on the projection $PY$ of $Y$ onto the range space of $X$.
	However, this becomes problematic when $(n-q)m$ is not large enough, or even infeasible when $n=q$.	\cite{Giraud} introduces another estimation scheme,  that does not require estimation of $\sigma^2$.   Unfortunately,  a closed form for the minimizer as in \cite{BSW} is lacking, and rank consistency in fact fails, as our simulations reveal in Appendix F.2 %\ref{sec_sim_KF} 
	in the supplementary materials. Moreover,
	the procedures in both \cite{BSW} and \cite{Giraud} are rather sensitive to the choices of their respective tuning parameters involved. We emphasize that  \cite{Giraud} studies the  error  $\|X\wh A - XA\|^2$ and  not the rank of his estimator $\wh A$.

	\subsection{Proposed Research}
	This paper studies a third criterion, 
	\begin{eqnarray}
	\label{vier} 
	\wh{ \sigma}_k^2:=
	\frac{ \| Y- (PY)_k \| ^2}{ nm -\lambda k} .
	\end{eqnarray}
	Here $(PY)_k= \sum_{j=1}^k d_j(PY) u_j v_j^T$ 
	is the truncated singular value decomposition  of $PY$
	based on the (decreasing) singular values $d_j(PY)$  of the projection $PY$ and their corresponding singular vectors $u_j, v_j$.
	%It was	suggested by the second author in a private conversation to Christophe Giraud, see \cite{GiraudBook}. 	
	The range over which we minimize (\ref{vier}) is $\{0,1,\ldots, K \}$ with
	$ K= K_\lambda:=\floor{(nm-1)/\lambda} \wedge q\wedge m$ to avoid a non-positive denominator. 
	The purpose of this paper is to show that this new criterion produces a consistent estimator of the rank. It turns out that the choice of the optimal tuning parameter $\lambda$ involves a delicate trade-off. On the one hand, $\lambda$  should be large enough to prevent overfitting, that is, prevent selecting a rank that is larger than the true rank $r$. On the other hand, if one takes $\lambda$  too large, the selected rank will typically be smaller than $r$ as the procedure will not be able to distinguish the unknown singular values $d_j (XA)$ from the noise for $j>s$, for some $s=s(\lambda)<r$. 
	To effectively deal with this situation, we refine our initial procedure using  our new criterion (\ref{vier}) by an iterative procedure in Section \ref{sec_SRS} that provenly finds the optimal value of $\lambda$ and  consequently of the  estimate of $r$. This method does not require any data-splitting and our simulations show that it is very stable, even for general, heavy tailed error distributions. To our knowledge, it is rare feat to have a feasible algorithm that finds the optimal tuning parameter in a fully data-driven way, without data-splitting, and with mathematically proven guarantees.\\

	While our main interest is to  provide consistent estimators of the rank, 
	we briefly address estimation of the mean $XA$, which
	is the principal problem in \cite{BSW,BSW2012,Giraud}.   %%%%%,GiraudBook}.
	% and we refer to these works 
	%We refer to \cite{BSW,BSW2012,Giraud,GiraudBook}
	%for a discussion on the minimax rate for estimation of the mean $XA$. 
	%In particular, they derive that the minimax rate is 
	%of order  $r(m+q)$, which is proportional to the number of parameters in the low rank model. 
	Our
	selected rank $\wh k$ automatically yields the estimate  $X\wh A:= (PY)_{\wh k}$. 
	%We present in
	%Appendix D of the supplement   oracle inequalities for  $\|  X\wh A-XA\|^2 $, which  establish that	 we can estimate $XA$ at
	%the optimal minimax rate. 
	We prove in Theorem 10 %\ref{thm: oracle r given}
	in Appendix D of the supplement
	that  on the event $\wh k=r$,  the inequality
	$\| X\wh A- XA\|^2 \le 4 r d_1^2(PE)$ holds, for our selected rank $\wh k$. This provides a direct link between rank consistency and optimal estimation of the mean, because  $ d_1^2(PE) \le 2   (m+q)\sigma^2 $ with overwhelming probability, see (\ref{d_0}) and (\ref{d_1}) below.
	(In fact, this bound continues to hold, up to a multiplicative constant,  for  subGaussian errors, using Theorem 5.39 of \cite{rv_rand_mat}.)
	Hence we can estimate $XA$ at the rate $r(m+q)$, which is  proportional to the number of parameters in the low rank model and
	minimax optimal \citep{BSW,BSW2012,Giraud}.
	Simulations in Appendix F.3 of the supplement show that our procedures  in fact provide better estimates of $XA$ than their competitors, even  in approximately low-rank models.\\

	%	  While our main interest is to  provide consistent estimators of the rank, 
	%	we briefly address estimation of the mean $XA$ as our
	%	 selected rank $\wh k$ automatically yields the estimate  $X\wh A:= (PY)_{\wh k}$.
	%	   Theorem \ref{thm: oracle r given} states that the inequality
	%	 $\| X\wh A- XA\|^2 \le 4 r d_1(PE)$ holds on the event $\wh k=r$, for our selected rank $\wh k$.
	%	 Since we can bound $r d_1(PE) \le C r (m+q) \sigma^2$ for some constant $C$, with overwhelming probability, provided the error distribution is subGaussian \citep[Theorem 5.39]{rv_rand_mat},
	%	 rank consistency implies  we can estimate $XA$ at
	%	 the optimal minimax rate $r(m+q)$. We refer to \cite{BSW,BSW2012,Giraud,GiraudBook} for a discussion on the minimax rate for estimation of the mean $XA$.	 
	%	We present in
	%	Appendix D of the supplement   oracle inequalities for  $\|  X\wh A-XA\|^2 $, which  establish that 
	%	 we can estimate $XA$ at
	%	 the optimal minimax rate. 
	%	 Specifically, we  show in Theorem \ref{thm: oracle r given} the inequality
	%	 $\| X\wh A- XA\|^2 \le 4 r d_1(PE)$ holds on the event $\wh k=r$, for our selected rank $\wh k$.
	%	 Since we can further show that $r d_1(PE) \le r (m+q) \sigma^2$ with overwhelming probability
	%	 Simulations in Appendix F.3 of the supplement show that our procedures  in fact provide better estimates of $XA$ than their competitors, even  in approximately low-rank models.\\
	
	The paper is organized as follows. Section \ref{sec_rank} shows that the minimizer of (\ref{vier}) has a closed form. The main results are discussed in Sections \ref{sec_GRS} and \ref{sec_SRS}. It obtains rank consistency in case of no signal ($XA=0$) and in case of sufficient signal. For the latter, we develop a key notion of signal-to-noise ratio that is required for rank consistency. A sufficient, easily interpretable condition will be presented that corresponds to a computable value (estimate) of the tuning parameter $\lambda$. We develop in Section \ref{sec_SRS} an iterative, fully automated procedure, which has a guaranteed recovery of the true rank (with overwhelming probability) under \emph{increasingly milder} conditions on the signal. The first step uses the potentially suboptimal estimate $\wh k_0$ developed in Section \ref{sec_GRS}, but which  is less than $r$, with overwhelming probability. This value $\wh k_0$ is used to update the tuning parameter $\lambda$. Then we minimize (\ref{vier}) again, and obtain  a new estimate $\wh k_1$, which in turn is used to update $\lambda$. The procedure produces each time a smaller $\lambda$, thereby selecting a larger rank $k$ than the previous one, whilst each time we can guarantee that the selected rank doesn't exceed the true rank $r$. This is a major mathematical challenge and its proof relies on highly non-trivial monotonicity arguments. The procedure stops when the selected rank does not change after an iteration.   Our results hold with high probability (exponential in $mn$ and $m+q$) which translates into extremely   accurate estimates under a weak signal condition. 
	
	Section \ref{sec_extension}  describes several extensions of the developed theory, allowing for non-Gaussian errors $E_{ij}$.
	
	A large simulation study is reported in Section \ref{sec_sim}.  It confirms our theoretical findings, and shows that our method improves upon  the methods proposed in  \cite{BSW}.
	
	The  proofs are deferred to \ref{suppA}.  
	\subsection{Notation}
	
	For any matrix $A$, we will use $A_{k\ell}$ to denote the $k,\ell$th element of $A$ (i.e., the entry on the $k$th row and $\ell$th column of $A$), and we write $d_1(A)\ge d_2(A) \ge \cdots$ to denote its ordered singular values.
	
	The Frobenius or Hilbert-Schmidt inner product $\left<\cdot,\cdot\right>$ on the space of matrices is defined as $\left<A,B\right>=\text{tr}(A^T B)$ for commensurate  matrices $A,B$.  The corresponding  norm is denoted by $\| \cdot \|^2$, and we recall that
	$\| A\|^2 = \text{tr}(A^TA)=\sum_j d_j^2(A)$ for any matrix $A$. 
	Moreover, it is known (\cite{s,EckartY}, see also the review by \cite{stew}) that minimizing 
	$ \| A- B\|^2$ over $B$ with $ r(B)\le r$ is achieved for $B=(A)_r=UD_rV^T$ based on the singular value decomposition of $A=UDV^T$ where $D_r$ denotes the diagonal matrix with $[D]_{ii}=d_i(A)$ for $i=1,\dots,r$. Hence,  $  \min_{B:\ r(B)=r} \| A- B\|^2 = \sum_{j>r} d_j^2(A)$.
	
	For other norms on matrices, we use $\|\cdot\|_2$ to denote the operator norm and $\|\cdot\|_*$ the nuclear norm (i.e., the sum of singular values). We have the inequalities $<A,B>=\text{tr}(A^TB) \le \| A\|_2 \| B\|_{*}$ and $\| A\|_{*} \le \sqrt{ \text{rank}(A) } \| A\|$.
	
	For two positive sequences $a_n$ and $b_n$, we denote by $a_n = O(b_n)$ if there exists constant $C>0$ such that $\lim_{n\to \infty}a_n/b_n \le C$. If $\lim_{n\to \infty}a_n/b_n\to 0$, we write $a_n = o(b_n)$.
	
	For general $m\times n$ matrices $A$ and $B$, Weyl's inequality \citep{Weyl} implies that
	$d_{i+j-1}(A+B) \le d_i(A)+ d_j(B)$ for $1\le i,j,\le q$ and $i+j\le q+1$ with $q=\min\{m,n\}$.
	
	We denote the projection matrix onto the column space of $X$ by $P$ and we write $q:=\text{rank}(X)$ and $N:=\text{rank}(PY) = \qm$. We set
	$\wh \sigma^2_0 := \| Y\| ^2 / (nm)$, by defining $ (PY) _0:=0$ and define $\wh \sigma^2 :=  \|E\|^2/(nm)$. Throughout the paper, we use $A$ to denote the true coefficient matrix and $r$ to denote its true rank.
	
	\section{Properties of the minimizer of the new criterion}\label{sec_rank}
	At first glance, it seems difficult to describe the  minimizer $\wh k$ of $k\mapsto \wh\sigma_k^2$ because
	{\em both} the numerator and denominator in $\wh\sigma_k^2$ are decreasing in $k$. However, it turns out that there is a unique minimizer with a neat explicit formula.
	First we characterize the comparison between $\wh \sigma_i$ and $\wh\sigma_j$ for $i\ne j$. 
	
	\begin{prop}\label{prop1}
		Let $i,j\in\{ 0,1,\ldots,K\}$  with $ i<j$. Then
		\begin{eqnarray}
		\label{crit1}
		\wh\sigma^2_j \le \wh \sigma_i^2\quad \iff\quad \frac{ 1}{j-i} {\sum_{k=i+1}^j d_k^2(PY) }\ge \lambda \wh\sigma_j^2.
		\end{eqnarray}
		In particular,
		\begin{eqnarray}
		\label{crit2}
		\wh\sigma^2_j \le \wh \sigma_{j-1}^2 \quad\iff\quad   d_j^2(PY) \ge \lambda \wh\sigma_j^2,
		\end{eqnarray}
		and
		\begin{eqnarray}
		\label{crit5}
		d_j^2(PY)  \le \lambda \wh \sigma_j^2 \quad\iff\quad d_j^2(PY) \le \lambda \wh\sigma^2_{j-1}.
		\end{eqnarray}
	\end{prop}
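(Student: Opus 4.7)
The plan is to reduce everything to a single algebraic identity for the numerator $\|Y-(PY)_k\|^2$ and then shuffle inequalities. The key observation is orthogonality: since $Y-PY$ lies in the orthogonal complement of the column space of $X$, while $PY-(PY)_k$ lies in that column space, the two are orthogonal. Hence
\[
\|Y-(PY)_k\|^2 \;=\; \|Y-PY\|^2 \;+\; \|PY-(PY)_k\|^2 \;=\; \|Y-PY\|^2 \;+\; \sum_{\ell > k} d_\ell^2(PY),
\]
which in particular gives, for $i<j$,
\[
\|Y-(PY)_i\|^2 \;=\; \|Y-(PY)_j\|^2 \;+\; \sum_{\ell=i+1}^{j} d_\ell^2(PY).
\]

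To prove (\ref{crit1}), I would multiply through by the (strictly positive, since $j\le K$) denominators $nm-\lambda i$ and $nm-\lambda j$, so that
\[
\wh\sigma_j^2 \le \wh\sigma_i^2 \;\iff\; (nm-\lambda i)\|Y-(PY)_j\|^2 \le (nm-\lambda j)\|Y-(PY)_i\|^2.
\]
Substituting the identity above and cancelling $(nm-\lambda j)\|Y-(PY)_j\|^2$ from both sides leaves exactly
\[
\lambda(j-i)\|Y-(PY)_j\|^2 \;\le\; (nm-\lambda j)\sum_{\ell=i+1}^{j} d_\ell^2(PY),
\]
which, after dividing by $(nm-\lambda j)(j-i)$, is precisely $\lambda\wh\sigma_j^2 \le (j-i)^{-1}\sum_{\ell=i+1}^j d_\ell^2(PY)$. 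Specializing to $i=j-1$ immediately yields (\ref{crit2}).

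For (\ref{crit5}), the cleanest route is to derive the ``dual'' form of (\ref{crit2}) in which the threshold $\lambda\wh\sigma_j^2$ is replaced by $\lambda\wh\sigma_{j-1}^2$. Starting again from
\[
(nm-\lambda j)\wh\sigma_j^2 \;=\; (nm-\lambda(j-1))\wh\sigma_{j-1}^2 \;-\; d_j^2(PY),
\]
rewriting $nm-\lambda(j-1)=(nm-\lambda j)+\lambda$ and rearranging gives
\[
(nm-\lambda j)\bigl(\wh\sigma_j^2-\wh\sigma_{j-1}^2\bigr) \;=\; \lambda\wh\sigma_{j-1}^2 \;-\; d_j^2(PY).
\]
Since $nm-\lambda j>0$, the sign of $\wh\sigma_j^2-\wh\sigma_{j-1}^2$ equals the sign of $\lambda\wh\sigma_{j-1}^2-d_j^2(PY)$. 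Combining this equivalence with (\ref{crit2}), which characterizes the sign of $\wh\sigma_j^2-\wh\sigma_{j-1}^2$ through $\lambda\wh\sigma_j^2-d_j^2(PY)$, forces $d_j^2(PY)\le\lambda\wh\sigma_j^2$ and $d_j^2(PY)\le\lambda\wh\sigma_{j-1}^2$ to be logically equivalent, establishing (\ref{crit5}).

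The only subtle point is keeping the strict/non-strict versions of the inequalities straight; because all the manipulations are linear with a strictly positive multiplier $nm-\lambda j$, the equivalences preserve equality as well as strict inequality, so no casework is actually required. There is no real analytic obstacle here—the whole argument is bookkeeping built on the orthogonal decomposition of $Y-(PY)_k$.
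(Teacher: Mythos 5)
Your proof is correct and follows essentially the same route as the paper's: both rest on the Pythagoras identity $\|Y-(PY)_i\|^2=\|Y-(PY)_j\|^2+\sum_{\ell=i+1}^{j}d_\ell^2(PY)$ and positive-denominator manipulations, with your explicit cross-multiplication simply replacing the paper's use of the mediant fact $(a+b)/(c+d)\ge a/c \iff b/d\ge a/c$. Your derivation of (\ref{crit5}) via the sign identity $(nm-\lambda j)\bigl(\wh\sigma_j^2-\wh\sigma_{j-1}^2\bigr)=\lambda\wh\sigma_{j-1}^2-d_j^2(PY)$ combined with (\ref{crit2}) is a minor rearrangement of the same computation and is equally valid.
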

	
	This result and the monotonicity of the singular values $ d_1(PY) \ge d_2(PY) \ge \cdots $ readily yield the following statement.
	
	\begin{prop}\label{prop2}
		Let $k\in\{1,\ldots,K\}$. Then
		\begin{eqnarray}
		\label{crit3}
		\wh\sigma^2_k \le \wh \sigma_{k-1}^2  &\iff&  \wh\sigma^2 _k \le \wh\sigma_\ell^2 \quad \qquad\text{for all } \ell \le  k-1.\\
		\label{crit4}
		\wh\sigma^2_k \ge \wh \sigma_{k-1}^2 & \iff&   \wh\sigma_\ell^2 \ge  \wh\sigma_{k-1}^2 \qquad\text{for all } \ell> k-1.
		\end{eqnarray}
	\end{prop}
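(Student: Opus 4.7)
The plan is to derive both equivalences from Proposition 1 together with the monotonicity $d_1(PY)\ge d_2(PY)\ge\cdots$ of the singular values. In each case the reverse direction ($\Leftarrow$) is immediate by choosing $\ell=k-1$ in (crit3) and $\ell=k$ in (crit4). The substance lies in the forward directions. I will also use, without comment, that each biconditional in Proposition 1 with ``$\le$'' on both sides also holds with ``$\ge$'' on both sides (since negating each inequality strictly must preserve the equivalence on a totally ordered set).

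For (crit3), I would apply (crit1) with $i=\ell$ and $j=k$, reducing $\wh\sigma_k^2\le\wh\sigma_\ell^2$ to the statement
\[
\frac{1}{k-\ell}\sum_{j=\ell+1}^{k}d_j^2(PY)\ \ge\ \lambda\,\wh\sigma_k^2.
\]
The hypothesis $\wh\sigma_k^2\le\wh\sigma_{k-1}^2$ combined with (crit2) gives $d_k^2(PY)\ge\lambda\wh\sigma_k^2$, and monotonicity of the singular values gives $d_j^2(PY)\ge d_k^2(PY)$ for $j\le k$. Averaging yields the required inequality, and (crit3) follows.

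For (crit4) the analogous one-shot attempt is circular: (crit1) reduces $\wh\sigma_\ell^2\ge\wh\sigma_{k-1}^2$ to $\frac{1}{\ell-k+1}\sum_{j=k}^\ell d_j^2(PY)\le \lambda\wh\sigma_\ell^2$, but the natural bound coming from the hypothesis only controls the average by $\lambda\wh\sigma_{k-1}^2$, which is exactly what one is trying to compare with $\lambda\wh\sigma_\ell^2$. To break this loop I will instead prove by induction on $\ell\ge k$ the stronger statement that $\wh\sigma_\ell^2\ge \wh\sigma_{\ell-1}^2$. The base case $\ell=k$ is the hypothesis. For the inductive step, (crit2) in its ``$\ge$'' form gives $d_\ell^2(PY)\le\lambda\wh\sigma_\ell^2$; then monotonicity provides $d_{\ell+1}^2(PY)\le d_\ell^2(PY)\le \lambda\wh\sigma_\ell^2$, and one more application of (crit5) and (crit2) at index $\ell+1$ concludes $\wh\sigma_{\ell+1}^2\ge\wh\sigma_\ell^2$. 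Chaining the resulting non-decreasing inequalities from index $k-1$ upward delivers $\wh\sigma_\ell^2\ge \wh\sigma_{k-1}^2$ for every $\ell\ge k$.

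The main obstacle, then, is the asymmetry between the two equivalences: (crit3) admits a clean direct proof via (crit1), while (crit4) requires the inductive argument above to sidestep the apparent circularity. Once this is isolated, the proof is essentially a careful bookkeeping of Proposition 1 combined with monotonicity of $\{d_j(PY)\}$.
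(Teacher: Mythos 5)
Your proof is correct and follows essentially the same route as the paper: the forward direction of (\ref{crit3}) via averaging $d_j^2(PY)\ge d_k^2(PY)\ge\lambda\wh\sigma_k^2$ and invoking (\ref{crit1}), and the forward direction of (\ref{crit4}) by the inductive propagation $d_\ell^2(PY)\le\lambda\wh\sigma_\ell^2\Rightarrow d_{\ell+1}^2(PY)\le\lambda\wh\sigma_\ell^2\iff d_{\ell+1}^2(PY)\le\lambda\wh\sigma_{\ell+1}^2$, which is exactly the paper's ``repeating the same reasoning'' step. The only cosmetic caveat is your justification of the ``$\ge$'' forms of Proposition \ref{prop1}: these do hold, but because the chain of equivalences in its proof is valid for strict inequalities as well (so one negates the strict version), not because a single biconditional between non-strict inequalities can be reversed --- the paper uses these forms tacitly in the same way.
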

	
	It is clear that if $\wh \sigma_1^2 \ge \wh \sigma_0^2$, then $k=0$ minimizes $\wh\sigma_k^2$. Likewise, if $\wh\sigma_K^2 \le \wh\sigma_{K-1}^2$, then $k=K $ minimizes the criterion $\wh\sigma^2_k$.
	After a little reflexion, we see that $\wh \sigma^2_k$ is minimized at  the last $k$ for which $d_k^2(PY) \ge \lambda \wh\sigma_k^2$ holds.
	That is,
	\begin{equation}\label{hat k0}
	\wh k = \max\left\{  0\le k\le K\! : d_j^2(PY) \ge \lambda \wh \sigma_j^2\ \text{\rm for all } j\le k\ \text{\rm and } d_{k+1}^2(PY) < \lambda \wh \sigma_{k+1}^2 \right\}
	\end{equation}
	minimizes $\wh\sigma_k^2$ with the convention that the maximum of the empty set is 0.
	Properties (\ref{crit3}) and (\ref{crit4}) ensure that $d_j^2(PY) \ge \lambda \wh\sigma_j^2$ must hold automatically for all $j\le k$ as well 
	as $d_\ell^2(PY) < \lambda \wh\sigma_\ell^2$ for all $\ell>k$.
	That is,   $\wh k$ has an even more convenient closed form
	\begin{equation}
	\wh k 
	= \max\left\{  0\le k\le K :\ d_k^2(PY) \ge \lambda \wh \sigma_k^2\   \right\}= \sum_{k=1}^K  1 \left\{  d_k^2(PY) \ge \lambda \wh \sigma_k^2\  \right\}.\label{hat k}
	\end{equation}
	Summarizing, we have shown the following result.	
	
	\begin{thm}\label{closedform}
		There exists a unique  minimizer $\wh k$ of (\ref{vier}), given by (\ref{hat k}), such that $\wh\sigma_k^2$ is monotone decreasing for $k\le \wh k$, and monotone increasing for $k\ge \wh k$. 
	\end{thm}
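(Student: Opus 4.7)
The plan is to deduce Theorem \ref{closedform} from Propositions \ref{prop1} and \ref{prop2} by establishing that $k \mapsto \wh\sigma_k^2$ is unimodal — first nonincreasing, then nondecreasing. Unimodality immediately forces a unique minimizer, and matching the minimizer against criterion (\ref{crit2}) produces the explicit formulas (\ref{hat k0}) and (\ref{hat k}).

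The key structural step is unimodality. I would let $k^\star := \min\{k \ge 1 : \wh\sigma_k^2 > \wh\sigma_{k-1}^2\}$, with the convention $k^\star := K+1$ if no such $k$ exists. By the minimality of $k^\star$, the sequence is nonincreasing on $\{0,1,\ldots,k^\star-1\}$. Applying (\ref{crit4}) at $k^\star$ yields $\wh\sigma_\ell^2 \ge \wh\sigma_{k^\star-1}^2$ for every $\ell \ge k^\star$, and a short contradiction argument (any later strict descent would clash with this uniform lower bound when (\ref{crit4}) is re-invoked at the hypothetical descent point) shows that the sequence is in fact nondecreasing on all of $\{k^\star-1,\ldots,K\}$. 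Setting $\wh k := k^\star-1$ then simultaneously gives the unique minimizer and the two-sided monotonicity claimed in the theorem.

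Next I would read off the closed forms. For $1 \le k \le \wh k$ the descent inequality $\wh\sigma_k^2 \le \wh\sigma_{k-1}^2$ holds, which via (\ref{crit2}) is equivalent to $d_k^2(PY) \ge \lambda \wh\sigma_k^2$. At $k = \wh k+1$, the strict ascent $\wh\sigma_{\wh k+1}^2 > \wh\sigma_{\wh k}^2$ translates via (\ref{crit2}) to $d_{\wh k+1}^2(PY) < \lambda \wh\sigma_{\wh k+1}^2$. For $k > \wh k+1$, the monotonicity $d_k(PY) \le d_{\wh k+1}(PY)$ of singular values combined with $\wh\sigma_k^2 \ge \wh\sigma_{\wh k+1}^2$ from the right-hand nondecreasing part forces $d_k^2(PY) < \lambda \wh\sigma_k^2$. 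Taken together, these three facts verify both the nested condition in (\ref{hat k0}) and the indicator-sum expression in (\ref{hat k}).

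The main obstacle I anticipate is the bookkeeping around ties: the equivalences supplied by Propositions \ref{prop1} and \ref{prop2} involve weak inequalities, so one must distinguish $\wh\sigma_k^2 = \wh\sigma_{k-1}^2$ from strict descent or strict ascent with some care when defining $k^\star$ and arguing that no later descent is possible. Once this is handled, the theorem is essentially a direct transcription of Propositions \ref{prop1} and \ref{prop2} into the language of singular values through (\ref{crit2}) and (\ref{crit5}); no further analytic input is needed.
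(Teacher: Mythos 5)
Your overall route --- unimodality of $k\mapsto\wh\sigma_k^2$ deduced from Propositions \ref{prop1} and \ref{prop2}, then translation into the closed forms (\ref{hat k0}) and (\ref{hat k}) via (\ref{crit2}) --- is exactly the paper's (the paper presents it informally in the discussion preceding the theorem rather than as a displayed proof). However, the one step you delegate to a ``short contradiction argument'' does not close as described. The statement of (\ref{crit4}) applied at $k^\star$ gives only the uniform lower bound $\wh\sigma_\ell^2\ge\wh\sigma_{k^\star-1}^2$ for all $\ell\ge k^\star$, and a later descent at some $k'>k^\star$ is \emph{not} in conflict with that bound: by (\ref{crit3}) such a descent forces $\wh\sigma_{k'}^2\le\wh\sigma_{k^\star-1}^2$, which together with the lower bound yields $\wh\sigma_{k'}^2=\wh\sigma_{k^\star-1}^2$ --- an equality, not a contradiction. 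In other words, the statements of (\ref{crit3}) and (\ref{crit4}) alone do not exclude a profile that rises strictly after $k^\star-1$ and then falls back exactly to the minimum value, a configuration that would also destroy uniqueness of the minimizer.

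The repair is the propagation-of-ascent argument, which is precisely how the paper proves (\ref{crit4}): if $\wh\sigma_k^2\ge\wh\sigma_{k-1}^2$, then $d_k^2(PY)\le\lambda\wh\sigma_k^2$ by (\ref{crit2}); since $d_{k+1}(PY)\le d_k(PY)$, this gives $d_{k+1}^2(PY)\le\lambda\wh\sigma_k^2$, which by (\ref{crit5}) is equivalent to $d_{k+1}^2(PY)\le\lambda\wh\sigma_{k+1}^2$, i.e.\ $\wh\sigma_{k+1}^2\ge\wh\sigma_k^2$ by (\ref{crit2}) again. Iterating from $k=k^\star$ produces the nondecreasing tail directly, and since the ascent at $k^\star$ is strict it also shows $\wh\sigma_\ell^2\ge\wh\sigma_{k^\star}^2>\wh\sigma_{k^\star-1}^2$ for $\ell\ge k^\star$, which is what actually delivers uniqueness. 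Your subsequent reading-off of the closed forms is fine once this is in place; note only that your argument for $k>\wh k+1$ already invokes the nondecreasing tail, so it cannot be used to establish it.
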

	
	It is interesting to compare the choice $\wh k$ in (\ref{hat k})  with $\wh r$ in \cite{BSW}. In that paper, it is shown that
	(\ref{een}) is equivalent with
	\begin{eqnarray}
	\min_k \left\{ \| Y-(PY)_k\| ^2 + \mu k \right\}
	\end{eqnarray}
	based on the truncated singular value decomposition $U  D_k V^T$ of the projection $PY=U D V^T$ with $D_k=\text{diag}(D_{11},\ldots,D_{kk},0,\ldots,0)$. Furthermore, \cite{BSW} uses this formulation to derive 
	a closed form for $\wh r$, to wit,
	\begin{eqnarray}
	\label{twee}
	\wh r= \sum_{k\ge 1} 1\{ d_k^2(PY) \ge \mu\}
	\end{eqnarray}
	based on the singular values $d_1(PY)\ge d_2(PY)\ge\cdots$ of the projection $PY$.
	The main difference between (\ref{hat k}) and  (\ref{twee}) is that  $\wh k$ counts the number of singular values of $PY$ above a {\em variable} threshold, while $\wh r$ counts the number of singular values of $PY$ above a {\em fixed} threshold.
	Another difference is that the fixed threshold is proportional to the unknown variance $\sigma^2$, while the variable threshold is proportional to $\wh\sigma_k^2$, which can be thought of as an estimate of $\sigma^2$ only for $k$ close to $r$.
	
	To further illustrate the existence and uniqueness of $\wh k$, we perform one experiment to show how the $d_k(PY)$ and $\wh\sigma_k$ vary across different $k$, see Figure \ref{F1}. The plot first displays the monotone property of $\wh\sigma_k$ for $k\le\wh k$ and $k\ge \wh k$. It also justifies the definition of $\wh k$ in (\ref{hat k}) since the rank which minimizes $\wh\sigma_k$ is exactly what we defined. 
	\begin{figure}[H]
		\centering
		\includegraphics[width=0.45\textwidth]{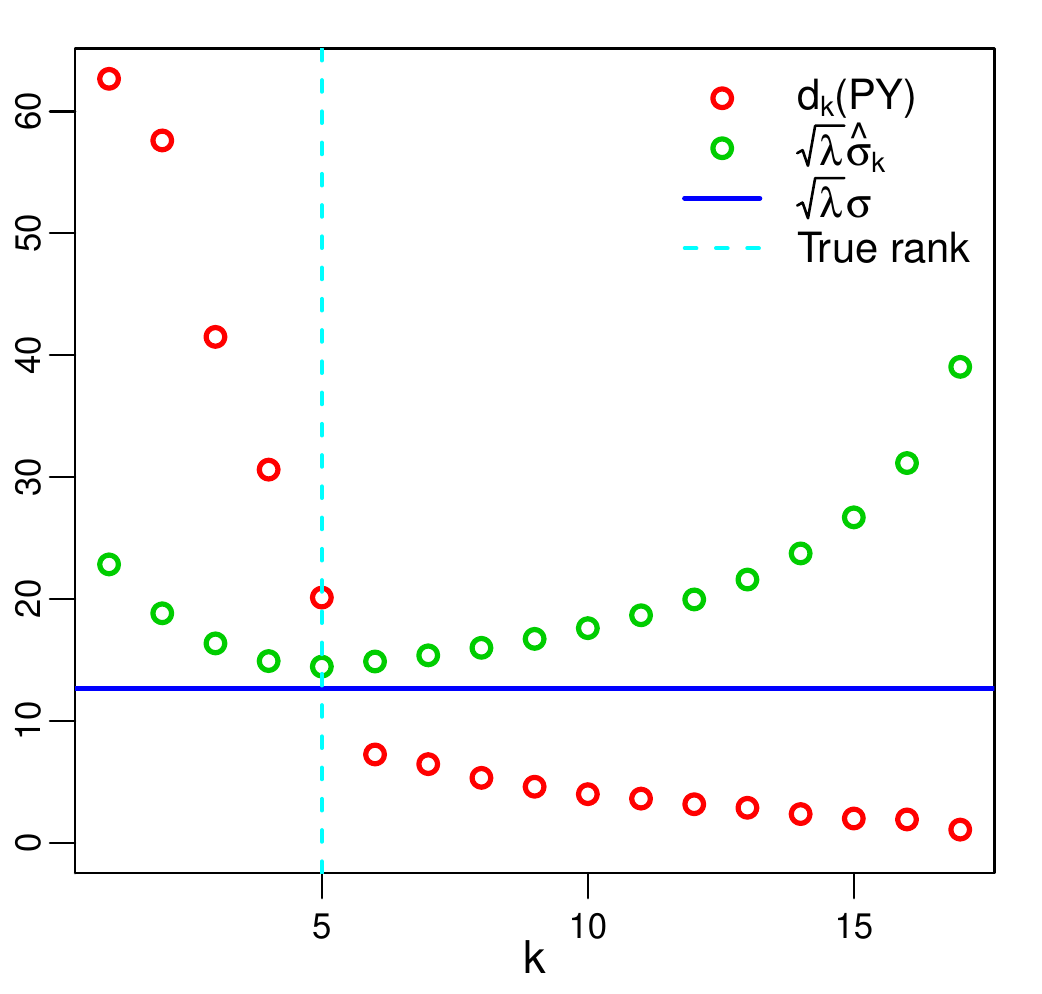}
		\caption{Plot of $d_k(PY)$ and $\sqrt{\lambda}\wh\sigma_k$ versus $k$. 
			In this experiment, we used  $n=150, m=30, p=20, q=20, r=5, \eta=0.1, b_0=0.2$ and $\sigma^2=1$, using the notation and simulation setup of Section \ref{SimSetup}.}
		\label{F1}
	\end{figure}
	
	\section{Rank consistency}\label{sec_GRS}
	\subsection{The null case $XA=0$}
	
	We treat the case $XA=0$ separately as the case $XA\neq0$ requires a lower bound on the  non-zero singular values of $XA$.
	
	\begin{thm} \label{thm:null}
		Assume $XA=0$. Then,  on the event
		$\{ d_1^2(PE) \le \lambda\wh \sigma^2\}$,
		we have $\wh k=0$. 
	\end{thm}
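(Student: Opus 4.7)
The plan is to read the hypothesis through the closed-form characterization in Theorem \ref{closedform} and combine the equivalences already established in Propositions \ref{prop1} and \ref{prop2}. The starting observation is that under $XA=0$ we have $Y=E$, hence $PY = PE$ and $\wh\sigma_0^2 = \|Y\|^2/(nm) = \|E\|^2/(nm) = \wh\sigma^2$. In particular the assumed event $\{d_1^2(PE)\le \lambda\wh\sigma^2\}$ is the same as $\{d_1^2(PY)\le \lambda\wh\sigma_0^2\}$.

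From here, I would invoke (\ref{crit5}) from Proposition \ref{prop1} with $j=1$ to convert this inequality into $d_1^2(PY)\le \lambda\wh\sigma_1^2$. Then applying (\ref{crit2}) from the same proposition (again with $j=1$) gives $\wh\sigma_1^2 \ge \wh\sigma_0^2$. Finally, (\ref{crit4}) of Proposition \ref{prop2} with $k=1$ upgrades this to $\wh\sigma_\ell^2 \ge \wh\sigma_0^2$ for every $\ell > 0$, so $k=0$ attains $\min_{0\le k\le K}\wh\sigma_k^2$. Uniqueness of the minimizer from Theorem \ref{closedform} then forces $\wh k = 0$.

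Since every arrow in this chain is an equivalence already supplied by Propositions \ref{prop1} and \ref{prop2}, there is essentially no combinatorial or analytic obstacle; the proof amounts to re-labeling the assumed event. The only mildly delicate point is the boundary case where the inequality $d_1^2(PE)\le \lambda\wh\sigma^2$ holds with equality: in that situation one has $\wh\sigma_1^2 = \wh\sigma_0^2$ and strictly speaking the closed form in (\ref{hat k}) could include $k=1$ in the count. This is harmless because $d_1^2(PE)$ has a continuous distribution under the Gaussian noise assumption, so the equality case has probability zero; alternatively, one may simply note that on the closed event $k=0$ is still a minimizer of $\wh\sigma_k^2$, and the proof carries through verbatim with non-strict inequalities throughout.
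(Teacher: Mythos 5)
Your proof is correct and follows essentially the same route as the paper's: both reduce the claim, via the identities $PY=PE$ and $\wh\sigma_0^2=\wh\sigma^2$ under $XA=0$, to the inequality $d_1^2(PY)\le\lambda\wh\sigma_1^2$ using (\ref{crit5}), and then conclude $\wh k=0$ from the characterization of the minimizer (the paper cites Theorem \ref{closedform} directly where you spell out the monotonicity via (\ref{crit2}) and (\ref{crit4})). Your remark on the measure-zero equality case is a fair point that the paper's own proof also glosses over, but it does not change the substance.
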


	We particularize Theorem \ref{thm:null} to the case where the entries of $E$ are independent $N(0,\sigma^2)$.
	In that case,
	general random matrix theory  and Borel's inequality for suprema of Gaussian processes, respectively,  give
	\begin{eqnarray}\label{d_0}\EE[ d_1(PE)] \le \sigma(\sqrt{m}+\sqrt{q})
	\end{eqnarray}
	and
	\begin{eqnarray}\label{d_1}
	\PP\{ d_1(PE) \ge \EE [ d_1(PE)] + \sigma t \} &\le& \exp(-t^2/2) \ \ \text{ for all $t>0$},
	\end{eqnarray}
	see 
	Lemma 3 in \cite{BSW}. Moreover,
	$(nm)\wh\sigma^2$ has a central $\chi^2_{mn}$ distribution, so that
	general tail bounds (\cite{john}) yield 
	\begin{eqnarray}
	\PP \left\{   \wh\sigma^2 \le \sigma^2 (1-\varepsilon) \right\} \label{chiL}
	&\le& \exp \left( -mn\varepsilon^2 /4 \right),\ \ \ \   0 \le \varepsilon<1\\
	\PP\left\{ \wh\sigma^2 \ge \sigma^2 (1+\varepsilon) \right\} & \le & \exp\left( -3mn\varepsilon^2 /16 \right), \ 0\le \varepsilon<1/2\label{chiR}
	.\end{eqnarray}
	We immediately obtain the following corollary by using (\ref{d_1}) -- (\ref{chiR}).
	\begin{cor}\label{c1}
		%%Assume that the entries of $E$  are i.i.d. $N(0,\sigma^2)$. 
		For any 
		$\lambda > (\sqrt{m}+\sqrt{q})^2$, we have
		$ \PP\{\wh{k} = 0\} \to 1$ exponentially fast as $nm\to\infty$ and  $ m+q\to\infty$.
	\end{cor}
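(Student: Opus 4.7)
The plan is to reduce the problem to controlling $d_1^2(PE)$ relative to $\lambda \wh\sigma^2$ via Theorem~\ref{thm:null}, and then apply the Gaussian concentration inequality \eqref{d_1} together with the chi-squared lower tail bound \eqref{chiL} on a single good event.

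First, since $XA=0$, Theorem~\ref{thm:null} tells us that $\{\wh k = 0\}$ occurs whenever $d_1^2(PE) \le \lambda \wh \sigma^2$. Hence it suffices to bound the probability of the complementary event $\{d_1^2(PE) > \lambda \wh\sigma^2\}$. I would fix the hypothesis $\lambda > (\sqrt{m}+\sqrt{q})^2$ by writing $\sqrt\lambda = (1+\delta)(\sqrt m + \sqrt q)$ for some $\delta > 0$ (which is allowed to depend on $m,q,n$ only through the hypothesis), and then try to find the good event on which
\[
d_1(PE) \le \sqrt{\lambda(1-\eps)}\,\sigma \quad\text{and}\quad \wh\sigma^2 \ge (1-\eps)\sigma^2.
\]

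Next, I would choose $\alpha>0$ and $\eps\in(0,1)$ small enough (depending only on $\delta$) that $1+\alpha \le (1+\delta)\sqrt{1-\eps}$; e.g., $\alpha = \delta/3$ and $\eps$ chosen so that $(1+\delta)\sqrt{1-\eps} \ge 1 + \delta/2$. Combining the bound $\EE[d_1(PE)] \le \sigma(\sqrt m+\sqrt q)$ from \eqref{d_0} with the Borel-type deviation bound \eqref{d_1} applied at level $t = \alpha(\sqrt m+\sqrt q)$ yields
\[
\PP\bigl\{ d_1(PE) > (1+\alpha)\sigma(\sqrt m + \sqrt q) \bigr\} \le \exp\!\left(-\tfrac{\alpha^2}{2}(\sqrt m + \sqrt q)^2\right) \le \exp(-c_1(m+q))
\]
for $c_1 = c_1(\delta) > 0$, while \eqref{chiL} gives $\PP\{\wh\sigma^2 < (1-\eps)\sigma^2\} \le \exp(-mn\eps^2/4) = \exp(-c_2 mn)$ for $c_2 = c_2(\delta) > 0$.

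On the intersection of these two high-probability events, the choice of $\alpha,\eps$ delivers $d_1(PE) \le (1+\alpha)\sigma(\sqrt m+\sqrt q) \le (1+\delta)(\sqrt m + \sqrt q)\sqrt{1-\eps}\,\sigma = \sqrt{\lambda(1-\eps)}\,\sigma \le \sqrt{\lambda}\,\wh\sigma$, which triggers the conclusion $\wh k = 0$ of Theorem~\ref{thm:null}. A union bound then gives $\PP\{\wh k \neq 0\} \le \exp(-c_1(m+q)) + \exp(-c_2 mn)$, which tends to zero exponentially fast as $mn\to\infty$ and $m+q\to\infty$. There is no serious obstacle; the only mild subtlety is the bookkeeping to ensure that the slack $\delta > 0$ in the hypothesis can be split between the deviation of $d_1(PE)$ above its mean and the downward deviation of $\wh\sigma^2$, which the choice $\alpha = \delta/3$ (and correspondingly small $\eps$) accomplishes.
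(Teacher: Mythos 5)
Your proposal is correct and follows essentially the same route as the paper's own proof: reduce to the event $\{d_1^2(PE) \ge \lambda\wh\sigma^2\}$ via Theorem~\ref{thm:null}, split the slack in $\lambda > (\sqrt{m}+\sqrt{q})^2$ between an upward deviation of $d_1(PE)$ (controlled by \eqref{d_0}--\eqref{d_1}) and a downward deviation of $\wh\sigma^2$ (controlled by \eqref{chiL}), and conclude with a union bound. The paper's parametrization $\lambda = C(\sqrt{m}+\sqrt{q})^2$ with $C=(1+C_0)^2/(1-C_1)$ is exactly your splitting $(1+\alpha)^2/(1-\eps)\le (1+\delta)^2$.
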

	
	\subsection{The general case $XA\ne0$}
	
	The range over which we minimize (\ref{vier}) is $\{0,1,\ldots, K \}$ depends on $\lambda$ as the largest possible value is
	\begin{equation}\label{def_K}
	K = K_\lambda := \floor [\bigg]{\frac{nm-1}{\lambda}} \wedge m \wedge q
	\end{equation}
	to avoid a non-positive denominator in criterion (\ref{vier}).
	
	\begin{thm} \label{thm:RS1}
		Assume $r\le K_\lambda $.
		On the event
		\begin{eqnarray}\label{RS1}
		d_1^2(PE)   \le \lambda  \wh \sigma_{r}^2 := \frac{ \| Y- (PY)_r\|^2 }{(nm/\lambda ) -r},
		\end{eqnarray}
		we have $\wh k\le r$.
		If $r> K_\lambda$, then trivially $\wh k\le   r$ holds, with probability one.
	\end{thm}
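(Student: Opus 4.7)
The plan is to use the closed form (\ref{hat k}) of $\wh k$ and argue by contradiction. If $r > K_\lambda$ then $\wh k \le K_\lambda < r$ holds by the very definition of the search range, so only the case $r \le K_\lambda$ requires work. Suppose for contradiction that $\wh k \ge r+1$. Then (\ref{hat k}) forces
\[
d_{r+1}^2(PY) \ge \lambda \wh\sigma_{r+1}^2,
\]
which is converted by equivalence (\ref{crit5}) at $j=r+1$ into $d_{r+1}^2(PY) \ge \lambda \wh\sigma_r^2$.

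The only substantive analytic step is to upper bound $d_{r+1}(PY)$ by the noise quantity $d_1(PE)$ that appears in the event. Since $P$ is the projection onto the column space of $X$, $PX = X$ and hence $PY = XA + PE$. Because $\text{rank}(XA) = r$, $d_{r+1}(XA) = 0$, and Weyl's inequality at $i = r+1$, $j = 1$ yields
\[
d_{r+1}(PY) \le d_{r+1}(XA) + d_1(PE) = d_1(PE)
\]
(in the degenerate case $r = q$ the bound $d_{r+1}(PY) = 0$ holds trivially, and Weyl is not even needed). Chaining the two displays produces
\[
d_1^2(PE) \ge d_{r+1}^2(PY) \ge \lambda \wh\sigma_r^2,
\]
contradicting the assumed event $\{d_1^2(PE) \le \lambda \wh\sigma_r^2\}$, and so $\wh k \le r$.

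The lone technical subtlety is that (\ref{crit5}) is stated as an equivalence between two non-strict inequalities, so the chain above produces a clean contradiction only in the strict regime; the boundary case $d_{r+1}^2(PY) = \lambda \wh\sigma_{r+1}^2 = \lambda \wh\sigma_r^2 = d_1^2(PE)$ has to be handled directly from the definition $\wh\sigma_k^2 = \|Y-(PY)_k\|^2/(nm-\lambda k)$ (and carries probability zero under the Gaussian noise assumption in any case). I do not anticipate any other difficulty: the proof reduces to a single use of Weyl's inequality, together with the algebraic bookkeeping already supplied by Propositions \ref{prop1}--\ref{prop2}.
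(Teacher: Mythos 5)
Your proof is correct and follows essentially the same route as the paper's: the closed form (\ref{hat k}), the equivalence (\ref{crit5}) to pass from $\wh\sigma_{r+1}^2$ to $\wh\sigma_r^2$, and Weyl's inequality with $d_{r+1}(XA)=0$ to bound $d_{r+1}(PY)\le d_1(PE)$; the paper merely argues directly rather than by contradiction. The strict-versus-non-strict boundary issue you flag is real but is present in the paper's own argument as well (it deduces a non-strict inequality where a strict one is formally needed), and is immaterial on the almost-sure event where no ties occur.
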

	
	The restriction $r\le K_\lambda $   guarantees that $\wh \lambda \sigma_r^2$  is positive, that is, the event  (\ref{RS1}) is non-empty.
	If $r> K_\lambda$, then  $\wh k\le   r$,  holds trivially, with probability one, as $\wh k$ is selected from $\{0,\ldots,K_\lambda\}$.

	While the quantity $\wh\sigma_{r}^2$  is a natural one in this problem, it depends on the unknown rank $r$. It turns out that  quantifying $\wh \sigma_{r}^2$ is not trivial.
	
	\begin{prop}\label{prop:ongelijk}
		Assume $r\le K_\lambda$. On the event 
		\begin{eqnarray}\label{RS3}
		2 d_1^2(PE)   \le \lambda  \wh \sigma^2  ,
		\end{eqnarray}
		we have
		\begin{eqnarray}\label{eq_sigma_r}
		\wh \sigma^2 \le \wh \sigma_r^2 \le \frac{nm}{nm-\lambda r} \wh \sigma^2.
		\end{eqnarray}
	\end{prop}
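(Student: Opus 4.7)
My plan is to rewrite $\|Y-(PY)_r\|^2$ in a form that is directly comparable to $\|E\|^2 = nm\,\widehat\sigma^2$, and then to control the difference using two applications of Weyl's inequality, one for each direction of the desired sandwich.

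First I would use that $PY = XA + PE$ and decompose
\begin{equation*}
Y - (PY)_r = (I-P)E \;+\; \bigl( PY - (PY)_r \bigr).
\end{equation*}
Since $(I-P)E$ has columns orthogonal to the column space of $X$ while both $PY$ and $(PY)_r$ lie in that column space, the two summands are orthogonal in the Hilbert--Schmidt inner product. Combined with the Eckart--Young identity $\|PY-(PY)_r\|^2=\sum_{j>r} d_j^2(PY)$ and with $\|(I-P)E\|^2=\|E\|^2-\|PE\|^2$, this gives the key identity
\begin{equation*}
\|Y-(PY)_r\|^2 \;=\; \|E\|^2 \;-\; \Bigl( \|PE\|^2 - \sum_{j>r} d_j^2(PY) \Bigr).
\end{equation*}
The entire proof then reduces to controlling the bracketed term on both sides.

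For the upper bound $\widehat\sigma_r^2 \le \frac{nm}{nm-\lambda r}\widehat\sigma^2$ I would show that the bracket is $\ge 0$, i.e.\ $\|Y-(PY)_r\|^2\le\|E\|^2$. Applying Weyl's inequality to $PY=XA+PE$ with $j=r+1$ and using $d_{r+1}(XA)=0$ yields $d_{i+r}(PY)\le d_i(PE)$ for every $i\ge1$, hence $\sum_{j>r}d_j^2(PY)\le\sum_{i\ge1}d_i^2(PE)=\|PE\|^2$; dividing by $nm-\lambda r$ gives the claim (and in fact this bound does not require the event (\ref{RS3})).

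For the lower bound $\widehat\sigma^2\le\widehat\sigma_r^2$ I would instead show that the bracket is small:
\begin{equation*}
\|PE\|^2 - \sum_{j>r} d_j^2(PY) \;\le\; \lambda r\,\widehat\sigma^2,
\end{equation*}
from which dividing through yields the inequality, since the left-hand side equals $\|E\|^2-\|Y-(PY)_r\|^2$ and therefore $\|Y-(PY)_r\|^2\ge (nm-\lambda r)\widehat\sigma^2$. To prove the displayed inequality, I would apply Weyl's inequality the other way around, to $PE=PY+(-XA)$ with $j=r+1$, obtaining $d_{i+r}(PE)\le d_i(PY)$ for $i\ge 1$; summing this over $i>r$ gives $\sum_{j>2r}d_j^2(PE)\le\sum_{j>r}d_j^2(PY)$, so
\begin{equation*}
\|PE\|^2 - \sum_{j>r} d_j^2(PY) \;\le\; \sum_{j=1}^{2r\wedge N} d_j^2(PE) \;\le\; 2r\, d_1^2(PE).
\end{equation*}
On the event (\ref{RS3}) this last quantity is bounded by $\lambda r\,\widehat\sigma^2$, which completes the argument. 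The main subtlety is ensuring that Weyl's inequality is applied in the correct ``reverse'' direction (i.e.\ to $PE=PY-XA$) to obtain a lower bound on $d_j(PY)$, and keeping careful track of index shifts so that the tail of singular values of $PE$ beyond position $2r$ is what appears; this index bookkeeping is what produces the factor $2r$ which then meshes with the factor $2$ in the event (\ref{RS3}).
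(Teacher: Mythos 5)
Your proof is correct, but the argument for the lower bound $\wh\sigma^2\le\wh\sigma_r^2$ is genuinely different from the paper's. Both proofs share the same skeleton up to the inequality $\|Y-(PY)_r\|^2\ge\|E-PE\|^2+\sum_{j>2r}d_j^2(PE)=\|E-(PE)_{2r\wedge N}\|^2$ (your "reverse" Weyl step $d_{i+r}(PE)\le d_i(PY)$ is exactly the paper's $d_{r+i}(PY)\ge d_{2r+i}(PE)$). From there the paper invokes a separate monotonicity lemma showing that $e_k=\|E-(PE)_{2k}\|^2/(nm-\lambda k)$ is nondecreasing in $k$ on the event $\{d_1^2(PE)+d_2^2(PE)\le\lambda\wh\sigma^2\}$, so that $e_r\ge e_0=\wh\sigma^2$; you instead use the crude bound $\sum_{j=1}^{2r\wedge N}d_j^2(PE)\le 2r\,d_1^2(PE)\le\lambda r\,\wh\sigma^2$ and divide. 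Your route is more elementary and self-contained, and under the stated event it yields exactly the same conclusion; the paper's detour through the monotonicity lemma is not wasted, however, because that lemma (in its stronger form comparing $e_k$ and $e_r$ for general $k\le r$) is reused in the analysis of the self-tuning procedure, where a one-shot bound like yours would not suffice. For the upper bound the two arguments are essentially equivalent: the paper gets $\|Y-(PY)_r\|^2\le\|Y-XA\|^2=\|E\|^2$ in one line from Eckart--Young, while you derive $\sum_{j>r}d_j^2(PY)\le\|PE\|^2$ from Weyl; both are deterministic and neither needs the event (\ref{RS3}). One small point worth making explicit in either write-up is that $r\le K_\lambda$ guarantees $nm-\lambda r>0$, so all the divisions are legitimate.
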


	This result combined with Theorem \ref{thm:RS1} tells us that if $\lambda$ is chosen large enough,  we can guarantee that $\wh k \le r$. Moreover, this choice is independent of both  $r$
	and $\sigma^2$. Indeed, for matrices $E$ with independent $N(0,\sigma^2)$ Gaussian entries,
	any choice  $\lambda > 2 (\sqrt{m} +\sqrt{q})^2$ suffices:
	
	\begin{thm}\label{rankGaussian1}
		%%%%%Assume the entries of  $E$ are i.i.d. $N(0,\sigma^2)$.
		For 
		$\lambda = C(\sqrt{m}+\sqrt{q})^2$ with any numerical constant $C>2$,
		$\PP \{   \wh k\le r\}\to1$ as 
		$mn\to\infty$ and $ m+q\to\infty$.
	\end{thm}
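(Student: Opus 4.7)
The plan is to bootstrap Theorem \ref{thm:RS1} and Proposition \ref{prop:ongelijk} by a single Gaussian concentration argument. The case $r > K_\lambda$ is already covered by Theorem \ref{thm:RS1} (which gives $\wh k \le r$ with probability one), so we may restrict to $r \le K_\lambda$. On the event $\mathcal{E} := \{2 d_1^2(PE) \le \lambda \wh\sigma^2\}$ of (\ref{RS3}), Proposition \ref{prop:ongelijk} yields $\wh\sigma^2 \le \wh\sigma_r^2$, and hence
\[
d_1^2(PE) \;\le\; \tfrac{\lambda}{2}\,\wh\sigma^2 \;\le\; \tfrac{\lambda}{2}\,\wh\sigma_r^2 \;\le\; \lambda \wh\sigma_r^2,
\]
so that the hypothesis (\ref{RS1}) of Theorem \ref{thm:RS1} is met and $\wh k \le r$. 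It thus suffices to prove $\PP(\mathcal{E}) \to 1$ under the stated regime.

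To control $\mathcal{E}$, I would exploit the slack between $C$ and $2$. Writing $C = 2+\eta$ with $\eta > 0$, I choose $\delta, \eps > 0$ so small that $2(1+\delta)^2 \le (2+\eta)(1-\eps)$. Applying the Gaussian deviation bound (\ref{d_1}) with $t = \delta(\sqrt m + \sqrt q)$, together with (\ref{d_0}) and $(\sqrt m + \sqrt q)^2 \ge m+q$, gives
\[
d_1(PE) \;\le\; (1+\delta)\,\sigma\,(\sqrt m + \sqrt q)
\]
with probability at least $1 - \exp(-\delta^2(m+q)/2)$. Independently, (\ref{chiL}) gives $\wh\sigma^2 \ge (1-\eps)\sigma^2$ with probability at least $1 - \exp(-mn\eps^2/4)$. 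On the intersection,
\[
2 d_1^2(PE) \;\le\; 2(1+\delta)^2 \sigma^2 (\sqrt m + \sqrt q)^2 \;\le\; (2+\eta)(1-\eps)\sigma^2 (\sqrt m + \sqrt q)^2 \;\le\; \lambda \wh\sigma^2,
\]
so $\mathcal{E}$ holds. A union bound shows $\PP(\mathcal{E}^c) \le \exp(-\delta^2(m+q)/2) + \exp(-mn\eps^2/4) \to 0$ as $m+q \to \infty$ and $mn \to \infty$.

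There is no genuine obstacle here; the proof is really a packaging of the earlier results. The only mildly delicate point is that the natural ``signal-to-noise'' quantity $\wh\sigma_r^2$ in (\ref{RS1}) depends on the unknown rank $r$, so one cannot directly certify (\ref{RS1}) from data. Proposition \ref{prop:ongelijk} removes this difficulty by replacing $\wh\sigma_r^2$ with the parameter-free $\wh\sigma^2$ at the cost of an extra factor of $2$, which is exactly the slack absorbed by the requirement $C > 2$ and explains why this threshold is the natural one for the bound $\wh k \le r$ to hold.
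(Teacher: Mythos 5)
Your proof is correct and follows essentially the same route as the paper: reduce via Theorem \ref{thm:RS1} and Proposition \ref{prop:ongelijk} to showing that $\{2d_1^2(PE)\le\lambda\wh\sigma^2\}$ holds with high probability, then control that event with the Gaussian concentration bounds (\ref{d_0})--(\ref{d_1}) for $d_1(PE)$ and the $\chi^2$ tail bound (\ref{chiL}) for $\wh\sigma^2$. The paper states this only in outline (referring back to the proof of Corollary \ref{c1} ``adapted to the constant $2$''), and your write-up is simply the explicit version of that argument.
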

	The convergence rate in Theorem \ref{rankGaussian1} is exponentially fast in $nm$ and $m+q$. Again, if $r> K_\lambda$, then $\PP\{ \wh k \le r \}=1$ holds trivially.
	
	Consistency of $\wh k$ can be achieved under a suitable signal to noise condition.
	
	\begin{thm} \label{thm:RS2}
		For $1\le s\le r\le K_\lambda$, on the event
		\begin{eqnarray}
		\label{RS2}
		d_s(XA) \ge d_1(PE) + \sqrt{\lambda} \wh \sigma_{r}
		\end{eqnarray} intersected with the event (\ref{RS1}),
		we further have $\wh k\in [s,r]$. 
	\end{thm}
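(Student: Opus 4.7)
The plan is to prove the two-sided containment by combining the upper bound already supplied by Theorem~\ref{thm:RS1} with a lower bound $\wh k\ge s$ obtained by contradiction from the closed form (\ref{hat k}) and the monotonicity in Theorem~\ref{closedform}. Since (\ref{RS1}) is assumed, Theorem~\ref{thm:RS1} immediately yields $\wh k\le r$, and the remaining work is to show $\wh k\ge s$.

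For the lower bound, I would suppose toward contradiction that $\wh k\le s-1$. By the closed-form characterization (\ref{hat k}), this forces the strict inequality $d_s^2(PY)<\lambda\wh\sigma_s^2$, since otherwise $s$ would be a candidate contributing to the maximum. At the same time, I would extract a lower bound on $d_s(PY)$ from the signal. Writing $PY=XA+PE$ (using $PX=X$) and applying Weyl's inequality in the form $d_s(XA)\le d_s(PY)+d_1(PE)$, I obtain
\begin{equation*}
d_s(PY)\ \ge\ d_s(XA)-d_1(PE)\ \ge\ \sqrt{\lambda}\,\wh\sigma_r,
\end{equation*}
where the last step is exactly the signal-to-noise hypothesis (\ref{RS2}). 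Squaring gives $d_s^2(PY)\ge \lambda\wh\sigma_r^2$, so combined with the contradiction hypothesis I get $\wh\sigma_r^2<\wh\sigma_s^2$.

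The closing step is to compare $\wh\sigma_s^2$ and $\wh\sigma_r^2$ using Theorem~\ref{closedform}: because $\wh\sigma_k^2$ is monotone non-decreasing for $k\ge\wh k$, and the hypothesis $\wh k\le s-1<s\le r$ places both indices in the increasing regime, I obtain $\wh\sigma_s^2\le\wh\sigma_r^2$, contradicting the previous strict inequality. Hence $\wh k\ge s$, and combined with $\wh k\le r$ this proves $\wh k\in[s,r]$.

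The only subtle point is the correct use of Weyl's inequality on a difference (taking $B=-PE$ with index pair $(i,j)=(s,1)$ so that $d_1(-PE)=d_1(PE)$); everything else is bookkeeping on the closed form and the monotonicity already established. The main conceptual step, and arguably the reason the result is clean, is that the ``right'' quantity to compare $d_s^2(PY)$ against is $\lambda\wh\sigma_r^2$ rather than $\lambda\wh\sigma_s^2$, which is why the signal condition (\ref{RS2}) is phrased with $\wh\sigma_r$ and why the contradiction closes through the monotonicity of $k\mapsto\wh\sigma_k^2$ beyond $\wh k$.
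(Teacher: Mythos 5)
Your proof is correct, and it reaches the lower bound $\wh k\ge s$ by a genuinely different mechanism than the paper's. The shared skeleton is the same: $\wh k\le r$ comes from Theorem~\ref{thm:RS1} on the event (\ref{RS1}), and Weyl's inequality together with (\ref{RS2}) reduces everything to the key inequality $d_s^2(PY)\ge\lambda\wh\sigma_r^2$. The divergence is in how that inequality forces $\wh k\ge s$. The paper argues directly and deterministically: using the Pythagoras identity (\ref{identiteit}), $\|Y-(PY)_s\|^2=\|Y-(PY)_r\|^2+\sum_{j=s+1}^{r}d_j^2(PY)\le\|Y-(PY)_r\|^2+(r-s)d_s^2(PY)$, and a short manipulation of the resulting fraction, it shows that $d_s^2(PY)\ge\lambda\wh\sigma_r^2$ implies $d_s^2(PY)\ge\lambda\wh\sigma_s^2$ \emph{unconditionally}, after which $\wh k\ge s$ is immediate from the closed form (\ref{hat k}). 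You instead assume $\wh k\le s-1$, read off the strict inequality $d_s^2(PY)<\lambda\wh\sigma_s^2$ from (\ref{hat k}), and then use the unimodality in Theorem~\ref{closedform} to place both $s$ and $r$ in the increasing regime $k\ge\wh k$, giving $\wh\sigma_s^2\le\wh\sigma_r^2$ and hence $d_s^2(PY)\ge\lambda\wh\sigma_r^2\ge\lambda\wh\sigma_s^2$, a contradiction. Each step of yours checks out: the contrapositive reading of (\ref{hat k}) is legitimate because $s\le r\le K_\lambda$ keeps $s$ in the candidate range, the Weyl application with index pair $(s,1)$ and $d_1(-PE)=d_1(PE)$ is the same one the paper uses, and the non-strict monotonicity from Theorem~\ref{closedform} suffices against your strict inequality. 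What the two routes buy: yours is shorter and recycles the structural result already proved (the unimodality of $k\mapsto\wh\sigma_k^2$), while the paper's isolates a self-contained algebraic implication between the thresholds $\lambda\wh\sigma_r^2$ and $\lambda\wh\sigma_s^2$ that needs no hypothesis on $\wh k$ at all. Both are complete proofs.
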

	
	This theorem, combined with Proposition \ref{prop:ongelijk}, immediately yields the following corollary.
	
	\begin{cor}\label{cor: generalrank}
		For $1\le s\le r\le K_\lambda$, on the event $$\left\{2d_1^2(PE)\le \lambda\wh\sigma^2\right\} \cap
		\left\{d_s(XA)\ge \sqrt{\lambda}  \wh\sigma \left[ \frac{\sqrt{2} }{2} + \sqrt{\frac{nm}{nm-\lambda r}} \right] \,  \right\} 
		,$$
		we  have $\wh k \in[s,r]$.
	\end{cor}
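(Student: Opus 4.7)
The plan is to reduce this corollary to a direct combination of Theorem \ref{thm:RS2} and Proposition \ref{prop:ongelijk}. The role of Proposition \ref{prop:ongelijk} is to eliminate the inconvenient quantity $\wh\sigma_r^2$ (which depends on the unknown rank $r$) from both the variance-side event (\ref{RS1}) and the signal-to-noise condition (\ref{RS2}), replacing it by the directly computable $\wh\sigma^2$.

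First, I would work on the event $\{2d_1^2(PE)\le \lambda\wh\sigma^2\}$, which is exactly the event (\ref{RS3}) required by Proposition \ref{prop:ongelijk}. That proposition (with the assumption $r\le K_\lambda$ carried over from the corollary) immediately gives the two-sided bound
\begin{equation*}
\wh\sigma^2 \le \wh\sigma_r^2 \le \frac{nm}{nm-\lambda r}\wh\sigma^2.
\end{equation*}
The lower bound is used to verify the hypothesis (\ref{RS1}) of Theorem \ref{thm:RS2}: since $2d_1^2(PE)\le \lambda\wh\sigma^2 \le \lambda \wh\sigma_r^2$, we obtain $d_1^2(PE)\le \lambda\wh\sigma_r^2$, so (\ref{RS1}) holds. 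Applying Theorem \ref{thm:RS1} already yields $\wh k\le r$ on this event.

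Second, I would upgrade $\wh k\le r$ to $\wh k\ge s$ by verifying the signal-to-noise condition (\ref{RS2}) of Theorem \ref{thm:RS2}. Using the event (\ref{RS3}) once more, I bound the noise term by $d_1(PE) \le \sqrt{\lambda/2}\,\wh\sigma = (\sqrt{2}/2)\sqrt{\lambda}\,\wh\sigma$. The upper bound from Proposition \ref{prop:ongelijk} gives $\sqrt{\lambda}\,\wh\sigma_r \le \sqrt{\lambda}\,\wh\sigma\cdot\sqrt{nm/(nm-\lambda r)}$. Adding these two bounds yields
\begin{equation*}
d_1(PE) + \sqrt{\lambda}\,\wh\sigma_r \;\le\; \sqrt{\lambda}\,\wh\sigma\left[\frac{\sqrt{2}}{2} + \sqrt{\frac{nm}{nm-\lambda r}}\right],
\end{equation*}
and the corollary's signal assumption on $d_s(XA)$ is precisely the statement that the left-hand side is dominated by $d_s(XA)$, which is condition (\ref{RS2}). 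Theorem \ref{thm:RS2} then delivers $\wh k\in[s,r]$.

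There is essentially no genuine obstacle: the proof is a two-line bookkeeping argument, and the only thing to be careful about is to record that the single event $\{2d_1^2(PE)\le \lambda\wh\sigma^2\}$ simultaneously activates Proposition \ref{prop:ongelijk}, implies (\ref{RS1}), and furnishes the explicit bound $d_1(PE)\le (\sqrt{2}/2)\sqrt{\lambda}\,\wh\sigma$ used in the signal condition. The reason this matters is that the resulting condition on $d_s(XA)$ is now fully observable from $\wh\sigma$ and the tuning parameter $\lambda$, which is the feature that will be exploited in Section \ref{sec_SRS} to build the iterative, data-driven procedure.
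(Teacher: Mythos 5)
Your proposal is correct and follows exactly the paper's argument: the paper's proof is the one-line remark that the corollary's events combined with (\ref{eq_sigma_r}) imply (\ref{RS1}) and (\ref{RS2}), and your write-up simply fills in those same implications (using the lower bound $\wh\sigma^2\le\wh\sigma_r^2$ for (\ref{RS1}) and the bounds $d_1(PE)\le(\sqrt{2}/2)\sqrt{\lambda}\,\wh\sigma$ and $\wh\sigma_r\le\wh\sigma\sqrt{nm/(nm-\lambda r)}$ for (\ref{RS2})) before invoking Theorem \ref{thm:RS2}. No discrepancy to report.
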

	
	The choice of $\lambda$ impacts the possible values for $\wh k$, the  minimizer of criterion (\ref{vier}), and we see that the range $\{0,1,\ldots,K_\lambda\}$  increases as $\lambda$ decreases.
	If the true rank is rather large ($r> K_\lambda$), then no  guarantees for $\wh k$ can be made, except for the trivial, yet important observation that $\wh k\le r$.
	On the other hand, if   $r< K_\lambda$, which is arguably   the more interesting case for {\em  low rank} regression,   then consistency guarantees can be made under a suitable condition on the $r$-th singular value $d_r(XA)$ of $XA$.
	This condition becomes milder if $\lambda$ decreases.

	Let $\delta>0$. A slightly stronger restriction for the upper bound on $r$, 
	\begin{eqnarray} r < \frac{\delta}{1+\delta} \frac{nm}{\lambda}  \wedge m \wedge q \label{rankconstraint}
	\end{eqnarray}
	translates into a bound for 
	the ratio 
	\begin{eqnarray}\label{rankconstraint2}
	\frac{nm}{nm-\lambda r} \le 1+\delta
	\end{eqnarray}
	appearing in the lower bound for the {\em signal} $d_s(XA)$. We can further particularize to the Gaussian setting.
	
	\begin{thm}\label{rankGaussian}
		%%Assume the entries of  $E$ are i.i.d. $N(0,\sigma^2)$.
		Let $\lambda = 2C(\sqrt{m}+\sqrt{q})^2 $ for some numerical constant $C>1$.
		Assume further that
		$r$ and $\delta$ satisfy (\ref{rankconstraint})
		and 
		\begin{eqnarray}\label{onzin}
		d_s(XA) &\geq& C' \sigma (\sqrt{m}+\sqrt{q})
		\end{eqnarray}
		for some $s\le r$ and some numerical constant $C'>\sqrt{C}(1+\sqrt{2(1+\delta)})$.
		Then $\PP \{ s\le \wh k\le r\}\to1$ as 
		$mn\to\infty$ and $ m+q\to\infty$.\\
		In particular, if (\ref{onzin}) holds for $s=r$, then $\wh k$ consistently estimates $r$.	
	\end{thm}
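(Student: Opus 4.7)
The plan is to deduce Theorem~\ref{rankGaussian} directly from Corollary~\ref{cor: generalrank} by showing that, under Gaussian noise with the stated $\lambda$ and signal assumption, both events listed in that corollary hold with probability tending to one exponentially fast in $nm$ and $m+q$. The constraint $r\le K_\lambda$ in the corollary follows from (\ref{rankconstraint}), since $r<\tfrac{\delta}{1+\delta}\tfrac{nm}{\lambda}\wedge m\wedge q$ automatically implies $r\le K_\lambda=\floor{(nm-1)/\lambda}\wedge m\wedge q$ once $nm$ is moderately large.

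For the first event $\{2d_1^2(PE)\le \lambda\wh\sigma^2\}$, I would fix a small $\eps>0$ to be chosen at the end. Combining the Gaussian concentration bound (\ref{d_1}) (with $t=\eps\sqrt{m+q}$) and the mean bound (\ref{d_0}) gives
\[
d_1(PE)\le (1+\eps)\,\sigma\,(\sqrt{m}+\sqrt{q})
\]
with probability at least $1-\exp(-\eps^2(m+q)/2)$, while (\ref{chiL}) yields $\wh\sigma^2\ge (1-\eps)\sigma^2$ with probability at least $1-\exp(-\eps^2 nm/4)$. On the intersection of these events,
\[
\frac{2\,d_1^2(PE)}{\lambda\,\wh\sigma^2}\le \frac{2(1+\eps)^2\,\sigma^2(\sqrt{m}+\sqrt{q})^2}{2C(\sqrt{m}+\sqrt{q})^2(1-\eps)\sigma^2} = \frac{(1+\eps)^2}{C(1-\eps)},
\]
which is strictly less than $1$ for all $\eps$ sufficiently small because $C>1$.

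For the second event, I would use the right-tail bound (\ref{chiR}) to obtain $\wh\sigma\le \sigma\sqrt{1+\eps}$ with probability at least $1-\exp(-3\eps^2 nm/16)$, together with the deterministic consequence (\ref{rankconstraint2}) that $nm/(nm-\lambda r)\le 1+\delta$. Substituting $\sqrt{\lambda}=\sqrt{2C}(\sqrt{m}+\sqrt{q})$ and using the identity
\[
\sqrt{2C}\left[\tfrac{\sqrt{2}}{2}+\sqrt{1+\delta}\right]=\sqrt{C}\bigl(1+\sqrt{2(1+\delta)}\bigr),
\]
one obtains
\[
\sqrt{\lambda}\,\wh\sigma\left[\tfrac{\sqrt{2}}{2}+\sqrt{\tfrac{nm}{nm-\lambda r}}\right]\le \sqrt{C}\bigl(1+\sqrt{2(1+\delta)}\bigr)\sqrt{1+\eps}\;\sigma\,(\sqrt{m}+\sqrt{q}).
\]
Since $C'>\sqrt{C}(1+\sqrt{2(1+\delta)})$ is a \emph{strict} inequality, $\eps$ can be taken small enough that $\sqrt{C}(1+\sqrt{2(1+\delta)})\sqrt{1+\eps}<C'$, and assumption (\ref{onzin}) then forces the second event of Corollary~\ref{cor: generalrank}.

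A union bound over the three failure events yields $\PP\{s\le\wh k\le r\}\to 1$ at the rate $1-\exp(-c_1\eps^2(m+q))-\exp(-c_2\eps^2 nm)$ for universal constants $c_1,c_2>0$, and the ``in particular'' clause for $s=r$ is then immediate. The only real subtlety is the joint choice of $\eps$: a single $\eps>0$ must simultaneously make $(1+\eps)^2/[C(1-\eps)]<1$ and $\sqrt{1+\eps}<C'/[\sqrt{C}(1+\sqrt{2(1+\delta)})]$, but this is possible precisely because both strict inequalities $C>1$ and $C'>\sqrt{C}(1+\sqrt{2(1+\delta)})$ have been built into the hypotheses.
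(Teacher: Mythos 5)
Your proposal is correct and follows essentially the same route as the paper's own proof: both reduce the theorem to verifying the two events of Corollary~\ref{cor: generalrank} via the Gaussian concentration bounds (\ref{d_1}), (\ref{chiL}) and (\ref{chiR}), with the only difference being that you parametrize the slack by a single $\eps$ chosen at the end while the paper fixes two constants $C_0,C_1$ tied to $C$ and $C'$ from the outset. The algebraic identity $\sqrt{2C}\bigl[\tfrac{\sqrt{2}}{2}+\sqrt{1+\delta}\bigr]=\sqrt{C}\bigl(1+\sqrt{2(1+\delta)}\bigr)$ and the exponential rates you obtain match the paper's.
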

	The convergence rate in Theorem \ref{rankGaussian} is exponentially fast in $nm$ and $m+q$.
	This shows that the above procedure is highly accurate, which is confirmed in our simulation study.
	From the oracle inequality in the supplement, the fit $\|X\wh A_{\wh k}-XA\|^2$, for $s\le \wh k\le r$, differs only within some constant levels of the {\em noise level }$d_1^2(PE)$.

	\section{Self-tuning procedure}\label{sec_SRS}
	
	From Theorem \ref{thm:RS1} in Section \ref{sec_GRS}, we need to take $\lambda$, or in fact $\lambda \wh \sigma_r^2$, as $\wh \sigma_r^2$ depends on $\lambda$,  large enough
	to prevent overfitting, that is, to avoid  selecting a $\wh k$ larger than the true rank $r$. 
	On the other hand, we would like to keep $\lambda$  small to be able to detect  a small  signal level.
	Indeed,
	(\ref{RS1}) in Theorem \ref{thm:RS1} states that we need
	\begin{eqnarray}\label{startpoint}
	d_1(PE)\le \sqrt{\lambda}\wh\sigma_r =  \sqrt{\frac{\|Y-(PY)_r\|^2}{nm/\lambda-r}}.
	\end{eqnarray}
	The term on the right is decreasing in $\lambda$, so we should choose $\lambda$ as small as possible such that $\sqrt{\lambda}\wh \sigma_r$ is close to $d_1(PE)$.
	Without any prior knowledge on $r$, it is difficult to find the optimal choice for $\lambda$. However, 
	Theorem \ref{thm:RS1} and Proposition \ref{prop:ongelijk} tell us that  an initial $\lambda_0$ satisfying $\{2d_1^2(PE)\le \lambda _0\wh\sigma^2\}$ yields an estimated rank $\wh k_0$ with   $\wh k_0\le r$. Our idea is to use  this lower bound $\wh k_0$ for $r$ to reduce our value $\lambda_0$ to $ \lambda_1$.
	This, in turn, will yield a possibly larger estimated rank $\wh k_1$, which still obeys $\wh k_1\le r$.
	More precisely, we propose the following {\em Self-Tuning Rank Selection} (STRS) procedure. Let $Z$ be a $q\times m$ matrix with i.i.d. standard Gaussian entries and define $S_j=\EE[d_j^2 (Z)]$ with the convention $S_j := 0$ for $j>N=q\wedge m$. Moreover, let $\wh K_t := (nm/\wh \lambda_t) \wedge N$ for given $\wh \lambda_t$. For any $\eps \in (0,1)$, we define 
	\begin{eqnarray}\label{initlbd}
	\wh \lambda_0 &:=& 2(1+\eps)S_1\\
	\label{k0}
	\wh k_0 & :=& \argmin_{0 \le k\le \wh K_{0}} \frac{ \| Y- (PY)_k \| ^2}{ nm -\wh\lambda_0 k}
	\end{eqnarray}
	as starting values, and if $\wh k_0\ge 1$, for $t\ge 0$, we update
	\begin{eqnarray}\label{lbdt}
	\wh\lambda_{t+1} &: = &
	{nm \over (1-\eps)\wh R_t\ /\ \wh U_{t}+\wh k_t}
	\\
	\wh k_{t+1} & :=& \argmin_{\wh k_t \le k \le \wh K_{t+1}} \frac{ \| Y- (PY)_k \| ^2}{ nm -\wh\lambda_{t+1}k}\label{kt}
	\end{eqnarray}
	where
	\begin{equation}\label{def_rut}
	\wh R_t ~:=~ (n-q) m +  \rs\sum_{j=2\wh k_t + 1}^N\rs S_j,\qquad 
	\wh U_{ t}~: =~ S_1 \vee \left(S_{2\wh k_t+1}+ S_{2\wh k_t+2}\right).
	\end{equation}
	The procedure stops when $\wh k_{t+1} = \wh k_t$. The entire procedure is free of $\sigma^2$ and both
	$\wh R_t$ and 
	$\wh U_{t}$
	can be numerically evaluated by Monte Carlo simulations. Alternatively, we  provide an analogous procedure with analytical expressions in the supplement, but its performance in our simulations  is actually slightly inferior to the original procedure that utilizes Monte Carlo simulations. 
	%We refer to Section \ref{sec_sim_5} for more details of the comparison.\\
	
	%As \cite{Giraud} pointed out, \cite{MP} could also be used as an accurate approximation in high dimensional settings.\\
	
	Regarding the computational complexity, we emphasize that the above STRS procedure has almost \emph{the same} level of computational complexity as the methods in (\ref{een}) and (\ref{vier}). This is due to the fact that the computationally expensive singular value decomposition only needs to be computed once.  Additionally, in order to find the new rank in step (\ref{kt}), we  only need to consider values that are larger than (or equal to)  the previously selected rank. This avoids a lot of extra computation.
	
	%{\sc Notation. } In step (\ref{lbdt}), we write $\wh \lambda_{\wh k_t}$ by $\wh\lambda_{t+1}$ for simplicity and succinctness. We will stick to $\wh\lambda_t$ when it does not cause   confusion. In some cases, for clarification, we will write $\wh\lambda_{\wh k_t}$ explicitly instead. 
	%Throughout this section, we assume $E$ has i.i.d. $N(0, \sigma^2)$ entries. We shall denote by GRS the Generalized Rank Selection from (\ref{vier}) without self-tuning and by STRS the Self-tuning Rank Selection proposed above.  
	
	The following proposition is critical for the feasibility of STRS. 
	
	\begin{prop}\label{prop:monolbd}
		We have $\wh\lambda_t > \wh\lambda_{t+1}$ and $\wh k_{t} \le \wh k_{t+1}$, for all $t \ge 0$. More importantly, $\wh k_{t}\le r$ for all $t \ge 0$, holds with probability tending to $1$ exponentially fast as $(q\vee m)\rightarrow \infty$ and $nm\rightarrow \infty$.
	\end{prop}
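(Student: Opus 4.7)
I prove the three claims in order. The monotonicity $\wh\lambda_{t+1} < \wh\lambda_t$ I establish by induction on $t$. For the base case, substituting $\wh\lambda_0 = 2(1+\eps)S_1$ and the definition of $\wh\lambda_1$ reduces the claim to the deterministic inequality $nm < 2(1+\eps)S_1 \cdot [(1-\eps)\wh R_0/\wh U_0 + \wh k_0]$, which follows from $\wh R_0 \ge (n-q)m$, $\wh U_0 \le 2 S_1$, and $\wh k_0 \ge 1$. The inductive step rests on claim 2 together with a monotonicity analysis of the function $k \mapsto f(k) := (1-\eps)\wh R(k)/\wh U(k) + k$ (where $\wh R(k), \wh U(k)$ denote the data-independent quantities in (\ref{def_rut}) viewed as functions of the rank argument): the $+k$ term contributes $+1$ per unit step, while $(1-\eps)\wh R(k)/\wh U(k)$ decreases by at most $(1-\eps)(S_{2k+1}+S_{2k+2})/\wh U(k) \le 1-\eps$ per step in the regime $\wh U = S_1$, and is actually non-decreasing in the regime $\wh U = S_{2k+1}+S_{2k+2}$ once the $(n-q)m$ term is accounted for. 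Hence $f$ is strictly increasing and $\wh\lambda_{t+1} < \wh\lambda_t$. The second claim, $\wh k_t \le \wh k_{t+1}$, is immediate from the definition (\ref{kt}): the search range $\{\wh k_t, \ldots, \wh K_{t+1}\}$ is non-empty because $\wh\lambda_{t+1} < \wh\lambda_t$ yields $\wh K_{t+1} \ge \wh K_t \ge \wh k_t$.

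The third assertion $\wh k_t \le r$ (with probability tending to one exponentially) is the real content and is proved by induction on $t$. For $t = 0$, Theorem \ref{thm:RS1} combined with Proposition \ref{prop:ongelijk} reduce the task to verifying the event $\{2 d_1^2(PE) \le \wh\lambda_0 \wh\sigma^2\}$. Since $\wh\lambda_0 = 2(1+\eps)S_1$, $\EE[d_1^2(PE)] \le \sigma^2 S_1$, and $\wh\sigma^2$ concentrates around $\sigma^2$, this event holds with the required exponential probability by the Gaussian Lipschitz concentration of $d_1(PE)$ (as in (\ref{d_1})) together with the $\chi^2$ lower tail (\ref{chiL}); the factor $(1+\eps)$ supplies the slack.

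For the inductive step, assume $\wh k_t \le r$. Theorem \ref{thm:RS1} applied at $\lambda = \wh\lambda_{t+1}$ reduces the task to verifying $d_1^2(PE) \le \wh\lambda_{t+1}\wh\sigma_r^2$. Substituting the closed form of $\wh\lambda_{t+1}$ and discarding the non-positive contribution $\wh k_t - r$ turns this into the sufficient condition
\begin{equation*}
\wh U_t \cdot \|Y-(PY)_r\|^2 \ \ge\ (1-\eps)\,\wh R_t \cdot d_1^2(PE).
\end{equation*}
I decompose $\|Y-(PY)_r\|^2 = \|(I-P)E\|^2 + \sum_{j>r}d_j^2(PY)$ and apply Weyl's inequality to $PY = PXA + PE$ with $\text{rank}(PXA)\le r$ to obtain the pointwise bound $d_{r+k}(PY) \ge d_{2r+k}(PE)$, hence $\sum_{j>r}d_j^2(PY) \ge \sum_{j>2r}d_j^2(PE)$. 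Combining the $\chi^2$ concentration of $\|(I-P)E\|^2$ around $(n-q)m\sigma^2$, joint Gaussian Lipschitz concentration of the singular-value tail sum $\sum_{j > 2k}d_j^2(PE)$ around $\sigma^2\sum_{j > 2k}S_j$ for each $k \in \{0,\ldots,r\}$, and the upper bound $d_1^2(PE) \le (1+\eps')\sigma^2\wh U_t$ (using $\wh U_t \ge S_1$) closes the inequality once all the slacks are absorbed into the single parameter $\eps$.

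The main obstacle is precisely this inductive step: the index $2\wh k_t$ is itself random, so concentration of $\sum_{j > 2\wh k_t} d_j^2(PE)$ cannot be applied pointwise. I resolve this by a union bound over $\wh k_t \in \{0,1,\ldots,r\}$ and over the iteration index $t$ (bounded by $r+1$, since each non-terminal iteration strictly increases $\wh k_t \le r$). Both union-bound factors are polynomial in $m \wedge q$ and hence absorbed by the exponential tails in $nm$ and $m + q$; the parameter $\eps$ must only be tuned once to ensure uniform slack across all these events.
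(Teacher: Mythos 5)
Your first two claims are handled essentially as in the paper (monotonicity of $k\mapsto(1-\eps)\wh R/\wh U+k$ in the two regimes of the maximum defining $\wh U$, and $\wh k_t\le\wh k_{t+1}$ from the restricted search range), and your base case $t=0$ and the union bound over the random index $\wh k_t$ are also fine. The problem is the inductive step for $\wh k_{t+1}\le r$, which is where the real difficulty lives.

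Your sufficient condition $\wh U_t\,\|Y-(PY)_r\|^2\ \ge\ (1-\eps)\,\wh R_t\, d_1^2(PE)$ is correctly derived, but it cannot be closed by the concentration bounds you invoke. After inserting $\|Y-(PY)_r\|^2\ge\|E-(PE)_{2r\wedge N}\|^2\approx\sigma^2\bigl[(n-q)m+\sum_{j>2r}S_j\bigr]$ and $d_1^2(PE)\le(1+\eps')\sigma^2\wh U_t$, the factor $\wh U_t$ cancels and you are left needing
$(n-q)m+\sum_{j>2r}S_j\ \gtrsim\ (n-q)m+\sum_{j>2\wh k_t}S_j=\wh R_t$,
up to $(1\pm\eps)$ factors. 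The discrepancy is the \emph{additive} term $\sum_{j=2\wh k_t+1}^{2r}S_j$, which is of order $(r-\wh k_t)(\sqrt m+\sqrt q)^2$ and can dominate the right-hand side entirely — e.g.\ when $n$ is close to $q$ and $\wh k_t\ll r$, which is precisely the regime the iterative procedure is designed for. No tuning of $\eps$ absorbs this. The step that loses the game is ``discarding the non-positive contribution $\wh k_t-r$'': that term is exactly what compensates for the mismatch between the index $2\wh k_t$ in $\wh R_t$ and the index $2r$ in the concentration target. The paper keeps it and instead transfers from index $\wh k_t$ to index $r$ via the monotonicity of $k\mapsto\|E-(PE)_{2k}\|^2/(nm-\lambda k)$ (Lemma \ref{lem: mono}), whose hypothesis is a lower bound on $\lambda$ involving $d_{2\wh k_t+1}^2(PE)+d_{2\wh k_t+2}^2(PE)$ rather than $d_1^2(PE)$. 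This is why $\wh U_t$ in (\ref{def_rut}) is defined as the maximum $S_1\vee(S_{2\wh k_t+1}+S_{2\wh k_t+2})$: two separate inequalities must hold simultaneously, and your argument only ever uses the branch $\wh U_t\ge S_1$. To repair the proof you need both halves of the event $\E_k$ in (\ref{eq_E}) and the telescoping monotonicity argument; the crude one-shot comparison does not suffice.
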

	
	The increasing property of $\wh k_t$  immediately yields the following theorem. 
	
	\begin{thm}\label{thm: SRS1}
		Let numerical constant $C > 2$. Let $\wt k$ be the minimizer of (\ref{vier}) using $\lambda = C(\sqrt{m}+\sqrt{q})^2$ and  $\wh k$ be the final selected rank of STRS starting from the same value $\wh\lambda_0=\lambda$. % in (\ref{initlbd}). 
		Then 
		$$\PP\left\{\wt k \le \wh k \le r \right\}\to 1,$$ as $nm\to \infty$ and $(q\vee m)\to \infty$.
	\end{thm}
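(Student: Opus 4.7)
The plan is to derive Theorem \ref{thm: SRS1} as an essentially immediate consequence of Proposition \ref{prop:monolbd} once the starting point of STRS is identified with $\wt k$. First I would verify this identification: under the hypothesis $\wh\lambda_0 = \lambda = C(\sqrt{m}+\sqrt{q})^2$, the defining equation (\ref{k0}) of $\wh k_0$ minimizes exactly the same ratio criterion (\ref{vier}) over the same feasible integer range (both exclude values of $k$ producing a non-positive denominator). By the uniqueness of the minimizer established in Theorem \ref{closedform}, this forces $\wh k_0 = \wt k$ as a deterministic identity, independent of any probabilistic event.

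Next I would invoke both conclusions of Proposition \ref{prop:monolbd}. The monotonicity part, $\wh k_t \le \wh k_{t+1}$ for all $t \ge 0$, says that $\{\wh k_t\}$ is a non-decreasing sequence of integers bounded above by $N = q \wedge m$, so it stabilizes after finitely many iterations at some value that STRS returns as $\wh k$. In particular $\wt k = \wh k_0 \le \wh k$ holds deterministically. The second conclusion of Proposition \ref{prop:monolbd} provides the companion inequality $\wh k_t \le r$ for all $t \ge 0$ on an event of probability tending to one exponentially fast as $nm \to \infty$ and $(q \vee m) \to \infty$; on this event $\wh k \le r$. A union of the two estimates yields $\wt k \le \wh k \le r$ with the claimed probability.

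The main obstacle had already been absorbed into Proposition \ref{prop:monolbd}, whose proof must simultaneously track the strict decrease of the data-driven tuning parameter $\wh\lambda_t$ and the preservation of the no-overfitting property $\wh k_t \le r$ across iterations; this is the delicate coupled monotonicity argument alluded to in the introduction. Granted that proposition, the present theorem reduces to the short bookkeeping argument above, plus the observation that $\wh k_0$ and $\wt k$ solve an identical optimization problem. No further probabilistic control beyond what Proposition \ref{prop:monolbd} already supplies is required.
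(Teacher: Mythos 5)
Your overall architecture matches the paper's: identify $\wt k$ with $\wh k_0$, then combine the monotonicity $\wh k_0 \le \wh k_1 \le \cdots$ with the high-probability bound $\wh k_t \le r$ from Proposition \ref{prop:monolbd}. However, there is a gap at the point where you ``invoke both conclusions of Proposition \ref{prop:monolbd}'': that proposition is stated and proved for the STRS procedure initialized at $\wh\lambda_0 = 2(1+\eps)S_1$ as in (\ref{initlbd}), with $S_1 = \EE[d_1^2(Z)]$, and the event $\E$ in its proof (in particular $\{2d_1^2(PE)\le \wh\lambda_0\wh\sigma^2\}$ and the claim that $\wh\lambda_1<\wh\lambda_0$) is built around that specific initialization. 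The theorem instead starts STRS from $\wh\lambda_0 = \lambda = C(\sqrt{m}+\sqrt{q})^2$, so Proposition \ref{prop:monolbd} does not apply verbatim; you must check that this alternative starting value is admissible. A symptom of the omission is that your argument never uses the hypothesis $C>2$, which is exactly what makes the check work.

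The missing step — and in fact the only substantive content of the paper's proof — is the feasibility verification: since $\EE[d_1^2(Z)] \le (\EE[d_1(Z)])^2 + 1 \le (\sqrt{m}+\sqrt{q})^2 + 1$, and since $C>2$, one can pick $\eps\in(0,1)$ with $2(1+\eps)\bigl[(\sqrt{m}+\sqrt{q})^2+1\bigr] \le C(\sqrt{m}+\sqrt{q})^2$, hence $\lambda \ge 2(1+\eps)S_1$. With the starting value only enlarged, the event $\{2d_1^2(PE)\le\wh\lambda_0\wh\sigma^2\}$ can only become more probable and the decreasing-$\wh\lambda_t$ argument in the proof of Proposition \ref{prop:monolbd} still goes through, so all of its conclusions transfer to the modified initialization. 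Once you add this verification, the rest of your bookkeeping (uniqueness of the minimizer from Theorem \ref{closedform} giving $\wt k=\wh k_0$, stabilization of the non-decreasing integer sequence, and the union bound) is correct and coincides with the paper's route.
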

	
	%The above holds for any $r$ with exponential convergence rate. 
	We find that STRS  always selects a rank closer to the true rank than the (one step) Generalized Rank Selection procedure (GRS) from the previous section that uses (\ref{vier}) as its criterion.
	
	%%%Another importance of Proposition \ref{prop:monolbd} lies in t
	The decreasing property of $\wh \lambda_t$, stated in Proposition \ref{prop:monolbd}, implies an increasingly    milder condition on the required signal. Meanwhile, the way of updating $\lambda$ in step (\ref{lbdt}) is carefully chosen to maintain $\wh k_t \le r$. We refer to the proof for more explanations. Thus, if a proper sequence of signal-to-noise condition is met, we expect that STRS finds the rank consistently.
	The following theorem confirms this. 
	
	\begin{thm}\label{thm: signalSRS}
		Let $k_0<k_1<\cdots<k_T=r$ be a strictly increasing subsequence of $\{1,2,\ldots,r\}$ of length $T+1\le r$. 
		Define $\lambda_0$ as (\ref{initlbd}) and $\lambda_{t+1}$ obtained from (\ref{lbdt}) by using $k_t$ in lieu of $\wh k_t$, for $t=0,\ldots,T-1$. Assume $r$ and $\delta$ satisfy (\ref{rankconstraint}) for $\lambda_0$. Then, on the event
		\begin{eqnarray}\label{signalcondmsrs}
		d_{k_t}(XA)\ge C''\sigma\sqrt{\lambda_t}, \qquad   t = 0,\ldots,T
		\end{eqnarray}
		for some numerical constant $C''> 1/\sqrt{2}+\sqrt{1+\delta}$,  
		there exists $0 \le T'\le T$ such that $\wh k_{T'}  = r$, with probability tending to $1$, where $\wh k_0 \le \wh k_1 \le \cdots \le \wh k_{T'}$ are from (\ref{kt}).  	
	\end{thm}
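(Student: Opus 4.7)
The plan is to prove by induction on $t$ that $k_t \le \wh k_t \le r$ holds for every $t = 0, 1, \ldots, T$ on a high probability event, and then to read off the conclusion. Since the inequality at $t = T$ gives $\wh k_T \ge k_T = r$ while Proposition \ref{prop:monolbd} simultaneously provides $\wh k_T \le r$, we must have $\wh k_T = r$; the monotonicity of $\wh k_t$ in $t$ from Proposition \ref{prop:monolbd} then guarantees a smallest $T' \le T$ with $\wh k_{T'} = r$, at which point the stopping rule terminates STRS. The good event is the intersection of the event of Proposition \ref{prop:monolbd}, a two-sided variance control on $\wh\sigma^2$ from (\ref{chiL})--(\ref{chiR}), and the operator-norm bound on $d_1(PE)$ from (\ref{d_0})--(\ref{d_1}); a union bound over at most $r + 1$ iterations preserves the exponential convergence in $nm$ and $m + q$.

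For the base case $t = 0$, we have $\wh\lambda_0 = \lambda_0$ by construction. The signal hypothesis $d_{k_0}(XA) \ge C''\sigma\sqrt{\lambda_0}$, the ratio bound $nm/(nm - \lambda_0 r) \le 1 + \delta$ from (\ref{rankconstraint})--(\ref{rankconstraint2}), and the Gaussian variance control allow us to invoke Corollary \ref{cor: generalrank} with $\lambda = \lambda_0$ and $s = k_0$, using the numerical slack $C'' > 1/\sqrt{2} + \sqrt{1 + \delta}$ to absorb the factor $\wh\sigma/\sigma$; this yields $k_0 \le \wh k_0 \le r$.

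For the inductive step, assume $k_t \le \wh k_t \le r$. The pivotal sub-claim is that the update map $k \mapsto nm/D(k)$ induced by (\ref{lbdt}) is non-increasing in $k$, so that $\wh k_t \ge k_t$ implies $\wh\lambda_{t+1} \le \lambda_{t+1}$. Writing $D(k) = (1-\eps)A(k)/B(k) + k$ with $A(k) = (n-q)m + \sum_{j=2k+1}^N S_j$ and $B(k) = S_1 \vee (S_{2k+1} + S_{2k+2})$, one notes that $B$ is non-increasing in $k$ because the $S_j$ are ordered, and that $A(k+1) = A(k) - S_{2k+1} - S_{2k+2}$ with $S_{2k+1} + S_{2k+2} \le B(k)$ directly from the definition of the maximum. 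Hence $A(k+1)/B(k+1) \ge A(k+1)/B(k) \ge A(k)/B(k) - 1$, which gives $D(k+1) - D(k) \ge -(1-\eps) + 1 = \eps > 0$. With $\wh\lambda_{t+1} \le \lambda_{t+1}$ in hand, the signal assumption $d_{k_{t+1}}(XA) \ge C''\sigma\sqrt{\lambda_{t+1}} \ge C''\sigma\sqrt{\wh\lambda_{t+1}}$ combined with a second invocation of Corollary \ref{cor: generalrank}, now with $\lambda = \wh\lambda_{t+1}$ and $s = k_{t+1}$ (and noting that $r \le K_{\lambda_0} \le K_{\wh\lambda_{t+1}}$ since $\wh\lambda_{t+1} \le \lambda_0$, so the rank constraint remains valid), yields $k_{t+1} \le \wh k_{t+1} \le r$ and closes the induction.

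The main obstacle is the monotonicity sub-claim above, which is what allows the signal condition to be transferred from the deterministic reference $\lambda_{t+1}$ to the data-driven $\wh\lambda_{t+1}$. Apart from this, the proof is a clean iteration of Corollary \ref{cor: generalrank}: at each step, the strengthened signal assumption $d_{k_t}(XA) \ge C''\sigma\sqrt{\lambda_t}$ is precisely what is needed to push $\wh k_t$ up to at least $k_t$, and after $T$ iterations this forces $\wh k_T = r$.
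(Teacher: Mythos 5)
Your overall architecture (induction on $t$ giving $k_t \le \wh k_t \le r$, plus the monotonicity of the update map so that $\wh k_t \ge k_t$ forces $\wh\lambda_{t+1}\le\lambda_{t+1}$) matches the paper's proof, and your direct computation showing $D(k+1)-D(k)\ge \eps>0$ is correct and in fact a cleaner packaging of the case analysis the paper carries out inside the proof of Proposition \ref{prop:monolbd}. The base case is also fine. However, the inductive step has a genuine gap: you invoke Corollary \ref{cor: generalrank} with $\lambda=\wh\lambda_{t+1}$, and that corollary's hypothesis includes the event $\{2d_1^2(PE)\le \lambda\wh\sigma^2\}$ \emph{for the current} $\lambda$. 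This event is needed in the corollary's proof both to verify (\ref{RS1}) (via the lower bound $\wh\sigma^2\le\wh\sigma_r^2$ of Proposition \ref{prop:ongelijk}) and to absorb the $d_1(PE)$ term into the $\sqrt{\lambda}\,\wh\sigma/\sqrt{2}$ piece of the signal bound. Once $\lambda$ has been reduced below $2d_1^2(PE)/\wh\sigma^2\approx 2(\sqrt{m}+\sqrt{q})^2$ --- which is precisely what STRS is designed to do, and what happens already after one update in the paper's own simulations --- this event fails with high probability. None of the three ingredients in your good event (the event of Proposition \ref{prop:monolbd}, the $\chi^2$ control of $\wh\sigma^2$, and the operator-norm bound on $d_1(PE)$) delivers $2d_1^2(PE)\le\wh\lambda_{t+1}\wh\sigma^2$, so the second invocation of the corollary is not justified. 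If it were, the iterative scheme would be unnecessary: one could run GRS once at the final $\lambda$.

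The missing idea is the one that motivates the specific form of the update (\ref{lbdt}): on the event $\E$ of (\ref{eq_E}), the paper shows directly that $d_1^2(PE)\le \|E-(PE)_{2r\wedge N}\|^2/(nm/\wh\lambda_t-r)\le \wh\lambda_t\,\wh\sigma_r^2$ for every $t$ (display (\ref{d1lessthanr})), using the sharper lower bound (\ref{lowbdsigma}) on $\wh\sigma_r^2$ in terms of $\|E-(PE)_{2r\wedge N}\|^2$ together with the monotonicity Lemma \ref{lem: mono}; the ratio $\wh R_t/\wh U_t$ in (\ref{lbdt}) is exactly the population version of the quantity that makes this chain of inequalities close. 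This both replaces the unavailable hypothesis of Corollary \ref{cor: generalrank} in verifying (\ref{RS1}) at the reduced $\lambda$'s and supplies the control of $d_1(PE)$ against $\sqrt{\wh\lambda_{t+1}}\wh\sigma_r$ needed in (\ref{RS2}). To repair your argument, keep Proposition \ref{prop:monolbd} for the upper bound $\wh k_{t+1}\le r$, but for the lower bound verify condition (\ref{RS2}) of Theorem \ref{thm:RS2} directly on the event $\E'$ of the paper, using $d_1(PE)\le\sqrt{\wh\lambda_{t+1}}\,\wh\sigma_r$ from (\ref{d1lessthanr}) and the upper bound $\wh\sigma_r^2\le \frac{nm}{nm-\lambda r}\wh\sigma^2$ (which, unlike the lower bound, needs no event), rather than routing through Corollary \ref{cor: generalrank}.
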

	
	The sequence of $\{k_0, \ldots, k_T\}$ plays an important role for interpreting the above theorem. It can be regarded as a underlying sequence bridge starting from $1$ and leading towards the true rank.  The ideal case is $\{k_0, \ldots, k_T\}=\{r\}$ which leads to a one-step recovery, but requires a comparatively stronger signal-to-noise condition (\ref{signalcondmsrs}). At the other extreme, it could take $r$ steps to recover the true rank.
	We emphasize that the latter case requires the {\em mildest} signal-to-noise condition by the following two observations:
	\emph{
		\begin{enumerate}
			\item[(1)] Proposition \ref{prop:monolbd} guarantees that each time the updated $\lambda_t$ is decreasing;
			\vspace{-2mm}
			\item[(2)] The signal condition (\ref{signalcondmsrs}) is becoming milder as $\lambda_t$ gets smaller.
		\end{enumerate}
	}
	
	From   display (\ref{eq_sigma_r}) in Proposition \ref{prop:ongelijk} and from  Corollary \ref{cor: generalrank}, it is clear that we use the string of inequalities $\wh \sigma^2 \le \wh \sigma_r^2 \le nm/(nm-\lambda r)\wh\sigma^2$ to derive the required signal-to-noise condition. The second inequality becomes loose for large $r$ such that $nm-\lambda r$ is small, or equivalently, $\delta$ is large from (\ref{rankconstraint2}), which further implies a possible larger decrement of the required signal-to-noise condition by using STRS. To illustrate this phenomenon  concretely, we study to a special case where $r\ge N/2$ and $\{k_0, \ldots, k_T\} = \{\ceil{N/2}, r\}$, and show in the theorem below that the signal condition for recovering $r$ can be relaxed significantly when $\delta$ is large.
	
	\begin{thm}\label{thm: SRS-snr}
		Define $\lambda_0 = 2C(\sqrt{m}+\sqrt{q})^2$ with $C = 8/7$. Assume $r$ and $\delta$ satisfy (\ref{rankconstraint}) for $\lambda_0$ and
		\begin{eqnarray}\label{eq_srs_q/2}
		d_{\ceil{N/2}}(XA) &\ge& C'\left[1+ \sqrt{2(1+\delta)}\right]\sigma(\sqrt{m}+\sqrt{q})\\\label{eq_srs_r}
		d_r(XA) &\ge& C'\left[1 + \sqrt{1+\delta \over 1+\delta/8}\ \right]\sigma(\sqrt{m}+\sqrt{q})	
		\end{eqnarray}
		for some numerical constant $C' > 2\sqrt{2/7}$.
		Then either $\wh k_0 = r$ or $\wh k_1 = r$, with probability tending to $1$, as $N= q \wedge m\to\infty$. Here $\wh k_0$ and $\wh k_1$ are selected from (\ref{k0}) and (\ref{kt}). 
	\end{thm}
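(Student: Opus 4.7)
The strategy is to apply Theorem \ref{thm: signalSRS} specialized to $T=1$ and the strictly increasing sequence $k_0 = \lceil N/2 \rceil < k_1 = r$. That theorem then yields the existence of $T' \in \{0,1\}$ with $\wh k_{T'} = r$, which is exactly the stated dichotomy $\wh k_0 = r$ or $\wh k_1 = r$. What remains is to verify the two signal-to-noise conditions (\ref{signalcondmsrs}) at $t=0$ and $t=1$, using $\lambda_0 = 2C(\sqrt{m}+\sqrt{q})^2$ with $C=8/7$.

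The stage $t=0$ condition is a direct algebraic match. With $C'' = 1/\sqrt{2} + \sqrt{1+\delta}$, a short calculation based on $\sqrt{\lambda_0} = (4/\sqrt{7})(\sqrt{m}+\sqrt{q})$ yields
\[
C''\sqrt{\lambda_0} \;=\; 2\sqrt{2/7}\,\bigl[1 + \sqrt{2(1+\delta)}\bigr](\sqrt{m}+\sqrt{q}),
\]
so the hypothesis (\ref{eq_srs_q/2}) with $C' > 2\sqrt{2/7}$ is exactly $d_{\lceil N/2 \rceil}(XA) \ge C''\sigma\sqrt{\lambda_0}$, up to the standard deviation bounds for $\wh\sigma^2$ supplied by (\ref{chiL})--(\ref{chiR}).

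The stage $t=1$ analysis is more delicate. The key observation is that the specific choice $k_0 = \lceil N/2 \rceil$ forces $2k_0 \ge N$, so by the convention $S_j = 0$ for $j > N$ the sum in $\wh R_0$ is empty and $\wh U_0 = S_1$; the update (\ref{lbdt}) then reduces to
\[
\lambda_1 \;=\; \frac{nm}{(1-\eps)(n-q)m/S_1 + \lceil N/2 \rceil}.
\]
Combining (i) the trivial bound $r \le 2\lceil N/2 \rceil$, (ii) the random matrix estimate $S_1 \le (\sqrt{m}+\sqrt{q})^2 (1+o(1))$ from (\ref{d_0})--(\ref{d_1}), (iii) the initial rank constraint (\ref{rankconstraint}) applied to $\lambda_0$, and (iv) the calibration $2C = 16/7$, a careful computation yields the tightened rank constraint $\lambda_1 r \le \delta\,nm/(8+\delta)$, equivalently $nm/(nm - \lambda_1 r) \le 1 + \delta/8$. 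Feeding this into Corollary \ref{cor: generalrank} with $\lambda = \lambda_1$ gives the required signal condition $d_r(XA) \ge \sqrt{\lambda_1}\,\sigma\,[1/\sqrt{2} + \sqrt{1+\delta/8}]$; bounding $\sqrt{\lambda_1}$ in terms of $\sqrt{\lambda_0}$ via the two rank constraints then produces the multiplier $1 + \sqrt{(1+\delta)/(1+\delta/8)}$ appearing in (\ref{eq_srs_r}).

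The main obstacle lies in verifying the reduction $\delta \mapsto \delta/8$ for the tightened rank constraint on $\lambda_1$. The factor $8$ emerges from an arithmetic cancellation between $2C = 16/7$, the ratio $r/\lceil N/2 \rceil \le 2$, and the normalization of $S_1$ by $(\sqrt{m}+\sqrt{q})^2$, and requires tracking lower-order $\eps$-slack from (\ref{d_0})--(\ref{chiR}). Once this deterministic reduction is secured, the high-probability guarantee follows by combining those Gaussian concentration bounds with Proposition \ref{prop:monolbd}, which ensures that the STRS iterates $\wh k_0$ and $\wh k_1$ stay bounded above by $r$ throughout.
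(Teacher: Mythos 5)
Your overall architecture matches the paper's: use the $t=0$ signal condition to force $\wh k_0\ge \lceil N/2\rceil$, note that $2\wh k_0\ge N$ collapses the update (\ref{lbdt}) to $\lambda_1 = nm/[(1-\eps)(n-q)m/S_1+\wh k_0]$ with $S_1\approx(\sqrt m+\sqrt q)^2$ (Bai--Yin), and then verify $d_r(XA)\ge d_1(PE)+\sqrt{\lambda_1}\,\wh\sigma_r$ so that Theorem \ref{thm:RS2} yields $\wh k_1=r$. Your stage-$0$ algebra ($C''\sqrt{\lambda_0}=2\sqrt{2/7}\,[1+\sqrt{2(1+\delta)}](\sqrt m+\sqrt q)$) is correct.

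The gap is in the stage-$1$ step: the claimed ``tightened rank constraint'' $nm/(nm-\lambda_1 r)\le 1+\delta/8$ is false in general under the theorem's hypotheses, so the route through Corollary \ref{cor: generalrank} with the ratio bounded by $1+\delta/8$ does not go through. Concretely, take $q=m$ (so $(\sqrt m+\sqrt q)^2=4m$, $N=m$), $\delta=1$, $n=19m$ and $r=m-1$: the constraint (\ref{rankconstraint}) holds for $\lambda_0$ since $\frac{\delta}{1+\delta}\frac{nm}{\lambda_0}=\frac{7n}{128}>m$, yet $nm/\lambda_1=\frac78\frac{(n-q)m}{4m}+\frac q2\approx 4.44\,m$, giving $nm/(nm-\lambda_1 r)\approx 4.44/3.44\approx 1.29>1+\delta/8=1.125$. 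The factor $1+\delta/8$ in (\ref{eq_srs_r}) is \emph{not} a bound on $nm/(nm-\lambda_1 r)$; in the paper it emerges from bounding the \emph{product} $\lambda_1\wh\sigma_r^2\le \|E\|^2/(nm/\lambda_1-r)$ directly and then splitting into two cases according to whether $nm/\lambda_0\ge (1+\delta)N/\delta$ or not. In the first case one uses $r\le N$ and gets the (stronger) factor $(1+\delta)/(1+5\delta/16)$; in the second case one uses $r\le\frac{\delta}{1+\delta}\frac{nm}{\lambda_0}$ together with the fact that $\wh k_0=\lceil N/2\rceil$ itself must satisfy $\lceil N/2\rceil\le\frac{\delta}{1+\delta}\frac{nm}{\lambda_0}$ to control the $q/n$ term, which is exactly where $(1+\delta)/(1+\delta/8)$ appears. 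Without this case analysis on the combined quantity, your decomposition into ``$\lambda_1$ versus $\lambda_0$'' and ``$nm/(nm-\lambda_1 r)$ versus $1+\delta/8$'' cannot be repaired, since the counterexample above violates the second piece while satisfying all hypotheses. A secondary, minor point: $\lambda_1$ is computed from the random $\wh k_0\ge\lceil N/2\rceil$, not from $\lceil N/2\rceil$ itself; one should reduce to the worst case $\wh k_0=\lceil N/2\rceil$ using the monotonicity of the update in Proposition \ref{prop:monolbd}.
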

	
	The lower bound condition  (\ref{eq_srs_q/2}) is condition  (\ref{onzin}) with $s= \ceil{N/2} <r$.
	As a simple numerical  illustration,  we compare (\ref{eq_srs_q/2}) (and therefore (\ref{onzin})) with  (\ref{eq_srs_r}) for   $\delta=4 $ and $100$. If $\delta = 4$, we obtain 
	\[
	d_{\ceil{N/2}}(XA) \ge 4.17C'(\sqrt{m}+\sqrt{q})\sigma,\qquad 
	d_r(XA) \ge 2.83C'(\sqrt{m}+\sqrt{q})\sigma,	
	\]
	while
	if $\delta = 100$,  we have 
	\[
	d_{\ceil{N/2}}(XA) \ge 15.3C'(\sqrt{m}+\sqrt{q})\sigma,\qquad 
	d_r(XA) \ge 3.74C'(\sqrt{m}+\sqrt{q})\sigma.	
	\]
	As we can see, for small $\delta$, the signal-to-noise condition decreases slightly.  However, for larger $\delta$, $\sqrt{1+\delta}$ could be quite large, while $\sqrt{ (1+\delta)/  ( 1+\delta/8)}$ is always   bounded above by $2\sqrt{2}$. To further elaborate the implications of small/large $\delta$, we consider two cases by 
	recalling the rank constraint (\ref{rankconstraint}):
	\begin{enumerate}
		\item[(1)] If ${nm / \lambda_0} \ge (1+\delta)N/\delta$, the rank constraint (\ref{rankconstraint}) reduces to simply $r \le N $. From (\ref{rankconstraint2}), a smaller value for $\delta$ leads  to a smaller value for  $nm/(nm-\lambda_0 r)$, provided $r\le N$. Therefore,  $\wh\sigma^2\le\wh \sigma_r^2 \le nm/(nm-\lambda_0 r)\wh\sigma^2$ should be tight and  we expect a smaller reduction of the  signal condition  for smaller values of  $\delta$. 
		On the other hand, when $nm$ and $\lambda_0N$ are close, meaning $\delta$ is large, we expect a considerable relaxation of the signal condition for comparatively large $r$. These two points are clearly reflected in (\ref{eq_srs_q/2}) and (\ref{eq_srs_r}).
		\item[(2)] If ${nm / \lambda_0} \le (1+\delta)N/\delta$, it follows that  $r \le \{ \delta / ( 1+\delta) \} (nm /  \lambda_0)$. Then a smaller $\delta$ means a stronger restriction on $r$ which implies $\wh\sigma^2 \le \wh\sigma_r^2 \le nm/(nm-\lambda_0 r)\wh\sigma^2$ becomes tight and we expect a modest relaxation  of  the signal condition.
		If  $\delta$ is large, then  $nm-\lambda_0 r$ could be small for a comparatively large $r$. Thus $nm/(nm-\lambda_0 r)$ would explode and we expect a significant decrease in the lower bound (\ref{eq_srs_r}) for the signal. Both  these observations agree  with our results. It is worth mentioning that when $nm$ is small comparing to $\lambda_0N$, $\delta$ is likely to be large. For instance, when $m = q = n$ is moderate, taking $\lambda_0 = 2(\sqrt{m}+\sqrt{q})^2$ yields $nm /\lambda_0 = q/8$ which is not quite large already. Imposing a small $\delta$ in this case would further restrict the range of $r$.
	\end{enumerate}
	
	Recall that the range of allowable rank $\{0,\ldots,K_\lambda\}$ in (\ref{def_K})  increases as $\lambda$ decreases. This means that, after a few iterations,   the true rank could be selected even when it was out of the possible range $\{0,\ldots,K_{\lambda_0}\}$ at the beginning. This phenomenon is clearly supported by our simulations in Section \ref{sec_sim_2}. In addition, the following proposition proves that  $K_{\lambda_0}$ can be extended to $N=q\wedge m$ in some settings even when (\ref{rankconstraint}) is not met for $\lambda_0$. 
	
	%To clear the notation and to ease the presentation, for any $t\ge 0$, we define $\wh K_t:=(nm/\wh\lambda_t)\wedge N$  and $K_t := (nm/\lambda_t) \wedge N$ with $\wh \lambda_t$ from (\ref{lbdt}) and $\lambda_t$ defined in Theorem \ref{thm: signalSRS}.
	
	\begin{prop}\label{prop: ex-rank}
		Let $\lambda_0 = 2C(\sqrt{m}+\sqrt{q})^2$ with $C= 8/7$ and assume $nm/\lambda_0 \ge 3N/4$. Suppose the first selected rank from (\ref{k0}) by using $\lambda_0$ satisfies $\wh k_0 \ge N/2$ and %. Then, on the event $\Z$ jointly with 
		\[
		d_r(XA) \ge C'(1+2\sqrt{3})(\sqrt{m}+\sqrt{q})\sigma,
		\]
		for some numerical constant $C'> 2\sqrt{2/7}$. Then,  
		we have $\PP\{\wh k_1 = r\} \to 1$ for any $N/2 \le r \le N$, as $N=q\wedge m \to \infty$.
	\end{prop}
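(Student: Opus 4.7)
The plan is to apply Corollary \ref{cor: generalrank} to the second iteration of STRS at the updated tuning parameter $\wh\lambda_1$ with $s = r$, which simultaneously pins $\wh k_1$ equal to $r$. The matching upper bound $\wh k_1 \le r$ also follows from Proposition \ref{prop:monolbd}.

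First I would simplify $\wh\lambda_1$: since $\wh k_0 \ge N/2$ forces $2\wh k_0 + 1 > N$, the convention $S_j = 0$ for $j > N$ makes $\wh R_0 = (n-q)m$ and $\wh U_0 = S_1$, so from (\ref{lbdt}),
\begin{equation*}
\wh\lambda_1 \;=\; \frac{nm}{(1-\eps)(n-q)m/S_1 + \wh k_0}.
\end{equation*}
Standard concentration for the top singular value of a $q\times m$ standard Gaussian matrix gives $S_1 = (\sqrt m + \sqrt q)^2(1 + o(1))$ as $N\to\infty$. Using $\wh k_0 \ge N/2 \ge r/2$ and the hypothesis $nm/\lambda_0 \ge 3N/4$ (equivalently $nm/(\sqrt m+\sqrt q)^2 \ge 12N/7$), I would verify by direct computation that
\begin{equation*}
\frac{nm}{\wh\lambda_1} \;=\; (1-\eps)\frac{(n-q)m}{S_1} + \wh k_0 \;\ge\; \frac{6}{5}\,r \qquad\text{for every } r \in [N/2, N].
\end{equation*}
This yields simultaneously $r \le K_{\wh\lambda_1}$ and $\delta := \wh\lambda_1 r/(nm - \wh\lambda_1 r) \le 5$, so $1/\sqrt 2 + \sqrt{1+\delta} \le 1/\sqrt 2 + \sqrt 6 = (1+2\sqrt 3)/\sqrt 2$; this is exactly where the constant $1 + 2\sqrt 3$ originates.

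To close the argument, observe $\wh\lambda_1 < \wh\lambda_0 = \lambda_0 = (16/7)(\sqrt m + \sqrt q)^2$ by Proposition \ref{prop:monolbd}, so chaining with the previous step,
\begin{equation*}
\sqrt{\wh\lambda_1}\,\wh\sigma\left[\tfrac{1}{\sqrt 2} + \sqrt{1+\delta}\right] \;\le\; \sqrt{16/7}\,(\sqrt m + \sqrt q)\,\wh\sigma\cdot\tfrac{1+2\sqrt 3}{\sqrt 2} \;=\; 2\sqrt{2/7}\,(1+2\sqrt 3)\,(\sqrt m + \sqrt q)\,\wh\sigma.
\end{equation*}
Combined with $\wh\sigma = \sigma(1 + o(1))$ from (\ref{chiL})--(\ref{chiR}), the standing hypothesis $d_r(XA) \ge C'(1+2\sqrt 3)(\sqrt m + \sqrt q)\sigma$ with $C' > 2\sqrt{2/7}$ verifies the signal inequality of Corollary \ref{cor: generalrank}. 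Together with the variance-ratio event $\{2d_1^2(PE)\le \wh\lambda_1\wh\sigma^2\}$, which holds with probability $1 - o(1)$ via (\ref{d_1}) and the lower bound on $\wh\lambda_1$ from the first step, the corollary yields that the GRS minimizer over $\{0,\ldots,K_{\wh\lambda_1}\}$ equals $r$. Since $\wh k_0 \le r$, this minimizer lies in the STRS search range $\{\wh k_0,\ldots,\wh K_1\}$ of (\ref{kt}), hence $\wh k_1 = r$.

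The main obstacle is the displayed bound $nm/\wh\lambda_1 \ge (6/5)r$, which must hold uniformly for $r \in [N/2,N]$ from only the two mild hypotheses $\wh k_0 \ge N/2$ and $nm/\lambda_0 \ge 3N/4$. This delicate estimate—balancing the contribution of $(1-\eps)(n-q)m/S_1$ against $\wh k_0$—is precisely what allows a single STRS iteration to stretch the admissible rank window from $K_{\lambda_0}$, which may lie strictly below $r$, all the way up to $N$.
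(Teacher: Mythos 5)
Your architecture is the same as the paper's: compute $\wh R_0=(n-q)m$ and $\wh U_0=S_1$ from $2\wh k_0+1>N$, reduce the claim to the signal condition of Theorem \ref{thm:RS2}/Corollary \ref{cor: generalrank} at the updated $\wh\lambda_1$ with $s=r$, and get $\wh k_1\le r$ from Proposition \ref{prop:monolbd}. However, the step you yourself flag as the main obstacle --- the displayed inequality $nm/\wh\lambda_1\ge \tfrac{6}{5}r$ for all $r\in[N/2,N]$ --- is false under the stated hypotheses. In the worst case ($q\le m$ with $q=o(m)$, $r=N=q$, equality $nm/\lambda_0=3N/4$, and $\wh k_0=N/2$) one has $nm=\tfrac{12}{7}q(\sqrt m+\sqrt q)^2$, hence $(n-q)m/(\sqrt m+\sqrt q)^2\ge \tfrac{12}{7}q-q=\tfrac57 q$ essentially with equality, and so
\begin{equation*}
\frac{nm}{\wh\lambda_1}=(1-\eps)\frac{(n-q)m}{S_1}+\wh k_0\;\approx\;(1-\eps)\frac{5}{7}q+\frac{q}{2},
\end{equation*}
which equals $\tfrac{9}{8}q$ at $\eps=1/8$ and never reaches $\tfrac{6}{5}q$ for any $\eps>0$ (even at $\eps=0$ the margin is only $17/14$ versus $6/5$). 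Consequently your $\delta$ can be as large as $8$ (so $1+\delta\le 9$, not $6$), and the chain $\sqrt{\wh\lambda_1}\,(1/\sqrt2+\sqrt{1+\delta})\le\sqrt{\lambda_0}\,(1/\sqrt2+3)\approx 5.6(\sqrt m+\sqrt q)$ overshoots the target $2\sqrt{2/7}(1+2\sqrt3)(\sqrt m+\sqrt q)\approx 4.77(\sqrt m+\sqrt q)$. The proof does not close.

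The underlying problem is that the factorization into the two separate bounds $\wh\lambda_1\le\lambda_0$ and $1+\delta\le 6$ is lossy: $\wh\lambda_1$ is smallest exactly when $1+\delta$ is largest. What is actually true, and what the paper proves, is the joint bound on the product,
\begin{equation*}
\wh\lambda_1\,\frac{nm}{nm-\wh\lambda_1 r}\;=\;\frac{nm}{nm/\wh\lambda_1-r}\;\le\;\frac{nm}{\tfrac78(n-q)m/(\sqrt m+\sqrt q)^2-q/2}\;\le\;\frac{96}{7}(\sqrt m+\sqrt q)^2\;=\;6\lambda_0,
\end{equation*}
obtained by expressing the denominator relative to $nm$ (using $q(\sqrt m+\sqrt q)^2\le\tfrac{7}{12}nm$) rather than relative to $q$. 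Equivalently, the paper bounds $\wh\lambda_1\wh\sigma_r^2\le\|E\|^2/(nm/\wh\lambda_1-r)\le 12C(1+\eps')(\sqrt m+\sqrt q)^2\sigma^2$ in one shot, which is where $2\sqrt3=\sqrt{12}$ comes from. With this joint bound your final chain does go through, since $\sqrt{\wh\lambda_1}/\sqrt2+\sqrt{\wh\lambda_1(1+\delta)}\le\sqrt{\lambda_0}/\sqrt2+\sqrt{6\lambda_0}=\sqrt{\lambda_0}\,(1+2\sqrt3)/\sqrt2$; so your constant bookkeeping is right, but the route to it must avoid bounding $1+\delta$ in isolation. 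A secondary caution: you apply Corollary \ref{cor: generalrank} at the data-dependent $\wh\lambda_1$; since $\wh\lambda_{t+1}$ is decreasing in $\wh k_t$, it suffices (as the paper does) to treat the worst case $\wh k_0=N/2$, for which $\wh\lambda_1$ is deterministic, but this reduction should be stated.
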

	
	In order to be able to select among  ranks from $N/2$ to $N$ in the first step,   (\ref{rankconstraint2}) requires $nm/\lambda_0 \ge (1+\delta)N/\delta$. However, Proposition \ref{prop: ex-rank}   relaxes this to $nm/\lambda_0 \ge 3N/4$ from Proposition \ref{prop: ex-rank}. 
	
	\section{Extension to   heavy tailed error distributions}\label{sec_extension}
	
	Most results in this paper are finite sample results and apply to any matrix $E$. Only Corollary \ref{c1}, Theorem \ref{rankGaussian} and the results in Section \ref{sec_SRS} require Gaussian errors. They appeal to 
	precise concentration inequalities of $d_1(PE)$ around $\sqrt{m}+\sqrt{q}$, making use of the fact that $d_1(PE)= d_1(\Lambda U^T E)$ based on the eigen decomposition of $P=U\Lambda U^T$, and the fact that  $\Lambda U^T E$ in turn is again Gaussian. In general, if $E$ has independent entries, then the transformations $PE$ or $\Lambda U^TE$  no longer have
	independent entries, although their columns remain independent. 
	Regardless, our simulations reported in Sections \ref{sec_sim_4} and  \ref{sec_sim_5} support our conjecture that our iterative method is flexible and our results continue to hold for general distributions, such as $t$-distributions with $6$ degrees of freedom, for independent errors $E_{ij}$.
	For some important special cases we are able to formally allow for  errors with  finite fourth moments only.  
	
	\subsection{Heavy tailed errors distributions with $n / q\to 1$}
	We first consider the case $n / q\to 1$, which is  likely  to occur in   high-dimensional settings ($p>> n$). The following theorem guarantees the  rank recovery via the GRS procedure for errors with heavy tailed distributions. 
	
	\begin{thm}\label{thm: heavy tails n = q}
		Let $\lambda > 2(\sqrt{m}+\sqrt{q})^2$.
		Assume  that the entries of $E$ are i.i.d. random variables with mean zero and finite fourth moments.  Furthermore, assume  $n / q\to 1$, $r$ and $\delta$ satisfy (\ref{rankconstraint}) and 
		\begin{equation}\label{cond: signal}
		d_s(XA) \ge C\sigma (\sqrt{m}+\sqrt{q}),
		\end{equation}
		for some $s\le r$ and some numerical constant $C > 1+\sqrt{2(1+\delta)}$.
		Then, we have $\PP\{s\le \wh k\le r \} \to 1$ as $n \vee m \to \infty$, where $\wh k$ is selected from (\ref{vier}).\\ 
		In particular, if (\ref{cond: signal}) holds for $s = r$, then $\wh k$ consistently estimates $r$.
	\end{thm}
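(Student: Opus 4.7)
The strategy is to recycle the proof architecture of Theorem \ref{rankGaussian}: on a high-probability event I verify the two deterministic inclusions from Corollary \ref{cor: generalrank}, namely $\{2d_1^2(PE) \le \lambda \wh\sigma^2\}$ (which feeds Proposition \ref{prop:ongelijk} and Theorem \ref{thm:RS1}) together with the signal inclusion that feeds Theorem \ref{thm:RS2}. Since those three results are purely deterministic, the whole argument reduces to re-deriving the two Gaussian-specific concentration bounds $\wh\sigma^2 \approx \sigma^2$ and $d_1(PE) \lesssim \sigma(\sqrt{m}+\sqrt{q})$ under a finite fourth moment only.

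For $\wh\sigma^2 = \|E\|^2/(nm)$ I would simply invoke the weak law of large numbers: the $E_{ij}^2$ are i.i.d.\ with mean $\sigma^2$ and finite variance, so $\wh\sigma^2 \to \sigma^2$ in probability as $nm \to \infty$. For $d_1(PE)$, writing $P = UU^T$ with $U^T U = I_q$, the orthonormality gives $d_1(PE) = d_1(U^T E) \le d_1(E)$. Under a finite fourth moment, the Bai--Yin theorem (together with its rectangular extensions, obtained by transposition when the aspect ratio degenerates) provides $d_1(E)/(\sigma(\sqrt{n}+\sqrt{m})) \to 1$ in probability as $n \wedge m \to \infty$. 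The standing $n \ge q$ combined with $n/q \to 1$ then lets me replace $\sqrt{n}$ by $\sqrt{q}(1+o(1))$, yielding $d_1(PE) \le \sigma(\sqrt{m}+\sqrt{q})(1+o_P(1))$.

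With those two stochastic bounds, any $\lambda > 2(\sqrt{m}+\sqrt{q})^2$ satisfies $2d_1^2(PE) \le \lambda \wh\sigma^2$ with probability tending to one; Proposition \ref{prop:ongelijk} then sandwiches $\wh\sigma_r \le \wh\sigma\sqrt{nm/(nm-\lambda r)} \le \sqrt{1+\delta}\,\wh\sigma$ under (\ref{rankconstraint}), and Theorem \ref{thm:RS1} delivers $\wh k \le r$. For the lower bound, the hypothesis $d_s(XA) \ge C\sigma(\sqrt{m}+\sqrt{q})$ with $C > 1+\sqrt{2(1+\delta)}$ exceeds $d_1(PE) + \sqrt{\lambda}\,\wh\sigma_r \le \sigma(\sqrt{m}+\sqrt{q})(1+\sqrt{2(1+\delta)})(1+o_P(1))$ asymptotically, which is precisely the hypothesis of Theorem \ref{thm:RS2} and gives $\wh k \ge s$.

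The main obstacle is the operator-norm bound on $d_1(E)$ without either Gaussian/subGaussian tails or a balanced aspect ratio. This is why the assumption $n/q \to 1$ enters: the contraction step $d_1(PE) \le d_1(E)$ replaces a $q \times m$ problem by an $n \times m$ problem, and only when $n$ and $q$ are essentially the same can the cost $\sqrt{n} - \sqrt{q}$ be absorbed into $o(\sqrt{m}+\sqrt{q})$. Without that assumption, one would have to analyse $d_1(U^T E)$ directly, which is delicate because the rows of $U^T E$ are no longer independent and standard random matrix concentration under a bare fourth moment is not sharp enough. The companion regime (Section \ref{sec_extension}, continued beyond this theorem) presumably handles this case via a different device.
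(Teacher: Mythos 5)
Your proposal is correct and follows essentially the same route as the paper's proof: concentration of $\wh\sigma^2$ via the fourth-moment/Chebyshev argument (Lemma \ref{lem_conc_E}), the contraction $d_1(PE)\le d_1(E)$ combined with Bai--Yin and $n/q\to 1$ to get $d_1(PE)\le\sigma(\sqrt{m}+\sqrt{q})(1+o_P(1))$, and then the deterministic machinery of Theorem \ref{thm:RS1}, Proposition \ref{prop:ongelijk} and Theorem \ref{thm:RS2}. Your closing remark on why $n/q\to1$ is indispensable is exactly the role it plays in the paper's argument.
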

	
	For a special case, that of skinny matrices $XA$, that is, $m=O(n^\alpha)$ or $n=O(m^\alpha)$ for some $0\le \alpha <1$, we propose the following {\em Simplified Self-Tuning Rank Selection} (SSTRS) procedure. Given any $\eps \in (0,1)$, we set 
	\begin{equation}\label{initlbd_ex}
	\wh \lambda_0 ~:=~ 2(1+\eps)(m\vee q),
	\qquad
	\wh k_0  ~:=~ \argmin_{0 \le k\le \wh K_{0}} \frac{ \| Y- (Y)_k \| ^2}{ nm -\wh\lambda_0 k}
	\end{equation}
	as starting values, and if $\wh k_0\ge 1$, for $t\ge 0$, we update
	\begin{equation}\label{lbdt_ex}
	\wh\lambda_{t+1} ~:= ~
	{nm \over (1-\eps)\bigl[(m\wedge q)/2 - \wh k_t\bigr]_++\wh k_t},\quad
	\wh k_{t+1} ~:=~ \argmin_{\wh k_t \le k \le \wh K_{t+1}} \frac{ \| Y- (Y)_k \| ^2}{ nm -\wh\lambda_{t+1}k}
	\end{equation}
	where $[x]_+ := \max\{x, 0\}$ and $\wh K_t := (nm/\wh \lambda_t) \wedge q\wedge m$ for $t = 0, 1, \ldots$. The procedure stops when $\wh k_t = \wh k_{t+1}$ and we   have the following result.
	
	\begin{thm}\label{thm: heavy tails skinny}
		Assume $E_{ij}$ are i.i.d. random variables with mean zero and finite fourth moments. Suppose that $n/q\to1$ and either
		$m=O(n^\alpha)$ or $n=O(m^\alpha)$ for some $\alpha\in [0,1)$.  Let $\{k_t\}_{t=0}^{T}$ be defined as Theorem \ref{thm: signalSRS}.	Define $\lambda_0$ as (\ref{initlbd_ex}) and $\lambda_{t+1}$ obtained from (\ref{lbdt_ex}) by using $k_t$ in lieu of $\wh k_t$, for $t\ge 0$. Assume $r$ 
		and $\delta$ satisfy (\ref{rankconstraint}) for $\lambda_0$.\\
		Then, on the event
		\begin{eqnarray}
		d_{k_t}(A)\ge C\sigma\sqrt{\lambda_t}, \qquad   t = 0,\ldots,T
		\end{eqnarray}
		for some numerical constant $C> 1+\sqrt{2(1+\delta)}$,  
		there exists $0 \le T'\le T$ such that $\wh k_{T'}  = r$, with probability tending to $1$, as $n\vee m \to  \infty$. Here $\wh k_0 \le \wh k_1 \le \cdots \le \wh k_{T'}$ are given in  (\ref{initlbd_ex}) and (\ref{lbdt_ex}).  	
	\end{thm}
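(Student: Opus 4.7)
The plan is to follow the blueprint of Theorem \ref{thm: signalSRS}, but replace each Gaussian concentration input by its heavy-tailed analogue and substitute the projected matrix $PY$ by $Y$ itself throughout. The closed-form characterization of Theorem \ref{closedform} and the rank-upper-bound argument of Theorem \ref{thm:RS1} adapt verbatim to the criterion (\ref{initlbd_ex})--(\ref{lbdt_ex}) with $d_k(Y)$ in place of $d_k(PY)$ and $\|Y-(Y)_r\|^2$ in place of $\|Y-(PY)_r\|^2$; the only probabilistic input that must be supplied anew is a concentration bound on $d_1(E)$ (rather than $d_1(PE)$).

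For that input I would invoke a Bai--Yin type theorem (or Latala's inequality): under iid entries with finite fourth moment and the skinny-matrix regime $m=O(n^\alpha)$ or $n=O(m^\alpha)$ with $\alpha<1$, one has $d_1(E)\le (1+o(1))\sigma(\sqrt{n}+\sqrt{m})$, and hence $d_1^2(E)\le (1+\eps)\sigma^2(m\vee n)$ with probability tending to one, for any fixed $\eps>0$. Combined with $n/q\to 1$ this yields $d_1^2(E)\le (1+\eps)\sigma^2(m\vee q)$, matching the form encoded in $\wh\lambda_0=2(1+\eps)(m\vee q)$. A standard weak law for $\|E\|^2/(nm)$ around $\sigma^2$ (only second moments are needed) supplies the analogue of Proposition \ref{prop:ongelijk}, and together with the analogue of Theorem \ref{thm:RS1} this forces $\wh k_t\le r$ at every iteration.

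For the progression toward $r$, I would first verify monotonicity $\wh\lambda_{t+1}\le \wh\lambda_t$ and $\wh k_t\le \wh k_{t+1}$ along the lines of Proposition \ref{prop:monolbd}, exploiting that the surrogate $[(m\wedge q)/2-\wh k_t]_+$ in (\ref{lbdt_ex}) is decreasing in $\wh k_t$. Then, under the signal condition $d_{k_t}(A)\ge C\sigma\sqrt{\lambda_t}$, the analogue of Theorem \ref{thm:RS2} yields $\wh k_t\ge k_t$: Weyl's inequality applied to $Y=XA+E$ bounds the gap between $d_{k_t}(Y)$ and $d_{k_t}(XA)$ by $d_1(E)$, and the transfer from $d_{k_t}(XA)$ to $d_{k_t}(A)$ exploits the regularity of $X$ implicit in $n/q\to 1$ (a uniform lower bound on $d_q(X)$ can be absorbed into the constant $C$). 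Chaining $k_t\le \wh k_t\le r$ with $k_T=r$ and the monotonicity produces some $T'\le T$ for which $\wh k_{T'}=r$.

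The main obstacle is the loss of Gaussian rotational invariance: $PE$ no longer admits an exploitable distributional structure, so bounding $d_1(PE)$ directly under only fourth-moment assumptions is awkward. SSTRS sidesteps this by operating on $Y$ rather than $PY$, so we only ever need control of $d_1(E)$. The skinny-matrix hypothesis is essential because the fourth-moment Bai--Yin bound delivers the sharp rate $\sqrt{n\vee m}$ precisely in this regime; outside of it one would incur extra polylogarithmic factors and require stronger tail assumptions, which is why the non-skinny heavy-tailed result (Theorem \ref{thm: heavy tails n = q}) is stated without iteration and asks only for $n/q\to 1$.
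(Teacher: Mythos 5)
Your overall architecture is right (run the Theorem \ref{thm: signalSRS} blueprint with $Y$ in place of $PY$, control $d_1(E)$ by a Bai--Yin type bound, and use a fourth-moment weak law for $\|E\|^2/(nm)$), but there is a genuine gap in the step where you claim that the analogue of Proposition \ref{prop:ongelijk} ``together with the analogue of Theorem \ref{thm:RS1} forces $\wh k_t\le r$ at every iteration.'' That argument only works at $t=0$, where $2d_1^2(E)\le \wh\lambda_0\wh\sigma^2$ holds. For $t\ge 1$ the updated $\wh\lambda_{t}$ is strictly smaller and the event $2d_1^2(E)\le\wh\lambda_t\wh\sigma^2$ can fail; that is the entire point of the iteration. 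To keep $\wh k_{t+1}\le r$ one must show that the deterministic surrogate $(1-\eps)[(m\wedge q)/2-\wh k_t]_+$ in (\ref{lbdt_ex}) is, with high probability, a valid lower bound for the random quantity $\|E-(E)_{2\wh k_t\wedge n}\|^2\big/\bigl(d_1^2(E)\vee[d_{2\wh k_t+1}^2(E)+d_{2\wh k_t+2}^2(E)]\bigr)$, uniformly in $\wh k_t$, and then run the monotonicity argument of Lemma \ref{lem: mono} to pass from $2\wh k_t$ to $2r$. This requires simultaneous two-sided control of \emph{all} singular values $d_j(E)$ and of the tail sums $\|E-(E)_{2k}\|^2$, not just of $d_1(E)$ and $\|E\|^2$. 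The paper supplies this via Lemma \ref{lem_op_interlacing} (interlacing plus a heavy-tailed operator-norm bound applied to row-submatrices), which shows $d_j(E)=(1\pm\eps)\sqrt{m\vee n}$ for every $j$ in the skinny regime.

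This also means you have misattributed the role of the skinny hypothesis: the fourth-moment Bai--Yin bound gives the sharp rate for $d_1(E)$ for essentially any aspect ratio, so that is not where $m=O(n^\alpha)$ or $n=O(m^\alpha)$ is needed. It is needed so that the empirical spectrum of $E$ degenerates to the single point $\sqrt{m\vee n}$, which is what validates the crude surrogate $[(m\wedge q)/2-k]_+$ in the $\lambda$-update; without it the singular values spread over the Marchenko--Pastur bulk and the surrogate is no longer a lower bound. A secondary, smaller issue: the signal condition in the theorem is stated on $d_{k_t}(A)$ while the Weyl step naturally controls $d_{k_t}(Y)$ via $d_{k_t}(XA)$; you flag the need to pass through $d_q(X)$, which the paper silently elides, so no penalty there, but be aware the paper's own proof does not resolve it either.
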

	
	%\subsection{Sub-Gaussian errors when $q \sim o(m)$}
	{\sc Remark.}
	As mentioned earlier, the entries of $PE$ no longer inherit the independence from $E$ when the distribution of  the independent  entries $E_{ij}$ is not Gaussian. Nevertheless, by exploiting the independence of columns of $PE$, Theorem 5.39 in \cite{rv_rand_mat} shows that 
	$d_1(PE) \le C_E\sqrt{q}+\sqrt{m}$ with high probability, provided the entries of $E$ are independent sub-Gaussian random variables with unit variance.
	The constant $C_E$  above  unfortunately involves the unknown sub-Gaussian norm, which differs from $\sigma^2$. However, provided $q= o(m)$ and $E$ has i.i.d. sub-Gaussian entries, we simply have $d_1(PE) \le (1+o(1) )\sqrt{m}$. Hence, for this case, it should be clear that our STRS procedure based on (\ref{initlbd}) - (\ref{kt}) can be directly applied with statistical guarantees stated in Section \ref{sec_SRS}.

	\subsection{A special model: $Y=A+E$}
	We emphasize that our procedure can be applied to the important special model $Y = A+E$ where the entries of $E$ are i.i.d. random variables with mean zero and finite fourth moments. The following results guarantee that our procedure can consistently estimate the rank of $A$. They are essentially the same statements as Theorems \ref{thm: heavy tails n = q} and \ref{thm: heavy tails skinny} for the case $Y=XA+E$, but this time  {\em without} the disclaimer $n/q\to1$.
	
	\begin{thm}\label{thm: special model1}
		Assume the entries of $E\in \RR^{n\times m}$ are i.i.d. random variables with mean zero and   finite fourth moments.
		%%Let $\lambda > 2(\sqrt{n}+\sqrt{m})^2 $.
		Assume further that
		$r$ and $\delta$ satisfy (\ref{rankconstraint})
		and 
		\begin{eqnarray}\label{onzin2}
		d_s(A) \geq C\sigma(\sqrt{n}+\sqrt{m})
		\end{eqnarray}
		for some $s\le r$ and some  numerical constant $C>1+\sqrt{2(1+\delta)}$.
		Then $\PP \{ s\le \wh k\le r\}\to1$ as 
		$m\vee n\to\infty$, where $\wh k$ is selected from (\ref{initlbd_ex}) by using   $\wh \lambda_0=\lambda> 2(\sqrt{n}+\sqrt{m})^2$.\\
		In particular, if (\ref{onzin2}) holds for $s=r$, then $\wh k$ consistently estimates $r$.	
	\end{thm}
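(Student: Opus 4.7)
The strategy is to reduce to Corollary~\ref{cor: generalrank}, whose conclusion already delivers $\wh k\in[s,r]$, and then to verify its two probabilistic hypotheses by invoking Bai--Yin type results for random matrices whose i.i.d.\ entries satisfy only a finite fourth-moment condition. First, I observe that the model $Y=A+E$ is the special case of $Y=XA+E$ with $X=I_n$, so that $P=I_n$, $PE=E$, $PY=Y$ and $q=n$. Under this identification the estimator $\wh k$ defined through (\ref{initlbd_ex}) with $\wh\lambda_0=\lambda$ coincides with the minimizer of criterion (\ref{vier}), so Corollary~\ref{cor: generalrank} applies after relabelling, and it remains to show that both events
\begin{equation*}
\{2d_1^2(E)\le \lambda\wh\sigma^2\}\quad\text{and}\quad \Bigl\{d_s(A)\ge \sqrt{\lambda}\,\wh\sigma\bigl[\tfrac{\sqrt{2}}{2}+\sqrt{nm/(nm-\lambda r)}\bigr]\Bigr\}
\end{equation*}
occur with probability tending to one as $m\vee n\to\infty$. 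I rewrite the hypothesis $\lambda>2(\sqrt{n}+\sqrt{m})^2$ in the equivalent form $\lambda=2(1+\eta)(\sqrt{n}+\sqrt{m})^2$ for a fixed $\eta>0$, which provides the constant headroom needed in the asymptotic arguments below.

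For the noise event, since $\wh\sigma^2=(nm)^{-1}\sum_{i,j}E_{ij}^2$, the weak law of large numbers yields $\wh\sigma^2/\sigma^2=1+o_P(1)$. The Bai--Yin theorem, applied to the $n\times m$ matrix $E$ with i.i.d.\ centered entries of variance $\sigma^2$ and finite fourth moment, delivers $d_1(E)\le(1+o_P(1))\sigma(\sqrt{n}+\sqrt{m})$. Combined with $\lambda=2(1+\eta)(\sqrt{n}+\sqrt{m})^2$, these two bounds absorb the $o_P(1)$ factors and give $2d_1^2(E)\le \lambda\wh\sigma^2$ with probability tending to one.

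For the signal event, I combine the rank constraint (\ref{rankconstraint}) with (\ref{rankconstraint2}) to bound $nm/(nm-\lambda r)\le 1+\delta$. Using $\sqrt{\lambda}=\sqrt{2(1+\eta)}(\sqrt{n}+\sqrt{m})$ and $\wh\sigma=(1+o_P(1))\sigma$, the threshold on the right hand side of the signal event is bounded, up to a $1+o_P(1)$ factor, by
\begin{equation*}
\sigma(\sqrt{n}+\sqrt{m})\bigl[\sqrt{1+\eta}+\sqrt{2(1+\eta)(1+\delta)}\bigr].
\end{equation*}
Since the hypothesis supplies $d_s(A)\ge C\sigma(\sqrt{n}+\sqrt{m})$ with $C>1+\sqrt{2(1+\delta)}$ strictly, choosing $\eta$ small enough ensures by continuity that $C>\sqrt{1+\eta}+\sqrt{2(1+\eta)(1+\delta)}$, and the signal event also holds with probability tending to one; Corollary~\ref{cor: generalrank} then yields $\wh k\in[s,r]$, and the specialization $s=r$ gives rank consistency.

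The main obstacle is the noise bound: under only a finite fourth moment assumption the Gaussian tail bound (\ref{d_1}) is unavailable, so $d_1(E)$ must be controlled by a random matrix result that tolerates heavy tails. The Bai--Yin theorem supplies exactly the sharp leading constant $\sigma(\sqrt{n}+\sqrt{m})$ under this moment condition, at the price of reducing the exponential rate available in the Gaussian setting to mere convergence in probability; this is why the theorem concludes $\PP\{s\le \wh k\le r\}\to 1$ rather than exhibiting an explicit exponential rate as in Theorem~\ref{rankGaussian}. All remaining steps are deterministic manipulations of the thresholds already appearing in Corollary~\ref{cor: generalrank}.
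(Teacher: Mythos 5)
Your proposal is correct and follows essentially the same route as the paper: the paper proves this theorem by specializing its proof of Theorem \ref{thm: heavy tails n = q} (dropping the $n/q\to1$ caveat since no projection is involved), which is exactly your reduction to the deterministic Corollary \ref{cor: generalrank} (i.e.\ Theorem \ref{thm:RS1}, Proposition \ref{prop:ongelijk} and Theorem \ref{thm:RS2}) with $X=I_n$, $q=n$, followed by verification of the noise event via Bai--Yin and of $\wh\sigma^2\to\sigma^2$ via a law of large numbers (the paper uses Chebyshev with the fourth moment, you use the WLLN; both suffice for convergence in probability). Your handling of the constants --- taking $\lambda=2(1+\eta)(\sqrt n+\sqrt m)^2$ with $\eta$ small and absorbing the slack into $C>1+\sqrt{2(1+\delta)}$ --- mirrors the paper's choice of $\lambda=2(\sqrt m+\sqrt q)^2/(1-\eps)$ and $C=\sqrt{(1+\eps)/(1-\eps)}\,(1+\sqrt{2(1+\delta)})$.
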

	
	In particular, when $A$ is skinny, that is, $m=O(n^\alpha)$ or $n=O(m^\alpha)$ for some $0\le \alpha<1$, our newly proposed SSTRS in (\ref{initlbd_ex}) - (\ref{lbdt_ex}) maintains the rank consistency for this model. 
	
	\begin{thm}\label{thm: special model2}
		Let $E_{ij}$ be i.i.d. random variables with mean zero and finite fourth moments, 
		$m=O(n^\alpha)$ or $n=O(m^\alpha)$ for some $\alpha \in [0,1)$, and 
		assume $r$ 
		and $\delta$ satisfy (\ref{rankconstraint}) for $\lambda_0$ given in (\ref{initlbd_ex}).
		Let  $\{k_t\}_{t=0}^{T}$ be defined as Theorem \ref{thm: signalSRS}  and $\lambda_{t+1}$ obtained from (\ref{lbdt_ex}) by using $k_t$ in lieu of $\wh k_t$, for $t\ge 0$.
		Then, on the event
		\begin{eqnarray}
		d_{k_t}(A)\ge C\sigma\sqrt{\lambda_t}, \qquad   t = 0,\ldots,T
		\end{eqnarray}
		for some numerical constant $C> 1+\sqrt{2(1+\delta)}$,  
		there exists $0 \le T'\le T$ such that $\wh k_{T'}  = r$, with probability tending to $1$, as $m\vee n\to\infty$. Here  $\wh k_0 \le \wh k_1 \le \cdots \le \wh k_{T'}$ are defined in (\ref{initlbd_ex}) and (\ref{lbdt_ex}).  	
	\end{thm}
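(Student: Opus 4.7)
The plan is to recognize that $Y=A+E$ is precisely the $X=I_n$ instance of the general regression model $Y=XA+E$: here $P=I_n$, $q=n$, $PE=E$, $PY=Y$, and the $(n-q)m$ term defining $\wh R_t$ in (\ref{def_rut}) vanishes. Consequently, the SSTRS procedure (\ref{initlbd_ex})--(\ref{lbdt_ex}) as stated for this special model agrees symbol-for-symbol with the procedure analyzed in Theorem \ref{thm: heavy tails skinny}. My approach is therefore to reproduce the proof of Theorem \ref{thm: heavy tails skinny} essentially verbatim, with the single pleasant simplification that the disclaimer $n/q\to 1$ from that theorem is here automatic, since in fact $n=q$ identically.

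The two probabilistic ingredients that replace the Gaussian concentration inequalities (\ref{d_0})--(\ref{chiR}) are as follows. First, the weak law of large numbers for i.i.d.\ variables with finite variance yields $\wh\sigma^2=\|E\|^2/(nm)\to \sigma^2$ in probability. Second, under i.i.d.\ centered entries with finite fourth moment, the Bai--Yin edge theorem (sharpened forms of which appear in the random matrix literature) gives $d_1(E)\le (1+o(1))\sigma(\sqrt n+\sqrt m)$ in probability, which in the skinny regime $m=O(n^\alpha)$ or $n=O(m^\alpha)$ with $\alpha<1$ reduces to $d_1(E)\le (1+o(1))\sigma\sqrt{n\vee m}$. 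These two facts are precisely what drive the choice $\wh\lambda_0=2(1+\eps)(n\vee m)$ and guarantee $2d_1^2(E)\le \wh\lambda_0\wh\sigma^2$ with probability tending to one.

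Armed with these bounds, I carry out the iteration. The finite-sample results Theorem \ref{thm:RS1}, Proposition \ref{prop:ongelijk}, and Theorem \ref{thm:RS2} are distribution-free and therefore apply off the shelf to the special model $Y=A+E$. At $t=0$, the event $\{2d_1^2(E)\le \wh\lambda_0\wh\sigma^2\}$ holds with high probability, so Theorem \ref{thm:RS1} combined with Proposition \ref{prop:ongelijk} yields $\wh k_0\le r$, while the signal assumption at $t=0$ and Theorem \ref{thm:RS2} yield $\wh k_0\ge k_0$. For the inductive step, the update formula (\ref{lbdt_ex}) was calibrated precisely so that the event $\{d_1^2(E)\le\wh\lambda_{t+1}\wh\sigma_r^2\}$ persists; verification uses $\wh\sigma_r^2\ge \wh\sigma^2\ge (1-o(1))\sigma^2$ together with the denominator control provided by the rank constraint (\ref{rankconstraint}). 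The algebraic monotonicities $\wh\lambda_{t+1}<\wh\lambda_t$ and $\wh k_{t+1}\ge \wh k_t$ are identical to those in Proposition \ref{prop:monolbd} and require no distributional input. The signal hypothesis at step $t+1$, together with Theorem \ref{thm:RS2}, then delivers $\wh k_{t+1}\ge k_{t+1}$ unless the procedure has already terminated with $\wh k_{t+1}=r$. Since $\{k_t\}_{t=0}^T$ climbs strictly to $r$ while $\wh k_t\le r$ is maintained throughout, the iteration must select $\wh k_{T'}=r$ for some $T'\le T$.

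The main obstacle is obtaining the quantitative version of the edge bound on $d_1(E)$ under only finite fourth moments. Because the no-overfit event is invoked at every iteration, a uniform non-asymptotic estimate with controlled error rates is needed to absorb the explicit constants in $\wh\lambda_0$ and to tolerate the margin $\eps$. This is precisely the step that forces the skinny assumption $m=O(n^\alpha)$ or $n=O(m^\alpha)$, since for square matrices with only a fourth-moment condition the top singular value is known to exhibit heavier-tail behavior than $\sqrt n+\sqrt m$. Once such a quantitative edge estimate is established—parallel to what is used in the proof of Theorem \ref{thm: heavy tails skinny}—the remainder of the argument is a direct transcription of the proofs of Proposition \ref{prop:monolbd}, Theorem \ref{thm: signalSRS}, and Theorem \ref{thm: heavy tails skinny}, with $q$ replaced by $n$ throughout and $(n-q)m$ set to zero.
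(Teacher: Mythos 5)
Your proposal is correct and takes essentially the same route as the paper, whose entire proof of this theorem is the observation that the arguments for Theorems \ref{thm: heavy tails n = q} and \ref{thm: heavy tails skinny} only ever use $d_1(PE)\le d_1(E)$ and hence transfer verbatim to $Y=A+E$ with $P$ the identity, $q=n$, and the $n/q\to1$ disclaimer dropped. One small correction of emphasis: the quantitative input that the skinny assumption buys (Lemma \ref{lem_op_interlacing}, via interlacing and Theorem 5.41 of \cite{rv_rand_mat}) is uniform control of \emph{all} singular values $d_k(E)\approx\sqrt{m\vee n}$ --- needed to lower-bound the tail sums $\|E-(E)_{2k}\|^2$ and upper-bound $d_{2k+1}^2(E)+d_{2k+2}^2(E)$ in the no-overfit event at every iteration --- rather than just the edge $d_1(E)$ as you state, but since you explicitly defer to a transcription of the proof of Theorem \ref{thm: heavy tails skinny} this is covered.
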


	\section{Empirical study}\label{sec_sim}
	The simulations  in  Sections \ref{sec_sim_1} and \ref{sec_sim_2}   compare the methods discussed in this paper with some existing methods. Sections \ref{sec_sim_3} -- \ref{sec_sim_5} verify our results for the proposed method. Our conclusions are summarized in Section \ref{sec_sim_conclusion}. In Section \ref{sec_sim_tightness}, we perform an additional simulation to check the tightness of signal-to-noise condition in (\ref{RS2}). In the supplement, more simulations   compare STRS   using Monte Carlo simulations with STRS using  deterministic bounds in Appendix F.1. We also present an example in Appendix F.2 to show that the method proposed by \cite{Giraud} fails to recover the rank. Finally,   simulations in Appendix F.3 compare STRS with other competing methods in terms of the errors $\|X\wh A - XA\|$, $\| \wh A-A\|$ and the selected rank.
	
	\subsection{Methods and notations}
	We first introduce the methods in our simulation. \cite{BSW} proposed the method in (\ref{een}) to select the optimal rank by using $\mu = Cd_1^2(Z)\widetilde{\sigma}^2$, where $Z$ has $q\times m$ i.i.d. $N(0,1)$ entries and $\wt{\sigma}^2$ in (\ref{sigmaTilde}) is the unbiased estimator of $\sigma^2$. The leading constant $C>1$ needs to be specified. A deterministic upper bound which could be used instead is $C(m+q)\wt \sigma^2$.  \cite{BSW} suggests to use $C = 2$ based on its overall performance. However, there is no reason for one particular choice of $C$ being globally optimal, which  was confirmed  in our  simulations.
	Another option for choosing the tuning parameter is to use $k$-fold cross-validation. However, there is no theoretical guarantee of the feasibility for cross-validation, especially if the rows $X_{i\cdot}$ of $X$ are non-i.i.d..  In contrast, our proposed procedures (with and without self-tuning) are completely
	devoid of choosing a tuning parameter and estimating $\sigma^2$.
	\vspace{3mm}
	\\
	{\sc Notation.}	We use BSW to denote the method proposed in \cite{BSW}. For those methods proposed in this paper, we denote by GRS, STRS  and SSTRS the method without self-tuning in (\ref{vier}), the one with self-tuning from (\ref{initlbd}) - (\ref{kt}) and the simpler version also with self-tuning from (\ref{initlbd_ex}) - (\ref{lbdt_ex}), respectively.  We further use {BSW-C} to denote BSW with specified leading constant $C$.
	
	\subsection{Task description}
	We divide the simulation study up into five parts. In the first part, we show that there is no optimal constant  $C$ for the  {BSW-C} method  which works for all $r$.
	In the second part, we compare the performance of STRS and {BSW-C} (for various choices of $C$). The third part demonstrates the improvement of STRS over GRS shown in Section \ref{sec_SRS}, in terms of requiring a smaller signal-to-noise ratio (SNR) and enlarging the range of possible selected ranks. The fourth part verifies the performance of GRS, STRS and SSTRS for non-Gaussian errors corresponding to our results and settings of Section \ref{sec_extension}. 
	The last part extends the fourth part and supports our conjecture that STRS continues to work for general  heavy tailed distributions
	under general settings. 
	
	\subsection{Simulation setup}\label{SimSetup}
	In general,  we consider three settings.
	The first setting is the more favorable one where the sample size $n$ is  larger than the number of covariates $p$. The other two are high-dimensional,    $n<p$, and hence more challenging, with the last setting  focussing on the worst case scenario of $n$   close to $q$ for a moderate $m$.
	
	Our experiments are inspired by those of \cite{BSW}. When $n\ge p$, the $n\times p$ design matrix $X$ is generated by independently drawing $n$ times  $p$-dimensional Gaussian vectors with mean zero and covariance matrix specified by $\Sigma_{i,j}=\eta^{|i-j|}$ with $\eta\in(0,1)$, for $i,j=1,\ldots,p$. When $n < p$, we let $X = X_1X_2\Sigma^{1/2}$ where $X_1\in \RR^{n\times q}$ and $X_2\in \RR^{q\times p}$ have i.i.d. $N(0,1)$ entries. The regression coefficient matrix $A$ is given by $A = b_0M_{p\times r}M_{r\times m}$ where the entries of $M$ are i.i.d. $N(0,1)$. As before, $r$ denotes the rank of $A$ and satisfies $r\le \qm$. Regarding the error matrix $E$, each entry is generated from $N(0,1)$ except in Sections \ref{sec_sim_4} and \ref{sec_sim_5} where we use $t_\nu$-distributions with $\nu$ degrees of freedom. 
	
	The difference with \cite{BSW} lies in the way we vary the signal-to-noise-ratio (SNR) defined as $d_r(XA)/ \EE[d_1(PE)]$. Instead of using various combinations of  $\eta$ and $b_0$, we vary the SNR by generating $A$ with different   ranks. Specifically, for given $\eta$ and $b_0$,  we first generate   $X$, and then, for each $r$ in some specified range, we generate $A$ of rank $r$.
	For each pair   $(X, A)$, we generate 200 error matrices $E$, calculate the SNR, and record the rank recovery rate and the mean selected rank for various methods in the $200$ replications.

	\subsection{Experiment 1}\label{sec_sim_1}
	We compare the rank recovery of {BSW-C} for $C$ in $\{0.7,0.9,1.1,1.3,1.5\}$ with STRS   in both low- and high-dimensional settings. In the low-dimensional case, we consider   $n=150$, $m=30$, $p=q=20$,  $r\in\{0,\ldots,20\}$ and $\eta=0.1$. For $b_0$, we choose from $\{0.15,0.20,0.25\}$ to illustrate, more clearly, the effect of $r$ on the recovery rate. The high-dimensional setting has $n=100$, $m=30$, $p=150$, $q=20$, $\eta=0.1$,  $b_0=\{0.03,0.05,0.07\}$ and $r\in\{ 0, \ldots,20\}$. The rank recovery rate and mean selected ranks for both low-  and high-dimensional cases are shown in Figures \ref{E1F1} and \ref{E1F2}, respectively.
	
	\medskip
	
	{\sc{Result}: } Both figures  demonstrate that BSW-C with a smaller $C$, say $0.7$ or $0.9$, performs better when the true rank $r$ is large, in the sense of requiring a smaller SNR, but tends to overfit for small $r$. In contrast, BSW-C with a larger $C$ does a better job in preventing overfitting for small $r$, but requires a larger SNR. The performance of {BSW-C} does not seem to depend on $r$ as we separate the effect of SNR away from this phenomenon by varying $b_0$. This   suggests that there is no optimal leading constant $C$ for BSW-C to guarantee consistent  rank estimation   for all possible $r$. On the other hand, STRS performs globally better and more stable than {BSW-C} in all settings. It  prevents overfitting for both small and large $r$ and requires a smaller SNR than BSW-1.3 and BSW-1.5.
	Finally,  the role of SNR for the rank recovery is striking. If it is too small (less than $0.8$), we completely fail to recover the  rank.
	This justifies the signal-to-noise condition in (\ref{RS2}) and is explained by the (fast) exponential tail bounds in our main results.
	
	\subsection{Experiment 2}\label{sec_sim_2}
	
	Based on the results in Section \ref{sec_sim_1}, we 
	%choose BSW-1.1 and BSW-1.3 as the representative methods for {BSW-C} since they seem to be a better compromise. We thus 
	compare BSW-1.1 and BSW-1.3 with STRS. Figures \ref{E1F1} and \ref{E1F2} show the advantage of STRS over {BSW-C} in both low- and  high-dimensional settings when $n$ is not too small compared to $q$. Here we focus on the worst case scenario of    $n\approx q$ and set  $n=150$, $m=30$, $p=200$, $\eta=0.1$, $b_0=0.011$, $q\in \{143, 145, 147, 149\}$  and $r\in\{0,\ldots,22\}$. The   recovery rates are shown in Figure \ref{E3F1}.
	
	\begin{figure}[H]
		\centering
		\begin{tabular}{cc}
			\centering
			\includegraphics[width=.45\linewidth]{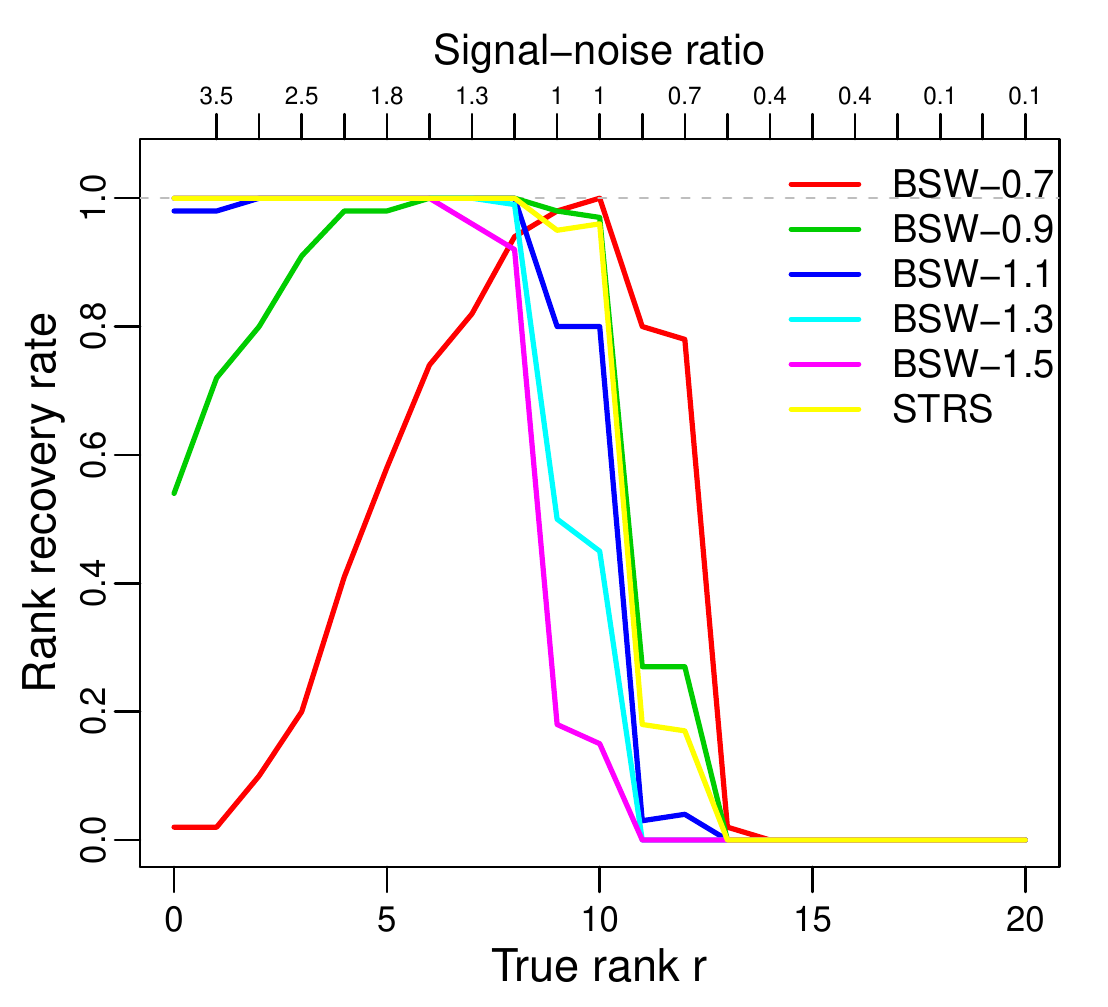}&
			\includegraphics[width=.45\linewidth]{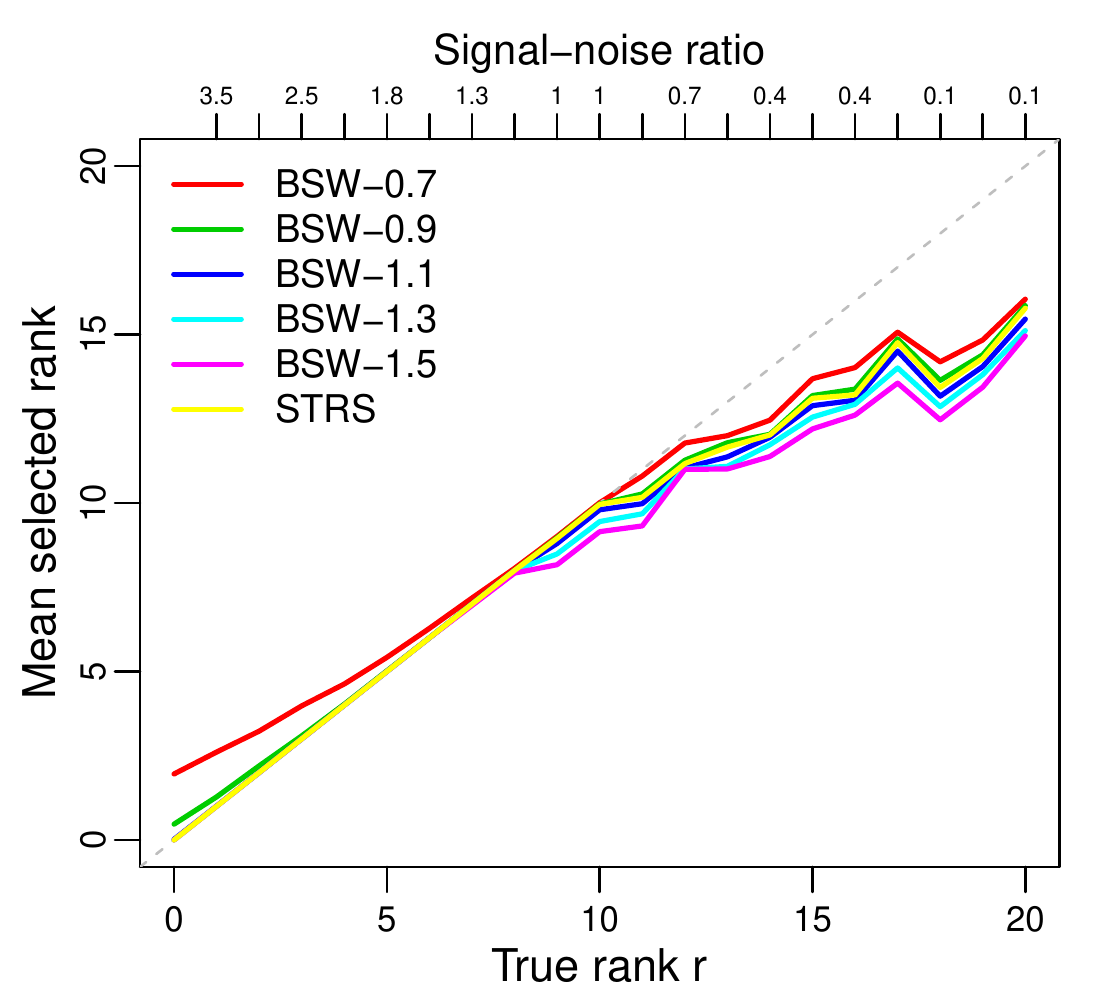}\\
			[-6pt]
			\includegraphics[width=.45\linewidth]{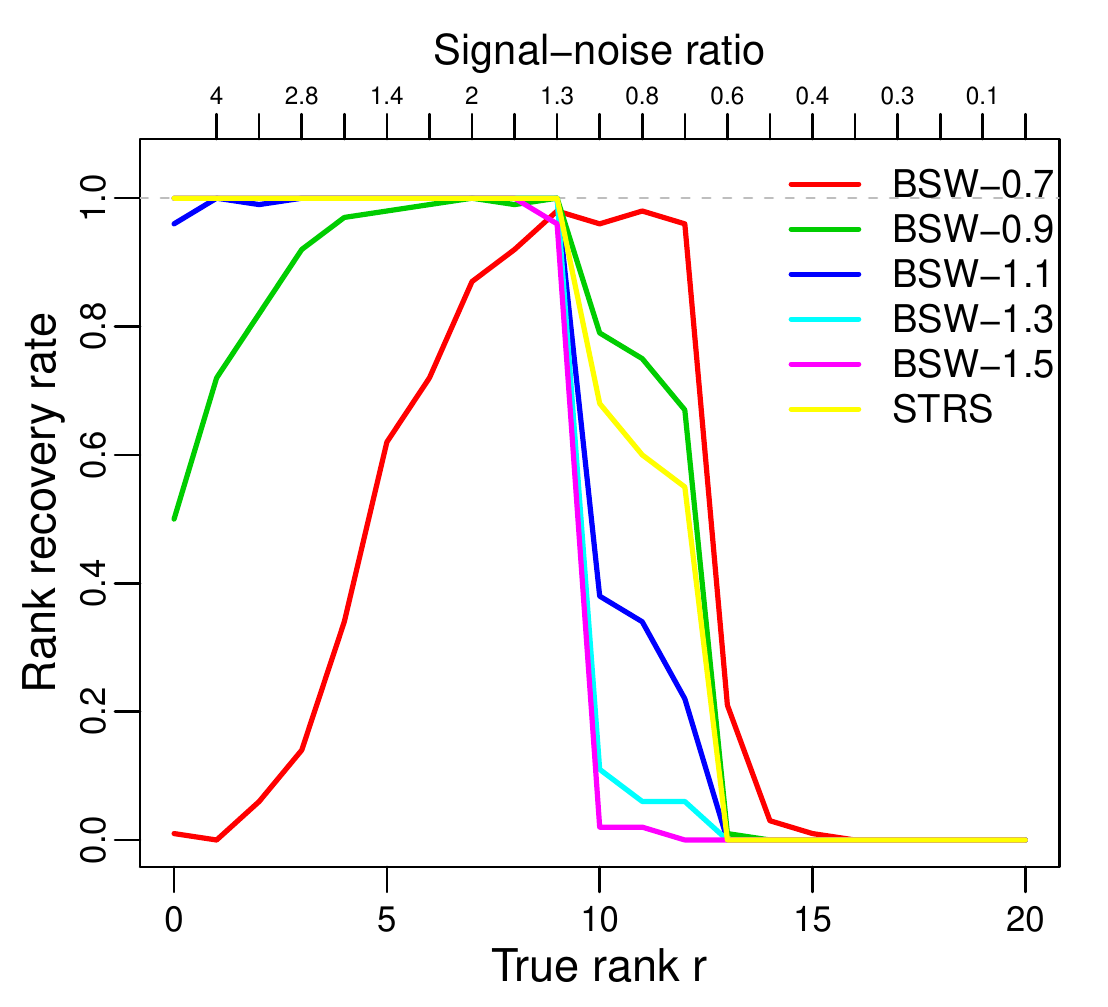} &
			\includegraphics[width=.45\linewidth]{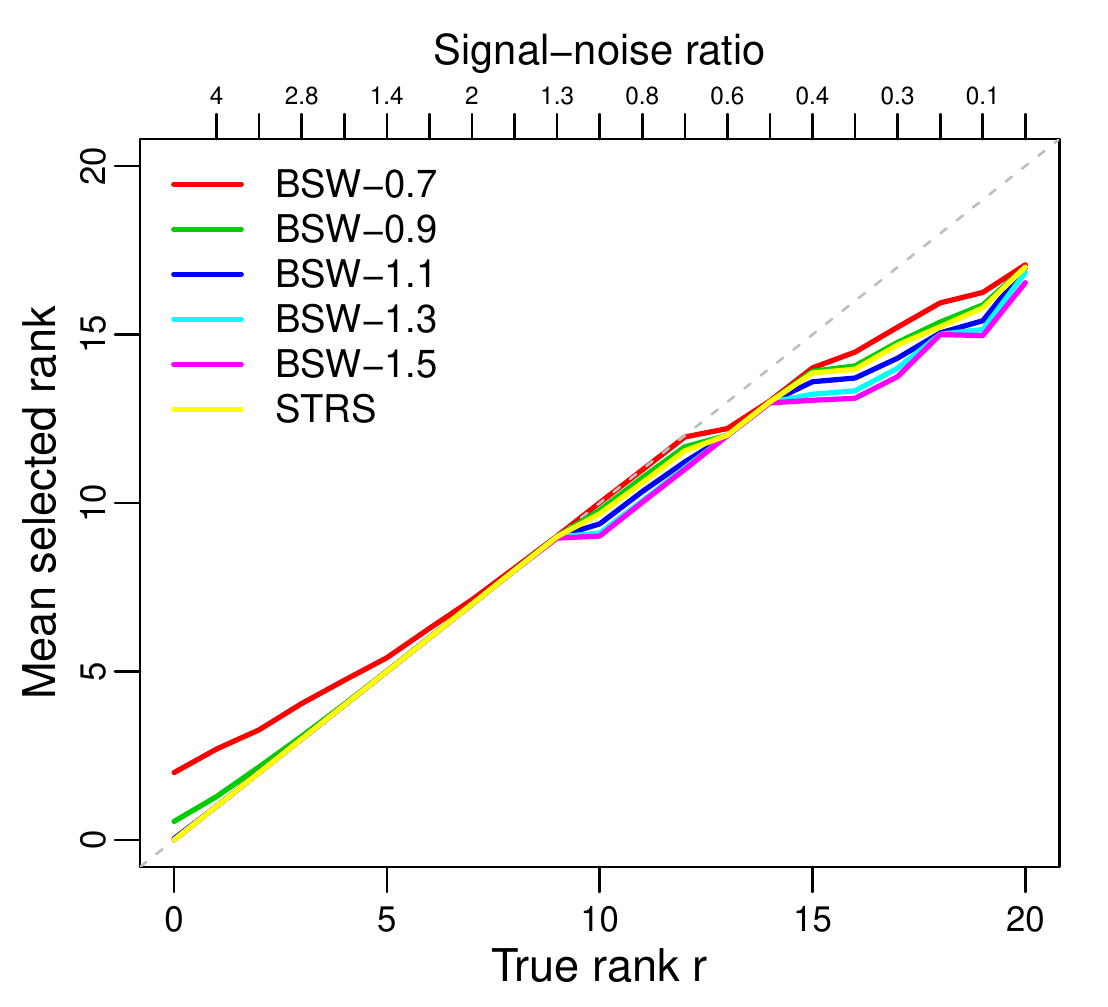}\\
			[-6pt]
			\includegraphics[width=.45\linewidth]{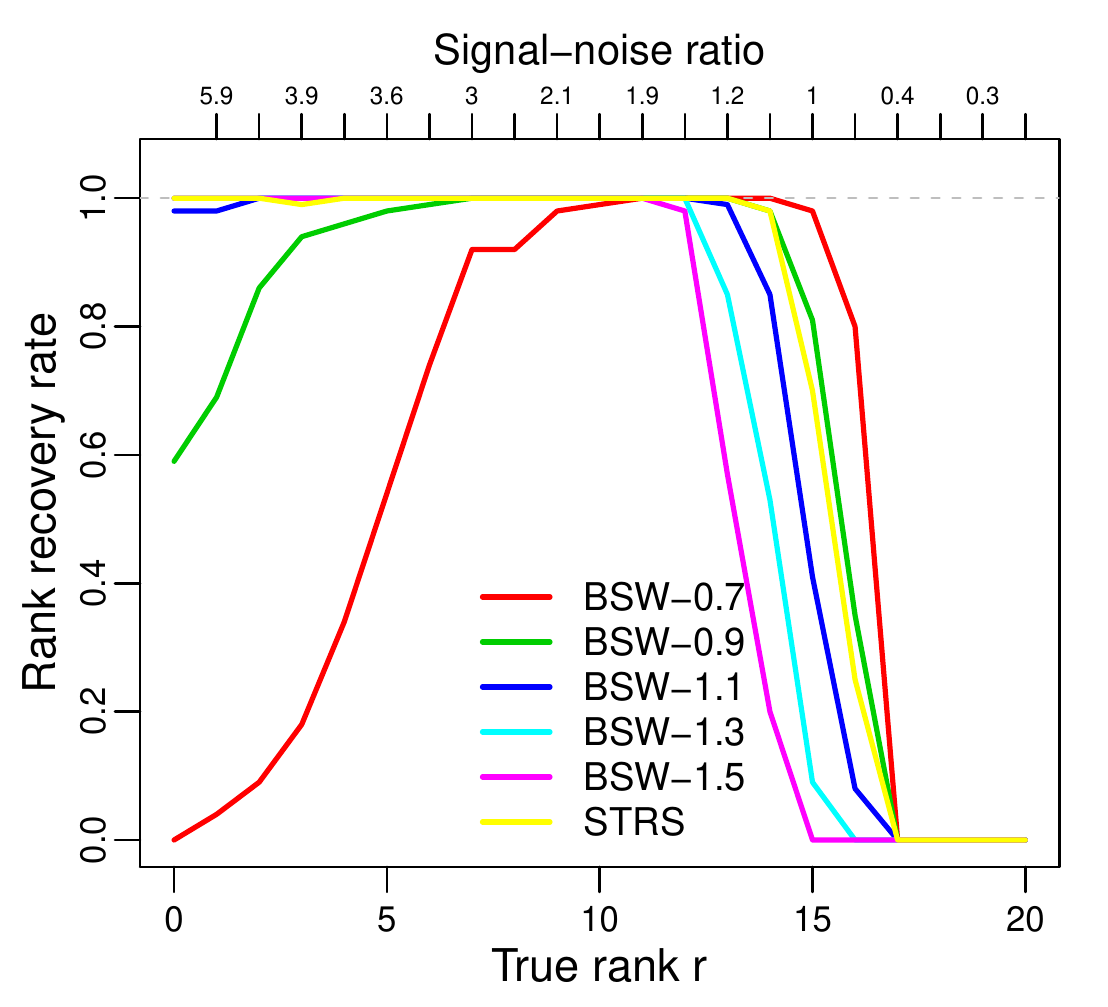} & \includegraphics[width=.45\linewidth]{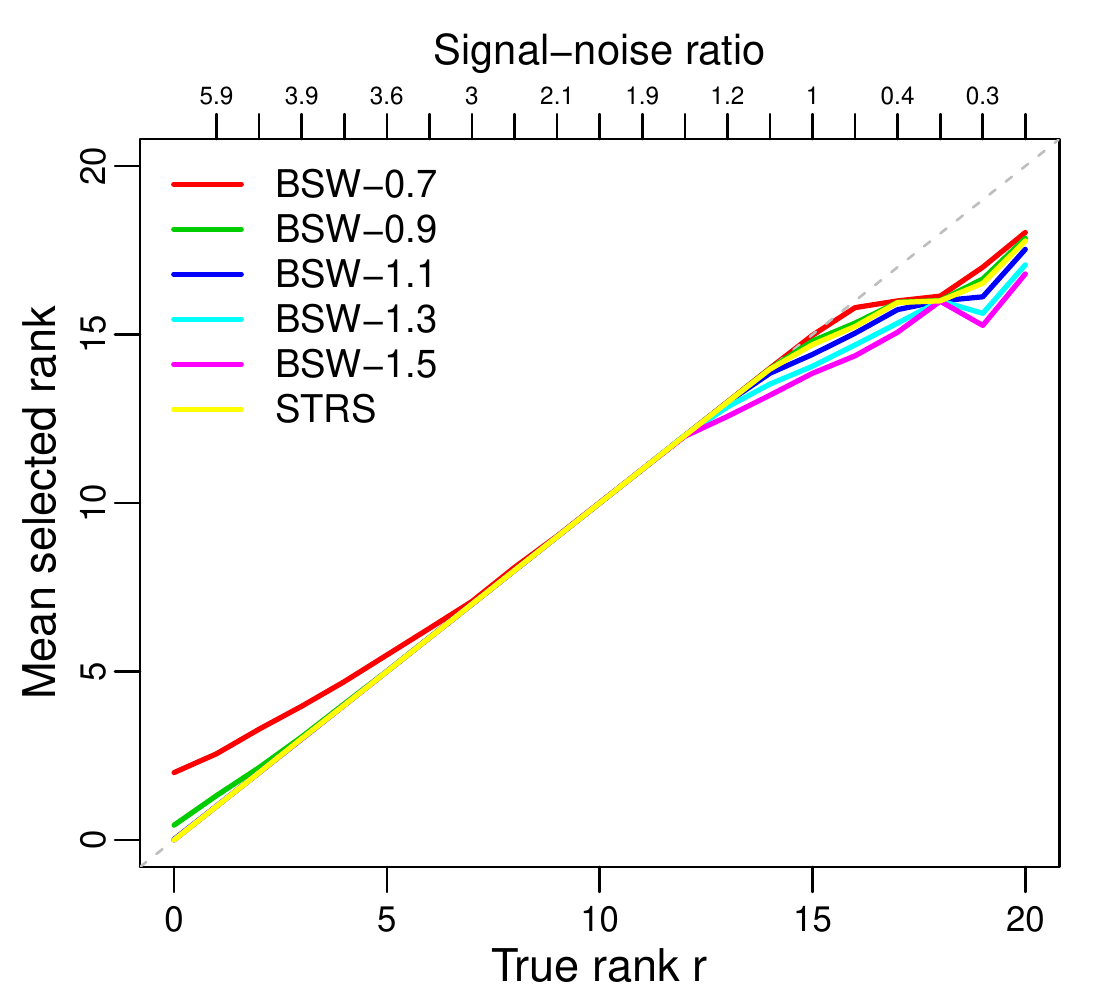}
		\end{tabular}
		\caption{Comparison of {BSW-C} and STRS in the low-dimensional setting of Experiment 1. Here $b_0$ is $0.15$ (top), $0.20$ (middle) and $0.25$ (bottom).}	\label{E1F1}
		\vspace{-3mm}
	\end{figure}

	\begin{figure}[H]
		\centering
		\begin{tabular}{cc}
			\centering
			\includegraphics[width=.45\linewidth]{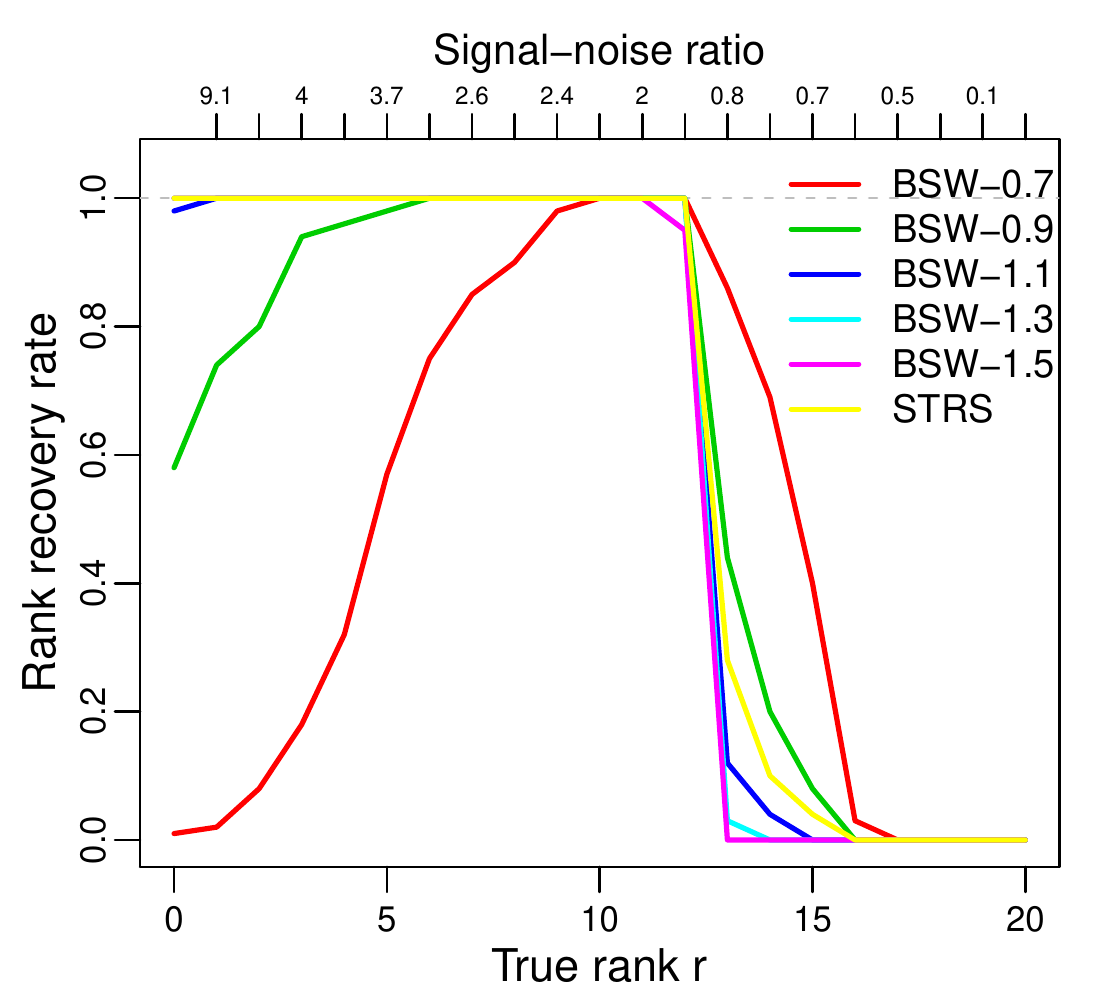} &
			\includegraphics[width=.45\linewidth]{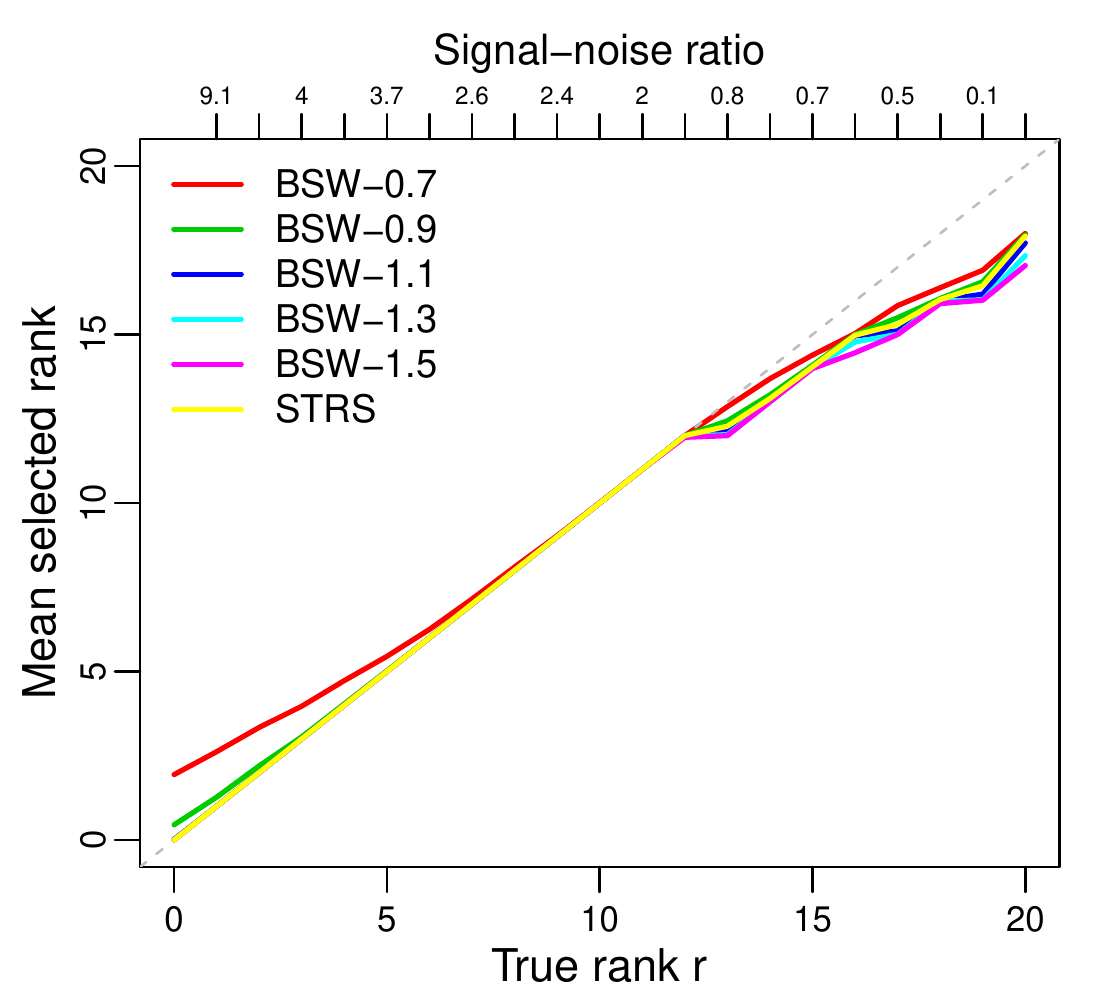}\\
			[-6pt]
			\includegraphics[width=.45\linewidth]{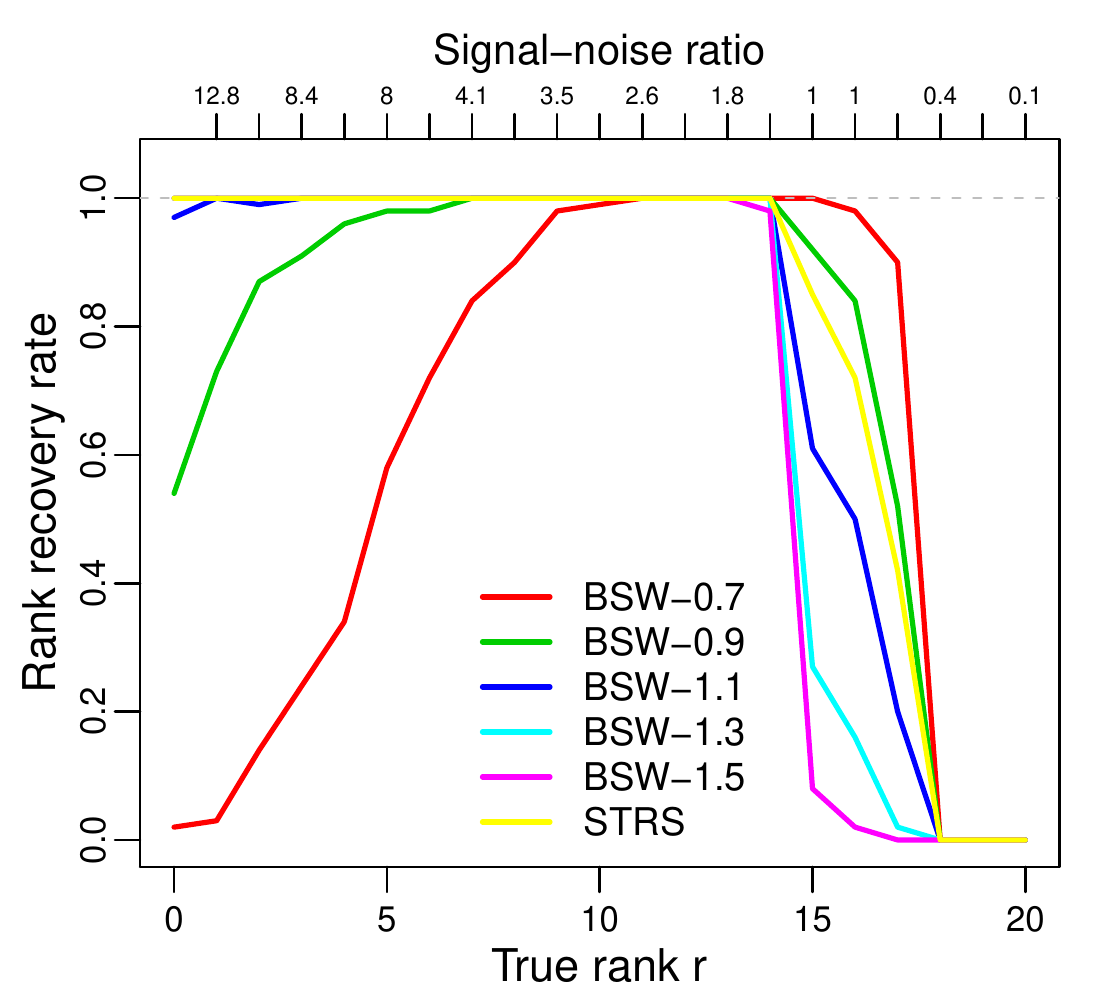} & 
			\includegraphics[width=.45\linewidth]{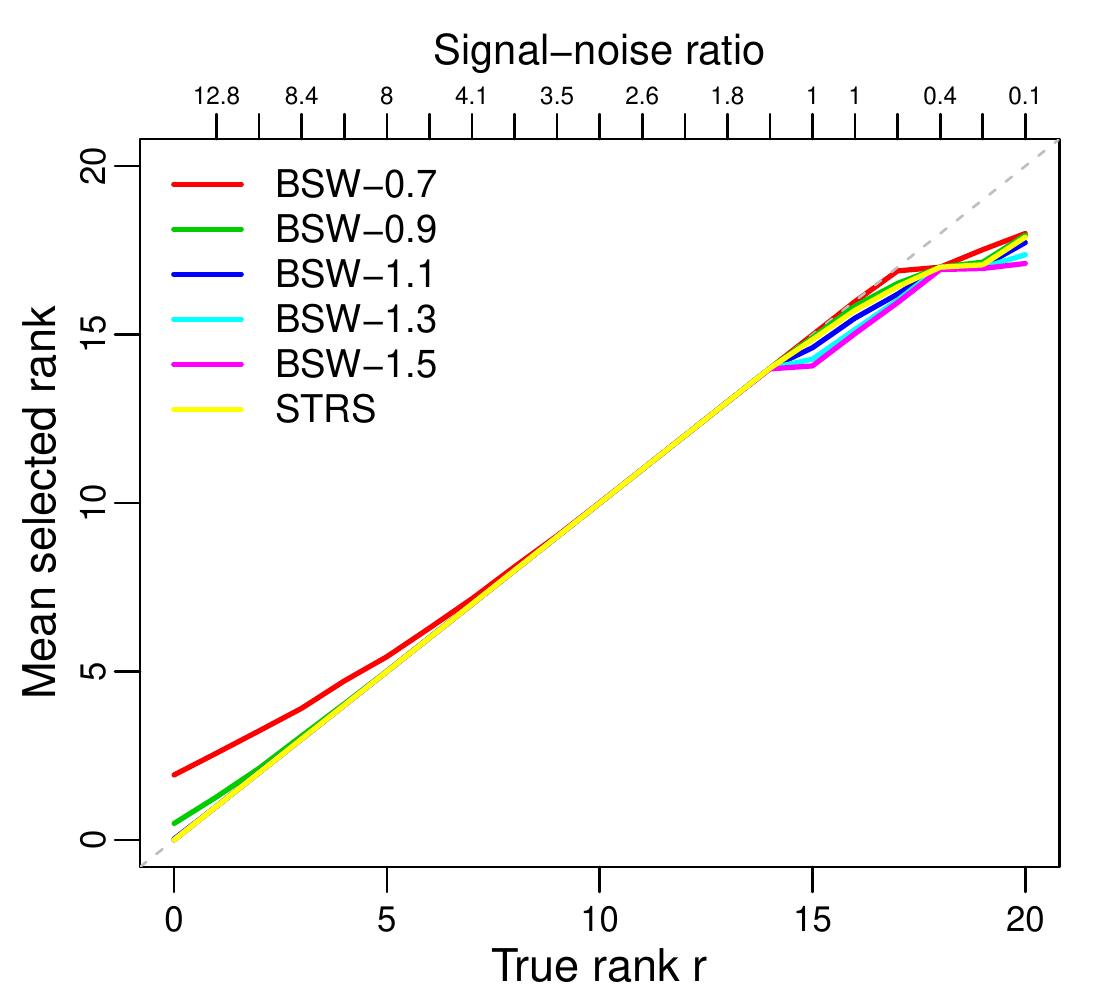}\\
			[-6pt]
			\includegraphics[width=.45\linewidth]{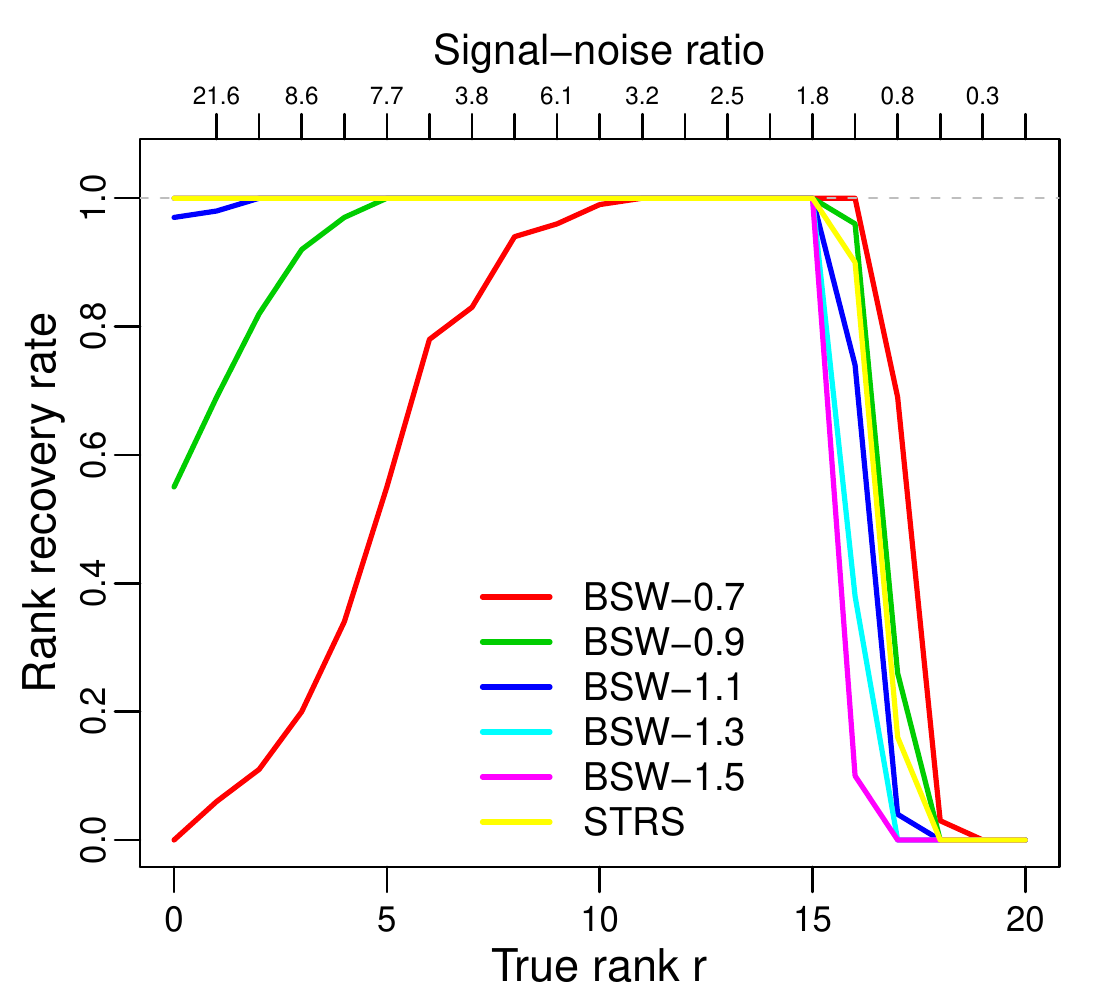} & 
			\includegraphics[width=.45\linewidth]{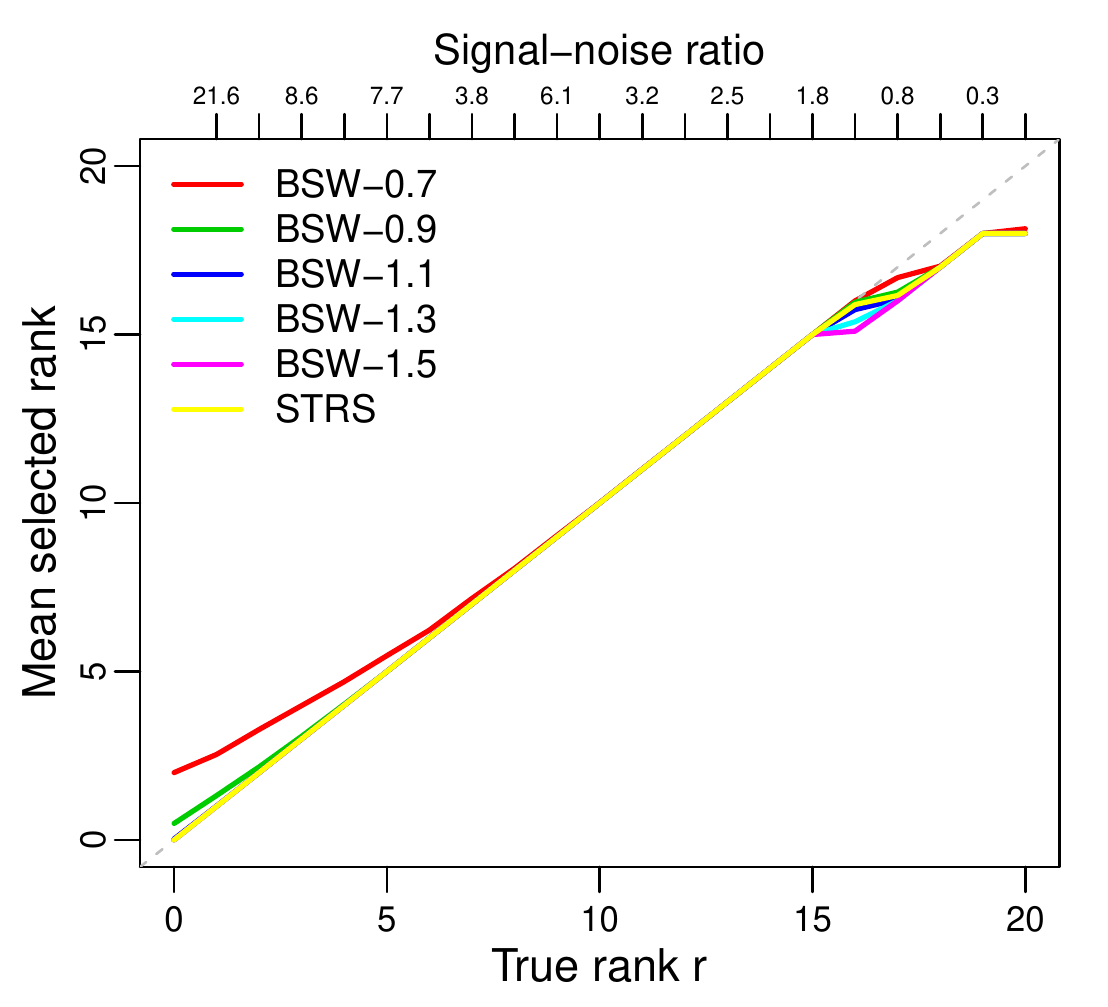}
		\end{tabular}
		\caption{Comparison of {BSW-C} and STRS in the high-dimensional setting of Experiment 1. Here $b_0$ is $0.03$ (top), $0.05$ (middle) and $0.07$ (bottom). }
		\label{E1F2}
	\end{figure}
	
	{\sc Result: } We see that, for  moderate $m$,  {BSW-C} performs  worse as   $n$ gets closer to $q$. Indeed, estimation of $\sigma^2$ is problematic for small values of  $(n-q)m$. The same problem for choosing different $C$ persists:  BSW-1.1 requires a smaller SNR, but overfits more than BSW-1.3 at small $r$, while  STRS performs perfectly as long as $r$ lies in the allowable range (its largest recoverable rank   is between $11$ and $13$, depending on the particular choice of $q$).\\
	
	In addition, we verify the feasibility of STRS when $n=q$. In this case, BSW is infeasible. We consider $p=200$, $b_0 = 0.011$, $\eta = 0.1$,   $n=q\in \{50, 100, 150\}$, $m\in \{50, 125, 200\}$ and $r\in\{0,\ldots,50\}$. In each setting,   the signal  is large enough (SNR$>3$) to eliminate the effect of the signal-to-noise ratio on rank recovery. Figure \ref{E3F2} shows that STRS recovers the rank  for all combinations of $n$  and $m$  as long as $r$ is within the recoverable range (which increases in $m$).
	
	\begin{figure}[H]
		\centering
		\vspace{-3mm}
		\begin{tabular}{cc}
			\centering
			\includegraphics[width=.45\linewidth]{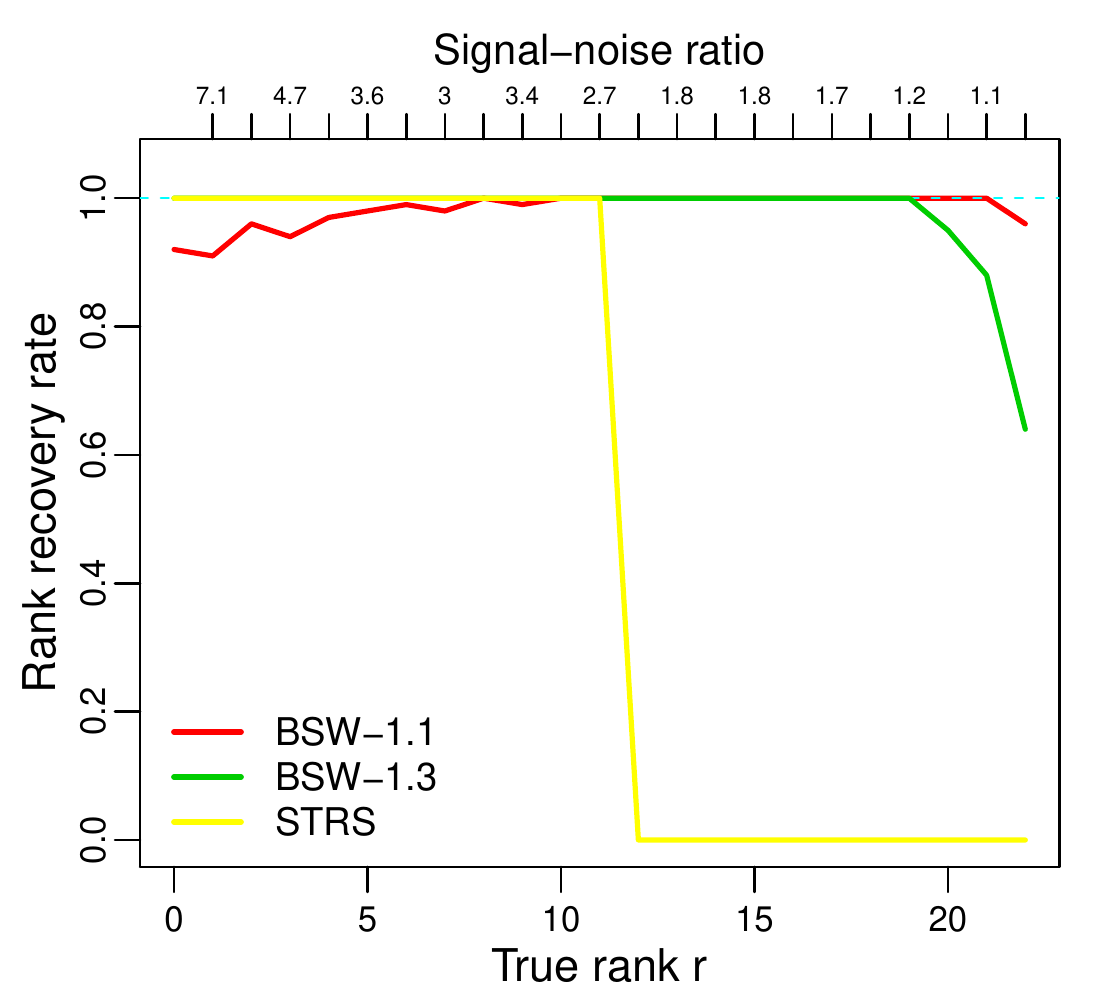} &
			\includegraphics[width=.45\linewidth]{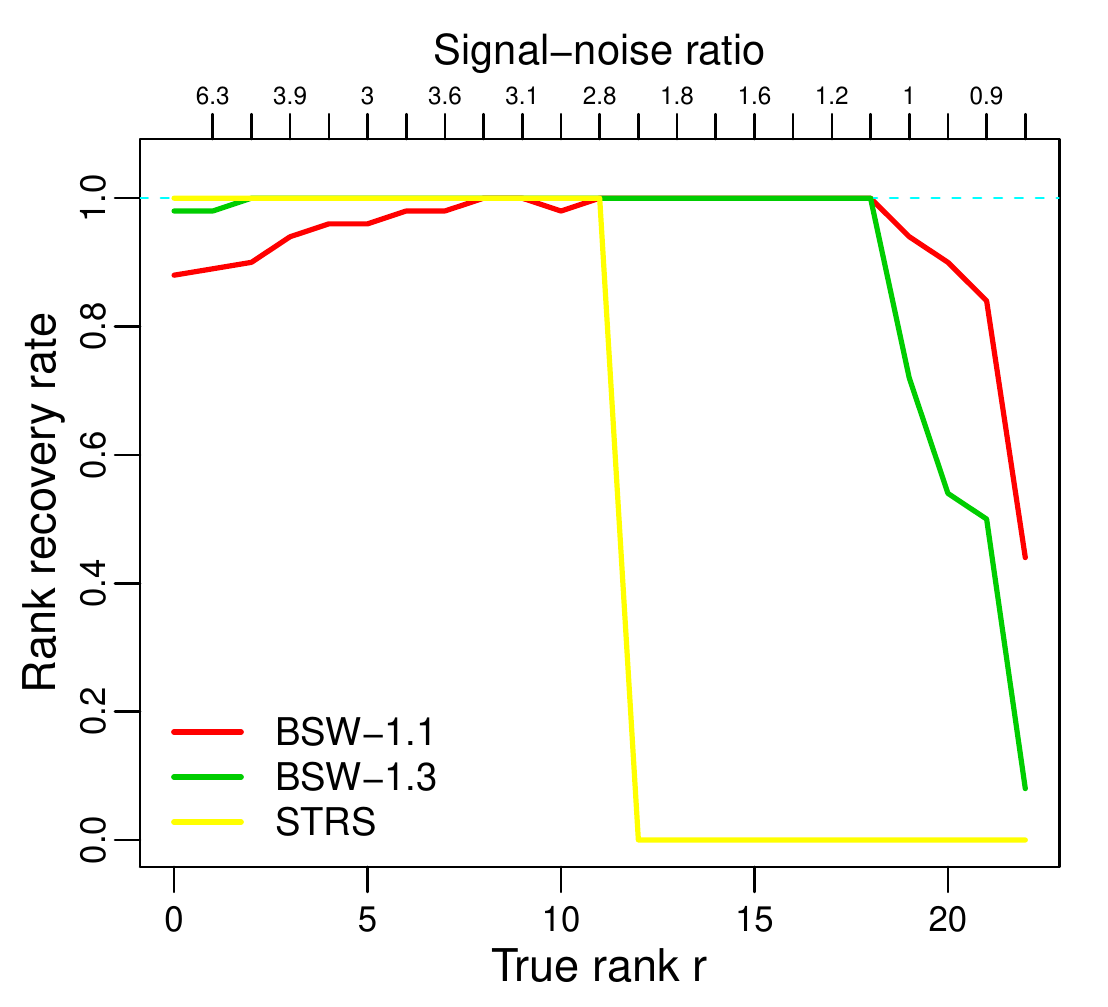}\\
			[-6pt]
			\includegraphics[width=.45\linewidth]{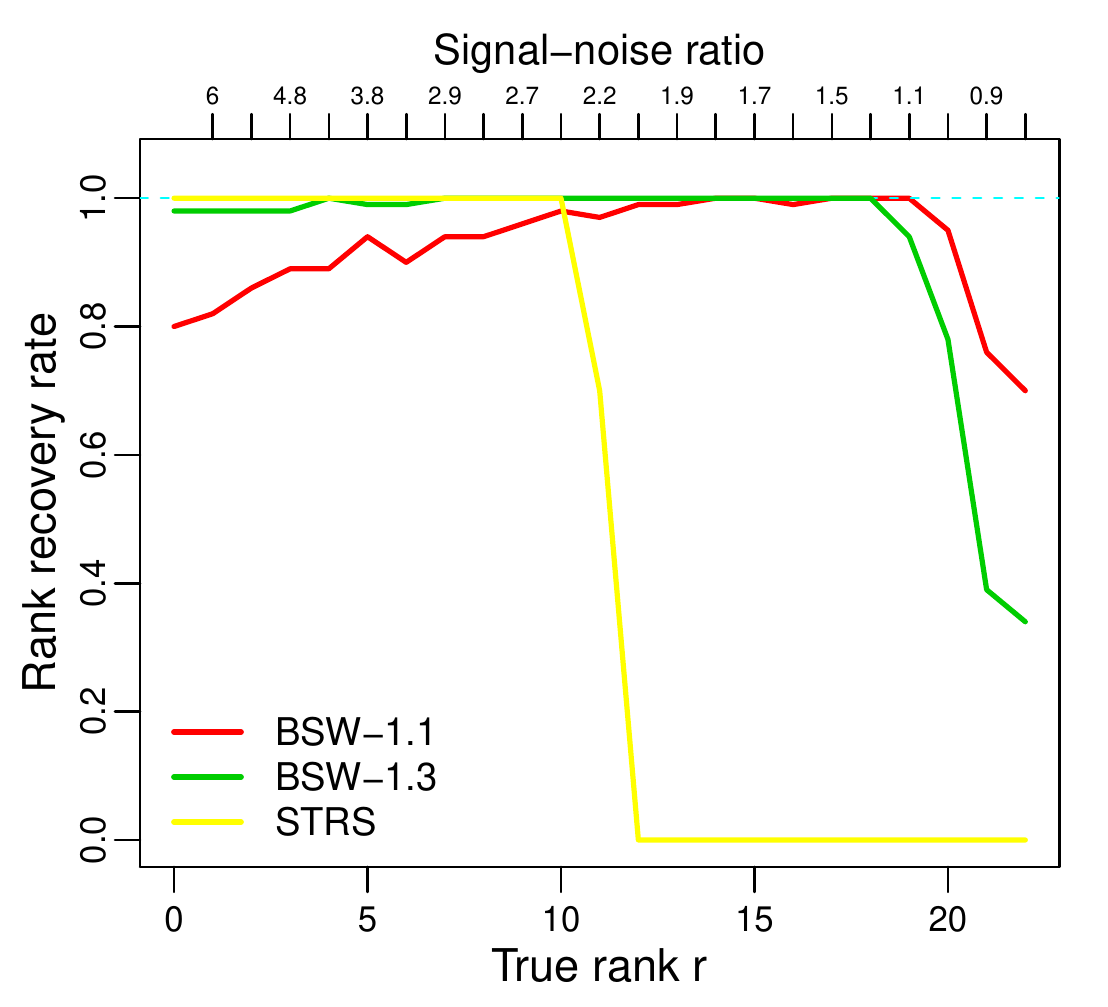} &
			\includegraphics[width=.45\linewidth]{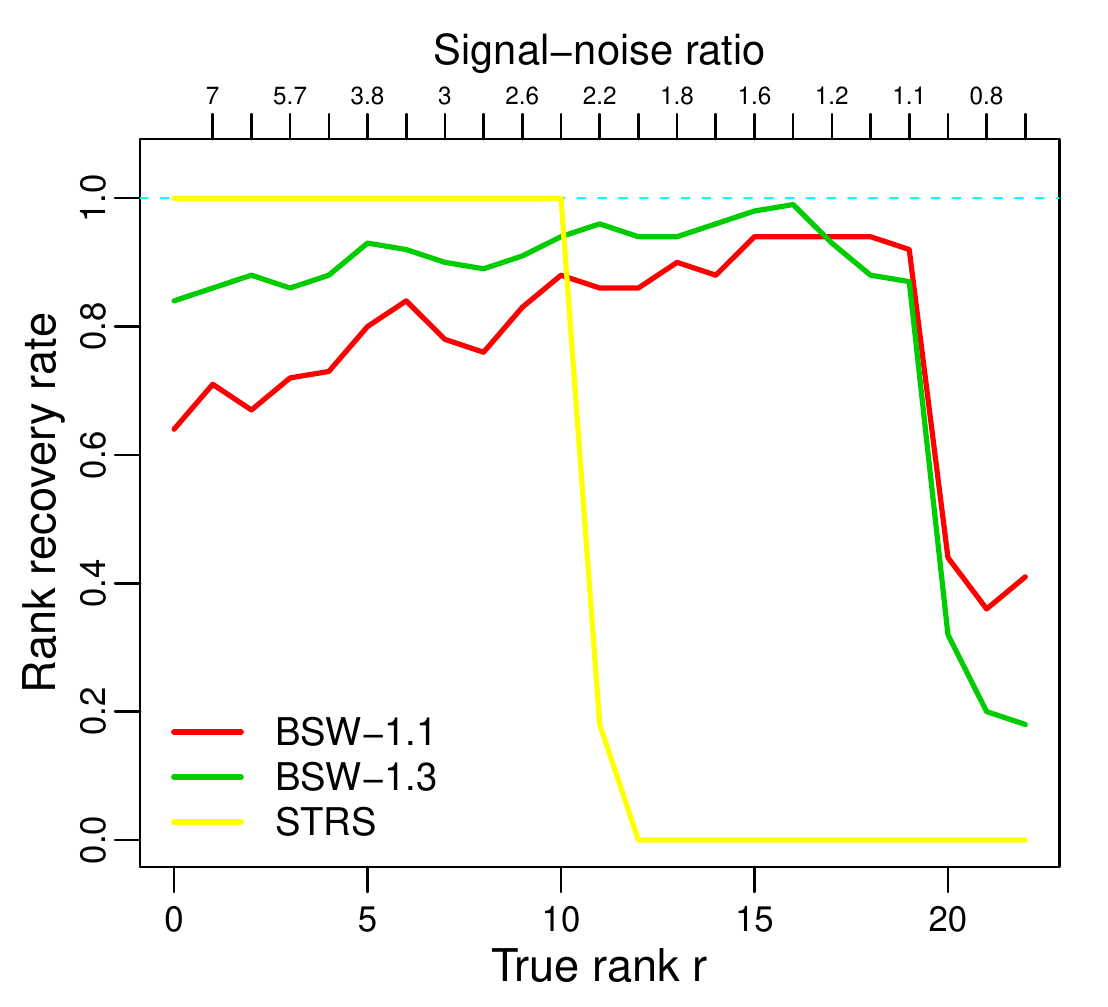}
		\end{tabular}
		\vspace{-1mm}
		\caption{Plots of Experiment 2 on rank recovery rate for BSW-1.1, BSW-1.3 and STRS. The plots  from left to right and top to bottom have   $\{143, 145, 147, 149\}$ for $q$.}
		\label{E3F1}
		\vspace{-3mm}
	\end{figure}
	
	\begin{figure}[ht]
		\vspace{-1mm}
		\centering
		\includegraphics[width=0.45\linewidth]{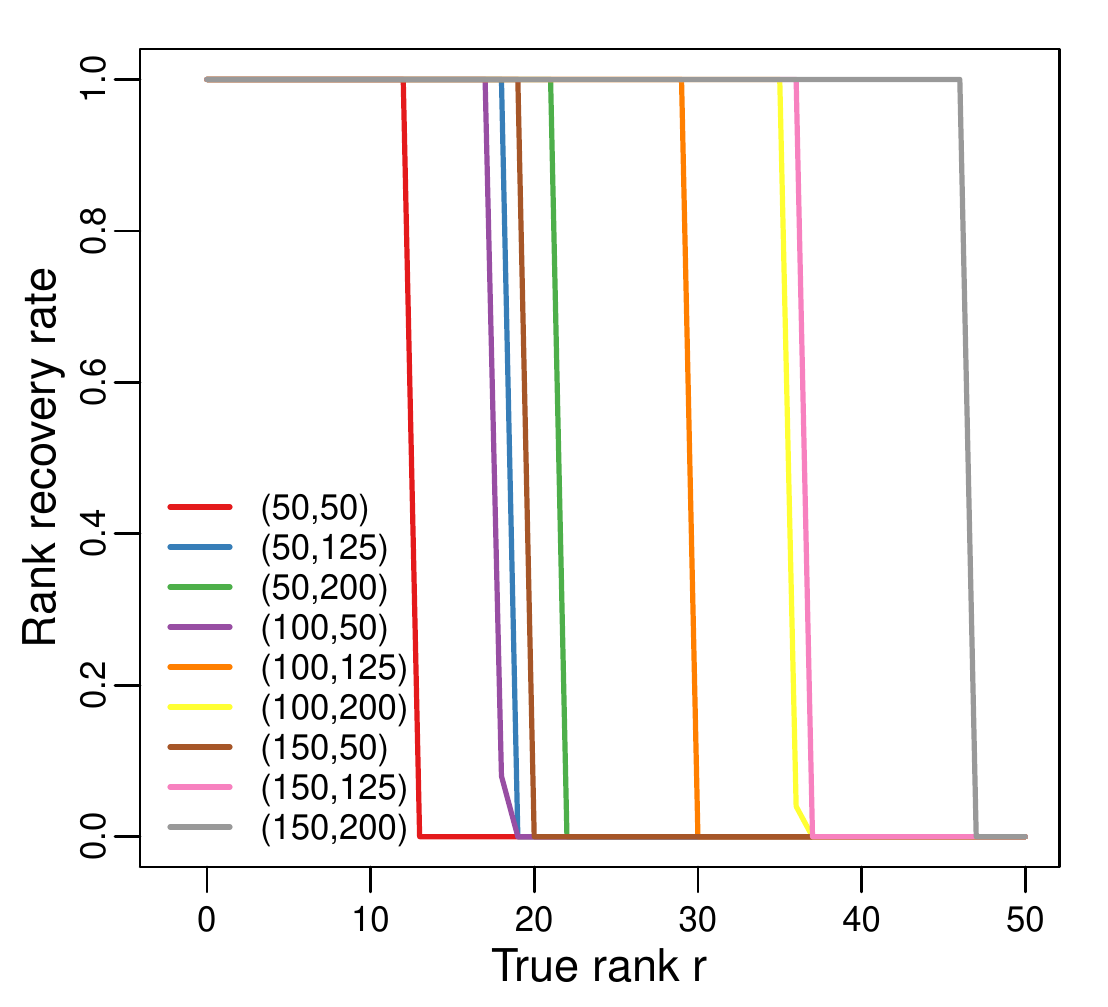}
		\caption{Rank recovery rate for STRS when $n=q$ in Experiment 2.}
		\label{E3F2}
		\vspace{-5mm}
	\end{figure}
	
	\subsection{Experiment 3}\label{sec_sim_3}
	
	Figure \ref{fig_exp2} demonstrates the advantages of STRS over GRS, stated in Theorem \ref{thm: SRS-snr} and Proposition \ref{prop: ex-rank},  in three settings. The first setting is the same low-dimensional scenario   considered in Experiment 1 with $b_0=0.25$. The second setting uses the same high-dimensional setting considered in Experiment 1 with $b_0=0.07$. The third setting focuses on the case when $nm\le \lambda_0N$, which incurs the rank constraint (\ref{rankconstraint}), and  we set
	$n=50$, $m=50$, $p=300$, $q=30$, $\eta=0.1$, $b_0=2$ and $r\in\{0,\ldots,30\}$.  In this setup, $K_{\lambda_0} = 7$. 
	
	\begin{figure}[ht]	
		\centering
		\begin{tabular}{cc}
			\includegraphics[width=.45\linewidth]{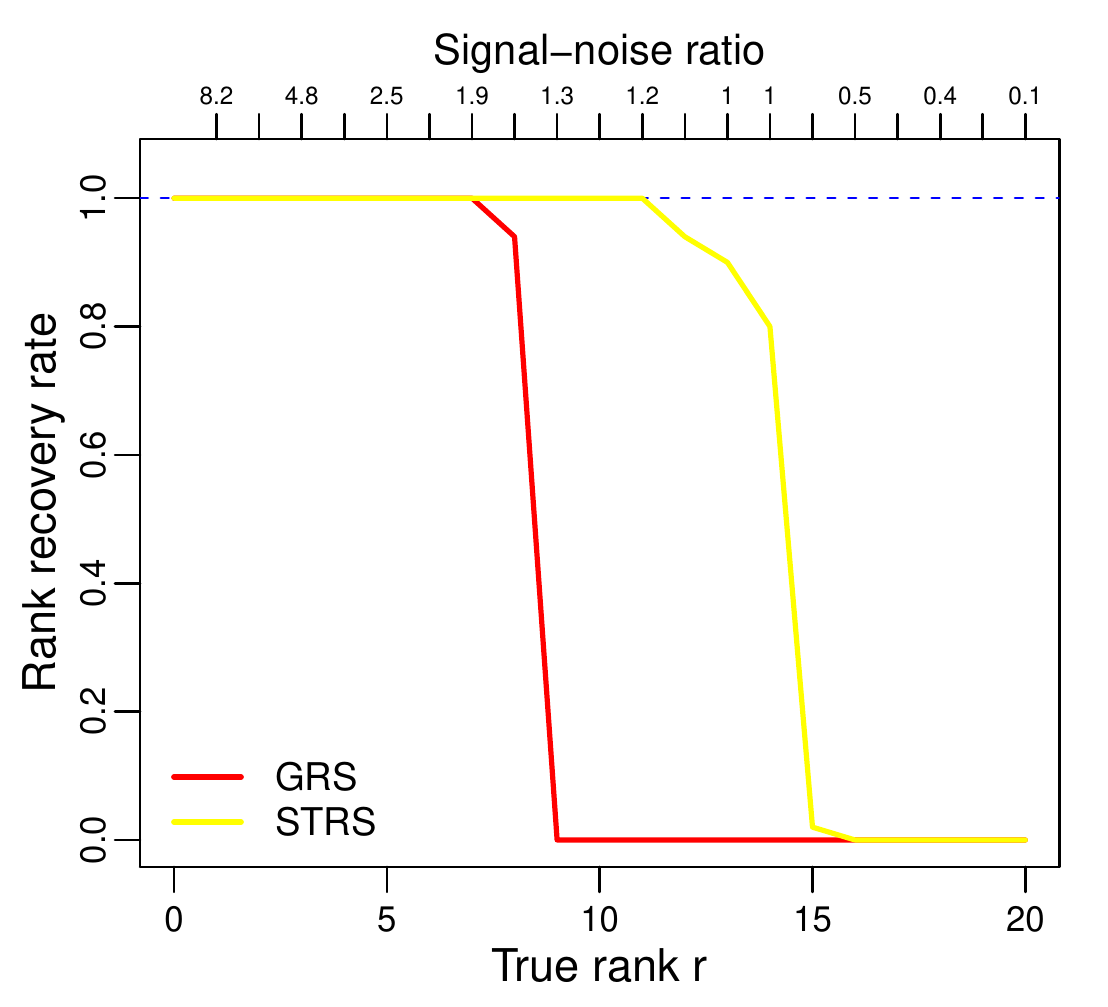} & \includegraphics[width=.45\linewidth]{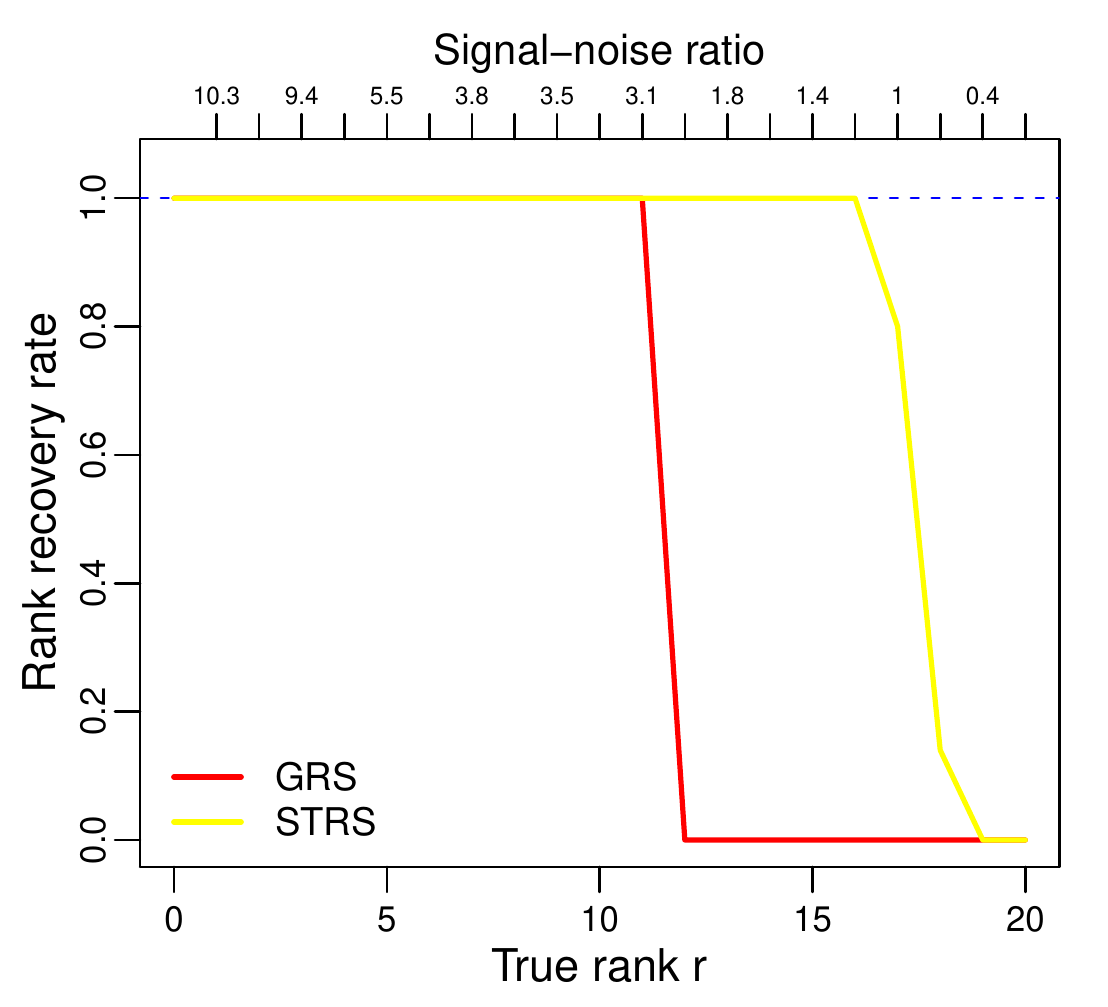}\\
			[-6pt]
			\includegraphics[width=.45\linewidth]{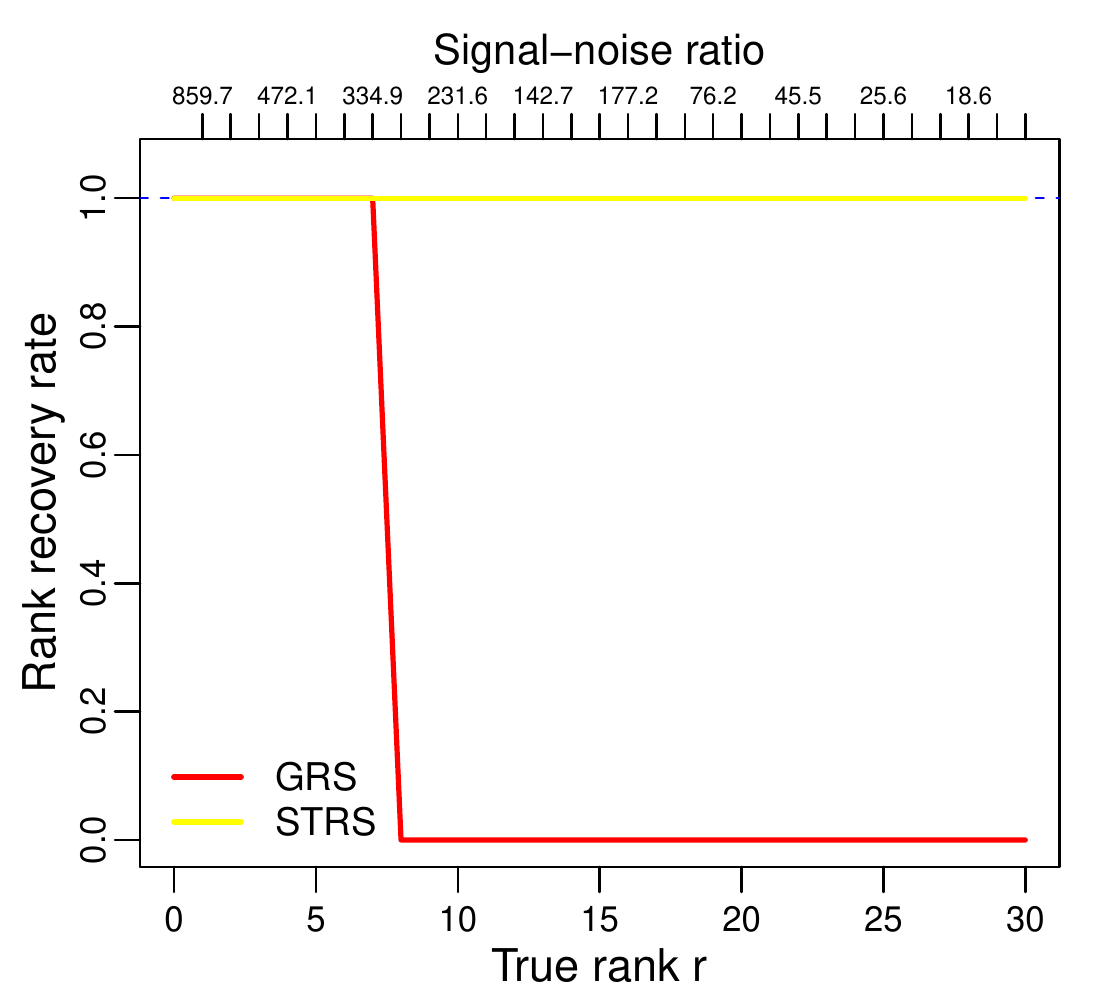}&
			\includegraphics[width=.45\linewidth]{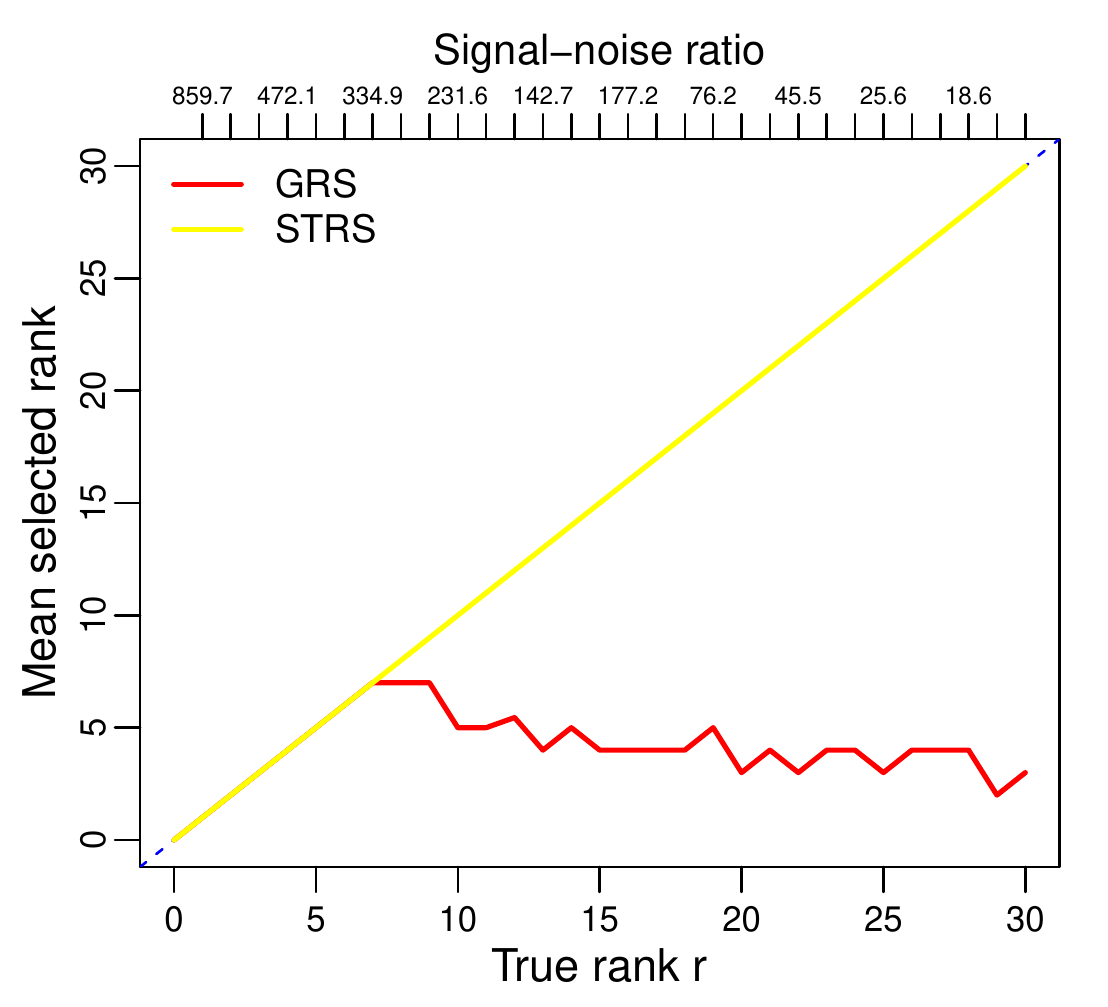}
		\end{tabular}
		\caption{Comparison of GRS and STRS in Experiment 3 in the first setting (top left), second setting (top right) and third setting (bottom). }
		\label{fig_exp2}
	\end{figure}
	\medskip
	
	{{\sc Result}: } The top two figures in Figure \ref{fig_exp2} indicate that  STRS requires a SNR of about 1, while 
	GRS needs a SNR of about $2$ for correct recovery.
	The bottom two plots  in Figure \ref{fig_exp2} show that 
	%the largest recoverable rank $K_\lambda$ is extended by STRS through reducing $\lambda$.
	GRS fails to recover the rank $r$ if  $r>K_{\lambda_0}$, whereas STRS perfectly recovers all possible ranks. This confirms that when $nm$ is not too small compared to $(\sqrt{m}+\sqrt{q})^2$, STRS can get rid of the rank constraint (\ref{rankconstraint}). % as $\lambda$ becomes increasingly smaller. 
	(The tuning parameter $\lambda_t$  in STRS   reduces from $198$ to $83$ and $66$ in the first two cases, respectively, and from $315$ to $69$ in the third case.)
	These findings confirm our theoretical results in Section \ref{sec_SRS}.

	\subsection{Experiment 4}\label{sec_sim_4}
	We  verify the results of Section \ref{sec_extension} by comparing the performance of GRS, STRS and SSTRS for both models $Y=XA+E$ and $Y=A+E$ with errors  $E_{ij}$   generated from a $t_\nu$-distribution with various   degrees of freedom $\nu$.
	
	For model $Y =XA + E$, we consider the case $n/q\to 1$ with different $m$. We set $\eta = 0.1$ and $r\in\{0,\ldots, 15\}$ for all   settings. The first plot in Figure \ref{fig_exp5} depicts mean selected ranks of GRS and STRS when we further set $n = q = 150$, $m = 100$, $p = 250$,
	and $b_0 = 0.002$ and $\nu=6$ (degrees of freedom of the $t_\nu$ distribution). We also verify the rank consistency of SSTRS by generating $E_{ij}$ from $t_\nu$-distributions with $\nu\in\{ 6, 8,10\}$.
	The second row in Figure \ref{fig_exp5} depicts  mean selected ranks of SSTRS and  is based on 
	$n = 300\approx q=280>> m=50$, $p = 400$  and $b_0 = 0.0015$.   
	The third row in Figure \ref{fig_exp5} shows the same quantities and is based 
	on $n = 80$, $q = 60$, $p = 150$, $m = 400$ and $b_0 = 0.003$.  We  varied the closeness of $n$ and $q$, but since  the results didn't change,  we only report for one pair of $n$ and $q$ for each setting.

	For model $Y = A + E$, we   present two cases of skinny $A$ when $n=O(m^\alpha)$ and $m=O(n^\alpha)$ for some $\alpha \in (0,1)$.
	% since the other case is similar as the first one for $Y=XA + E$. 
	Specifically, 
	we consider $n = 500$, $m = 80$ in the first setting and $n = 80$, $m=500$ in the second one. We set $\eta = 0.1$, $b_0 = 0.25$, $r\in\{0,\ldots,20\}$ and $\nu\in\{6,8,10\}$ in both cases. The mean selected ranks of SSTRS are plotted in the last row of Figure \ref{fig_exp5}.
	\medskip
	
	{\sc Result: } In both models, all three procedures work perfectly for heavy tailed errors under very mild SNR, although   the rank constraint (\ref{rankconstraint}) impacts GRS. In addition, STRS and SSTRS
	can handle a larger range of $r$ under a milder SNR and their performance seems very stable under various error distributions.

	\begin{figure}[H]
		\centering
		\vspace{-1mm}
		\includegraphics[width=.42\textwidth]{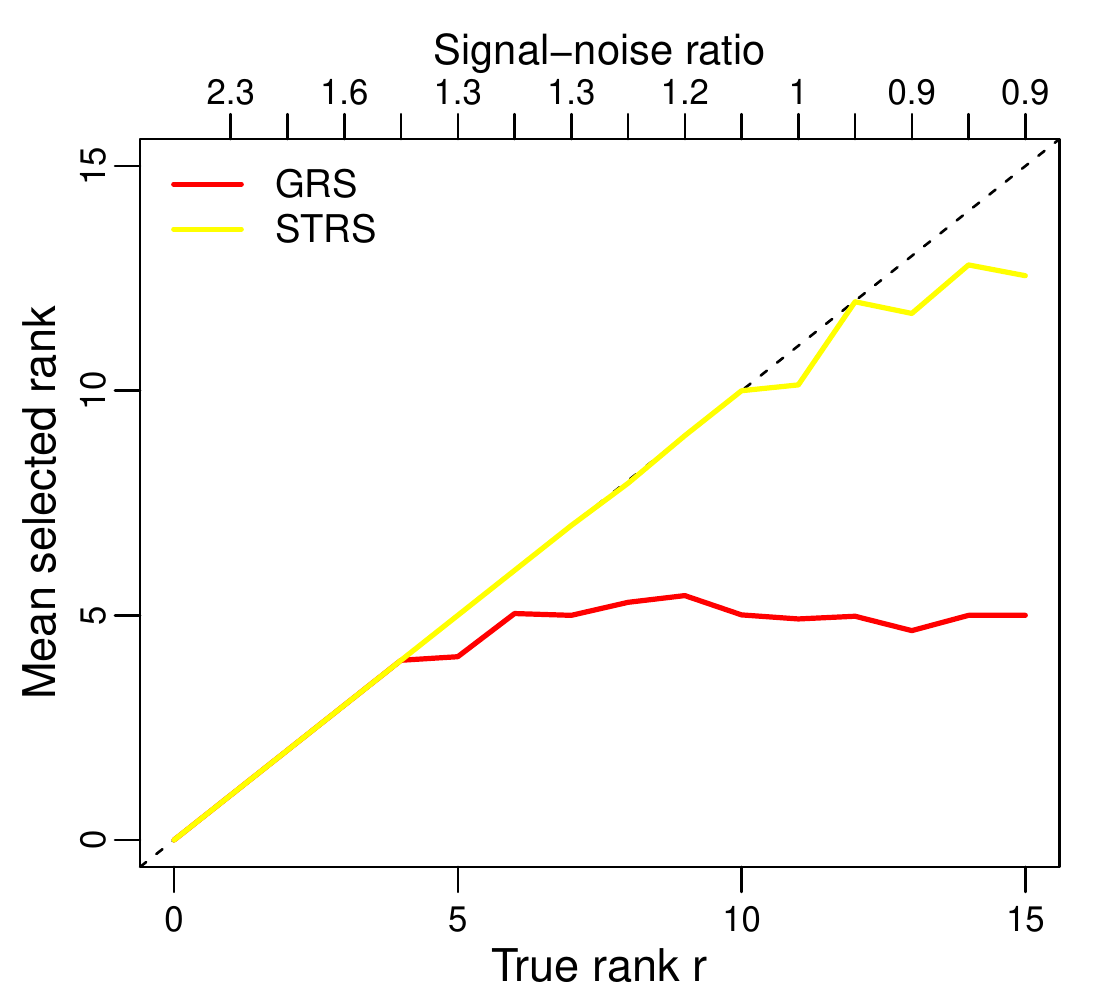}\\
		[-4pt]
		\begin{tabular}{cc}
			\includegraphics[width=.42\textwidth]{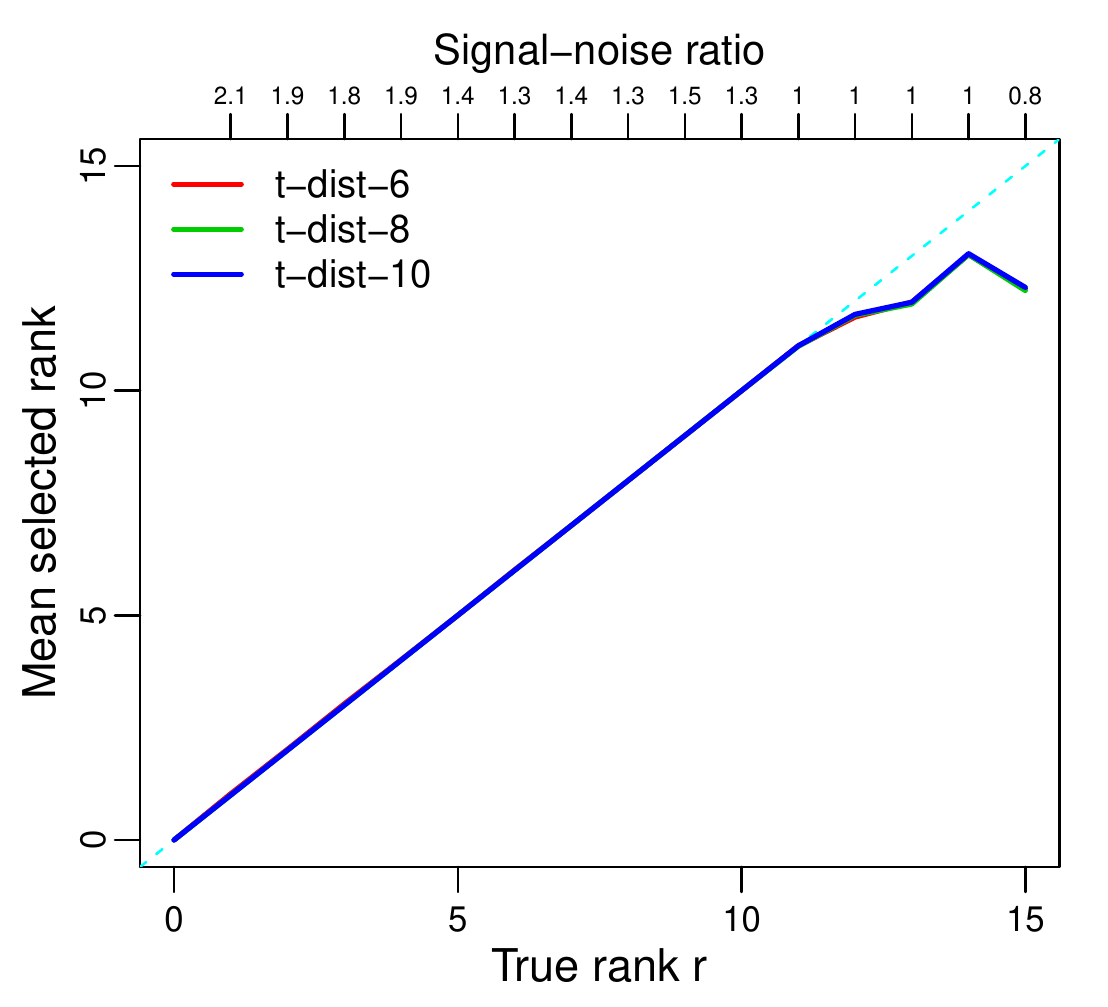} & 
			\includegraphics[width=.42\textwidth]{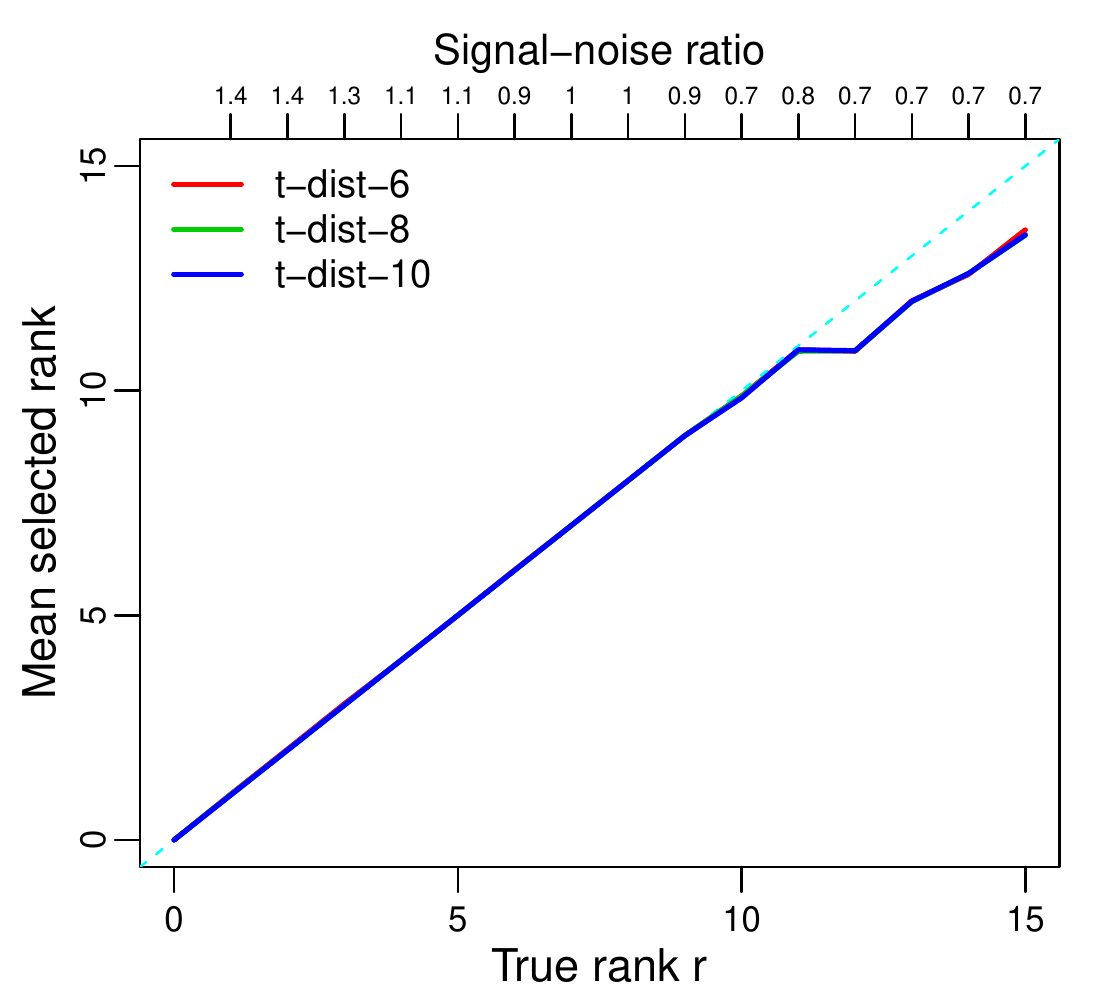}\\
			[-6pt]
			\includegraphics[width=.42\textwidth]{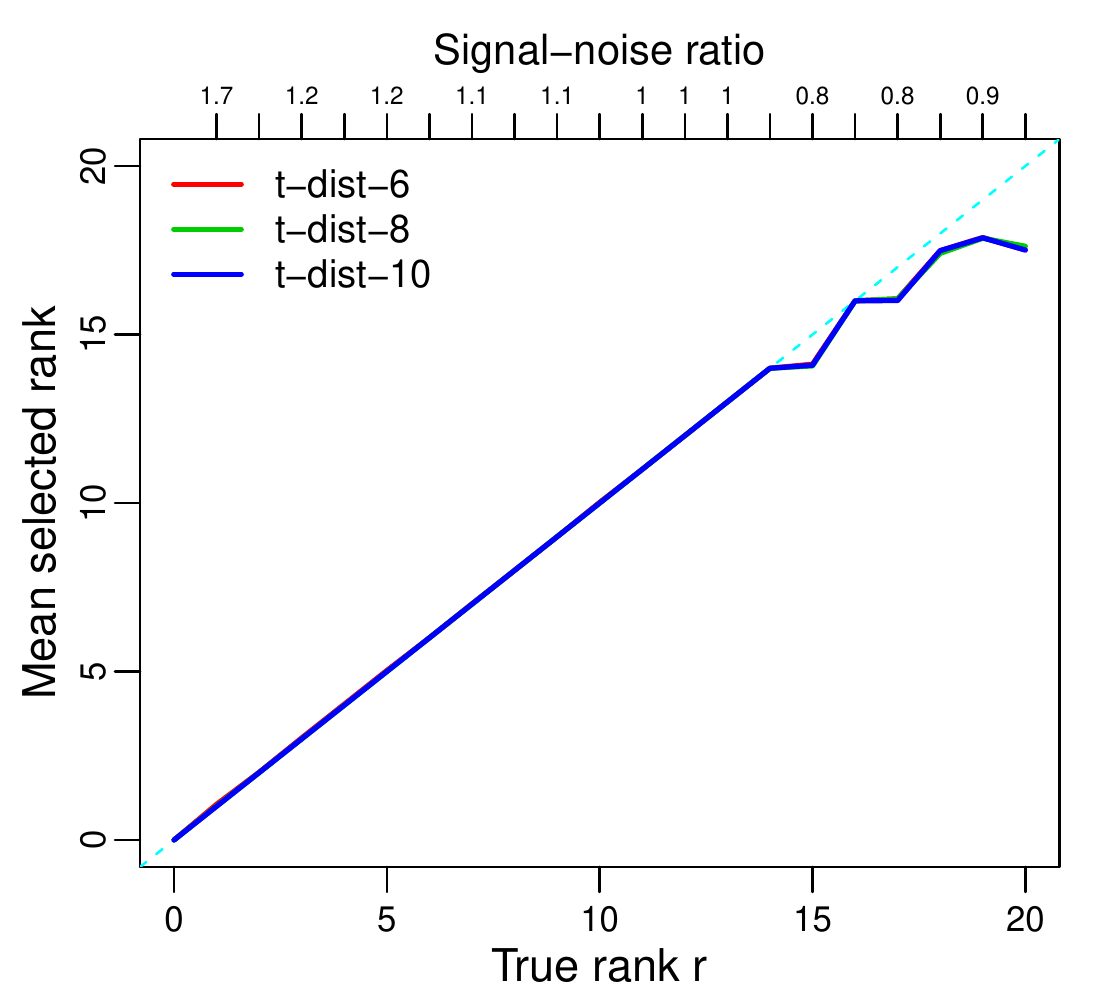} & 
			\includegraphics[width=.42\textwidth]{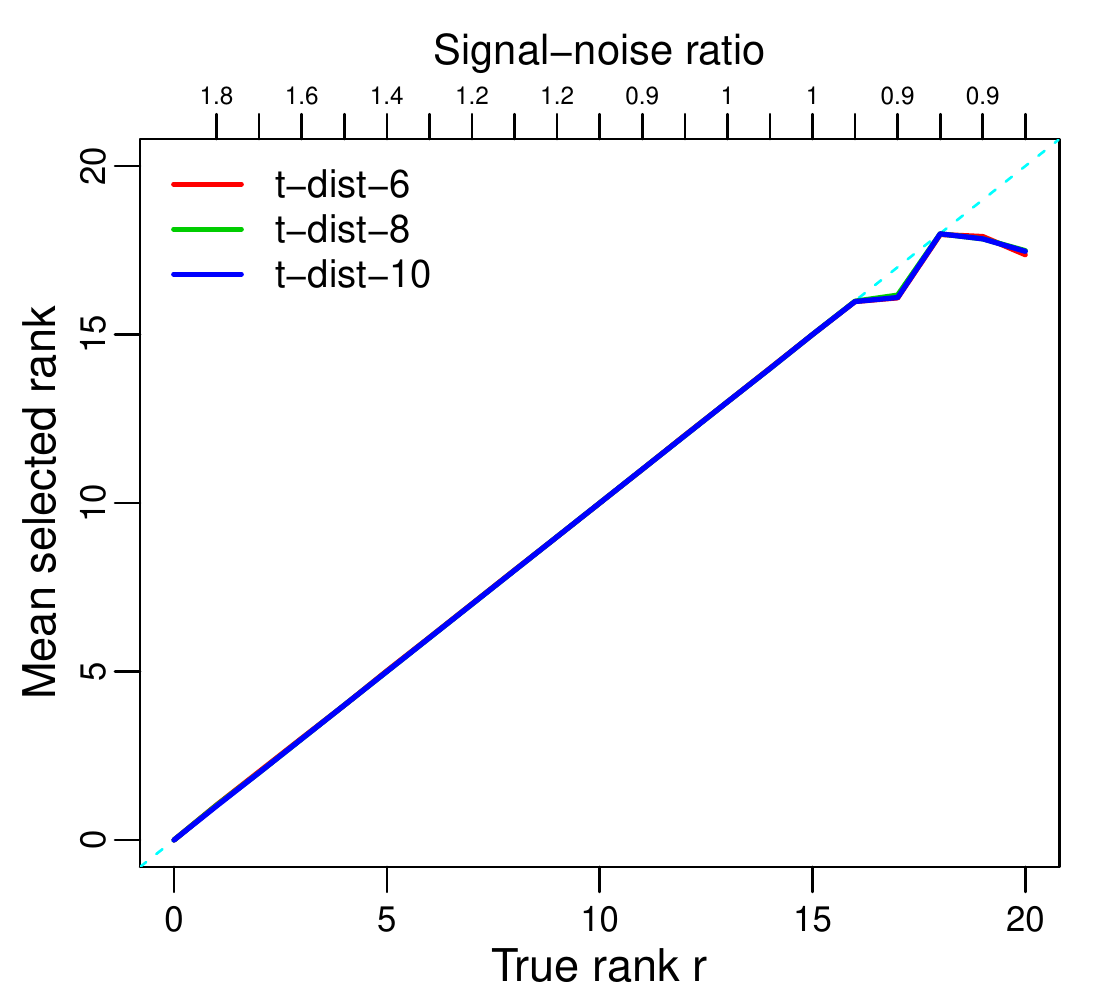}
		\end{tabular}
		\caption{Plots of mean selected ranks related to Experiment 4. The first plot compares GRS and STRS in model $Y=XA+E$. 
			The middle row evaluates SSTRS in model $Y=XA+E$ for various error distributions with
			$n = 300$, $q = 280$, $m = 50$ (left) and $n = 80$, $q = 60$, $m = 400$ (right). The bottom row plots mean selected ranks of SSTRS in model $Y = A + E$ with $n = 500$, $m = 80$ (left) and  in $m=80$, $n=500$ (right).}
		\label{fig_exp5}
	\end{figure}

	\subsection{Experiment 5}\label{sec_sim_5}
	The stable performance of GRS, STRS and SSTRS leads us to make the following conjecture: 
	\begin{equation}\label{eq_conj}
	\EE[d_j(PE)]\ \approx\ \EE[d_j(Z)],\qquad \text{ for all }j = 1,\ldots, q\wedge m
	\end{equation}
	where $Z\in \RR^{q\times m}$ has i.i.d. $N(0,1)$, $P$ is the projection matrix based on $X$ with $\text{rank}(P) = q$ and entries of $E\in \RR^{n\times m}$ are i.i.d. mean zero random variables with $\EE[E_{ij}^2] = 1$ and $\EE [E_{ij}^4] <\infty$.
	%Even for the simple case $j = 1$, to the best of our knowledge, no literature proves this.
	The result is striking since the projection $P$ destroys the independence of $E_{ij}$, hence one would not necessarily expect the Bai-Yin law \citep{Bai-Yin}   continue to  hold for $PE$ which only has independent columns. %Moreover, our simulations suggest that (\ref{eq_conj}) holds for all $1\le j\le q\wedge m$. 
	Proving (\ref{eq_conj}) is beyond the scope of the current paper and we leave it for future research.
	Instead, we   verify this conjecture in simulations  for  two cases: (1) $n = 150$, $p = 250$, $q = 50$, $m = 50$; (2) $n=50$, $p=40$, $q=40$, $m=150$. In both cases,  $\eta\in \{0.1, 0.3, 0.5, 0.7, 0.9\}$ and we generate $E$ from $t_\nu$-distributions with degrees of freedom $\nu\in\{5,8,12\}$.
	For each setting, we generate $X$ and $P$ for a given $\eta$,  and we generate 100 pairs of matrices $E$ and   $Z$. Averaged ratios of $d_j(PE)/d_j(Z)$ are calculated for each $j$ and  Figure \ref{fig_exp6_1} shows that the ratios of $d_j(PE)/d_j(Z)$ are highly concentrated around 1.   We only report the case of $\eta = 0.9$ as the other cases gave essentially the same picture.
	
	\begin{figure}[ht]
		\centering
		\begin{tabular}{cc}
			\includegraphics[width=0.42\textwidth]{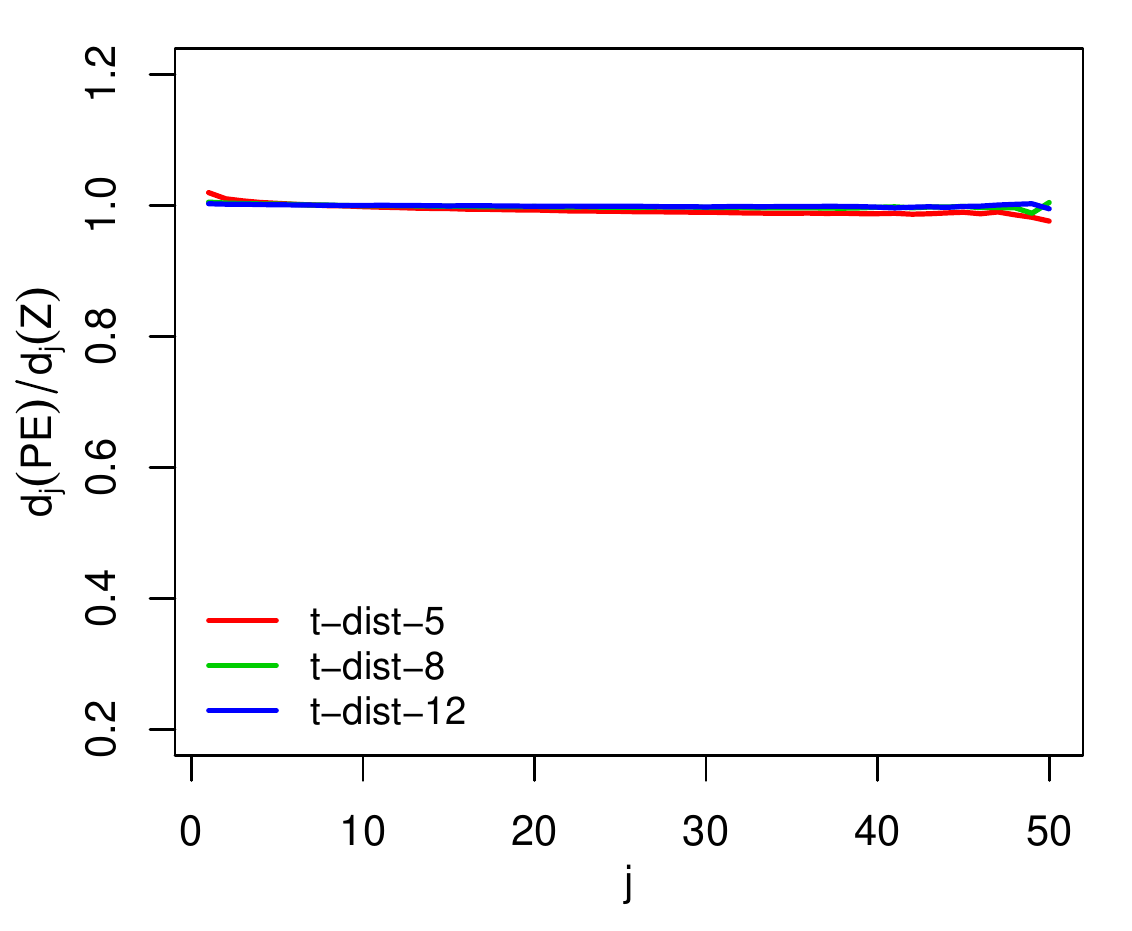}
			\includegraphics[width=0.42\textwidth]{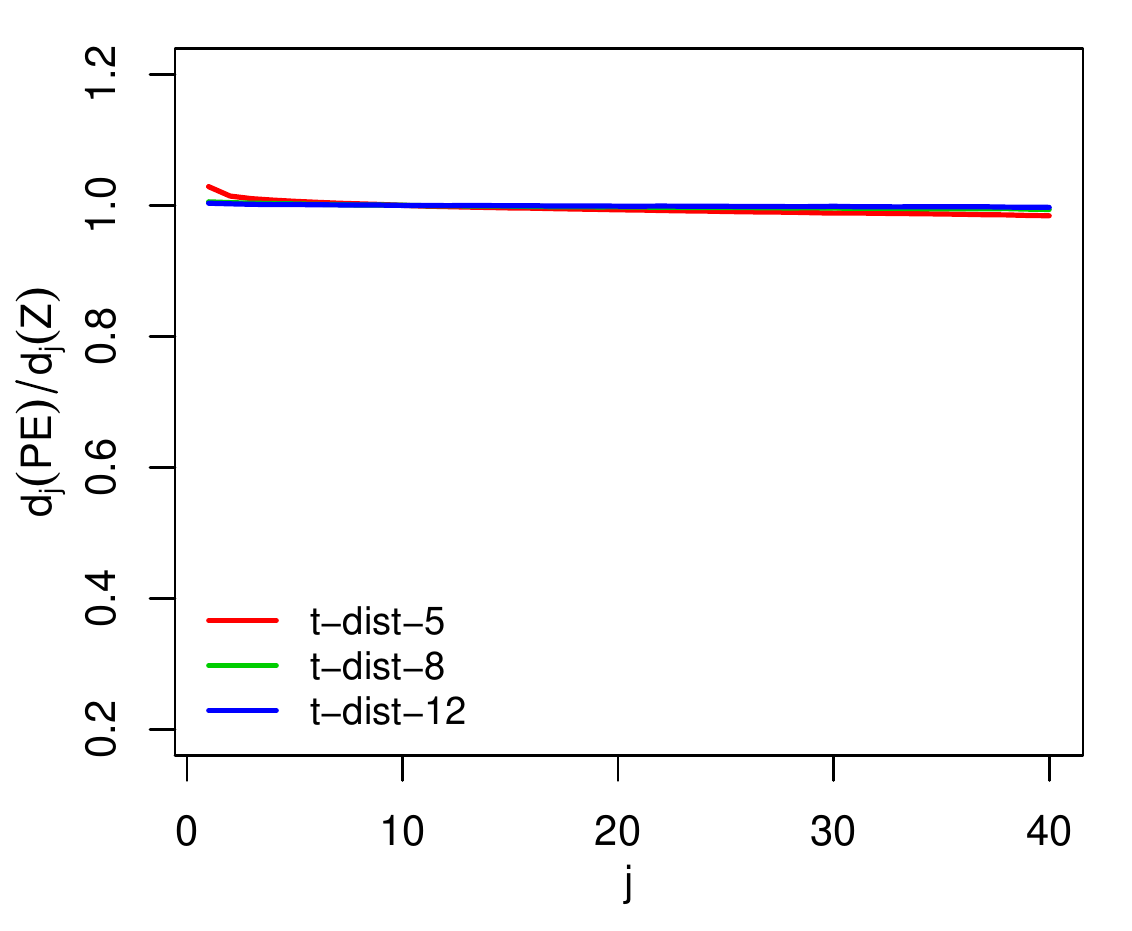}
		\end{tabular}			
		\vspace{-1mm}
		\caption{Panel of $\EE[d_j(PE)]/\EE[d_j(Z)]$ in Experiment 5 with   $n = 150$, $p = 250$, $m =q= 50$  (left) and   $n=50$, $p=q=40$, $m=150$ (right).}
		\label{fig_exp6_1}
		\vspace{-2mm}
	\end{figure} 
	
	In light of this, we further conjecture that our procedures   work in general  settings with heavy tailed error distributions. We consider both low- and high-dimensional settings to verify this claim. The   low-dimensional setting considers $n = 150$, $p = q = m = 30$ and $b_0 = 0.15$ and  the high-dimensional setting  considers $n = 100$, $p = 150$, $q = m = 30$ and $b_0 = 0.015$. We generate $E$ from $t_\nu$-distribution with  $\nu\in\{ 6, 8 , 10 \}$ and we set $\eta = 0.1$ and $r\in\{0,\ldots,20\}$ in both cases.  The plots in Figure \ref{fig_exp6_2}  show that  STRS consistently estimates the rank in both  settings under a very mild SNR ratio and  its performance is quite stable for different heavy tailed $t$-distributions.  %%%%This supports our claim  that STRS can be extended to heavy tailed error distributions.
	\begin{figure}[H]
		\centering
		\vspace{-2mm}
		\begin{tabular}{cc}
			\includegraphics[width=0.42\textwidth]{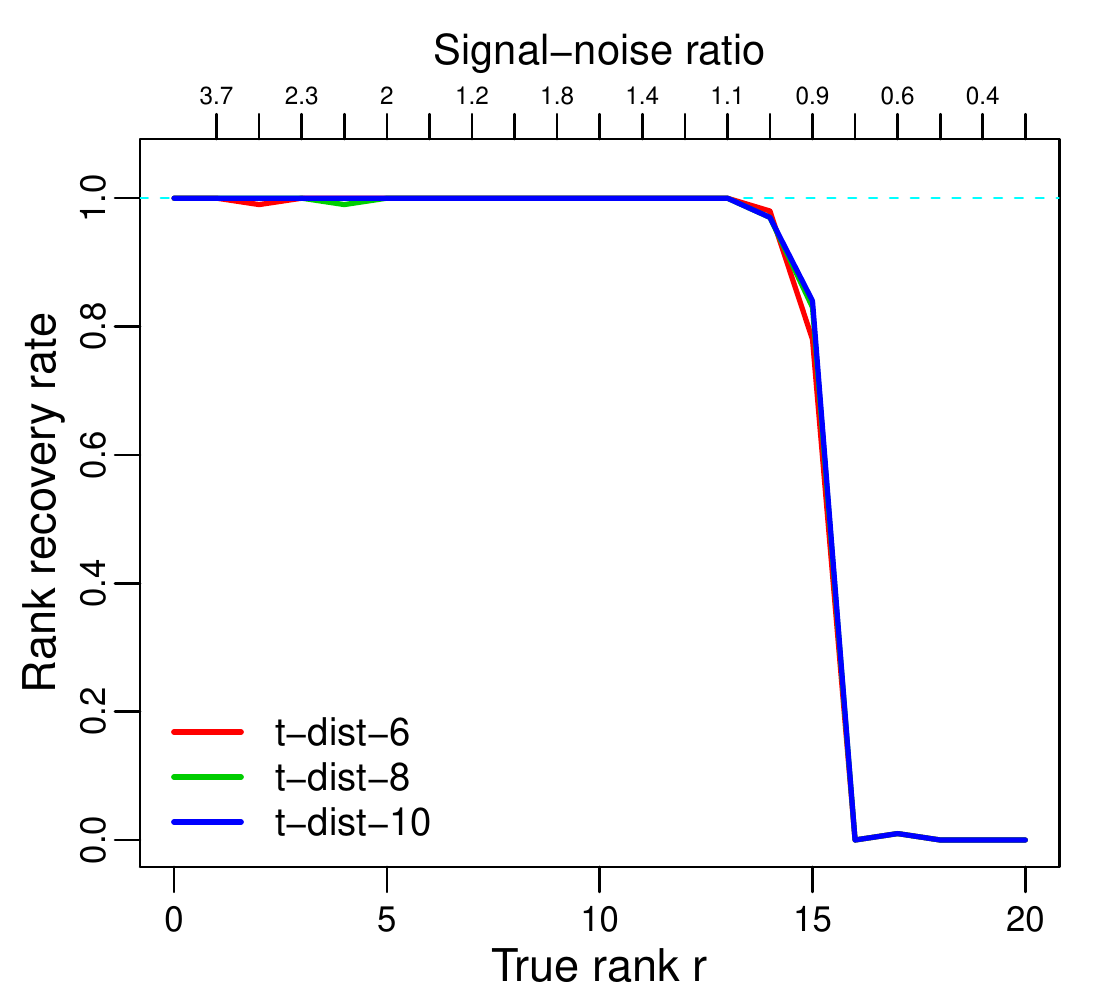}&
			\includegraphics[width=0.42\textwidth]{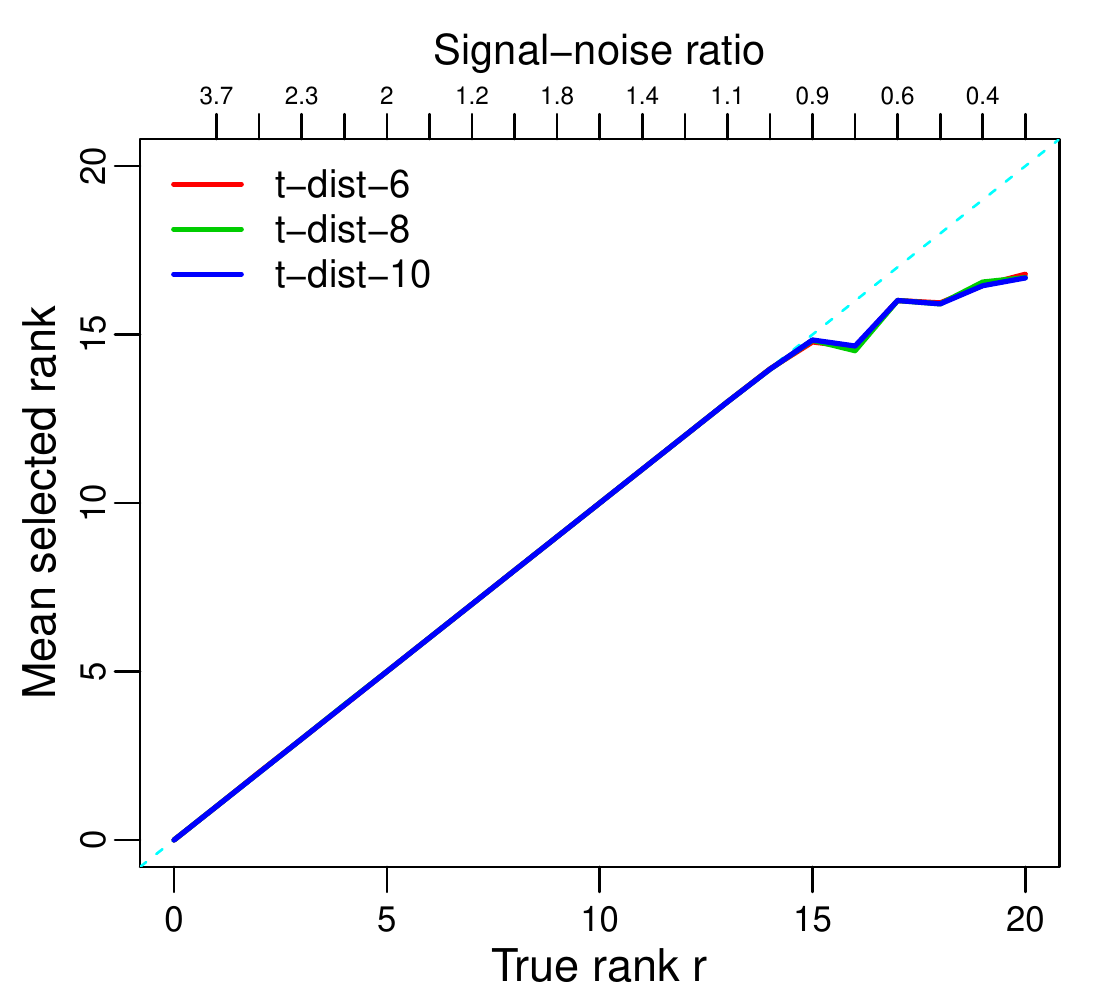}\\
			[-6pt]
			\includegraphics[width=0.42\textwidth]{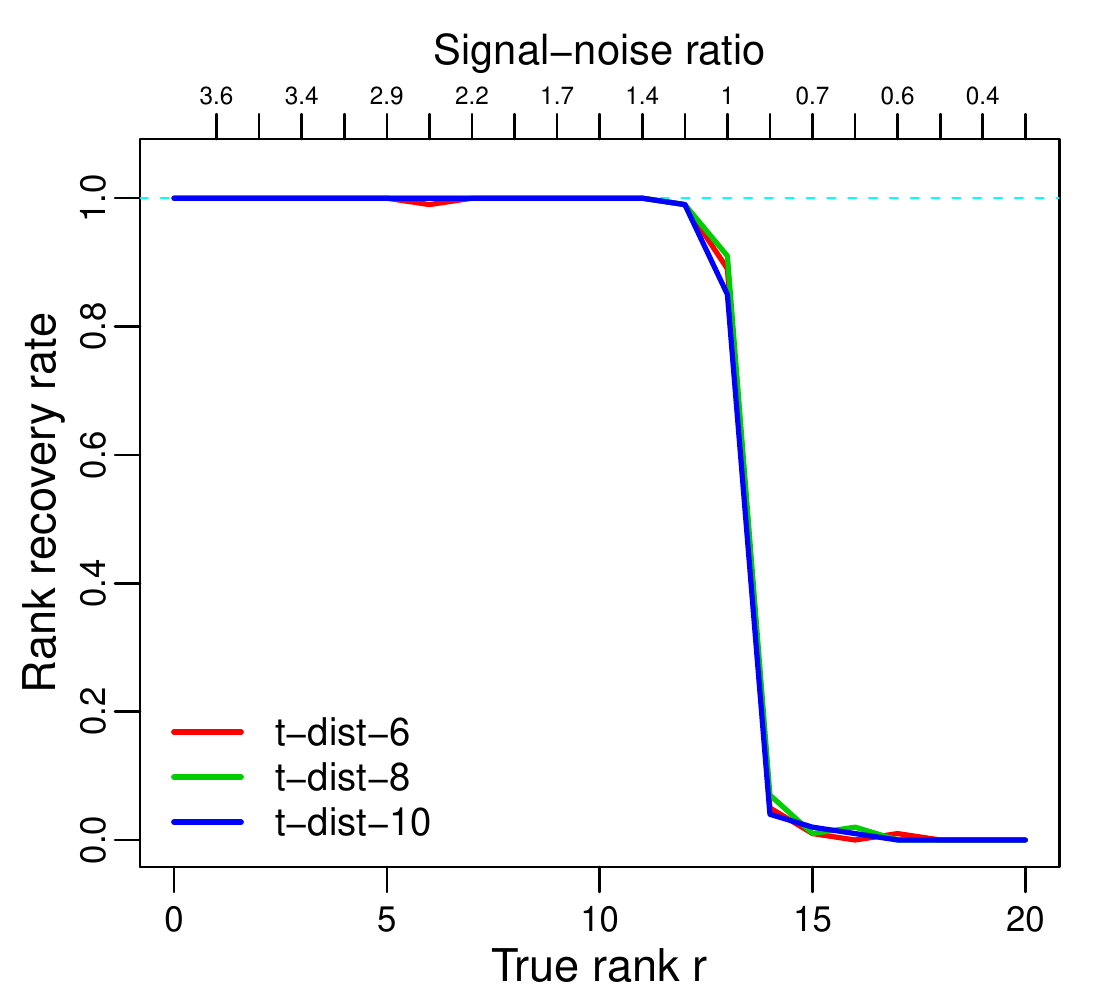}&
			\includegraphics[width=0.42\textwidth]{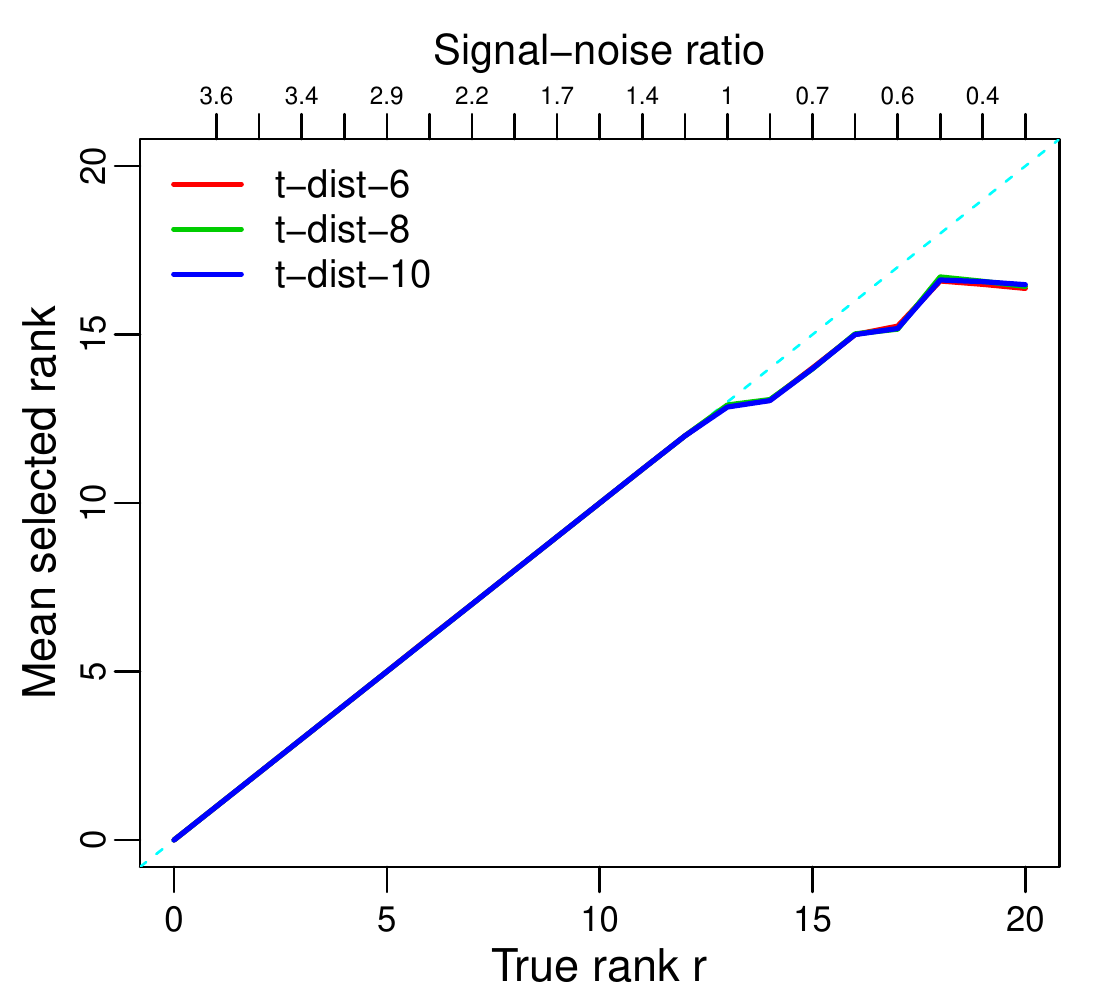}
		\end{tabular}				
		\vspace{-1mm}
		\caption{Performance of STRS for heavy tails in Experiment 5 in   the low-dimensional   (top) and the high-dimensional setting (bottom).}
		\label{fig_exp6_2}
		\vspace{-2mm}
	\end{figure} 
	
	\subsection{Conclusions of the simulation studies}\label{sec_sim_conclusion}
	\begin{itemize}
		\item
		In general, STRS outperforms {BSW-C} in both low-dimensional and high-dimensional settings.
		The performance of {BSW-C} is influenced by the true rank $r$ and there is no globally optimal tuning parameter $C$  for {BSW-C}.  STRS is   stable in general as long as the true rank $r$ lies in its allowable range. 
		\item
		In the most challenging setting of Experiment 2, when $n\approx q$ and estimation of $\sigma^2$ is problematic, the advantage of STRS over BSW-1.1 and BSW-1.3 becomes more prominent. %This is as expected since the estimation of $\sigma^2$ is problematic. 
		If  $n=q$,   {BSW-C} is no longer feasible, while  STRS only fails in the rare situation when $nm$ is  small compared to $m+q$ and $r$ is large. Of course, reduced rank regression only  makes  sense for relatively small $r$.
		
		\item Experiment 3 verifies that STRS has clear advantages over GRS.  It  requires a smaller signal-to-noise ratio  and allows for larger values of $r$, which confirms our theoretical result in Section \ref{sec_SRS}.
		
		\item Experiment 4 confirms our results in Section \ref{sec_extension}, that  our procedures (GRS, STRS, SSTRS) continue to consistently estimate the true rank for  heavy tailed distributions in certain settings, considered in Section \ref{sec_extension}. Moreover, Experiment 5 confirms our conjecture that STRS   works in more general settings, even if the errors are generated from  heavy tailed  distributions.	
	\end{itemize}
	
	\subsection{Tightness check of signal-to-noise condition}\label{sec_sim_tightness}
	Akin to the discussion in \cite[Section 4.2, p. 1303]{BSW}, we can empirically verify the tightness of the signal-to-noise condition in (\ref{RS2}). Specifically, from (\ref{hat k}), we have 
	\begin{eqnarray} 	\label{cc}  \{ \wh k\ne r  \} =   \{d_{r+1}(PY)\ge \sqrt\lambda\wh \sigma_r\} \cup \{d_r(PY) \le \sqrt\lambda\wh \sigma_r\}.
	\end{eqnarray}
	By using identity (\ref{cc}) and Weyl's inequality,  we observe that
	\begin{equation*} \PP  \{\wh k \ne r\} \ge
	\PP\left \{d_r(XA)+d_1(PE)  \le \sqrt \lambda\wh \sigma_r \right\}.
	\end{equation*}
	Hence we conclude that
	$\PP\{d_r(XA) \le \sqrt\lambda\wh\sigma_r-d_1(PE)\} >0$ implies $\PP\{\wh k=r\} <1$.
	This suggests $d_r(XA)$ cannot be smaller than $\sqrt{\lambda}\wh \sigma_r-d_1(PE)$. To empirically verify this conjecture, we generate different pairs of $(X, A)$ through changing $b_0$, $\eta$, $n, m, p, q$ and  $r$. For each pair of $(X, A)$,   we record the $r$th largest singular value of $XA$ as $d_r(XA)$ and we search along a  grid of $\lambda$ to find the largest $\lambda$ such that minimizing (\ref{vier}) recovers the true rank (recall that $\sqrt{\lambda}\wh\sigma_r$ is increasing in $\lambda$). Finally, we   plot $\lambda\wh\sigma_r$ and $\sqrt{\lambda}\wh\sigma_r- d_1(PE)$ against $d_r(XA)$ for all pairs of $(X, A)$   in Figure \ref{tightness}.  This plot collaborates our  conjecture that the signal-to-noise condition in (\ref{RS2}) is tight.
	\begin{figure}[H]
		\centering
		\vspace{-1mm}
		\includegraphics[width=.5\textwidth]{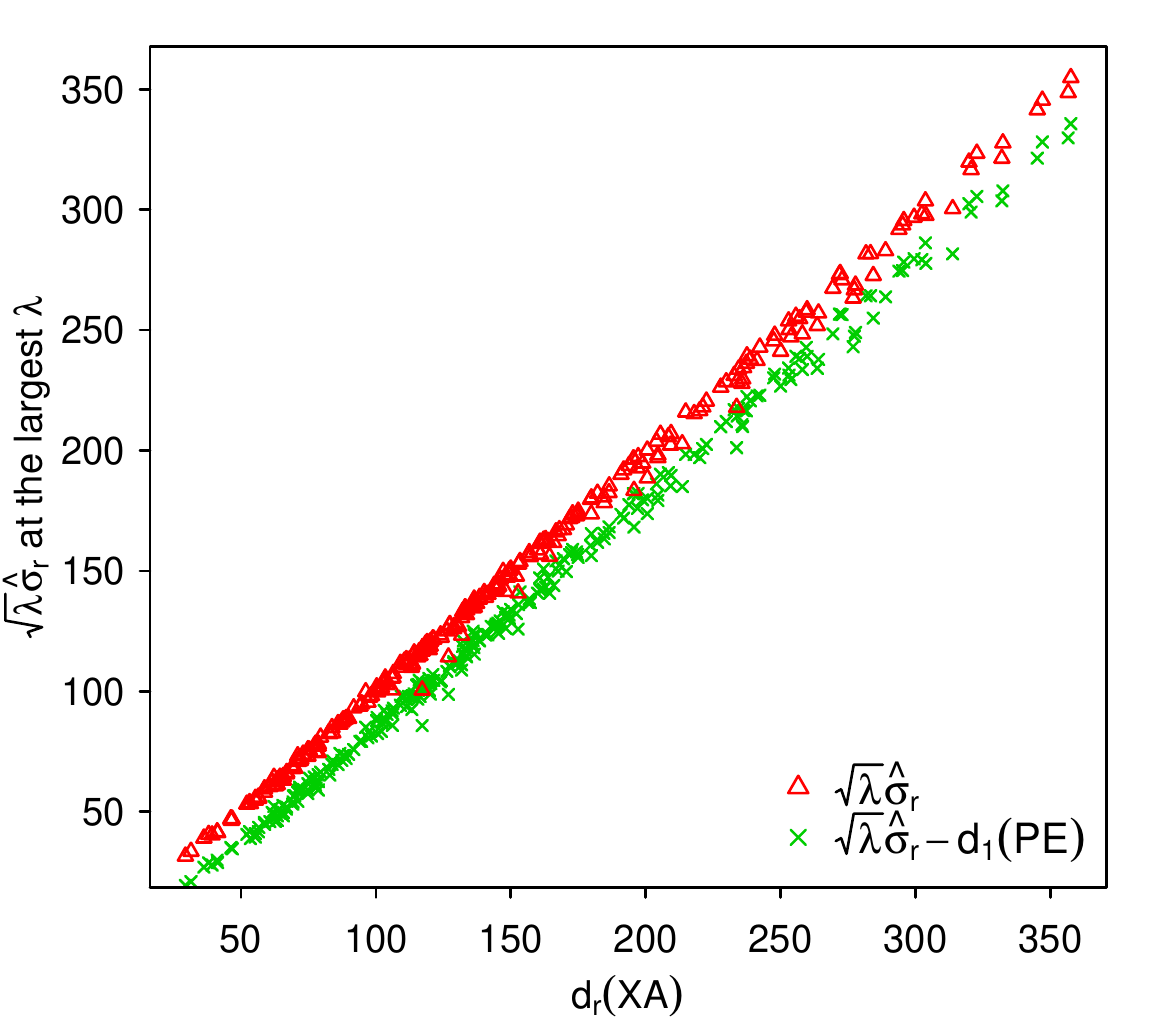}
		\caption{Plot of $\sqrt{\lambda}\wh\sigma_r$ and $\sqrt{\lambda}\wh\sigma_r-d_1(PE)$ versus $d_r(XA)$ for each pair of $(X,A)$. The value for $\lambda$ is the largest one (on a   grid) that correctly found the  true rank.}
		\label{tightness}
	\end{figure}
	
	\section*{Acknowledgements} The authors thank the Editor, Associate Editor and two referees for constructive remarks.
	Wegkamp's research was supported in part by NSF grant DMS 1712709.\\

	\begin{supplement}
		\sname{Supplement to ``Adaptive estimation of the rank of the coefficient matrix in high dimensional multivariate response regression models''}\label{suppA}
		%\stitle{supplementary proofs}
		%\slink[url]{http://www.e-publications.org/ims/support/dowload/imsart-ims.zip}
		\slink[doi]{COMPLETED BY THE TYPESETTER}
		\sdescription{The supplementary document includes the oracle inequality for the fit, additional simulation results and all proofs.}
	\end{supplement}

	\vspace{.8cm}
	\bibliographystyle{imsart-nameyear}
	\bibliography{ref}
	
	\newpage
	\appendix
	
\section{Proofs of Sections 2 \& 3}
\subsection{\bf Proof of Proposition \ref{prop1}}
We first note that, since $(PY)_i=P(PY)_i$, see \cite{GiraudBook} (page 124),  and Pythagoras' identity
\begin{eqnarray}
\| Y-(PY)_i\|^2 &=& 
\| Y-PY\|^2 + \| PY- (PY)_i\|^2 \label{identiteit}\\
&=& \| Y-PY\|^2 +  \sum_{k=i+1}^j d_k^2(PY)  +  \sum_{k>j} d_k^2(PY)\nonumber\\
&=& \| Y-(PY)_j\|^2 +  \sum_{k=i+1}^j d_k^2(PY).\nonumber
\end{eqnarray}
Consequently, with $i<j$,
\begin{eqnarray*} \wh \sigma_i^2 \ge \wh\sigma_j^2 
	&\iff& \frac{ \| Y- (PY)_i \|^2 }{ nm-\lambda i} \ge \frac{ \| Y- (PY)_j \|^2 }{ nm-\lambda j} \\
	&\iff& \frac{ \| Y- (PY)_j + \sum_{k=i+1}^j d_k^2 (PY) }{ (nm-\lambda j) + \lambda(j-i)} \ge \frac{ \| Y-(PY)_j\|^2}{nm-\lambda j}\\
	&\iff&\frac{  \sum_{k=i+1}^j d_k^2 (PY) }{   \lambda(j-i)} \ge \frac{ \| Y-(PY)_j\|^2}{nm-\lambda j}=\wh\sigma_j^2
\end{eqnarray*}
using the simple fact that $(a+b)/(c+d) \ge a/c \iff (b/d)\ge (a/c)$ for any positive numbers $a,b,c,d$. This proves (\ref{crit1}). Claim (\ref{crit2}) follows from (\ref{crit1})  by taking $i=j-1$. Finally,
\begin{eqnarray*}
	\frac{ d_{j}^2 (PY) }{ \lambda} &\le& \frac{ \| Y- (PY)_j\|^2 }{nm- \lambda j}
	\ =\ \frac{ \| Y- (PY)_{j-1} \|^2 - d_j^2(PY)}{ nm-\lambda (j-1) - \lambda}
\end{eqnarray*}
is equivalent with
\begin{eqnarray*}
	\frac{ d_{j}^2 (PY) }{ \lambda} 
	&\le& \frac{ \| Y- (PY)_{j-1} \|^2 }{ nm-\lambda (j-1) } = \wh \sigma^2_{j-1}
\end{eqnarray*}
using the above elementary manipulation again and (\ref{crit5}) follows.
This completes our proof.\qed\\

\subsection{\bf Proof of Proposition \ref{prop2}}
We first show (\ref{crit3}). 
Suppose $\wh \sigma^2_k\le \wh\sigma^2_{k-1}$. We observe
\begin{eqnarray*}
	\frac{1}{k-\ell} \sum_{j=\ell+1}^k d_j^2(PY) &\ge& d_k^2(PY)\quad \text{ by }d_1(PY)\ge d_2(PY)\ge \cdots\\
	&\ge& \lambda\wh\sigma_{k}^2 \qquad \text{by (\ref{crit2})}
\end{eqnarray*}
so that (\ref{crit1}) implies $\wh\sigma^2 _k \le \wh\sigma_\ell^2$ for all $\ell \le  k-1$. This proves the non-trivial direction of (\ref{crit3}).

Next, we show (\ref{crit4}). Suppose $\wh \sigma^2_k\ge \wh\sigma^2_{k-1}$. Then, by (\ref{crit2}) and $d_{k+1}(PY) \ge d_k(PY)$, we get
\begin{eqnarray*}
	d_{k}^2(PY)\le  \lambda\wh\sigma_{k}^2\quad \Longrightarrow \quad d_{k+1}^2(PY)\le  \lambda\wh\sigma_{k}^2 \quad \overset{(\ref{crit5})}{\Longleftrightarrow} \quad d_{k+1}^2(PY)\le  \lambda\wh\sigma_{k+1}^2.
\end{eqnarray*}
Note that the last inequality further implies $\wh \sigma_{k+1}^2 \ge \wh\sigma_{k}^2$ by (\ref{crit2}) again. Repeating the same reasoning completes our proof.\qed\\

\subsection{\bf Proof of Theorem \ref{thm:null}}
By Theorem \ref{closedform}, it suffices to show 
$d_1^2(PY) \leq \lambda \wh \sigma^2_1$. This 
is equivalent to $d_1^2(PY) \le \lambda\wh\sigma_0^2$ by criterion (\ref{crit5}) in Proposition \ref{prop1}. The latter, in turn,  is equivalent to $ d_1^2(PE) \le \lambda\wh \sigma^2$ as $XA=0$ implies
$d_1^2(PE)=d_1^2(PY)$ and $\wh \sigma^2_0 =\wh\sigma^2$.\qed\\

\subsection{Proof of Corollary \ref{c1}}
We can write $\lambda := C (\sqrt{m}+\sqrt{q})^2$ for some 
$C = (1+C_0)^2/(1-C_1)>1$ with $C_0>0$ and  $0<C_1<1$. By Theorem \ref{thm:null}, (\ref{d_1}) and (\ref{chiL}), we have
\begin{align*}
\PP\{ \wh k\ne 0\} &\le 
\PP\left\{ d_1^2(PE)\ge \lambda \wh\sigma^2 \right\}\\
&\le
\PP\left\{	d_1^2(PE)\geq (1+C_0)^2(\sqrt{m}+\sqrt{q})^2\sigma^2
\right\}  +\PP\left\{ \wh\sigma^2\le (1-C_1)\sigma^2\right\}\\
& \le  \exp \left\{- C_0^2 (\sqrt{m}+\sqrt{q})^2 /2\right\}+\exp\left\{- 
C_1^2nm /4\right\},
\end{align*}
which proves the claim.\qed\\

\subsection{\bf Proof of Theorem \ref{thm:RS1}}
By Proposition \ref{prop1}, for $\wh k\le r$, we need to show that
$
d_{r+1}^2(PY) <\lambda \wh\sigma^2_{r+1}\label{b}.	
$
We  observe that 
\begin{equation*}
d_{r+1}^2 (PY) < \lambda \wh \sigma_{r+1}^2\ \quad \iff \quad
\lambda > d_{r+1}^2 (PY)/ \wh \sigma_{r}^2
\end{equation*} by statement (\ref{crit5}). Again, on the event (\ref{RS1}), an application of Weyl's inequality and observing that $d_{r+1}(XA)=0$ yield
\begin{equation*}
\lambda ~\ge~ d_1^2(PE) / \wh\sigma_{r}^2 ~\ge~  d_{r+1}^2(PY)/\wh\sigma_{r}^2 
\end{equation*}
which is exactly what needed to be shown.\qed\\

\subsection{Proof of Proposition \ref{prop:ongelijk}}
To show (\ref{eq_sigma_r}), on the one hand, we use the Eckhart-Young theorem and the fact that $r(XA)\le r$ to deduce
$\| Y- (PY)_r\|^2 \le \| Y-XA\|^2 = \|E\|^2$. Hence 
\[\wh \sigma_r^2 \le \frac{nm}{nm-\lambda r}\wh\sigma^2 \]
On the other hand,  Weyl's inequality shows that 
$d_{r+i}(PY)\ge d_{2r+i}(PE)$
for $0\le  i \le N-r$ by defining $d_k(PE) := 0$ for $k > N$, and we obtain    	
\begin{eqnarray}\nonumber
\wh \sigma^2_{r} & = & \frac{\|Y-(PY)_{r}\|^2}{nm-\lambda r}\ =\ \frac{\|Y-PY\|^2 + \sum_{j=r+1}^{N}d_j^2(PY)}{nm-\lambda r}\\\nonumber
& \ge & \frac{\|E-PE\|^2 + \sum_{j=2r+1}^{N}d_j^2(PE)}{nm-\lambda r}\\\label{lowbdsigma}
& = & \frac{\|E-(PE)_{2r\wedge N}\|^2}{nm-\lambda r}.
\end{eqnarray}
If $2r \ge N$, then 
$$\frac{\|E-(PE)_{2r\wedge N}\|^2}{nm-\lambda r}= \frac{\|E-PE\|^2}{nm-\lambda r}\ge \frac{\|E-(PE)_{N}\|^2}{nm-\lambda N/2}.
$$
We conclude the proof by invoking (\ref{mono}) in Lemma \ref{lem: mono}. \qed\\

\begin{lemma}\label{lem: mono}
	For any given $1\le k \le r$ and $2k \le N-2$, if $\lambda$ satisfies
	\[
	\lambda ~\ge~ \frac{nm}{\|E-(PE)_{2k}\|^2/ \left[d_{2k+1}^2(PE)+d_{2k+2}^2(PE) \right]+k},
	\]
	then 
	\begin{eqnarray}\label{prev2lbd}
	\frac{\|E-(PE)_{2k}\|^2}{nm-\lambda k}\le \frac{\|E-(PE)_{2r}\|^2}{nm-\lambda r}
	\end{eqnarray}
	In particular, on the event $\{\lambda\wh\sigma^2 \ge d_1^2(PE)+d_2^2(PE)\}$, we have
	\begin{equation}\label{mono}
	\frac{\|E\|^2}{nm}\le \frac{\|E-(PE)_2\|^2}{nm-\lambda}\le \frac{\|E-(PE)_4\|^2}{nm-2\lambda} \le \cdots\le  \frac{\|E-(PE)_N\|^2}{nm-\lambda N/2 }.
	\end{equation}
\end{lemma}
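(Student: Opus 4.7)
The plan is to reduce everything to a single one-step comparison. Abbreviate
$$f(\ell) := \frac{\|E-(PE)_{2\ell}\|^2}{nm-\lambda \ell}, \qquad g(\ell) := \frac{nm}{\|E-(PE)_{2\ell}\|^2 / \bigl[d_{2\ell+1}^2(PE)+d_{2\ell+2}^2(PE)\bigr]+\ell},$$
so that the hypothesis of the lemma reads $\lambda \ge g(k)$, the main conclusion (\ref{prev2lbd}) reads $f(k) \le f(r)$, and the chain (\ref{mono}) reads $f(0) \le f(1) \le \cdots \le f(N/2)$.

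I would first establish the pointwise equivalence
$$f(\ell) \le f(\ell+1) \quad\iff\quad \lambda \ge g(\ell),$$
which follows by combining the Pythagorean identity $\|E-(PE)_{2\ell}\|^2 = \|E-(PE)_{2(\ell+1)}\|^2 + d_{2\ell+1}^2(PE)+d_{2\ell+2}^2(PE)$ with the elementary fact $(a+b)/(c+d) \le a/c \iff b/d \le a/c$ for positive $a,b,c,d$ already exploited in the proof of Proposition \ref{prop1}. A short rearrangement then matches the resulting threshold to $g(\ell)$.

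The crux of the argument is to show that $\ell \mapsto g(\ell)$ is non-increasing, or equivalently that $h(\ell) := \|E-(PE)_{2\ell}\|^2/(d_{2\ell+1}^2(PE)+d_{2\ell+2}^2(PE)) + \ell$ is non-decreasing. A second use of Pythagoras collapses the difference to
$$h(\ell)-h(\ell-1) = \|E-(PE)_{2\ell}\|^2 \left[\frac{1}{d_{2\ell+1}^2(PE)+d_{2\ell+2}^2(PE)} - \frac{1}{d_{2\ell-1}^2(PE)+d_{2\ell}^2(PE)} \right],$$
which is non-negative precisely because the singular values of $PE$ are ordered. Hence $g$ is non-increasing.

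Both conclusions then follow by iteration. For (\ref{prev2lbd}), the hypothesis $\lambda \ge g(k)$ combined with the monotonicity of $g$ gives $\lambda \ge g(\ell)$ for every $\ell \in \{k,\ldots,r-1\}$, and chaining the one-step equivalence yields $f(k)\le f(k+1) \le \cdots \le f(r)$. For (\ref{mono}), on the event $\{\lambda\wh\sigma^2 \ge d_1^2(PE)+d_2^2(PE)\}$, using $\wh\sigma^2 = \|E\|^2/(nm)$ and $(PE)_0 = 0$ identifies this event with $\lambda \ge g(0)$; monotonicity upgrades this to $\lambda \ge g(\ell)$ for every $\ell \le N/2-1$, and the one-step equivalence assembles the full chain. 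The main obstacle is the monotonicity of $g$, since all the other steps are rearrangements of an identity already used in the paper; in particular I would check en route that the relevant denominators $nm-\lambda\ell$ and $d_{2\ell+1}^2(PE)+d_{2\ell+2}^2(PE)$ stay strictly positive in the range $2\ell \le N-2$.
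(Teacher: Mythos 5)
Your proposal is correct and follows essentially the same route as the paper: both reduce (\ref{prev2lbd}) and (\ref{mono}) to the one-step comparison $e_\ell \le e_{\ell+1} \iff \lambda \ge g(\ell)$ via the Pythagorean identity and the elementary $(a+b)/(c+d)$ manipulation, and then propagate along $\ell$ using the ordering of the singular values of $PE$. The only cosmetic difference is that you package the propagation as unconditional monotonicity of the threshold $g(\ell)$, whereas the paper propagates the inequality $e_\ell \le e_{\ell+1}$ itself forward; the ingredients and the resulting chain are identical.
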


\begin{proof}[Proof of Lemma \ref{lem: mono}]
	We first show (\ref{mono}) by using the same argument as in Propositions \ref{prop1} and \ref{prop2}. For any $ 0 \le k \le (N/2-1)$, we define 
	\begin{eqnarray}\label{ak}
	e_k := \frac{\|E-(PE)_{(2k)}\|^2}{nm-\lambda k}.
	\end{eqnarray}
	Observe that 
	\begin{align}\label{eqmono}\nonumber
	e_k\le e_{k+1} &\\\nonumber
	\iff&\ \frac{\|E-(PE)_{(2k)}\|^2}{nm-\lambda k} \le \frac{\|E-(PE)_{(2k+2)}\|^2}{nm-\lambda(k+1) }\\\nonumber
	\iff &\ \frac{\|E-(PE)_{(2k)}\|^2}{nm-\lambda k} \le \frac{\|E-(PE)_{(2k)}\|^2-d_{2k+1}^2(PE)-d_{2k+2}^2(PE)}{nm-\lambda k-\lambda }\\
	\iff &\ \frac{d_{2k+1}^2(PE)+d_{2k+2}^2(PE)}{\lambda} \le 
	\frac{\|E-(PE)_{2k}\|^2}{nm-\lambda k}  =e_k.
	\end{align} 
	For $k=0$, we find that
	$e_0\le e_1$   is equivalent with 
	$ d_1^2(PE)+d_2^2(PE)\le \lambda\wh\sigma^2$, which is precisely our condition on $\lambda$. From the decreasing property of singular values, (\ref{eqmono}) implies
	\begin{equation*}
	\frac{d_{2k+3}^2(PE)+d_{2k+4}^2(PE)}{\lambda} \le 
	\frac{\|E-(PE)_{2k+2}\|^2}{nm-\lambda (k+1) } = e_{k+1},
	\end{equation*}
	which is, by the same argument above, equivalent with
	$e_{k+1}\le e_{k+2}$. We  conclude that
	$e_0\le e_1\le e_2\le e_3\le \ldots$, proving
	(\ref{mono}). \emph{In fact, this proves the sequence of $\{e_k\}$ could only be either globally monotonic or decreasing first and then increasing.}
	
	Finally, since $k \le r$, (\ref{prev2lbd}) follows immediately from (\ref{eqmono}) and the monotone or two-sided monotone property of $\{e_k\}$.
\end{proof}
\vspace{1mm}

\subsection{Proof of Theorem \ref{rankGaussian1}}
From Theorem \ref{thm:RS1} and Proposition \ref{prop:ongelijk}, it suffices to show $\{2d_1^2(PE) \le \lambda\wh \sigma^2\}$ holds with high probability. The proof follows exactly the same arguments as the proof of Corollary \ref{c1} by adapting to the constant $2$.

\subsection{Proof of Theorem \ref{thm:RS2}}
To show $\wh k \ge s$, it suffices to prove $d_s^2(PY)  \ge \lambda \wh\sigma^2_s$.
Identity (\ref{identiteit}) gives
\begin{eqnarray*}
	\| Y-(PY)_s\|^2 &=& \| Y-(PY)_{r}\|^2 + \sum_{j=s+1}^{r} d_j^2(PY)\\
	&\le& \| Y-(PY)_{r}\|^2 + (r-s) d_{s+1}^2(PY).
\end{eqnarray*} 
Consequently,
\begin{eqnarray*}
	d_s^2(PY)  &\ge& \lambda\cdot \frac{ \| Y-(PY)_s\|^2}{nm-\lambda s}
\end{eqnarray*}
is implied by
\begin{eqnarray*}
	d_s^2(PY)  &\ge& \lambda\cdot \frac{ \| Y-(PY)_{r}\|^2   +    (r-s) d_s^2(PY)}{nm-\lambda s}
\end{eqnarray*}
which in turn, after a little algebra, is seen to be  equivalent to
\begin{eqnarray*}
	d_s^2(PY)  &\ge& \lambda\cdot \frac{ \| Y-(PY)_{r} \|^2} {nm-\lambda r} = \lambda \wh \sigma^2_{r}.
\end{eqnarray*}
By Weyl's inequality, on the event   (\ref{RS2}), we have
\begin{eqnarray*}
	d_s(PY) \ge d_s(XA) - d_1(PE) 
	\ge   \sqrt{ \lambda} \wh\sigma_{r},
\end{eqnarray*}
which shows $d_s^2(PY)  \ge \lambda \wh\sigma^2_s$. \qed\\

\subsection{Proof of Corollary \ref{cor: generalrank}}
The proof follows immediately by noting the events in this corollary combined with (\ref{eq_sigma_r}) imply (\ref{RS1}) and (\ref{RS2}). \qed\\

\subsection{\bf Proof of Theorem \ref{rankGaussian}}
We define   $\C = \bigl\{(1-C_1)\sigma^2\le \wh\sigma^2 \le (1+C_1)\sigma^2\bigr\}$ for any $0<C_1<1$. Choose $C' = (1+C_1)\sqrt{C}(1 + \sqrt{2(1+\delta)})$ and $C_0>0$ such that $C = (1+C_0)^2/(1-C_1)$. 
Then we have
\begin{align*}
&\PP\left\{ \lambda \le  2d_1^2(PE) / \wh \sigma^2 ~\text{ or }~ d_s(XA)\le \sqrt{\lambda}  \wh\sigma \left[ {1\over \sqrt{2} } + \sqrt{\frac{nm}{nm-\lambda r}}\ \right]   \right\}\\
& \qquad\le   \PP\left\{d_1^2(PE)\ge (1+C_0)^2(\sqrt{m}+\sqrt{q})^2\sigma^2 \right\}\\
&\qquad\quad +\PP\left\{  d_s(XA) \le C'\sigma(\sqrt{m}+\sqrt{q}) \right\} +
\PP\{\C^c\}\\
&\qquad=  \PP\left\{d_1^2(PE)\ge (1+C_0)^2(\sqrt{m}+\sqrt{q})^2\sigma^2 \right\}+		\PP\{\C^c\}\\
&\qquad \le \exp \left\{ -C_0^2(\sqrt{m}+\sqrt{q})^2/2 \right\}+2  \exp\left\{-{3 C_1^2nm}/{16}\right\}
\end{align*}
using (\ref{d_1}), (\ref{chiL}) and (\ref{chiR}). Invoking Corollary \ref{cor: generalrank} concludes the proof. \qed
\\

\section{Proofs of Section 4}\label{sec_proof_SRS}

We first present several lemmas which are repeatedly used in the following proofs. 

\begin{lemma}\label{lem_tech}
	Let $c_i$ be some positive constants for $i = s, s+1, \ldots, t$. Then, for any $\eps \in [0,1)$,
	\begin{align*}
	\sum_{i=s}^t\left(c_i   +\eps S\right)^2 &\le (1+\eps)^2\sum_{i=s}^tc_i^2,\\
	\sum_{i=s}^t\left(c_i - \eps S\right)^2 &\ge (1-\eps)^2\sum_{i=s}^tc_i^2,
	\end{align*}
	where $ S = \sqrt{ \sum_{i=s}^tc_i^2 /(t-s)}.$
\end{lemma}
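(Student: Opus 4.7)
The plan is to prove both inequalities by direct expansion of the squared terms and then estimate the cross-term via Cauchy--Schwarz. Let $N_0$ denote the number of summands $t-s+1$ (or whatever convention the paper fixes for the index range), and write $T = \sum_{i=s}^t c_i^2$. By the definition of $S$, we have $S^2$ equal to $T$ divided by the normalizing integer in the denominator, so that $\eps^2 S^2 N_0$ is a constant multiple of $\eps^2 T$. Expanding the square gives
\[
\sum_{i=s}^t (c_i \pm \eps S)^2 \;=\; T \;\pm\; 2\eps S \sum_{i=s}^t c_i \;+\; \eps^2 S^2 N_0,
\]
and the entire task reduces to comparing this expression with $(1\pm\eps)^2 T = T \pm 2\eps T + \eps^2 T$.

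The key step is the estimate of the linear cross-term $S \sum_i c_i$ in terms of $T$. By Cauchy--Schwarz,
\[
\sum_{i=s}^t c_i \;\le\; \sqrt{N_0}\,\sqrt{T},
\]
and since $S = \sqrt{T/(t-s)}$, multiplying the two yields $S \sum_i c_i \le T\sqrt{N_0/(t-s)}$, which is bounded by $T$ times a constant arbitrarily close to $1$. This is exactly the right amount of slack to absorb the linear term into the $\pm 2\eps T$ on the right-hand side of the target inequality. The first inequality (with the $+$ sign) then follows because the cross-term adds to the right and we only need an upper bound; the second inequality (with the $-$ sign) follows symmetrically because we now need a lower bound on the left, and the same Cauchy--Schwarz estimate controls the subtracted cross-term.

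The main (minor) obstacle is purely bookkeeping: lining up the index $N_0 = t-s+1$ with the denominator $t-s$ appearing in the definition of $S$, so that the $\eps^2 S^2 N_0$ contribution matches $\eps^2 T$ exactly (or is absorbed by a harmless constant). There is no deep technique involved beyond the single Cauchy--Schwarz application, and since $\eps \in [0,1)$ any constant-order correction in the $\eps^2$ term is easily absorbed into the factor $(1\pm\eps)^2$. I therefore do not anticipate any real difficulty; the lemma is a routine technical estimate, used as a building block in the subsequent monotonicity and signal-to-noise arguments.
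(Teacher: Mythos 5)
Your proof is correct and follows essentially the same route as the paper's: expand the square and bound the cross-term $S\sum_i c_i$ by $\sum_i c_i^2$ via Cauchy--Schwarz, with the minus case handled by the identical estimate. You are right to flag the $t-s$ versus $t-s+1$ bookkeeping --- the constants match \emph{exactly} only when the denominator in $S$ equals the number of summands (otherwise there is nothing to ``absorb,'' since the target constant $(1\pm\eps)^2$ is exact), a normalization the paper's own proof silently assumes as well.
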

\begin{proof}
	Working out the square yields
	\begin{align*}
	\sum_{i=s}^t\left(c_i + \eps S\right)^2 &= \sum_{i=s}^tc_i^2 +\eps^2\sum_{i=s}^tc_i^2  + 2\eps S\sum_{i=s}^tc_i \le (1+\eps^2+2\eps)\sum_{i=s}^tc_i^2
	\end{align*}
	using the Cauchy-Schwarz inequality. This proves the first statement. The second one follows by the same arguments.
\end{proof}
\vspace{-1mm}

\begin{lemma}(Interlacing inequality \cite{hj})\label{lem_interlacing}
	Let $A$ be a given $m\times n$ matrix, and let $A_r$ denote a submatrix of $A$ obtained by deleting a total of $r$ rows and/or columns from A. Then 
	\[
	d_{k}(A) \ge d_k(A_r)\ge d_{k+r}(A),\quad k = 1,\ldots, \min\{m, n\}
	\]
	where for $M\in \RR^{p\times q}$ we set $d_j(M) = 0$ if $j \ge \min\{p, q\}$.
\end{lemma}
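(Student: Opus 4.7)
The plan is to reduce the singular-value interlacing to the classical Cauchy interlacing theorem for eigenvalues of Hermitian matrices, exploiting the identities $d_k(M)^2 = \lambda_k(MM^T) = \lambda_k(M^T M)$. This lets me handle row-deletion and column-deletion symmetrically via principal-submatrix interlacing, instead of trying to push through the Courant--Fischer min-max characterization of $d_k(A)$ directly, which tends to force awkward subspace constructions.

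First I would split the operation into two stages. Writing $r = r_1 + r_2$, let $A'$ be the $(m-r_1)\times n$ matrix obtained from $A$ by deleting the prescribed $r_1$ rows, and let $A_r$ be the $(m-r_1)\times(n-r_2)$ matrix obtained from $A'$ by deleting the prescribed $r_2$ columns. It then suffices to establish the two one-sided interlacings
$$ d_k(A) \ge d_k(A') \ge d_{k+r_1}(A),\qquad d_k(A') \ge d_k(A_r) \ge d_{k+r_2}(A'), $$
since chaining them yields both $d_k(A)\ge d_k(A_r)$ and $d_k(A_r)\ge d_{k+r}(A)$.

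For the row-deletion stage I would observe that $A'(A')^T$ is a principal submatrix of the positive semidefinite matrix $AA^T$ of size $(m-r_1)\times(m-r_1)$, obtained by removing the same $r_1$ rows together with their corresponding columns. Cauchy's interlacing theorem for principal submatrices of Hermitian matrices then gives
$$ \lambda_k(AA^T) \ge \lambda_k(A'(A')^T) \ge \lambda_{k+r_1}(AA^T), $$
and taking square roots delivers the desired singular-value inequality. The column-deletion stage is entirely symmetric: $A_r^T A_r$ is a principal submatrix of $(A')^T A'$, so Cauchy interlacing applied to the Gram matrix $A^T A$ yields $d_k(A') \ge d_k(A_r) \ge d_{k+r_2}(A')$.

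The main thing to be careful about will be the indexing near the boundary: when $k+r$, or some intermediate index $k+r_i$, exceeds the smaller dimension of the matrix in question, the corresponding eigenvalue is not literally defined. However, the stated convention $d_j(M) = 0$ for $j \ge \min\{p,q\}$ makes every such inequality trivially true, so the chained interlacings pass through uniformly across $k = 1, \ldots, \min\{m,n\}$ and the lemma follows.
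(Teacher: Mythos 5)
Your proof is correct. Note first that the paper itself does not prove this lemma at all: it is imported verbatim from Horn and Johnson as a known result, so there is no internal argument to compare against. Your reduction supplies a legitimate self-contained proof. The two-stage decomposition $r=r_1+r_2$ into row deletions followed by column deletions is sound, the observation that $A'(A')^T$ is exactly the principal submatrix of $AA^T$ obtained by deleting the corresponding $r_1$ rows and columns is correct (its $(i,j)$ entry is the inner product of rows $i$ and $j$ of $A$), and the Poincar\'e/Cauchy inclusion principle for Hermitian matrices, $\lambda_k(H)\ge\lambda_k(H')\ge\lambda_{k+j}(H)$ for an $(n-j)\times(n-j)$ principal submatrix $H'$, gives precisely the one-sided interlacings you need; since all eigenvalues involved are nonnegative, taking square roots is harmless, and chaining the two stages yields $d_k(A)\ge d_k(A_r)\ge d_{k+r}(A)$. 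Your handling of the boundary indices via the paper's convention $d_j(M)=0$ for $j\ge\min\{p,q\}$ is also the right way to make the statement uniform in $k$. Two cosmetic points only: in the column-deletion step you write that Cauchy interlacing is ``applied to the Gram matrix $A^TA$'' when you mean $(A')^TA'$ (the conclusion you draw, $d_k(A')\ge d_k(A_r)\ge d_{k+r_2}(A')$, makes clear this is a slip of the pen rather than an error); and in a written version you should state explicitly that the identity $d_k(M)^2=\lambda_k(MM^T)$ is being used with the larger of the two Gram matrices padded by zero eigenvalues, so that the indices $k+r_1$ up to $m$ are covered even when they exceed $\mathrm{rank}(A)$.
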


\begin{lemma}\label{lem_exp_PE}
	Let $n\times m$ matrix $E$ have i.i.d. $N(0, 1)$ entries and $P$ be any $n\times n$ projection matrix with rank equal to $q$. Suppose $q\le m$. Then, one has 
	\begin{equation}\label{eq_exp_d1}
	\EE[d_1^2(PE)] \ge m
	\end{equation} 
	and, for any $2\le k\le q$,  
	\begin{equation}\label{eq_bound_exp_PE}
	\EE[d_k(PE)] \le \sqrt{m} + \sqrt{q-k+1}, \quad 
	\EE[d_k(PE)] \ge \sqrt{m} - \sqrt{k}.
	\end{equation}
	Moreover, for any $1\le k\le q$,
	\begin{equation}\label{eq_cc_PEk}
	\PP\left\{\big|d_k(PE) - \EE[d_k(PE)]\big| \ge t\right\} \le 2\exp(-t^2/2), \quad \forall\ t\ge 0.
	\end{equation}
	For the case $m<q$, similar results hold for any $1\le k\le m$, by switching $q$ and $m$.
\end{lemma}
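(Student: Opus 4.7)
The plan is to reduce everything to a $q\times m$ matrix with i.i.d.\ standard Gaussian entries and then invoke classical bounds together with the interlacing inequality already stated in Lemma \ref{lem_interlacing}. First I would write the rank-$q$ projection as $P=UU^T$ for an $n\times q$ matrix $U$ with orthonormal columns. Since $(UU^TE)^T(UU^TE)=E^TUU^TE=(U^TE)^T(U^TE)$, the nonzero singular values of $PE$ coincide with those of $U^TE$, so $d_k(PE)=d_k(U^TE)$ for every $1\le k\le q$. A direct covariance computation shows $U^TE\stackrel{d}{=}Z$, where $Z$ is $q\times m$ with i.i.d.\ $N(0,1)$ entries, so it suffices to prove the three bounds with $Z$ in place of $PE$.

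Bound (\ref{eq_exp_d1}) is then immediate: the largest of $q$ nonnegative numbers is at least their average, so $q\,d_1^2(Z)\ge\sum_{j=1}^q d_j^2(Z)=\|Z\|_F^2$, and taking expectations gives $\EE[d_1^2(Z)]\ge \EE[\|Z\|_F^2]/q=m$. For (\ref{eq_bound_exp_PE}) I would combine Lemma \ref{lem_interlacing} with the Gordon/Davidson--Szarek bounds
\[
\sqrt{m}-\sqrt{p}\ \le\ \EE[d_p(G)]\ \le\ \EE[d_1(G)]\ \le\ \sqrt{m}+\sqrt{p}
\]
for any $p\times m$ Gaussian matrix $G$ with $p\le m$. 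Deleting the first $k-1$ rows of $Z$ produces a $(q-k+1)\times m$ Gaussian block $Z_{\mathrm{last}}$; interlacing with $r=k-1$ yields $d_k(Z)\le d_1(Z_{\mathrm{last}})$, and the upper Gordon bound gives $\EE[d_1(Z_{\mathrm{last}})]\le\sqrt{m}+\sqrt{q-k+1}$. Deleting the last $q-k$ rows of $Z$ produces a $k\times m$ Gaussian block $Z_{\mathrm{top}}$; interlacing with $r=q-k$ yields $d_k(Z)\ge d_k(Z_{\mathrm{top}})$, which is the \emph{smallest} singular value of $Z_{\mathrm{top}}$, and the lower Gordon bound gives $\EE[d_k(Z_{\mathrm{top}})]\ge\sqrt{m}-\sqrt{k}$.

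For the concentration bound (\ref{eq_cc_PEk}) the key observation is that $E\mapsto d_k(PE)$ is $1$-Lipschitz in the Frobenius norm: Weyl's inequality gives $|d_k(PE_1)-d_k(PE_2)|\le\|P(E_1-E_2)\|_2\le\|E_1-E_2\|_F$, since $\|P\|_2=1$ and $\|\cdot\|_2\le\|\cdot\|_F$. Viewing $E$ as a standard Gaussian vector in $\RR^{nm}$, Borel's concentration inequality for Lipschitz functions of Gaussians immediately delivers the stated sub-Gaussian tail with variance proxy $1$. The ``in particular'' clause for $m<q$ follows by applying the same argument to $E^T$ in the role of $E$ and using $d_k(PE)=d_k((PE)^T)$, with the roles of $q$ and $m$ swapped throughout. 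The only real subtlety is keeping track of the direction of the interlacing inequality in the two applications of Lemma \ref{lem_interlacing}; everything else is either a variance calculation, a standard bound, or standard Gaussian concentration.
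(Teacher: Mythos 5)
Your proposal is correct and follows essentially the same route as the paper: reduce $PE$ to a $q\times m$ standard Gaussian matrix, apply the interlacing inequality of Lemma \ref{lem_interlacing} together with the Gordon/Davidson--Szarek bounds for (\ref{eq_bound_exp_PE}), and use $1$-Lipschitzness plus Gaussian concentration for (\ref{eq_cc_PEk}). The only (immaterial) difference is your proof of (\ref{eq_exp_d1}) via $d_1^2(Z)\ge \|Z\|^2/q$, where the paper instead bounds $d_1(Z)$ below by the norm of a single row; both give $\EE[d_1^2(PE)]\ge m$.
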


\begin{proof}[Proof of Lemma \ref{lem_exp_PE}]
	We only prove the case when $q\le m$. The complementary case $q>m$ can be derived using the fact $d_k(M)= d_k(M^T)$ for any matrix $M$.
	
	We start by writing the eigenvalue decomposition of $P$ as $P = U\Lambda U^T$ with orthogonal matrix $U\in \RR^{n\times n}$ and diagonal matrix $\Lambda$ whose first $q$ diagonal elements are $1$ and $0$ elsewhere. Thus, for any $1\le j\le q$, $d_j^2(PE) = \lambda_j(E^TPE) = d_j^2(\Lambda U^TE)$ where $\lambda_j(M)$ denotes the $j$th largest eigenvalue of $M$. Note that $Z = \Lambda U^TE\in \RR^{n\times m}$ has $q\times m$ submatrix with i.i.d. $N(0,1)$ entries while the remaining $(n-q)\times m$ entries are all $0$. For any $1\le k\le q$, by Lemma \ref{lem_interlacing}, we have 
	$d_k(Z)\le d_1(\overline{Z}_{q-k+1})$ and $d_k(Z)\ge d_k(\overline{Z}_k)$, where $\overline{Z}_j$ denotes the matrix made of the first $j$ rows of $Z$. Then (\ref{eq_bound_exp_PE}) follows immediately from Theorem 5.32 of \cite{rv_rand_mat}. 
	
	Concentration inequality  (\ref{eq_cc_PEk}) follows from the Gaussian concentration inequality of \cite[page 221]{GiraudBook} and the fact  that each singular value function is $1$-Lipschitz with respect to the Frobenius norm.
	
	Finally, $\EE[d_1^2(PE)] \ge \EE[d_1^2(\overline{Z}_1)]$ and
	\[
	\EE\left[d_1^2(\overline{Z}_1)\right] = \EE\left[\overline{Z}_1^T\overline{Z}_1\right] = \sum_{i = 1}^{m}\EE\left[\overline{Z}_{1i}^2\right] = m
	\]
	implying (\ref{eq_exp_d1}). This completes the proof.
\end{proof}
\vspace{3mm}

The next lemma proves one-side concentration of $\|PE-(PE)_k\|$ around its mean.
\begin{lemma}\label{lem_svd_PE}
	Let $n\times m$ matrix $E$ have i.i.d. $N(0, 1)$ entries and $P$ be any $n\times n$ projection matrix with rank equal to $q$. Assume $q\le m$. Then, for any $1\le k< q$ and any $\eps \in (0,1)$, there exists a constant $C=C(\eps)>0$, such that
	\begin{equation*}
	\PP\left\{\|PE - (PE)_k\|^2 \le (1-\eps)\EE[\|PE - (PE)_k\|^2]\right\} \le  e^{-Cm}.
	\end{equation*}
	Similar results hold by switching $q$ and $m$ when $m<q$ for $1\le k< m$.
\end{lemma}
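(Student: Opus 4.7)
\textbf{Proof proposal for Lemma \ref{lem_svd_PE}.}
The plan is to combine Gaussian concentration for Lipschitz functions with a lower bound on the second moment of $X := \|PE-(PE)_k\|$. First I would reduce the problem to a clean Gaussian matrix: writing $P = UU^T$ with $U \in \RR^{n\times q}$ having orthonormal columns, and setting $Z := U^T E \in \RR^{q\times m}$, the matrix $Z$ has i.i.d.\ $N(0,1)$ entries and $X = \|Z - Z_k\|$. The map $Z \mapsto \|Z - Z_k\|$ is the Frobenius distance from $Z$ to the closed set of rank-$k$ matrices and hence $1$-Lipschitz. Borell's Gaussian concentration inequality therefore gives the one-sided bound
\[
\PP\{X \le \EE X - t\} \le \exp(-t^2/2),\qquad t>0.
\]

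Next I would pass from $\EE X$ to $M := \EE[X^2]$ via Gaussian Poincar\'e. Since $X$ is $1$-Lipschitz one has $\mathrm{Var}(X) \le 1$, so $\EE X \ge \sqrt{M-1}$. The event of interest can be rewritten as
\[
\{X^2 \le (1-\eps)M\} = \{X \le \sqrt{(1-\eps)M}\},
\]
so that applying the Lipschitz bound with $t = \EE X - \sqrt{(1-\eps)M}$ reduces the problem to showing
\[
\sqrt{M-1} - \sqrt{(1-\eps)M} \ \ge\ c(\eps)\sqrt{m}
\]
for a positive constant $c(\eps)$; the conclusion then follows with $C = c(\eps)^2/2$. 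The elementary algebraic inequality $\sqrt{M-1} - \sqrt{(1-\eps)M} \ge \tfrac{\eps}{4}\sqrt{M}$, valid once $M$ exceeds a threshold depending only on $\eps$, reduces this to proving a lower bound $M \ge c_0(\eps)\, m$.

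The technical crux is therefore a lower bound on $M$. Using Jensen and Lemma \ref{lem_exp_PE},
\[
M \ =\ \sum_{j=k+1}^{q} \EE[d_j^2(Z)] \ \ge\ \sum_{j=k+1}^{q} \bigl(\EE d_j(Z)\bigr)^2 \ \ge\ \sum_{j=k+1}^{q} \bigl(\sqrt{m} - \sqrt{j}\bigr)_+^2,
\]
which yields $M$ of order $m$ in the regime where $q/m$ stays bounded away from $1$ (the one genuinely used in the applications). The main obstacle I expect is handling this boundary carefully: when $q$ is close to $m$ and $k$ close to $q$ the trailing singular values collapse and the naive lower bound becomes too weak, so some case analysis on the aspect ratio $q/m$, or an argument that controls $\|Z\|_F^2$ (a $\chi^2_{qm}$ variable) together with $\|Z_k\|_F^2$ via the tail bound on $d_1(Z)$ from Lemma \ref{lem_exp_PE}, is needed to absorb this case. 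All other steps are either standard Gaussian tools or direct algebra.
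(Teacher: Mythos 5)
Your argument is sound step by step and takes a genuinely different route from the paper's. The paper never concentrates $\|PE-(PE)_k\|$ as a single functional: it writes $\|PE-(PE)_k\|^2=\|PE\|^2-\sum_{j\le k}d_j^2(PE)$, bounds the subtracted term from above by applying the concentration inequality (\ref{eq_cc_PEk}) to each $d_j(PE)$ separately with a union bound over $j\le k$ and Lemma \ref{lem_tech}, extracts the exponent $e^{-Cm}$ from the averaging inequality $\sum_{j\le k}(\EE[d_j(PE)])^2\ge \frac{k}{q}\sum_{j\le q}(\EE[d_j(PE)])^2\ge k(m-1)$, and bounds $\|PE\|^2\sim\chi^2_{qm}$ from below via (\ref{chiL}). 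Your single application of Gaussian concentration to the $1$-Lipschitz map $Z\mapsto\|Z-Z_k\|$ (distance to the rank-$k$ variety), followed by Poincar\'e to pass from $\EE X$ to $\sqrt{\EE[X^2]-1}$, is cleaner: it needs no union bound and makes completely transparent that the entire lemma reduces to the single inequality $M:=\EE\|PE-(PE)_k\|^2\ge c_0(\eps)\,m$.

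That reduction is also where the only gap sits, as you yourself flag, and two remarks are in order. First, your bound $M\ge\sum_{j>k}\bigl(\sqrt m-\sqrt j\bigr)_+^2$ already covers more than you claim: whenever $2k<q$ it contains the term $j=\lceil q/2\rceil\le (m+1)/2$, which alone is of order $(1-1/\sqrt2)^2m$ with no condition on $q/m$; and whenever $q\le cm$ with $c<1$ every term is of order $m$. The only uncovered corner is $k>q/2$ \emph{together with} $q/m\to1$. Second, in that corner no argument can close the gap, because the claimed bound fails there: for $k=q-1$ and $q=m$ one has $M=\EE[d_m^2(Z)]=O(1)$ for a square Gaussian $Z$, and the lower deviation of the smallest singular value is not exponentially small, so the lemma has to be read as restricted away from that corner (consistent with how it is invoked, with indices $2k<q$, in the proof of Proposition \ref{prop:monolbd}). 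Notably, the paper's own proof carries the same implicit requirement: its concluding step needs $(1-c)^2\EE\|PE\|^2-(1+c)^2\sum_{j\le k}\EE[d_j^2(PE)]\ge(1-\eps)M$, which forces $M$ to be a nontrivial fraction of $qm$. So you have isolated the genuine crux rather than missed a step; adding the short case analysis above ($2k<q$, or $q\le cm$) would make your proof complete in every regime where the lemma is actually used.
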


\begin{proof}[Proof of Lemma \ref{lem_svd_PE}]
	Fix $1\le k< q$. Note that 
	$$
	\|PE-(PE)_k\|^2 = \|PE\|^2 - \sum_{j=1}^kd_j^2(PE).
	$$ 
	We first study $\sum_{j=1}^kd_j^2(PE)$. From (\ref{eq_cc_PEk}) in Lemma \ref{lem_exp_PE}, for any $k\le j\le q$, we have 
	\[
	\PP\left\{d_j^2(PE) \ge (\EE[d_j(PE)]+ t)^2\right\} \le \exp(-t^2/2), \quad \forall\ t\ge 0.
	\]
	This further implies
	\begin{align*}
	\PP\left\{
	\sum_{j=1}^kd_j^2(PE) \le \sum_{j=1}^k(\EE[d_j(PE)]+t)^2
	\right\} \ge 1-k\exp(-t^2/2).
	\end{align*}
	Let $c>0$ be sufficiently small and choose $t^2 = (c^2 / k) \sum_{j=1}^k(\EE[d_j(PE)])^2$.  Invoking Lemma \ref{lem_tech} and using $\EE[d_j^2(PE)]\ge (\EE[d_j(PE)])^2 $ yield
	\begin{align*}
	&\PP\left\{
	\sum_{j=1}^kd_j^2(PE) \le (1+c)^2\sum_{j=1}^k\EE[d_j^2(PE)]\right\}\\
	&\qquad\qquad \ge \PP\left\{
	\sum_{j=1}^kd_j^2(PE) \le (1+c)^2\sum_{j=1}^k(\EE[d_j(PE)])^2\right\}\\
	&\qquad\qquad \ge 1-k\exp\left[-{c^2\sum_{j=1}^k(\EE[d_j(PE)])^2 \over 2k}\right].
	\end{align*}
	From $(\EE[d_j(PE)])^2 \ge \EE[d_j^2(PE)] - 1$ in the proof of Lemma 1 of \cite{Giraud}, observe that 
	\begin{align*}
	\sum_{j=1}^k (\EE[d_j(PE)])^2 &\ge {k\over q} \sum_{j=1}^q (\EE[d_j(PE)])^2 \ge {k\over q} \sum_{j=1}^q(\EE[d_j^2(PE)]-1)\\
	& = {k\over q}\cdot \EE[\|PE\|^2]-k = k(m-1).
	\end{align*}
	In the last equality, we use the fact that $\|PE\|^2$ has a central $\chi^2_{qm}$ distribution. This further implies
	\begin{align*}
	&\PP\left\{
	\sum_{j=1}^kd_j^2(PE) \le (1+c)^2\sum_{j=1}^k\EE[d_j^2(PE)]\right\} \ge 1-e^{-Cm}
	\end{align*}
	for some constant $C = C(c)>0$. On the other hand, since $\|PE\|^2 \sim \chi^2_{qm}$, using (\ref{chiL}) yields
	\[
	\PP\left\{
	\|PE\|^2 \ge (1-c)^2\EE\left[\|PE\|^2\right] 
	\right\} \ge 1- \exp(-c^2qm/4).
	\]
	Combining these two displays concludes our proof.
\end{proof}
\vspace{3mm}

\subsection{Proof of Proposition \ref{prop:monolbd}}
We first show that $\wh \lambda_{t+1}:= \wh \lambda_{\wh k_t}$ defined in (\ref{lbdt}) is decreasing in $\wh k_t$. To simplify notation, we write $k$ for $\wh k_t$. Since it is immediate to verify the decreasing property when $2k\ge N$, we consider  $1\le k <N/2$ only. It suffices to show $(1-\eps)\wh R_t /\wh U_t + k$ is increasing. Recall that $S_j = \EE[d_j^2(Z)]$ for any $1\le j\le N$ and $\wh R_t$ and $\wh U_t$ are defined as  
\begin{equation}\label{def_rut_supp}
\wh R_t ~:=~ (n-q) m +  \rs\sum_{j=2k+ 1}^N\rs S_j,\qquad 
\wh U_{ t}~: =~ S_1 \vee \left(S_{2k+1}+ S_{2 k+2}\right),
\end{equation}
by using $k$ in lieu of $\wh k_t$.
First, we show 
\begin{eqnarray}\label{increasingak}		
a_k := {(1-\eps)\left[(n-q)m+\sum_{j=2 k+1}^NS_j\right]\over S_{2k+1}+S_{2k+2}}+k 
\end{eqnarray}
is increasing. This follows from 
\begin{align*}
{a_{k}-a_{k-1} \over 1-\eps} &\ge  {(n-q)m+\sum_{j=2 k+1}^NS_j\over  S_{2k+1}+S_{2k+2}}-{(n-q)m+\sum_{j=2k+1}^{N}S_j\over  S_{2k-1}+S_{2k}}>0.
\end{align*}
It remains to show, if $S_1 \ge S_{2k+1}+S_{2k+2}$, the sequence 
\begin{eqnarray}\label{increasingbk}
b_k :={(1-\eps)\left[(n-q)m+\sum_{j=2 k+1}^NS_j\right] \over S_1}+k
\end{eqnarray}
is increasing in $k$. This is guaranteed by the fact 
that $b_k$ starts increasing after $k$ such that 
$S_1 \ge S_{2k+1}+ S_{2k+2}$. This proves $\wh \lambda_{t+1}$ is decreasing in $\wh k_t$. We conclude the proof of $\wh \lambda_{t+1} < \wh\lambda_t$ for any $t\ge 0$ by noting that $\wh\lambda_0$ is greater than the $\lambda$ obtained from (\ref{lbdt}) by using $\wh k_t = 0$.

Next,  we show that, for given $\wh\lambda_{t}>\wh\lambda_{t+1}$, we have $\wh k_t\le \wh k_{t+1}$. Suppose $\wh k_t > \wh k_{t+1}$, we obtain
\begin{equation*}
\frac{ \| Y- (PY)_{\wh k_{t+1}} \| ^2}{ nm -\wh\lambda_{t+1} {\wh k_{t+1}}} 
\le  \frac{ \| Y- (PY)_{\wh k_{t}} \| ^2}{ nm -\wh\lambda_{t+1} {\wh k_{t}}}
~\iff~ \frac{\sum_{j=\wh k_{t+1}}^{\wh k_t}\rs d_j^2(PY)}{\wh k_t-\wh k_{t+1}} \le \frac{ \| Y- (PY)_{\wh k_t} \| ^2}{ nm/\wh\lambda_{t+1} -{\wh k_t}}
\end{equation*}
Similarly,
\begin{equation*}
\frac{ \| Y- (PY)_{\wh k_{t+1}} \| ^2}{ nm -\wh\lambda_t {\wh k_{t+1}}} 
\ge  \frac{ \| Y- (PY)_{\wh k_t} \| ^2}{ nm -\wh\lambda_t {\wh k_t}}
~\iff~ \frac{\sum_{j=\wh k_{t+1}}^{\wh k_t}\rs d_j^2(PY)}{\wh k_t-\wh k_{t+1}} \ge \frac{ \| Y- (PY)_{\wh k_t} \| ^2}{ nm/\wh\lambda_t -{\wh k_t}}
\end{equation*}
which is a contradiction as $\wh\lambda_t> \wh\lambda_{t+1}$. 

Finally, we show $\wh k_t\le r$ for all $0\le t\le T$. We start defining the event
\begin{align}\label{eq_E}\nonumber
&\E := \left\{\bigcap_{k=1}^N\E_k\right\}\bigcap\ \left\{2d_1^2(PE) \le \wh \lambda_0\wh \sigma^2 \right\}\\
&\E_k := \left\{ 
{\|E-(PE)_{2k\wedge N}\|^2 \over d_1^2(PE) \vee \left[d_{2k+1}^2(PE) + d_{2k+2}^2(PE)\right] } \ge {(1-\eps)\wh R_t \over \wh U_t}	
\right\}
\end{align}
with $\wh \lambda_0$ chosen as (\ref{initlbd}) and $\wh R_t$ and $\wh U_t$ defined in (\ref{def_rut}). We will work on this event in the remainder of the proof. From Theorem \ref{thm:RS1} and Proposition \ref{prop:ongelijk}, we know $1\le \wh k_0 \le r$ where $\wh k_0$ is the selected rank from (\ref{k0}). Let $\wh \lambda_1$ be updated via (\ref{lbdt})  using $\wh k_0$. In order to guarantee
$\wh k_{1}\le r$, from (\ref{startpoint}) and (\ref{lowbdsigma}), it suffices to show
\begin{eqnarray}\label{lessthanr}
d_1^2(PE) \le \frac{\|E-(PE)_{(2r)\wedge N}\|^2}{nm/ \wh \lambda_1-r}.
\end{eqnarray}
On the one hand, the choice of $\wh \lambda_1$ satisfies
\begin{align*}
\wh \lambda_1 &= {nm \over (1-\eps)\wh R_t/\wh U_t + \wh k_0}\\
& \overset{\E}{\ge} \frac{nm}{\|E-(PE)_{(2\wh k_0)\wedge N}\|^2\big/\left(d_{2\wh k_0+1}^2(PE)+d_{2\wh k_0+2}^2(PE)\right)+\wh k_0}
\end{align*}
by which (\ref{prev2lbd}) in Lemma \ref{lem: mono} guarantees
\begin{equation}\label{lessthanr1} \frac{\|E-(PE)_{(2\wh k_0)\wedge N}\|^2}{nm-\wh \lambda_1\wh k_0}\le \frac{\|E-(PE)_{(2r)\wedge N}\|^2}{nm-\wh \lambda_1r}
\end{equation}
provided that $\wh k_0\le r$. On the other hand, 
\[
\wh \lambda_1  \overset{\E}{\ge} \frac{nm}{\|E-(PE)_{(2\wh k_0)\wedge N}\|^2/d_1^2(PE)+\wh k_0}
\]
which is equivalent with
\begin{equation}\label{lessthanr2}
d_1^2(PE)\le \frac{\|E-(PE)_{(2\wh k_0)\wedge N}\|^2}{nm/\wh \lambda_1-\wh k_0}.
\end{equation}
Combining (\ref{lessthanr1}) with (\ref{lessthanr2}) proves (\ref{lessthanr}). After repeating this argument, on the event $\E$, we find
\begin{eqnarray}\label{d1lessthanr}
d_1^2(PE)\le \frac{\|E-(PE)_{(2r)\wedge N}\|^2}{nm/\wh\lambda_t-r}\le \frac{\|Y-(PY)_r\|^2}{nm/\wh\lambda_t-r}
\end{eqnarray}
for any $t\ge 1$, which implies	$\wh k_{t}\le r$.

To conclude our proof, we   show that $\E$ holds with probability tending to $1$. Again, we only prove for $q \le m$ since the case of $q> m$ can be obtained similarly.
The proof of Lemma \ref{lem_exp_PE} shows that there exists a $q\times m$ matrix $Z$ with i.i.d. standard normal entries such that $d_k(PE) = \sigma d_k(Z)$ for any $1\le k\le q$. Without loss of generality, we assume $\sigma = 1$. We first consider the event $\{2d_1^2(Z) \le \wh \lambda_0\wh \sigma^2 \}$. From (\ref{eq_cc_PEk}) and using $\EE[d_1^2(Z)] \ge m$   in Lemma \ref{lem_exp_PE}, we have
\begin{equation}\label{eq_d1}
\PP\Big\{d_{1}^2(Z) \le (1+C_1)^2\EE\left[d_1^2(Z)\right]\Big\} \ge 1-\exp\left(-{C_1^2m/ 2}\right).
\end{equation}
Then,   using (\ref{chiL}), we have
\begin{align*}
\PP\left\{2d_1^2(Z) \le \wh \lambda_0\wh \sigma^2 \right\} 
& \ge 1-\PP\Big\{d_{1}^2(Z) \ge (1+C_1)^2\EE\left[d_{1}^2(Z)\right]\Big\}\\
&\qquad -  \PP\{\wh \sigma^2 \le (1-C_2)\}\\
& \ge 1 - \exp\left(-C_1^2m^2/2\right)-\exp\left(-C_2^2nm/4\right),
\end{align*}
for any $0<C_1, C_2<1$ satisfying $(1+C_1)^2/(1-C_2) = 1+\eps$.

Next we quantify the event $\E_k$ in (\ref{eq_E}) which is equivalent to
\[
\E_k := \left\{ 
{\|E-(PE)_{2k\wedge N}\|^2 \over d_1^2(Z) \vee \left[d_{2k+1}^2(Z) + d_{2k+2}^2(Z)\right] } \ge {(1-\eps)\wh R_t \over \wh U_t}	
\right\}.
\]
We will proceed to to show, that each of the random quantities concentrate around their means. We shall only consider when $n>q$ since $n=q$ is easy to obtain by using the same arguments and the fact $\|E-PE\|=0$ for $n = q$.

Fix $1\le k\le r$ and   consider two cases: 
\vspace{1mm}

(1) When $2k< q$, observe that 
$$\|E-(PE)_{2k}\|^2 = \|E-PE\|^2 + \|PE-(PE)_{2k}\|^2.$$ 
Since $\|E-PE\|^2$ has a central $\chi^2_{(n-q)m}$ distribution, inequality (\ref{chiL}) 
gives
\begin{equation}\label{chiQ}
\PP\left\{\|E-PE\|^2 \le (1-C_3)(n-q)m\right\} \le \exp\left(-C_3^2(n-q)m/4\right)
\end{equation}
for any $C_3 \in (0,1)$. This bound together with Lemma \ref{lem_svd_PE} yield 
\begin{align*}
\PP\Big\{
\|E-(PE)_{2k}\|^2 \le  (1-C_3)\wh R_t\sigma^2\Big\}&\le  \exp\left(-{C_3^2(n-q)m /4}\right)+\exp\left(-C_3'm\right)
\end{align*}
with some constant $C_3'>0$. 

On the other hand, recall that $\wh U_t = S_1 \vee (S_{2k+1} +  S_{2k+2})$ from (\ref{def_rut_supp}). 
If  $d_1^2(Z) \le d_{2k+1}^2(Z) +  d_{2k+2}^2(Z) $ such that $\wh U_t = S_{2k+1} + S_{2k+2}$, the concentration inequality (\ref{eq_cc_PEk}) in Lemma \ref{lem_exp_PE} gives
\begin{align*}
\PP\Big\{d_{2k+1}^2(Z) +  d_{2k+2}^2(Z) &\le \sum_{j=2k+1}^{2k+2}\left(\EE[d_{j}(Z)]+t\right)^2\Big\}\ge 1-2\exp(-t^2/2),
\end{align*}
for any $t\ge 0$.
Take $t^2 = (C_4^2/2)\sum_{j = 2k+1}^{2k+2}(\EE[d_{j}(Z)])^2$ with arbitrary $C_4\in (0,1)$ and invoke Lemma \ref{lem_tech} to obtain 
\begin{align*}
&\PP\Big\{d_{2k+1}^2(Z) +  d_{2k+2}^2(Z)  \le (1+C_4)^2\wh U_t\Big\}\ge 1-2\exp\left(-{C_4^2m/ 2}\right).
\end{align*}
For the exponent in the probability tail, we use $(\EE[d_j(Z)])^2\ge \EE[d_j^2(Z)]-1 = S_j -1$ from the proof of Lemma 1 in \cite{Giraud} and $S_1\ge m$ from Lemma \ref{lem_exp_PE} to obtain
\begin{align*}
\sum_{j = 2k+1}^{2k+2}(\EE[d_j(Z)])^2&\ge S_{2k+1} + S_{2k+2} - 2\ge  S_1- 2\ge m - 2.
\end{align*}
If $d_1^2(Z) \ge d_{2k+1}^2(Z) +  d_{2k+2}^2(Z) $, the concentration inequality (\ref{eq_d1}) gives a  similar result as above. Choose $1-\eps = (1-C_3)/(1+C_4)^2$ and conclude that
\begin{align*}
\PP(\E_k^c)&\le 2\exp\left (-{C_4^2m/ 2}\right)+ \exp\left(-{C_3^2(n-q)m/  4}\right)+\exp\left(-C_3'm\right),
\end{align*}
for any $2k < q$.
\vspace{1mm}

(2) If $2k \ge q$, we immediately have $\wh R_t = (n-q)m$ and $\wh U_t = S_1$. Therefore, (\ref{eq_d1}) and (\ref{chiQ}) give
\begin{align*}
\PP \{\E_k^c\} &= \PP\left\{
\frac{\|E-PE\|^2}{d_1^2(Z)} \le  \frac{(1-C_3)\wh R_t}{(1+C_4)^2\wh U_t}
\right\}\\
& \le \exp\left(-{C_4^2m/ 2}\right)+ \exp\left(-{C_3^2(n-q)m/  4}\right).
\end{align*}
Taking the union bound over all $1\le k\le r$ concludes our proof. \qed\\

\subsection{Proof of Theorem \ref{thm: SRS1}}
The choice $\wh\lambda_0= C(\sqrt{m}+\sqrt{q})^2$ is feasible by using $$\EE[d_1^2(Z)] \le (\EE[d_1(Z)])^2+ 1 \le (\sqrt{m}+\sqrt{q})^2+1$$ 
and choosing $\eps$ such that $$2(1+\eps)[(\sqrt{m}+\sqrt{q})^2+1] \le C(\sqrt{m}+\sqrt{q})^2.$$ 
This implies $\wt k \le \wh k$ from the proof of Proposition \ref{prop:monolbd}. 
Invoking Proposition \ref{prop:monolbd} once again concludes the proof of Theorem \ref{thm: SRS1}.
\qed \\

\subsection{\bf Proof of Theorem \ref{thm: signalSRS}}

We work on the event 
$$
\E' = \{\wh\sigma \le (1+\eps)\sigma\} \cap \E
$$
where $\E$ is defined in (\ref{eq_E}). From (\ref{chiR}) and the proof of Proposition \ref{prop:monolbd}, $\E'$ holds with probability tending to $1$. \\
Let $\wh k_0, \wh k_1,\ldots $ be the sequence of selected ranks from (\ref{k0}) and (\ref{kt}). 
Our assumption 
\begin{align*}
d_{k_0}(XA)&\ \ge\ C''\sigma\sqrt{\lambda_0}\ \ge\ (1+\eps)\sigma\sqrt{\lambda_0}\left({\sqrt{2} \over 2} + \sqrt{nm \over nm - \lambda_0r}\right) \\
&\ \overset{\E'}{\ge}\ d_1(PE) + {\|E\| \over \sqrt{nm/\lambda_0 - r}}~ \ge ~d_1(PE) + \frac{ \| Y-(PY)_r\|}{ \sqrt{nm/ \lambda_0 - r}}
\end{align*}
implies that $\wh k_0\in[ k_0,r]$ by Theorem \ref{thm:RS2} under the rank constraint (\ref{rankconstraint}). This, in turn, from the decreasing property of $\wh\lambda_t$ in Proposition \ref{prop:monolbd}, yields
$\wh\lambda_1\le \lambda_1$, where $\wh \lambda_1$ is computed from (\ref{lbdt}) by using $\wh k_0$. Moreover, $\lambda_1\le \lambda_0$ implies $r$ also satisfies (\ref{rankconstraint}) for $\lambda_1$. Hence,
\begin{align*}
d_{k_1}(XA) &\ \ge\   C''\sigma\sqrt{\lambda_1}\ \ge\ 
(1+\eps)\sigma\sqrt{\lambda_1}\left({\sqrt{2} \over 2} + \sqrt{nm \over nm - \lambda_1r}\right)\\
&\ \ge\ (1+\eps)\sigma\sqrt{\wh \lambda_1}\left({\sqrt{2} \over 2} + \sqrt{nm \over nm - \wh\lambda_1r}\right)
\end{align*}
and by the same argument above, we have $\wh k_1\in[ k_1,r]$.  Now the proof follows by repeating these arguments and the fact that $k_T=r$. \qed\\

\subsection{Proof of Theorem \ref{thm: SRS-snr}}
We work on the event 
$$
\Z = \{d_1(Z) = \sqrt{m}+\sqrt{q}\} 
$$
which holds almost surely as $N = q\wedge m \rightarrow \infty$ from the celebrated result of \cite{Bai-Yin}.

We only prove the result for $q\le m$ since the case $q>m$ can be  derived in a similar way.	From the first signal condition (\ref{eq_srs_q/2}), Theorem \ref{thm: signalSRS} immediately implies $\wh k_0 \ge q/2$. Without loss of generality, we assume $\wh k_0 = q/2$, and $q$ is even for notational simplicity. Step (\ref{lbdt}) implies the following update
\begin{equation}\label{eq_lbd1}
\lambda_1 = {nm \over (1-c)(n-q)m/(\sqrt{m}+\sqrt{q})^2 + q/2}.
\end{equation}
To show $\wh k_1 = r$, from (\ref{thm:RS2}), it suffices to show 
\begin{equation}\label{eq_c}
d_r(XA) \ge d_1(PE) + \sqrt{\lambda_1}\wh\sigma_r.
\end{equation}
Observe that, with high probability,
\begin{equation}\label{eq_c0}
\lambda_1 \wh \sigma_r^2 = {\|Y-(PY)_r\|^2 \over nm/\lambda_1 - r} \le {\|E\|^2 \over nm/\lambda_1 - r} \le {(1+\eps)nm(\sqrt{m}+\sqrt{q})^2\sigma^2 \over 7(n-q)m/8 - (r-q/2)(\sqrt{m}+\sqrt{q})^2}
\end{equation}
for any $\eps \in (0, 1/4)$, by choosing $c = 7/8$ and using (\ref{chiR}) in the last inequality. We consider two cases:\\
Case 1:
\begin{equation}\label{eq_c1}
{nm \over \lambda_0}\ge {1+\delta \over \delta}q \quad \iff \quad nm \ge {1+\delta \over \delta} 2C(\sqrt{m}+\sqrt{q})^2q.
\end{equation}
This implies $r\le q$ from the rank constraint (\ref{rankconstraint}). Hence, 
\[
\lambda_1 \wh \sigma_r^2 \le {(1+\eps)nm(\sqrt{m}+\sqrt{q})^2\sigma^2 \over 7(n-q)m/8 - (q/2)(\sqrt{m}+\sqrt{q})^2} \le 
{C(1+\eps)(1+\delta) \over 1 + 5\delta/16}(\sqrt{m}+\sqrt{q})^2\sigma^2
\]
by using (\ref{eq_c1}), $m/(\sqrt{m}+\sqrt{q})^2\le 1$ and $C \ge 8/7$. This further implies 
\[
d_1(PE) + \sqrt{\lambda_1}\wh\sigma_r \le C'\left[1+ \sqrt{1+\delta \over 1+5\delta/16}\ \right](\sqrt{m}+\sqrt{q})\sigma
\]
by taking $C' = \sqrt{C(1+\eps)}$. Thus, (\ref{eq_c}) holds under the assumed signal condition (\ref{eq_srs_r}).\\
Case 2: 
\begin{equation}\label{eq_c2}
{nm \over \lambda_0}\le {1+\delta \over \delta}q \quad \iff \quad nm \le {1+\delta \over \delta}\cdot 2C(\sqrt{m}+\sqrt{q})^2q.
\end{equation}
It follows that 
$$r\le {\delta \over 1+\delta}{nm\over \lambda_0} = {\delta \over 1+\delta}\cdot {nm \over 2C(\sqrt{m}+\sqrt{q})^2}.$$
Plugging the above upper bound of $r$ into (\ref{eq_c0}) yields
\begin{equation}\label{eq_c3}
\lambda_1 \wh \sigma_r^2 \le {(1+\eps)(\sqrt{m}+\sqrt{q})^2\sigma^2 \over 7/8-\delta/(2C(1+\delta))-(q/n)[7/8 - (\sqrt{m}+\sqrt{q})^2/(2m)]}.
\end{equation}
If $7/8 \le (\sqrt{m}+\sqrt{q})^2/(2m)$, we further have 
\[
\lambda_1 \wh \sigma_r^2 \le {(1+\eps)(\sqrt{m}+\sqrt{q})^2\sigma^2 \over 7/8-\delta/(2C(1+\delta))} \le {C(1+\eps)(1+\delta) \over 1+3\delta/8}(\sqrt{m}+\sqrt{q})^2\sigma^2
\]
by using $C\ge 8/7$. By repeating the same arguments before, (\ref{eq_c}) holds. Finally, we show (\ref{eq_c}) still holds when $7/8 \ge (\sqrt{m}+\sqrt{q})^2 / (2m)$. Recall that $\wh k_0 = q/2$ which implies 
\[
{q\over 2} \le {\delta \over 1+\delta} {nm\over \lambda_0}\quad \iff \quad {q\over n}\le {\delta m \over C(1+\delta)(\sqrt{m}+\sqrt{q})^2}.
\]
Combining this with (\ref{eq_c3}) gives 
\[
\lambda_1\wh\sigma_r^2 \le {C(1+\eps)(1+\delta) \over 1+\delta/8}(\sqrt{m}+\sqrt{q})^2\sigma^2.
\]
Repeating the same arguments proves (\ref{eq_c}), hence concludes the proof.\qed\\

\subsection{Proof of Proposition \ref{prop: ex-rank}}
Again, we work on the event $\Z$, defined in the proof of Theorem \ref{thm: signalSRS} and
we only prove the case $q\le m$ since the complementary case follows by the same arguments. 

Since $\wh k_0 \ge N/2$, we assume $\wh k_0 = N/2$ so this is the most difficult case. We take $N$ even for simplicity. Thus, step (\ref{lbdt}) implies the update of $\lambda$ as 
\[
\wh\lambda_1 = {nm \over (1-\eps)(n-q)m/(\sqrt{m}+\sqrt{q})^2 + q/2} ={nm \over (7/8)(n-q)m/(\sqrt{m}+\sqrt{q})^2 + q/2}  
\]   
by taking $\eps = 1/8$. On the one hand, a little algebra shows that
\[
\wh K_1 ~=~ {nm \over \wh\lambda_1} ~=~ {7(n-q)m \over 8(\sqrt{m}+\sqrt{q})^2} + {q \over 2}~ \ge~ {9\over 8}q
\]
by using our assumption
\begin{equation}\label{eq_K0}
{nm \over \lambda_0} = {nm \over 2C(\sqrt{m}+\sqrt{q})^2} \ge {3\over 4}q.
\end{equation}
with $C=8/7$ and 
$m/(\sqrt{m}+\sqrt{q})^2 \le 1$.
Hence $\wh k_2$ is selected from $[q/2,\ q]$ according to (\ref{kt}). On the other hand, similar as (\ref{eq_c0}) and by using $r\le q$, we have 
\begin{equation*}
\wh\lambda_1 \wh \sigma_r^2 \le  {(1+\eps')nm \over 7(n-q)m/8 - (q/2)(\sqrt{m}+\sqrt{q})^2}(\sqrt{m}+\sqrt{q})^2\sigma^2
\end{equation*}
with high probability for any $\eps' \in (0,1/4)$. Using (\ref{eq_K0}), $m/(\sqrt{m}+\sqrt{q})^2 \le 1$ and $C > 8/7$ yields 
\[
\wh \lambda_1 \wh \sigma_r^2 \le 12C(1+\eps')(\sqrt{m}+\sqrt{q})^2\sigma^2.
\]
Taking $C' = \sqrt{C(1+\eps')}$ concludes 
\[
d_1(PE) + \sqrt{\wh \lambda_1} \wh \sigma_r \le C'(1+2\sqrt{3})(\sqrt{m}+\sqrt{q})\sigma,
\]
which, by using our signal condition and invoking Theorem \ref{thm:RS2}, completes the proof.
\qed\\

\section{Proofs of Section 5}

The following lemmas are critical for extending the previous results to general errors with heavy tail distributions.
\begin{lemma}\label{lem_conc_E}
	Let $E\in \RR^{n\times m}$ have independent entries with mean zero, variance $\sigma^2$ and fourth moment $\gamma<\infty$. For any $\eps \in (0,1)$, we have
	\begin{eqnarray}\nonumber
	\PP\Bigl\{\bigl|\|E\|^2  - nm\sigma^2\bigr| >\eps nm\sigma^2 \Bigr\} &\le& {\gamma/\sigma^4-1 \over \eps^2nm }.
	\end{eqnarray}
\end{lemma}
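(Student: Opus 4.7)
The plan is to apply Chebyshev's inequality directly to the scalar random variable $\|E\|^2 = \sum_{i=1}^n\sum_{j=1}^m E_{ij}^2$, exploiting the fact that this is a sum of $nm$ \emph{independent} summands.

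First I would compute the mean: since $\EE[E_{ij}^2] = \text{Var}(E_{ij}) = \sigma^2$ (as $\EE[E_{ij}]=0$), one immediately has $\EE[\|E\|^2] = nm\sigma^2$. Next I would compute the variance by independence: $\text{Var}(\|E\|^2) = \sum_{i,j}\text{Var}(E_{ij}^2) = \sum_{i,j}\bigl(\EE[E_{ij}^4] - (\EE[E_{ij}^2])^2\bigr) = nm(\gamma - \sigma^4)$. Finally I would plug into Chebyshev's inequality:
\begin{equation*}
\PP\Bigl\{\bigl|\|E\|^2 - nm\sigma^2\bigr| > \eps nm\sigma^2\Bigr\} \le \frac{\text{Var}(\|E\|^2)}{(\eps nm\sigma^2)^2} = \frac{nm(\gamma-\sigma^4)}{\eps^2 n^2 m^2 \sigma^4} = \frac{\gamma/\sigma^4 - 1}{\eps^2 nm},
\end{equation*}
which is exactly the claimed bound.

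There is no real obstacle here; the lemma is a one-line consequence of Chebyshev once mean and variance are identified. The only mild subtlety to flag is that $\gamma \ge \sigma^4$ by Jensen's inequality (applied to $x\mapsto x^2$ with the nonnegative random variable $E_{ij}^2$), so the numerator $\gamma/\sigma^4 - 1$ is nonnegative and the bound is meaningful. No assumption beyond independence, mean zero, and finite fourth moment is actually needed, and in particular the entries need not be identically distributed provided each has the same $\sigma^2$ and the same (or uniformly bounded) fourth moment $\gamma$.
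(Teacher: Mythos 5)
Your proposal is correct and is essentially identical to the paper's own proof, which likewise computes $\EE[\|E\|^2]=nm\sigma^2$ and $\mathrm{Var}(\|E\|^2)=nm(\gamma-\sigma^4)$ by independence and then invokes the Bienaym\'e--Chebyshev inequality. The additional remark that $\gamma\ge\sigma^4$ by Jensen is a nice sanity check but not needed for the argument.
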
	

\begin{proof}[Proof of Lemma \ref{lem_conc_E}]
	Since $\|E\|^2 = \sum_{i=1}^n\sum_{j=1}^mE_{ij}^2$ with 
	$\EE[\|E\|^2] = nm\sigma^2$ and Var$(\|E\|^2) = nm(\gamma - \sigma^4)$, the result follows from
	the Bienaym\'e-Chebyshev inequality.
\end{proof}

\begin{lemma}\label{lem_op_interlacing}
	Let $E$ has independent entries with mean zero and unit variance. Assume $ n =O(m^\alpha)$ for some $\alpha \in [0,1)$. Then, for any $\eps \in (0,1)$, one has
	\begin{eqnarray}\label{eq_op_interlacing}
	(1-\eps)\sqrt{m} \le d_k(E)  \le  (1+\eps)\sqrt{m},\qquad \text{for all }1\le k\le n
	\end{eqnarray}
	with probability at least $1 - 2\exp(-c\eps^2m^{1-\alpha})$ where $c>0$ is some absolute constant. Moreover, for any $\eps\in (0,1)$, we have
	\begin{align}\label{eq_PE_k_sub}
	\|E-(E)_{k}\|^2 \ge &(1-\eps)^2(n-k)m,\qquad \text{for all $1 \le k\le n$}
	\end{align}
	with probability converging to $1$ as $m\rightarrow \infty$.
	Similar results hold for $m  = O(n^\alpha)$ with $m$ and $n$ switched.
\end{lemma}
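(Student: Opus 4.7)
The plan is to establish (\ref{eq_op_interlacing}) via concentration of the Gram matrix $EE^T$ around $mI_n$, and then deduce (\ref{eq_PE_k_sub}) as an immediate consequence of Eckart-Young. It suffices to treat the case $n \le m$, since $m = O(n^\alpha)$ reduces to the same argument applied to $E^T$ (using $d_k(E) = d_k(E^T)$ and that $E^T$ inherits independent entries with zero mean and unit variance).

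For (\ref{eq_op_interlacing}), write $EE^T = \sum_{j=1}^m E_{\cdot j} E_{\cdot j}^T$ as a sum of $m$ independent rank-one random matrices in $\RR^{n\times n}$, with $\EE[EE^T/m] = I_n$. Given the finite fourth moment condition available in the theorems that invoke this lemma, a truncated matrix Bernstein argument (along the lines of the sample covariance estimates in Vershynin or Koltchinskii--Lounici) yields a tail bound of the form
\[
\PP\left\{\left\|EE^T/m - I_n\right\|_2 \ge \eps\right\} \le 2\exp(-c\eps^2 m/n)
\]
for an absolute constant $c>0$, after absorbing the usual $n$-factor from matrix Bernstein into the exponent using $n = O(m^\alpha)$. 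Under this growth the exponent is at least $c'\eps^2 m^{1-\alpha}$. Combining with $d_k^2(E) = \lambda_k(EE^T)$ and Weyl's inequality gives $|d_k^2(E)/m - 1| \le \eps$ uniformly in $1 \le k \le n$, which after rescaling $\eps$ yields (\ref{eq_op_interlacing}).

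To establish (\ref{eq_PE_k_sub}), I would apply the Eckart--Young identity $\|E - (E)_k\|^2 = \sum_{j=k+1}^{n \wedge m} d_j^2(E)$, which under $n \le m$ equals $\sum_{j=k+1}^n d_j^2(E)$. On the event of (\ref{eq_op_interlacing}), each summand is at least $(1-\eps)^2 m$, so the total is at least $(1-\eps)^2(n-k)m$; since $m^{1-\alpha}\to\infty$, this event has probability tending to one.

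The main technical obstacle is getting the exponential rate $\exp(-c\eps^2 m^{1-\alpha})$ under only a moment condition on the entries. For sub-Gaussian entries one could directly invoke Vershynin's Theorem 5.39, which gives the even sharper bound $\sqrt{m} - C\sqrt{n} - t \le d_n(E) \le d_1(E) \le \sqrt{m} + C\sqrt{n} + t$ with probability $1 - 2\exp(-ct^2)$, and the claim follows immediately with $t = \eps\sqrt{m}/2$ once $\sqrt{n}/\sqrt{m}\to 0$. For heavy-tailed entries with only finitely many moments, the truncation level must be chosen carefully---for instance, at scale $m^{1/4}$ when fourth moments are finite, so that the truncated matrix has bounded variance proxy while the discarded mass is small in expectation---before applying matrix Bernstein; this delicate bookkeeping is what delivers the $m^{1-\alpha}$ rate in the final bound.
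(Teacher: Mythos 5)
Your reduction of both claims to a single Gram-matrix concentration event is clean, and your derivation of (\ref{eq_PE_k_sub}) --- lower-bounding each $d_j^2(E)$ for $j>k$ by $(1-\eps)^2m$ on the event (\ref{eq_op_interlacing}) --- is actually more direct than the paper's, which routes through $\|E\|^2-\sum_{j\le k}d_j^2(E)$ and Lemma \ref{lem_conc_E}. Your route to (\ref{eq_op_interlacing}) is genuinely different from the paper's: instead of matrix Bernstein applied to the rank-one summands $E_{\cdot j}E_{\cdot j}^T$, the paper sandwiches $d_k(E)$ via the interlacing inequality (Lemma \ref{lem_interlacing}), $d_k(\overline{E}_k)\le d_k(E)\le d_1(\overline{E}_{n-k+1})$ with $\overline{E}_j$ the first $j$ rows of $E$, and then applies the extreme-singular-value bounds for tall matrices with independent isotropic rows (Vershynin, Theorem 5.41) to the transposed $m\times j$ blocks, followed by a union bound over $k$; the aspect ratio $j\le n=O(m^\alpha)$ is what produces the $m^{1-\alpha}$ in the exponent.

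The genuine gap is the central estimate $\PP\{\|EE^T/m-I_n\|_2\ge\eps\}\le 2\exp(-c\eps^2m/n)$ under only a fourth-moment condition. No truncation-plus-matrix-Bernstein bookkeeping can deliver this, because the conclusion itself fails for heavy-tailed entries: since $d_1^2(E)\ge\max_{i,j}E_{ij}^2$ (take $u=e_i$, $v=e_j$ in the variational characterization), an upper tail for $d_1(E)$ at level $(1+\eps)\sqrt{m}$ forces $\PP\{|E_{11}|>(1+\eps)\sqrt{m}\}$ to be exponentially small, which finite fourth moments do not give. Concretely, truncating the entries at level $M$ leaves the residual event $\{\max_{i,j}|E_{ij}|>M\}$ whose probability is only $O(nm/M^4)$ --- polynomially small at best, and one needs $M\gtrsim(nm)^{1/4}$ just to make it vanish, which in turn inflates the variance proxy in Bernstein. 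So the ``delicate bookkeeping'' you defer is not bookkeeping but an impossibility for the heavy-tailed errors this lemma is meant to serve in Theorems \ref{thm: heavy tails skinny} and \ref{thm: special model2}; the sub-Gaussian fallback via Theorem 5.39 is correct but does not cover those applications. Any correct argument at this level of generality has to exploit the independence of the columns of $E$ as rows of a tall $m\times j$ matrix (as the paper does), where only row-norm control, rather than entrywise sub-Gaussianity, is required --- and even then the probability guarantee one can honestly claim is weaker than a clean $\exp(-c\eps^2m^{1-\alpha})$ unless the row norms are almost surely controlled.
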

\begin{proof}[Proof of Lemma \ref{lem_op_interlacing}]
	Let us first consider $ n =O(m^\alpha)$ for some $0\le \alpha <1$. Fix any $1\le k\le n$. Lemma \ref{lem_interlacing} implies $ d_{k}(\overline{E}_{k}) \le d_k(E) \le d_1(\overline{E}_{n-k+1})$ where $\overline{E}_{j}$ is the matrix made of the first $j$ rows of $E$ for any $1\le j\le n$. For notational simplicity, we write $F^j = (\overline{E}_{j})^T\in \RR^{m \times j}$. Notice that, $F^j$ still has independent entries, each row $F^j_{i\cdot}$ of $F^j$ has $j\times j$ identity covariance matrix and 
	$$
	\|F^j_{i\cdot}\|_2 = \sum_{\ell =1}^j \left(F^j_{i\ell}\right)^2 = \sum_{\ell = 1}^{j}E_{\ell i}^2  = j, \quad a.s.
	$$
	by the law of large number. Thus, we can invoke Theorem 5.41 in \cite{rv_rand_mat} for $F^{n-k+1}$ and $F^k$ to obtain
	\begin{eqnarray}\nonumber
	\PP\left\{d_{1}(F^{n-k+1}) \ge \sqrt{m}+t\sqrt{n-k+1}\right\}  &\le& (n-k+1)\exp(-c_0t^2),\quad t\ge 0,\\\nonumber
	\PP\left\{d_{1}(F^k) \le \sqrt{m}-t\sqrt{k}\right\}  &\le& k\exp(-c_1t^2),\quad t\ge 0,
	\end{eqnarray}
	for some constant $c_0, c_1>0$. For any $\eps \in (0,1)$, choose $t = \eps\sqrt{m}/\sqrt{n-k+1}$ in the first display and $t = \eps \sqrt{m}/\sqrt{k}$ in the second display, and get
	\begin{align*}
	&\PP\left\{(1-\eps)\sqrt{m} \le d_k(E) \le  (1+\eps)\sqrt{m}\right\}\\
	&\qquad  \ge 1 - \exp\left(-{c_0\eps^2m \over n-k+1} + \log (n-k+1)\right)-\exp\left(-{c_1\eps^2m \over k} + \log k\right)\\
	&\qquad \ge  1- 2\exp\left(-c_2\eps^2m^{1-\alpha} \right)
	\end{align*}
	for some constant $c_2> 0$. We use $n=O(m^\alpha)$ in the last inequality.
	Taking the union bound over $1\le k\le n$ concludes the proof of (\ref{eq_op_interlacing}).  
	
	If $m = O(n^\alpha)$, write $F = E^T$. Using Lemma \ref{lem_interlacing} again gives $ d_{k}(\overline{F}_{k}) \le d_k(F) \le d_1(\overline{F}_{n-k+1})$ for any $1\le k\le m$. We observe that $(\overline{F}_j)^T \in \RR^{n \times j}$ has independent entries, identity covariance matrix and the $\ell_2$ norm of each row equal to $\sqrt{j}$ almost surely. Repeating the previous arguments will prove (\ref{eq_op_interlacing}).
	
	We proceed to show (\ref{eq_PE_k_sub}) for $n=O(m^\alpha)$ only since the case $m = O(n^\alpha)$ can be easily extended. For any $1\le k \le n$, notice that $\|E-(E)_k\|^2 = \sum_{j=k+1}^nd_j^2(E)$. (\ref{eq_PE_k_sub}) follows immediately from (\ref{eq_op_interlacing}) and Lemma \ref{lem_conc_E}. This completes the proof.
\end{proof}
\vspace{-1mm}

\subsection{Proof of Theorem \ref{thm: heavy tails n = q}}
Without loss of generality, we assume $\EE[E_{ij}^2] =1$. We start by defining the following event 
\[
\E'' := \{|\|E\|^2 - nm|\le \eps nm\} \cap \{d_1(E) = \sqrt{m}+\sqrt{n} \}
\]
for any $\eps \in (0,1)$.	By Lemma \ref{lem_conc_E} and \cite{Bai-Yin}, we have  $\PP(\E'') \to 1$ as $n,m\to \infty$.
To show $\wh k \le r$, from Theorem \ref{prop:ongelijk}, it suffices to show (\ref{RS3}). This is indeed the case since
\begin{equation}\label{eq_d}
{2d_1^2(PE)\over \wh \sigma^2} \le {2d_1^2(E)\over \wh \sigma^2} \overset{\E''}{\le} {2(\sqrt{m}+\sqrt{n})^2 \over 1-\eps}= {2(\sqrt{m}+\sqrt{q})^2\over 1-\eps} (1+ o(1)) 
\end{equation} 
by using $d_1(PE) \le d_1(P)d_1(E) = d_1(E)$ in the first inequality and choosing $\lambda =  2(\sqrt{m}+\sqrt{q})^2/(1-\eps)$. On the other hand, to show $\wh k \ge s$, from Theorem \ref{thm:RS2}, we need to verify (\ref{RS2}). By using (\ref{eq_sigma_r}), (\ref{rankconstraint}) and (\ref{eq_d}), it follows that 
\[
d_1(PE) + \sqrt{\lambda}\wh\sigma_r \le \sqrt{\lambda}\left[{1\over \sqrt{2}}+\sqrt{1+\delta} \right]\wh\sigma \overset{\E}{\le} C\sigma(\sqrt{m}+\sqrt{q}) \le d_s(XA)
\]
by choosing $C = \sqrt{(1+\eps)/(1-\eps)}(1+\sqrt{2(1+\delta)})$. This completes the proof. \qed\\

\subsection{Proof of Theorem \ref{thm: heavy tails skinny}}
Without loss of generality, we assume $E_{ij}$ has unit variance.
As in the proof of Theorem \ref{thm: signalSRS}, we first need to check the following two properties for $\wh \lambda_t$ and $\wh k_t$ obtained in (\ref{initlbd_ex}) and (\ref{lbdt_ex}): (1) $\wh \lambda_t$ is decreasing; (2) $\wh k_t \le r$. Obviously, (1) is straightforward from (\ref{initlbd_ex}) and (\ref{lbdt_ex}). To show (2), we only prove the case $n=O(m^\alpha)$ since the case $m = O(n^\alpha)$ is similar to derive.

We define the following event which is analogous to (\ref{eq_E}) in the proof of Proposition \ref{prop:monolbd}. 
\begin{align*}
\E''' &:= \bigcap_{k=1}^q\left\{ 
{\|E-(E)_{(2k)\wedge n}\|^2 \over d_1^2(E) \vee \left[d_{2k+1}^2(E) + d_{2k+2}^2(E)\right] } \ge {(1-\eps)[n/2 - k]_+}	
\right\}\\ 
&~~\quad \bigcap\ \left\{2d_1^2(E) \le \wh \lambda_0\wh \sigma^2 \right\}
\end{align*}
On the event $\E'''$, $\wh k_t \le r$ follows by the same arguments used in the proof of Proposition \ref{prop:monolbd} except $PE$ and $PY$ are now replaced by $E$ and $Y$, respectively.   Thus, it suffices to show $\E'''$ holds with high probability. First, observe that, for any $\eps_1, \eps_2 \in (0,1)$, and some constant $c, c'$,
\begin{align*}
\PP\left\{2d_1^2(E) \le \wh \lambda_0\wh \sigma^2 \right\} &\ge 1 - \PP\{d_1^2(E) \ge (1+\eps_1)^2m \} - \PP\{\wh \sigma^2 \le (1-\eps_2)nm \}\\
& \ge 1- \exp(-c\eps_1^2m) - c'(\eps_2^2nm)^{-1}.
\end{align*}
We use Lemmas \ref{lem_conc_E} and \ref{lem_op_interlacing} in the second inequality. Choosing $\eps$ in $\wh \lambda_0$ such that $1+\eps = (1+\eps_1)^2/(1-\eps_2)$ proves the last event in $\E'''$. For the other intersect event, the event holds trivially if $2k \ge q$. If $2k < q$, Lemma \ref{lem_op_interlacing} guarantees $\E'''$ holds with probability tending $1$.

Finally, the results of Theorem \ref{thm: heavy tails n = q}   follow from the same arguments in the proof of Theorem \ref{thm: signalSRS} except  $PE$ and $PY$are now replaced by $E$ and $Y$, respectively.\qed\\

\subsection{Proofs of Theorems \ref{thm: special model1} and \ref{thm: special model2}}	
Recall that we use $d_1(PE)\le d_1(E)$ in the proof of Theorems \ref{thm: heavy tails n = q} and \ref{thm: heavy tails skinny}. Thus, the proofs for Theorems \ref{thm: special model1} and \ref{thm: special model2}	remain the same as those
for Theorems \ref{thm: heavy tails n = q} and \ref{thm: heavy tails skinny}. \qed\\

\section{Oracle inequality}\label{app_oracle}
While our main interest is the study of the rank estimator $\wh k$, we briefly mention an oracle inequality for our estimators
$X\wh A := (PY)_{\wh k}$ based on GRS and STRS of the mean $XA$.
% It improves the result stated in \cite{GiraudBook} by incorporating an approximation error term. 
%%%We first present a general oracle inequality for the case when the true rank is unknown. Then based on the overwhelming probability of the consistency of our selected $\wh k$, we focus on the oracle inequality of prediction error for the given true rank. Given $\wh k$ in (\ref{hat k}), we write $\wh A$ for the $p\times m$ matrix satisfying
%$X\wh A = (PY)_{\wh k}$. 
\begin{thm}\label{GeneralOracle}
	Let $C>2$. 
	On the event
	$\lambda \ge Cd_1^2(PE)/\wh\sigma^2$, we have
	\[\|X\wh A - XA\| ^2 \le \frac{C+2}{C-2} \min_{0\le k\le K_\lambda} \left\{\left( \frac{C+2}{C-2} +8(\rho -1)\right) \sum_{j>k} d_j^2(XA)+3\rho \lambda \wh \sigma^2k \right\},\]
	with $\rho := (nm)/(nm-\lambda K_\lambda)$.
\end{thm}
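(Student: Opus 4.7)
Fix any $k\in\{0,\ldots,K_\lambda\}$ and derive the oracle bound for that $k$; taking the minimum over $k$ then gives the statement. Since $(XA)_k$ is a rank-$k$ matrix lying in the column space of $X$, Pythagoras (using $Y-PY\perp\mathrm{range}(P)$) combined with Eckart-Young optimality of $(PY)_k$ yields $\|Y-(PY)_k\|^2\le \|Y-(XA)_k\|^2$. Together with the defining optimality of $\wh k$, this gives the basic inequality
\[
\|Y-(PY)_{\wh k}\|^2/(nm-\lambda\wh k)\;\le\;\|Y-(XA)_k\|^2/(nm-\lambda k).
\]
Writing $Y=XA+E$ and noting $\langle E,\cdot\rangle=\langle PE,\cdot\rangle$ on matrices in $\mathrm{range}(P)$, both squared norms expand as $\|E\|^2+\|{\cdot}-XA\|^2-2\langle PE,{\cdot}-XA\rangle$. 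Cross-multiplying, cancelling the common $\|E\|^2$ factors, and rearranging isolates
\[
\|X\wh A-XA\|^2\;\le\;(\gamma-1)\|E\|^2+\gamma\sum_{j>k}d_j^2(XA)+2\langle PE,D\rangle,
\]
with $\gamma:=(nm-\lambda\wh k)/(nm-\lambda k)\in(0,\rho]$ and $D:=(PY)_{\wh k}-\gamma(XA)_k-(1-\gamma)XA$.

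\textbf{Controlling the pieces.} The noise-only term $(\gamma-1)\|E\|^2$ is non-positive when $\wh k\ge k$, while for $\wh k<k$ one checks $\gamma-1=\lambda(k-\wh k)/(nm-\lambda k)\le \rho\lambda k/(nm)$, so that $(\gamma-1)\|E\|^2\le \rho\lambda\wh\sigma^2 k$; the signal piece is immediately $\le\rho\sum_{j>k}d_j^2(XA)$. For the cross term I would split $D=[(PY)_{\wh k}-(XA)_k]+(1-\gamma)[(XA)_k-XA]$. The first summand has rank at most $\wh k+k$, so the nuclear-norm bound $|\langle PE,M\rangle|\le d_1(PE)\sqrt{\mathrm{rank}(M)}\|M\|$ applies; combined with the triangle inequality $\|\cdot\|\le \|X\wh A-XA\|+\sqrt{\sum_{j>k}d_j^2(XA)}$, the BSW-style split $\sqrt{\wh k+k}\le\sqrt{\wh k}+\sqrt{k}$, AM-GM with weight $\eta=1/C$, and the event hypothesis $d_1^2(PE)\le\lambda\wh\sigma^2/C$, this produces a contribution of the form $2\eta\|X\wh A-XA\|^2+\tfrac{C+2}{C-2}\sum_{j>k}d_j^2(XA)+(\mathrm{const})\,\rho\lambda\wh\sigma^2 k$, with the $\wh k$-residue cancelling against the $\|X\wh A-XA\|^2$ piece that is absorbed on the left, exactly as in the BSW oracle-inequality proof.

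\textbf{Main obstacle.} The delicate second summand $(1-\gamma)[(XA)_k-XA]$ has rank $(r-k)_+$; using nuclear-norm duality would let the unknown $r$ enter the final bound. The key is to use instead the elementary Cauchy-Schwarz bound $|\langle PE,(XA)_k-XA\rangle|\le \|PE\|\sqrt{\sum_{j>k}d_j^2(XA)}$, together with $|1-\gamma|\le\rho-1$ and an AM-GM step whose parameter is chosen proportional to $\rho-1$. The signal-side contribution then becomes the $8(\rho-1)\sum_{j>k}d_j^2(XA)$ correction appearing in the statement, while the noise residue — after using $\|PE\|^2\le\|E\|^2=nm\wh\sigma^2$ and the algebraic identity $(\rho-1)nm=\rho\lambda K_\lambda$ — is absorbed into the $3\rho\lambda\wh\sigma^2 k$ complexity term via careful balancing with the basic-inequality contribution $(\gamma-1)\|E\|^2$. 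Collecting all pieces, absorbing $2\eta\|X\wh A-XA\|^2$ on the left with factor $1/(1-2\eta)=C/(C-2)$, and combining with the $(1+2\eta)=(C+2)/C$ gain from AM-GM produces the leading $(C+2)/(C-2)$ constant; the minimum over $k$ completes the proof. The intricate bookkeeping of $\rho$-factors, originating in the ratio form of criterion (\ref{vier}) in contrast to BSW's additive penalty, is what gives the $8(\rho-1)$ correction and is the main source of the proof's technicality.
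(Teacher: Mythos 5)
Your overall architecture coincides with the paper's: the basic inequality obtained from the optimality of $\wh k$ combined with Eckart--Young, an expansion of both sides around $XA$, trace duality plus AM--GM on the main cross term (with the event $d_1^2(PE)\le\lambda\wh\sigma^2/C$ controlling the resulting $d_1^2(PE)$ residue), and a Cauchy--Schwarz bound on the bias--noise cross term. The paper merely organizes the computation by writing $\|Y-(PY)_{\wh k}\|^2\le\|Y-(XA)_k\|^2\bigl\{1+\tfrac{2\lambda k}{nm-\lambda k}-\tfrac{\lambda(k+\wh k)}{nm-\lambda k}\bigr\}$ and then sandwiching $\|Y-(XA)_k\|^2$ between $\tfrac12\|E\|^2-\|XA-(XA)_k\|^2$ and $3\|XA-(XA)_k\|^2+\tfrac32\|E\|^2$, which is an equivalent bookkeeping to your $\gamma$ and $D$.

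Two steps of your plan, as written, would fail. First, you bound $(\gamma-1)\|E\|^2\le\rho\lambda\wh\sigma^2k$ at the outset, discarding its negative part when $\wh k>k$. That negative part, $-\tfrac{\lambda(\wh k-k)}{nm-\lambda k}\,nm\,\wh\sigma^2\le-\lambda(\wh k-k)\wh\sigma^2$, is the only quantity available to cancel the residue $\eta^{-1}\wh k\,d_1^2(PE)$ produced by your AM--GM on $2d_1(PE)\sqrt{\wh k+k}\,\|(PY)_{\wh k}-(XA)_k\|$; that residue cannot be ``absorbed on the left'' into $\|X\wh A-XA\|^2$, and without the cancellation it can be as large as $\lambda K_\lambda\wh\sigma^2$, which destroys the bound. (In the paper this role is played by the term $-\tfrac{\lambda(k+\wh k)}{2(nm-\lambda k)}\|E\|^2$ surviving from the lower bound on $\|Y-(XA)_k\|^2$, and the AM--GM weights $a=1+b$ with $b$ of order $C$ are chosen precisely so that $(a+b)d_1^2(PE)-\tfrac{\lambda/2}{nm-\lambda k}\|E\|^2\le0$ on the stated event.) Second, bounding $|1-\gamma|\le\rho-1$ and invoking $(\rho-1)nm=\rho\lambda K_\lambda$ turns the noise residue of your second summand into a quantity of order $\rho\lambda K_\lambda\wh\sigma^2$, which scales with $K_\lambda$ and cannot be absorbed into the complexity term $3\rho\lambda\wh\sigma^2 k$ when $k\ll K_\lambda$ (e.g.\ $k=0$); you must keep the sharper bound $|1-\gamma|\le\lambda(k\vee\wh k)/(nm-\lambda k)$ so that the residue scales with $k$, and with $\wh k$ only through the same negative term, which is already fully consumed by the first cancellation unless the weights are retuned. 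Both issues are fixable, but they are exactly the ``careful balancing'' your sketch defers, and they are where the constants $(C+2)/(C-2)$, $8(\rho-1)$ and $3\rho$ actually come from.
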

\begin{proof}
	First we notice that
	$$\frac{\| Y- (PY)_{\wh k}\| ^2}{nm-\lambda\wh k}\leq \frac{\| Y-(PY)_k\| ^2}{nm-\lambda k} \leq \frac{\| Y-(XA)_k\| ^2}{nm-\lambda k}.$$
	The first inequality follows from the optimality of $\wh k$; the second inequality is a consequence of    Pythagoras' identity and the \cite{EckartY} inequality. Rewriting the above display yields
	\begin{eqnarray*}
		\| Y- (PY)_{\wh k}\| ^2&\leq& 
		\| Y- (XA)_k\| ^2 \left\{ 1+  \frac{2\lambda k}{nm-\lambda k} -\frac{\lambda (k+\wh k)}{nm-\lambda k} \right\},
	\end{eqnarray*}
	and after working out the squares, we obtain
	\begin{align}\nonumber
	\| (PY)_{\wh k} -XA \| ^2 &\ \leq\ \|XA-(XA)_k\|^2  +\frac{2\lambda k}{nm-\lambda k}\| Y-(XA)_k \| ^2\\\label{(4)}
	&\quad\ -\frac{\lambda (k+\wh k)}{nm-\lambda k}\| Y- (XA)_k\| ^2
	+ 2\langle E,  (PY)_{\wh k}- (XA)_k \rangle.
	\end{align}
	Since both  $(XA)_k$ and $(PY)_{\wh k}$ are in the column space of $P$, see, for instance, \cite[page 124]{GiraudBook}, we have
	$2\langle E,  (PY)_{\wh k}- (XA)_k \rangle= 2\langle PE,  (PY)_{\wh k}- (XA)_k \rangle$, and
	the norm duality and the elementary inequality $2xy\leq ax^2+y^2/a$ for $a>0$, yield
	\begin{align}
	\label{(5)}\nonumber
	2\langle E, (PY)_{\wh k}&-(XA)_k\rangle 
	\leq\  
	2d_1(PE)\cdot\sqrt{k+\wh k}\cdot\| (PY)_{\wh k} -(XA)_k \| \\
	&\leq\ (a+b)(k+\wh k)d_1^2(PE) + \frac{1}{a}\| (PY)_{\wh k}-XA\| ^2+\frac{1}{b}\| XA-(XA)_k \| ^2
	\end{align}
	for any $a,b>0$.
	Moreover, we find
	\begin{eqnarray}\label{(6)}\nonumber
	\| Y-(XA)_k \| ^2& = & \| XA-(XA)_k \| ^2 + \| E\| ^2 + 2\langle E, XA-(XA)_k \rangle\\ &\le&
	3\|  XA-(XA)_k\| ^2 +\frac{3}{2}\| E\| ^2
	\end{eqnarray}
	and
	\begin{eqnarray}\label{(7)} 
	\| Y-(XA)_k \| ^2&\geq&  
	\frac12 \| E\| ^2- \| XA-(XA)_k\| ^2
	\end{eqnarray}
	Finally, combining (\ref{(4)}) with (\ref{(5)}), (\ref{(6)}) and (\ref{(7)}) gives
	\begin{align*}
	\frac{a-1}{a}\| X\wh A-&XA\| ^2
	\leq\left[ \frac{b+1}{b}+\frac{6\lambda k}{(nm-\lambda k)}
	+\frac{\lambda(k+\wh k)}{(nm-\lambda k)}\right]\| XA-(XA)_k\| ^2\\
	&+\frac{3\lambda k}{nm-\lambda k}\|E\|^2+ (k+\wh k)\left[(a+b)d_1^2(PE) -\frac{ \lambda/2}{nm-\lambda k}\|E\|^2\right].
	\end{align*}
	By using 
	$(nm)/(nm-\lambda \ell) \le (nm)/(nm-\lambda K_\lambda)$ for all $\ell \le K_\lambda$ and setting $a=1+b$ and $b= {C}/{2}$, the claim
	follows after a little algebra.
\end{proof}
\vspace{3mm}

Contrary to the rank consistency results in Sections \ref{sec_GRS} and \ref{sec_SRS}, no lower bound on the non-zero singular values $d_j(XA)$ of the signal $XA$, nor   any assumption on $X$ is required. It is clear that our selected estimator achieves the optimal bias-variance tradeoff, as discussed in   \cite{BSW,Giraud,GiraudBook}. 
In the above oracle inequality, the quantity $$\rho:={nm \over nm-\lambda K_\lambda}$$ could be  large under some circumstances. Nevertheless, the   mild rank constraint (\ref{rankconstraint}) with $\delta>0$  guarantees the upper bound  $ \rho \le 1+\delta$. Theorem \ref{GeneralOracle} and the exponential inequalities in (\ref{d_1}) -- (\ref{chiR}) immediately yield the following corollary.

\begin{cor}\label{cor:oracle}
	Suppose $E$ has i.i.d. $N(0,\sigma^2)$ entries.
	For $ \lambda > 2(\sqrt{m}+\sqrt{q})^2$, the event
	\begin{eqnarray}
	\label{ob}
	\|X\wh A - XA\| ^2 \lesssim \rho \min_{0\le k\le K_\lambda} \left\{  \sum_{j>k} d_j^2(XA)+ \lambda  \sigma^2k \right\} \end{eqnarray}
	holds with probability tending to 1, as $nm\to\infty$ and $q+m\to\infty$.	
\end{cor}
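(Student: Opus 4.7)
\textbf{Proof plan for Corollary \ref{cor:oracle}.} The strategy is to reduce the corollary to Theorem \ref{GeneralOracle} by showing that its hypothesis $\lambda \ge C d_1^2(PE)/\wh\sigma^2$ holds on a high probability event, and then to replace $\wh\sigma^2$ by $\sigma^2$ (up to a multiplicative constant) in the resulting bound. Since $\lambda > 2(\sqrt{m}+\sqrt{q})^2$, we can write $\lambda = C_\lambda (\sqrt{m}+\sqrt{q})^2$ with $C_\lambda > 2$, and fix any $C \in (2, C_\lambda)$ in Theorem \ref{GeneralOracle}.

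First, combining the mean bound (\ref{d_0}) with the Gaussian concentration inequality (\ref{d_1}) and choosing the deviation parameter $t = \eta(\sqrt{m}+\sqrt{q})\sigma$, one obtains
\[
d_1(PE) \le (1+\eta)(\sqrt{m}+\sqrt{q})\sigma
\]
with probability at least $1 - \exp(-\eta^2(\sqrt{m}+\sqrt{q})^2/2)$. The chi-square tail (\ref{chiL}) applied with a small $\varepsilon_1 \in (0,1)$ gives
\[
\wh\sigma^2 \ge (1-\varepsilon_1)\sigma^2
\]
with probability at least $1-\exp(-\varepsilon_1^2 nm/4)$. On the intersection of these two events,
\[
\frac{d_1^2(PE)}{\wh\sigma^2} \le \frac{(1+\eta)^2}{1-\varepsilon_1}(\sqrt{m}+\sqrt{q})^2.
\]
Choosing $\eta$ and $\varepsilon_1$ small enough (depending only on $C_\lambda$ and $C$) so that $(1+\eta)^2/(1-\varepsilon_1) \le C_\lambda/C$, we secure $\lambda \ge C d_1^2(PE)/\wh\sigma^2$ on an event of probability $1 - \exp(-c_1(m+q)) - \exp(-c_2 nm)$.

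Next, the chi-square upper tail (\ref{chiR}) with parameter $\varepsilon_2 \in (0,1/2)$ gives $\wh\sigma^2 \le (1+\varepsilon_2)\sigma^2$ with probability at least $1-\exp(-3\varepsilon_2^2 nm/16)$. Intersecting all three events and applying Theorem \ref{GeneralOracle}, we obtain
\[
\|X\wh A - XA\|^2 \le \frac{C+2}{C-2}\min_{0\le k\le K_\lambda}\left\{\Bigl(\tfrac{C+2}{C-2}+8(\rho-1)\Bigr)\sum_{j>k} d_j^2(XA) + 3\rho\lambda(1+\varepsilon_2)\sigma^2 k\right\}.
\]
Since $C$ is a fixed numerical constant (only $\rho$ and $\lambda$ carry dimension dependence), the prefactors $(C+2)/(C-2)$ and $3(1+\varepsilon_2)$ are absorbed into $\lesssim$, and $8(\rho-1) + (C+2)/(C-2) \lesssim \rho$. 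This yields the conclusion (\ref{ob}), and taking a union bound shows the stated event holds with probability tending to $1$ as $nm\to\infty$ and $m+q\to\infty$.

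The main (mild) obstacle is bookkeeping: one must choose the constants $\eta, \varepsilon_1, \varepsilon_2$ so that simultaneously (i) the hypothesis of Theorem \ref{GeneralOracle} is satisfied with some $C>2$, and (ii) the resulting $\wh\sigma^2$ in the variance term can be upper bounded by a constant multiple of $\sigma^2$. Both goals are achievable because $C_\lambda > 2$ strictly, leaving a positive gap $C_\lambda - C$ to accommodate the fluctuations of $d_1(PE)$ and $\wh\sigma^2$ around their Gaussian means.
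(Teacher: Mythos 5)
Your proposal is correct and takes essentially the same route as the paper: both verify the hypothesis $\lambda \ge C\,d_1^2(PE)/\wh\sigma^2$ of Theorem \ref{GeneralOracle} on a high-probability event via (\ref{d_1}) and (\ref{chiL}) with a constant split of the form $C_\lambda = C(1+\eta)^2/(1-\varepsilon_1)$, then invoke that theorem and absorb constants. Your explicit use of (\ref{chiR}) to replace $\wh\sigma^2$ by $(1+\varepsilon_2)\sigma^2$ in the variance term is a step the paper leaves implicit, but it is the intended argument.
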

We use the notation $\lesssim$ for inequalities that hold up to multiplicative constants. The probability of event (\ref{ob}) converges (to 1) exponentially fast in $mn$ and $m+q$.

\begin{proof}
	We can write $\lambda := C' (\sqrt{m}+\sqrt{q})^2$ for some 
	$C' = C(1+C_0)^2/(1-C_1)>2$ with $C_0>0$,  $0<C_1<1$ and $C$ equal to the one in Theorem \ref{GeneralOracle}. From inequalities (\ref{d_1}) -- (\ref{chiR}), we have
	\begin{align*}
	\PP\left\{C d_1^2(PE)  \ge \lambda \frac{\|E\|^2}{nm} \right\}
	& \le \  
	\PP\left\{	d_1^2(PE)\geq (1+C_0)^2(\sqrt{m}+\sqrt{q})^2\sigma^2
	\right\}\\
	& \quad +\  \PP\left\{ \|E\|^2\le (1-C_1)nm\sigma^2\right\}\\
	& \le \ \exp \left\{- C_0^2 (\sqrt{m}+\sqrt{q})^2 /2\right\}+\exp\left\{- 
	C_1^2nm /4\right\}.
	\end{align*}
	This proves the claim. 
\end{proof}

For GRS and STRS, $\wh k=r$ holds with overwhelming probability. On this event, Theorem \ref{thm: oracle r given} provides a cleaner and tighter bound for the fit $\|X\wh A-XA\|$.

\begin{thm}\label{thm: oracle r given}
	On the event $\wh k = r$, we have
	\[
	\|X\wh A-XA\|^2 \le 
	4 rd_1^2(PE)
	\]
\end{thm}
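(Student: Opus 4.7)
On the event $\wh k = r$, the selected rank forces $X\wh A = (PY)_{\wh k} = (PY)_r$, so the task reduces to bounding $\|(PY)_r - XA\|^2$. The plan is an orthogonal decomposition that splits the error into two rank-$\le r$ pieces, each of which is controlled by $d_1(PE)$ via the elementary inequality $\|M\|^2 \le \text{rank}(M)\,\|M\|_2^2$.

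Let $P_r = U_rU_r^T$ denote the orthogonal projection onto the span of the top $r$ left singular vectors of $PY$, so that $(PY)_r = P_r\,PY$. Writing
\[
(PY)_r - XA \;=\; P_r(PY-XA) - (I-P_r)XA \;=\; P_r\,PE - (I-P_r)XA,
\]
and noting that the two summands have column spaces in orthogonal subspaces, Pythagoras gives
\[
\|(PY)_r - XA\|^2 = \|P_r\,PE\|^2 + \|(I-P_r)XA\|^2.
\]
Both summands have rank at most $r$: the first because $P_r$ has rank $r$, and the second because $\text{rank}(XA) \le r$. Hence each is bounded by $r$ times the square of its operator norm. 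For the first, $\|P_r\,PE\|_2 \le \|PE\|_2 = d_1(PE)$, giving $\|P_r\,PE\|^2 \le r d_1^2(PE)$. For the second, the identity $(I-P_r)XA = (PY-(PY)_r) - (I-P_r)PE$ together with Weyl's inequality --- which, using $d_{r+1}(XA)=0$, forces $d_{r+1}(PY)\le d_1(PE)$ --- yields $\|(I-P_r)XA\|_2 \le d_{r+1}(PY)+d_1(PE) \le 2d_1(PE)$.

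The main obstacle is producing the claimed constant $4$ rather than the looser $5$ or $8$ that a naive implementation gives: the plan as stated yields $r d_1^2(PE) + 4 r d_1^2(PE) = 5 r d_1^2(PE)$, which is the right order but off by a factor, while the even simpler rank-$2r$ bound on $(PY)_r - XA$ combined with $\|(PY)_r - XA\|_2 \le 2d_1(PE)$ only gives $8r d_1^2(PE)$. Reaching $4r d_1^2(PE)$ requires a more careful control of $\|(I-P_r)XA\|^2$: either via the identity $\|(I-P_r)XA\|^2 = \|XA\|^2 - \|P_r XA\|^2$ combined with the Eckart--Young optimality $\|PY-(PY)_r\|^2 \le \|PE\|^2$, or by exploiting that the two triangle-inequality bounds $d_{r+1}(PY)\le d_1(PE)$ and $\|(I-P_r)PE\|_2 \le d_1(PE)$ cannot simultaneously saturate. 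I anticipate this constant-sharpening to be the only genuinely technical step; the rest is decomposition and bookkeeping.
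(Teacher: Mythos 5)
Your decomposition is valid and every step of it checks out: $(PY)_r - XA = P_r\,PE - (I-P_r)XA$ with the two pieces Frobenius--orthogonal, $\|P_r\,PE\|^2 \le r\,d_1^2(PE)$, and $\|(I-P_r)XA\|^2 \le r\bigl(d_{r+1}(PY)+d_1(PE)\bigr)^2 \le 4r\,d_1^2(PE)$ via Weyl. But, as you yourself concede, this proves $\|X\wh A - XA\|^2 \le 5r\,d_1^2(PE)$, not the stated $4r\,d_1^2(PE)$, and the ``constant-sharpening'' you defer is precisely the part of the theorem you have not proved. Neither of your two suggestions closes it as stated: the first one (expanding $\|(I-P_r)PY\|^2 \le \|PE\|^2$ and controlling the cross term $\langle (I-P_r)XA, PE\rangle$ by trace duality) leads to $\|(I-P_r)XA\| \le (1+\sqrt{2})\sqrt{r}\,d_1(PE)$ and a total of about $6.8\,r\,d_1^2(PE)$ --- strictly worse than your direct bound --- and the second (``the two triangle inequalities cannot simultaneously saturate'') is too vague to verify. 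So the proposal as written establishes only a weaker constant.

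The paper's proof takes an entirely different and much shorter route that never touches the singular subspaces of $PY$. On $\{\wh k = r\}$, the Eckart--Young optimality of $X\wh A=(PY)_r$ among rank-$r$ matrices in the range of $P$ gives the basic inequality $\|Y - X\wh A\|^2 \le \|Y-(XA)_r\|^2 = \|E\|^2$; expanding the left side around $Y = XA + E$ cancels $\|E\|^2$ and leaves $\|X\wh A - XA\|^2 \le 2\langle E, X\wh A - XA\rangle$; and trace duality, $\langle E, M\rangle = \langle PE, M\rangle \le d_1(PE)\,\|M\|_* \le d_1(PE)\sqrt{\mathrm{rank}(M)}\,\|M\|$ with $M = X\wh A - XA$, yields $\|X\wh A - XA\| \le 2\sqrt{r}\,d_1(PE)$ and hence the constant $4$. (Strictly speaking $\mathrm{rank}(X\wh A - XA)\le 2r$, so the fully conservative version of this argument gives $8r\,d_1^2(PE)$; the paper uses $\sqrt{r}$ here.) If you want the constant in the statement, the way to get it is this basic-inequality-plus-duality argument, not a refinement of your orthogonal decomposition.
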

\begin{proof}
	From the optimality of $X\wh A$, we have 
	\[ \|Y-X\wh A\|^2 \le   \|Y-(XA)_{r}\|^2 = \| E\|^2
	\] on the event $\{\wh r=r\}$. This implies 	\[
	\|X\wh A-XA\|^2 \le  2|\langle E, X\wh A- (XA)_{r}\rangle|  \le  2\|X\wh A- XA\| ( \sqrt{r} d_1(PE) )
	\]and the result follows.
\end{proof}

Specialized to our STRS procedure, we have the following corollary.
\begin{cor}\label{cor: oracle r given}
	Assume $E$ has i.i.d. $N(0,\sigma^2)$ entries. On the event (\ref{signalcondmsrs}) in Theorem \ref{thm: signalSRS}, if we choose and update $\lambda$ according to (\ref{initlbd}) and (\ref{lbdt}), then we have 
	$\wh k = r$ with overwhelming probability. Hence, for some constant $C>4$, we have
	\[
	\PP\left\{
	\|X\wh A-XA\|^2  \le Cr(\sqrt{q}+\sqrt{m})^2\sigma^2
	\right\}\to 1
	\]
	as $n\to\infty$ and $(q\vee m)\to\infty$.
\end{cor}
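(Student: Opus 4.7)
\textbf{Proof plan for Corollary \ref{cor: oracle r given}.} The plan is to decompose the claim into two separable ingredients that have already been established in the paper: (i) identification of the true rank by STRS under the stated signal condition, and (ii) the deterministic oracle bound for the fit on the event $\{\wh k=r\}$. I then glue them together with a tail bound on $d_1(PE)$ using Gaussianity.

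First, I would invoke Theorem \ref{thm: signalSRS} directly. Since the initial tuning parameter $\wh\lambda_0$ from (\ref{initlbd}) and the updates from (\ref{lbdt}) are exactly those studied in that theorem, and since the signal condition (\ref{signalcondmsrs}) is assumed, there exists $T'\le T$ with $\wh k_{T'} = r$ on an event $\Omega_1$ with $\PP(\Omega_1)\to 1$ as $nm\to\infty$ and $q\vee m\to\infty$. Because STRS produces an increasing sequence $\wh k_0\le \wh k_1\le \cdots$ bounded above by $r$ (Proposition \ref{prop:monolbd}), the stopping value $\wh k$ must coincide with $r$ on $\Omega_1$. Thus $\{\wh k=r\}\supseteq \Omega_1$.

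Next, on $\{\wh k=r\}$ I apply Theorem \ref{thm: oracle r given} to obtain the purely deterministic inequality
\[
\|X\wh A - XA\|^2 \le 4r\,d_1^2(PE).
\]
It remains to control $d_1^2(PE)$ under the Gaussian noise assumption. From (\ref{d_0}) and (\ref{d_1}), for any $\varepsilon>0$ one has
\[
\PP\bigl\{d_1(PE) \ge (1+\varepsilon)\sigma(\sqrt{m}+\sqrt{q})\bigr\} \le \exp\bigl(-\varepsilon^2(\sqrt{m}+\sqrt{q})^2/2\bigr),
\]
so on an event $\Omega_2$ of probability tending to $1$ (exponentially fast in $m+q$) we have $d_1^2(PE)\le (1+\varepsilon)^2(\sqrt{m}+\sqrt{q})^2\sigma^2$.

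Finally, I would take $C = 4(1+\varepsilon)^2 > 4$ and intersect $\Omega_1\cap\Omega_2$, on which
\[
\|X\wh A - XA\|^2 \le 4r\,d_1^2(PE) \le C\,r(\sqrt{q}+\sqrt{m})^2\sigma^2.
\]
Since $\PP(\Omega_1\cap\Omega_2)\to 1$ as $n\to\infty$ and $q\vee m\to\infty$, the claim follows. There is no real obstacle here beyond bookkeeping: both building blocks (rank identification from Theorem \ref{thm: signalSRS}, fit bound from Theorem \ref{thm: oracle r given}) and the Gaussian concentration for $d_1(PE)$ are already in hand, and they compose cleanly because the fit bound is deterministic on $\{\wh k=r\}$.
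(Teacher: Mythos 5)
Your proposal is correct and follows exactly the paper's own (one-line) argument: invoke Theorem \ref{thm: signalSRS} for $\wh k = r$, apply the deterministic bound $\|X\wh A-XA\|^2\le 4r\,d_1^2(PE)$ from Theorem \ref{thm: oracle r given}, and control $d_1(PE)$ via (\ref{d_0}) and (\ref{d_1}), yielding $C=4(1+\varepsilon)^2>4$. The only addition you make is to spell out why the stopping value of STRS equals $r$ once some $\wh k_{T'}=r$, which the paper leaves implicit but is justified by the monotonicity in Proposition \ref{prop:monolbd}.
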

The convergence rate in Corollary \ref{cor: oracle r given} is exponentially fast. The proof follows immediately from Theorem \ref{thm: signalSRS}, Theorem \ref{thm: oracle r given}, (\ref{d_1}) and $\EE[d_1(PE)] \le \sigma(\sqrt{m}+\sqrt{q})$.\\

\section{STRS with deterministic bounds}\label{app_DB}

When $E$ has i.i.d. $N(0,\sigma^2)$ entries, a deterministic bound for updating $\lambda$ in Section \ref{sec_SRS} can be derived as follows. We define 
$M:= m\vee q$ and recall that $N=\qm$. 
For any $0<\eps<1$, we choose 
$\wt \lambda_0 = 2(1+\eps)(\sqrt{m}+\sqrt{q})^2$ and let $\wt k_0$ be selected from (\ref{vier}) by using $\wt \lambda_0$. For given $\wt k_t\ge 1$ with $t \ge 0$, we set
\begin{align}\label{lbda2}
\wt \lambda_{t+1} &~ =~ 
\left\{\begin{array}{ll}
nm \Big/ \left[(1-\eps)\wt R_t / \wt U_t+\wt k_t\right], & \text{ if }2\wt k_t \le N;\\
nm \Big/ \left[(1-\eps)(n-q)m/((\sqrt{m}+\sqrt{q})^2+1)+\wt k_t\right], & \text{ if }2\wt k_t \ge N 
\end{array}
\right.
\end{align}
with
\begin{align}\label{wtR}
\wt R_{t} := \max &\bigg\{nm-\sum_{j=1}^{2\wt k_t}\left(\sqrt{M}+\sqrt{N-j+1}\right)^2\!\!\!- 2\wt k_t,\\\nonumber
&\quad (n-q)m+\!\!\!\sum_{j=2\wt k_t+1}^{N}\!\!\!\!\left(\sqrt{M}-\sqrt{j}\right)^2\!\bigg\}
\end{align}
and
\begin{equation}\label{wtU}
\wt U_t := \max\left\{\!(\sqrt{m}+\sqrt{q})^2+1, \sum_{j = 2\wt k_t+1}^{2\wt k_t + 2}\rs\left(\sqrt{M}+\sqrt{N-j+1}\right)^2+2\right\}.
\end{equation}
After updating $\wt \lambda_t$, we select $\wt k_t$ as
\[
\wt k_t := \argmin_{\wt k_{t-1} \le k \le \wt K_t}{\|Y-(PY)_{k}\|^2 \over nm - \wt \lambda_t k}
\]
where $\wt K_t := \floor{nm/\wt\lambda_t} \wedge q\wedge m$. The procedure stops when $\wt k_{t} = \wt k_{t+1}$. For this deterministic self-tuning procedure, we show that results analogous to Proposition \ref{prop:monolbd} and Theorem \ref{thm: signalSRS} are still guaranteed.  

\begin{prop}\label{prop: lbdgaussian}
	For all $t\ge 0$, we have $\wt \lambda_{t+1} \le \wt \lambda_t$ and $\wt k_{t+1}\ge \wt k_t$. Moreover, $\wt k_t \le r$ holds with probability converging to $1$ as $(q\vee m)\rightarrow \infty$ and $n\rightarrow \infty$.
\end{prop}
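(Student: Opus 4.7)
The plan is to mirror the proof of Proposition \ref{prop:monolbd}, replacing the Monte Carlo means $S_j=\EE[d_j^2(Z)]$ by deterministic surrogates for $\EE[d_j(PE)]$ obtained from Lemma \ref{lem_exp_PE}: $\sqrt{M}-\sqrt{j}\le \EE[d_j(PE)]\le \sqrt{M}+\sqrt{N-j+1}$, together with the variance bound $\EE[d_j^2(PE)]\le (\EE[d_j(PE)])^2+1$. I would first check the two monotonicity claims and then establish $\wt k_t\le r$ on a high-probability event.

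\textbf{Monotonicity of $\wt\lambda_t$ and $\wt k_t$.} I would show that the denominator of $\wt\lambda_{t+1}$ in (\ref{lbda2}) is non-decreasing in $\wt k_t$, so that $\wt\lambda_{t+1}$ is non-increasing in $\wt k_t$, and consequently $\wt\lambda_{t+1}\le \wt\lambda_t$ by induction. In the regime $2\wt k_t<N$, writing $a_k=(1-\eps)\wt R_t/\wt U_t + k$, I would verify $a_{k+1}\ge a_k$ using exactly the algebraic manipulations that proved (\ref{increasingak}) and (\ref{increasingbk}), handling each of the two terms inside the $\max$ in $\wt R_t$ and the two terms inside the $\max$ in $\wt U_t$ separately. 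The regime $2\wt k_t\ge N$ is immediate since the denominator is then linear and increasing in $\wt k_t$. Once $\wt\lambda_{t+1}\le \wt\lambda_t$ is established, the proof that $\wt k_{t+1}\ge \wt k_t$ is identical to the corresponding part of the proof of Proposition \ref{prop:monolbd}: it only uses the optimality definitions of $\wt k_{t+1}$ and $\wt k_t$ together with this ordering.

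\textbf{The bound $\wt k_t\le r$.} Introduce the event
\begin{equation*}
\wt\E := \bigl\{2d_1^2(PE)\le \wt\lambda_0\wh\sigma^2\bigr\} \cap \bigcap_{k=1}^{N}\wt\E_k,\quad
\wt\E_k := \left\{\frac{\|E-(PE)_{2k\wedge N}\|^2/\sigma^2}{d_1^2(PE)/\sigma^2 \vee (d_{2k+1}^2(PE)+d_{2k+2}^2(PE))/\sigma^2}\ge (1-\eps)\frac{\wt R_t}{\wt U_t}\right\}.
\end{equation*}
On $\wt\E$, the inductive argument from the proof of Proposition \ref{prop:monolbd}, combined with (\ref{prev2lbd}) in Lemma \ref{lem: mono}, carries over verbatim and yields $\wt k_t\le r$ for every $t\ge 0$.

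\textbf{The main obstacle: $\PP(\wt\E)\to 1$.} This is where the deterministic bounds must be justified. For $\{2d_1^2(PE)\le \wt\lambda_0\wh\sigma^2\}$, combining $\EE[d_1^2(PE)]\le (\sqrt{m}+\sqrt{q})^2+1$ with the concentration (\ref{eq_cc_PEk}) for $d_1(PE)$ and the $\chi^2$ tail (\ref{chiL}) for $\wh\sigma^2$ suffices, provided the slack $\eps$ in $\wt\lambda_0$ is chosen accordingly. For $\wt\E_k$, I would control numerator and denominator separately. The denominator uses $\EE[d_j(PE)]\le \sqrt{M}+\sqrt{N-j+1}$ from (\ref{eq_bound_exp_PE}) and concentration (\ref{eq_cc_PEk}) to show that $\sigma^2\wt U_t$ upper-bounds $d_1^2(PE)\vee (d_{2k+1}^2(PE)+d_{2k+2}^2(PE))$ with overwhelming probability. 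For the numerator, I would use the two decompositions $\|E-(PE)_{2k}\|^2 = \|E\|^2 - \sum_{j\le 2k} d_j^2(PE) = \|E-PE\|^2 + \sum_{j>2k}d_j^2(PE)$; applying $\chi^2$ tails to $\|E\|^2$ and $\|E-PE\|^2$, and the upper/lower bounds for $\EE[d_j(PE)]$ together with (\ref{eq_cc_PEk}) to each sum, yields the two lower bounds inside the $\max$ in (\ref{wtR}). A union bound over $1\le k\le N$ then gives $\PP(\wt\E)\to 1$ with rates exponentially small in $nm$ and in $m+q$. The principal subtlety I anticipate is that $\wt R_t$ is defined as the \emph{max} of two expressions, so only one of the two lower bounds on $\|E-(PE)_{2k}\|^2$ needs to hold to certify $\wt\E_k$; I would select whichever decomposition is tighter at a given $k$ (the first when $2k$ is small, the second when $2k$ is close to $N$), which is precisely the reason for the $\max$ appearing in (\ref{wtR}).
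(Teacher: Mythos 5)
Your monotonicity arguments coincide with the paper's (the paper likewise reduces the two sub-cases of $\wt U_t$ to the sequences (\ref{increasingak}) and (\ref{increasingbk}) from Proposition \ref{prop:monolbd}, and the step $\wt k_{t+1}\ge\wt k_t$ is verbatim the same contradiction argument). Where you diverge is the claim $\wt k_t\le r$: you propose to rebuild the high-probability event from scratch, re-proving concentration of $\|E-(PE)_{2k}\|^2$, $d_1^2(PE)$ and $d_{2k+1}^2(PE)+d_{2k+2}^2(PE)$ around the deterministic surrogates in (\ref{wtR})--(\ref{wtU}). The paper instead disposes of this in one line: by Lemma \ref{lem_exp_PE} (and $(\EE[d_j(PE)])^2\le \EE[d_j^2(PE)]\le(\EE[d_j(PE)])^2+1$), the quantities $\wt R_t$ and $\wt U_t$ satisfy $\wt R_t\le\wh R_t$ and $\wt U_t\ge\wh U_t$ \emph{deterministically}, hence $\wt\lambda_t\ge\wh\lambda_t$ when computed from the same rank; since the inequalities (\ref{lessthanr1})--(\ref{lessthanr2}) in the proof of Proposition \ref{prop:monolbd} only require $\lambda$ to be \emph{large enough}, the event $\E$ of (\ref{eq_E}) already certifies $\wt k_t\le r$ with no new probabilistic work. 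Your route is viable --- every concentration step you invoke has an analogue in Lemmas \ref{lem_tech}--\ref{lem_svd_PE}, and your observation that at each $k$ you only need to certify the decomposition matching whichever term achieves the $\max$ in (\ref{wtR}) is the right way to read that definition --- but it forces you to handle delicate regimes (e.g.\ lower-bounding $\sum_{j>2k}d_j^2(PE)$ when $2k$ is close to $N$ and $q\approx m$, where $(\sqrt{M}-\sqrt{j})^2$ is tiny) that the domination argument sidesteps entirely. If you keep your route, you buy a self-contained proof that does not lean on the event $\E$ of the Monte Carlo version; the paper's route buys brevity and avoids duplicating the union bound.
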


\begin{proof}[Proof of Proposition \ref{prop: lbdgaussian}]
	
	We first show $\wt\lambda_t$ is decreasing in $\wt k_t$. For notational simplicity, we write $k$ for $\wt k_t$ and assume $q\le m$. The case $q > m$ can be obtained similarly. The decreasing property of (\ref{lbda2}) is easily seen if $2k\ge N$. Hence we focus on the case $2k<N$ and consider two sub-cases:
	\vspace{1mm}
	
	(1) Suppose $\wt U_t =\sum_{j = 2k+1}^{2 k + 2}(\sqrt{m}+\sqrt{q-j+1})^2+2$. It suffices to show 
	\[
	A_k:= {(1-\eps)\wt R_t \over \sum_{j = 2k+1}^{2 k + 2}(\sqrt{m}+\sqrt{q-j+1})^2+2} + k
	\]
	is increasing in $k$. It has the same form as (\ref{increasingak}) in the proof of Proposition \ref{prop:monolbd}. By repeating the arguments there, we can similarly show that $A_k$ is increasing. Hence $\wt \lambda_t$ is decreasing.
	\vspace{1mm}
	
	(2) Suppose $\wt U_t = (\sqrt{m}+\sqrt{q})^2+1$.
	This is similar to (\ref{increasingbk}). Therefore, the same arguments can be used to prove that $\wt\lambda_k$ is decreasing. By the same reasoning as the proof of Proposition \ref{prop:monolbd}, we have $\wt k_{t+1} \ge \wt k_t$. Finally, the proof for $\wt k_t\le r$ follows immediately from Proposition \ref{prop:monolbd} by observing that $\wt \lambda_t \ge \wh \lambda_t$ using  Lemma \ref{lem_exp_PE}. 
\end{proof}
\vspace{-1mm}

%If we further define $\rho_t := (nm)/(nm- \lambda_t r)$ for $t = 0,1,\ldots$, we can derive  a more explicit signal-to-noise condition in the following theorem.

\begin{thm}\label{signalSRS2}
	Assume $E$ has i.i.d. $N(0,\sigma^2)$  entries. For any subsequence $k_0< \cdots <k_T=r$ of $\{1,2,\ldots,r\}$ with  $T\le r-1$, 
	we let $\lambda_0=2(1+\eps)(\sqrt{m}+\sqrt{q})^2$ for any $\eps \in (0,1)$. Denote by $\lambda_t$ the updated $\lambda$ according to (\ref{lbda2}) by using $k_t$, for $t=0,1,\ldots,T-1$. On the event,
	\begin{eqnarray}\label{signalcondmsrs2}
	d_{k_t}(XA)\ge C\sigma\sqrt{\lambda_t}\left[{\sqrt{2} \over 2} + \sqrt{ nm \over nm- \lambda_t r} \right], \quad t = 0,1,\ldots ,T.
	\end{eqnarray}
	for some $C>1$, there exists $T'\le T$ such that 
	$\PP\{\wt k_{T'} = r\}\to 1$, where $\wt k_0, \wt k_1, \ldots, \wh k_{T'}$ are the selected ranks from the procedure above.
\end{thm}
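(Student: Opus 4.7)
The plan is to mirror the proof of Theorem \ref{thm: signalSRS} with the Monte Carlo quantities $\wh\lambda_t$, $\wh k_t$ replaced by their deterministic counterparts $\wt\lambda_t$, $\wt k_t$ from Appendix \ref{app_DB}, and to carry out an induction on $t$ establishing that $\wt k_t \in [k_t, r]$ for each $t = 0, 1, \ldots, T$. Since $k_T = r$ and since $\wt k_t \le r$ uniformly in $t$ by Proposition \ref{prop: lbdgaussian}, this forces $\wt k_{T'} = r$ at the first index $T' \le T$ for which the selected rank reaches $r$.

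I would work on a high-probability event that is the intersection of $\{\wh\sigma \le (1+\eps)\sigma\}$ (controlled by (\ref{chiR})) with a deterministic analog of $\E$ in (\ref{eq_E}) in which the Monte Carlo quantities $\wh R_t$, $\wh U_t$ are replaced by $\wt R_t$, $\wt U_t$ from (\ref{wtR})--(\ref{wtU}). Because Lemma \ref{lem_exp_PE} sandwiches $\EE[d_j(Z)]$ between $\sqrt{M}-\sqrt{j}$ and $\sqrt{M}+\sqrt{N-j+1}$ and furnishes exponential concentration for each $d_j(PE)$, while Lemma \ref{lem_svd_PE} controls $\|PE-(PE)_{2k}\|^2$ from below and the $\chi^2_{(n-q)m}$ tail bound controls $\|E-PE\|^2$ from below, this analog event inherits the same exponential probability guarantees; this is exactly the content of the second half of Proposition \ref{prop: lbdgaussian}.

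For the base case, Theorem \ref{thm:RS2} combined with Proposition \ref{prop:ongelijk} (which gives $\wh\sigma_r \le \sqrt{nm/(nm-\lambda_0 r)}\,\wh\sigma$) and the Gaussian tail bound on $d_1(PE)$ converts the signal hypothesis (\ref{signalcondmsrs2}) at $t=0$ into condition (\ref{RS2}) at $\lambda = \wt\lambda_0 = \lambda_0$, yielding $\wt k_0 \in [k_0, r]$. For the inductive step from $t$ to $t+1$, the inclusion $\wt k_t \ge k_t$ together with the monotonicity of $k \mapsto \wt\lambda(k)$ established in the first half of Proposition \ref{prop: lbdgaussian} gives $\wt\lambda_{t+1} \le \lambda_{t+1}$; since $\lambda \mapsto \sqrt{\lambda}\bigl[\sqrt{2}/2 + \sqrt{nm/(nm-\lambda r)}\bigr]$ is increasing in $\lambda$ on the admissible range $\lambda < nm/r$, the hypothesized signal bound at level $\lambda_{t+1}$ transfers to $\wt\lambda_{t+1}$, and a second application of Theorem \ref{thm:RS2} (with Proposition \ref{prop:ongelijk} used at $\lambda = \wt\lambda_{t+1}$) delivers $\wt k_{t+1} \in [k_{t+1}, r]$. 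Iterating $T$ times closes the induction.

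The main technical obstacle is verifying that the deterministic bounds in (\ref{wtR}) and (\ref{wtU}), although strictly more conservative than their Monte Carlo counterparts, still allow invocation of (\ref{prev2lbd}) in Lemma \ref{lem: mono} with $\wt\lambda_{t+1}$ in place of $\lambda$. Because $\wt\lambda_t \ge \wh\lambda_t$ holds pointwise by construction, every containment used in the proof of Proposition \ref{prop:monolbd} to derive the chain (\ref{lessthanr1})--(\ref{d1lessthanr}) goes through verbatim with $\wt\lambda$ replacing $\wh\lambda$, and the same exponential convergence in $nm$ and $q\vee m$ is inherited. A minor bookkeeping subtlety is that the term $\sqrt{M}-\sqrt{j}$ can be negative for large $j$; this is harmless because the maximum in (\ref{wtR}) selects whichever of the two deterministic lower bounds is nonnegative.
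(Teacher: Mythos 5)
Your proposal is correct and matches the paper's approach: the paper's own proof is literally a one-line deferral to the argument of Theorem \ref{thm: signalSRS} combined with Proposition \ref{prop: lbdgaussian}, and your induction on $t$ (base case via Theorem \ref{thm:RS2} and Proposition \ref{prop:ongelijk}, inductive step via $\wt\lambda_{t+1}\le\lambda_{t+1}$ and monotonicity of the signal threshold in $\lambda$) is exactly what that deferral unpacks to. Your observation that $\wt\lambda_t\ge\wh\lambda_t$ lets the chain (\ref{lessthanr1})--(\ref{d1lessthanr}) go through verbatim is precisely how the paper justifies $\wt k_t\le r$ in Proposition \ref{prop: lbdgaussian}.
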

\begin{proof}[Proof of Theorem \ref{signalSRS2}]
	The proof follows the same arguments as the proof of Theorem \ref{thm: signalSRS} by using Proposition \ref{prop: lbdgaussian}.
\end{proof}
\vspace{3mm}

\section{Additional simulations}
\subsection{Simulations of Monte Carlo simulations vs deterministic bounds}\label{sec_sim_STRS_DB}

Depending on whether we update  $\lambda$ by Monte Carlo simulations (MC) or by the Deterministic Bounds (DB) in Section \ref{app_DB}, we denote by STRS-MC the procedure using MC and by STRS-DB the one using DB. We compare their performance for some non-Gaussian distributions. In particular, we generate $E$ from either uniform$(-\sqrt{3},\sqrt{3})$ or $t_\nu$-distribution (d.f. $\nu= 6$). For each distribution, both low- and high-dimensional settings are considered. The low-dimensional case sets $\eta = 0.1$, $b_0 = 0.1$, $n = 300$, $m = p = q = 50$ and varies $r$ between $0$ and $15$. For the high-dimensional case, we consider $\eta = 0.1$, $b_0 = 0.003$, $n = 200$, $m = 60$, $p = 300$, $q = 30$ and $0 \le r\le 15$. The mean selected ranks and rank recovery rate of the two methods for each setting are shown in Figure \ref{fig_MC_DB}. 

\begin{figure}[ht]
	\centering
	\vspace{-3mm}
	\begin{tabular}{cc}
		\includegraphics[width=.4\linewidth]{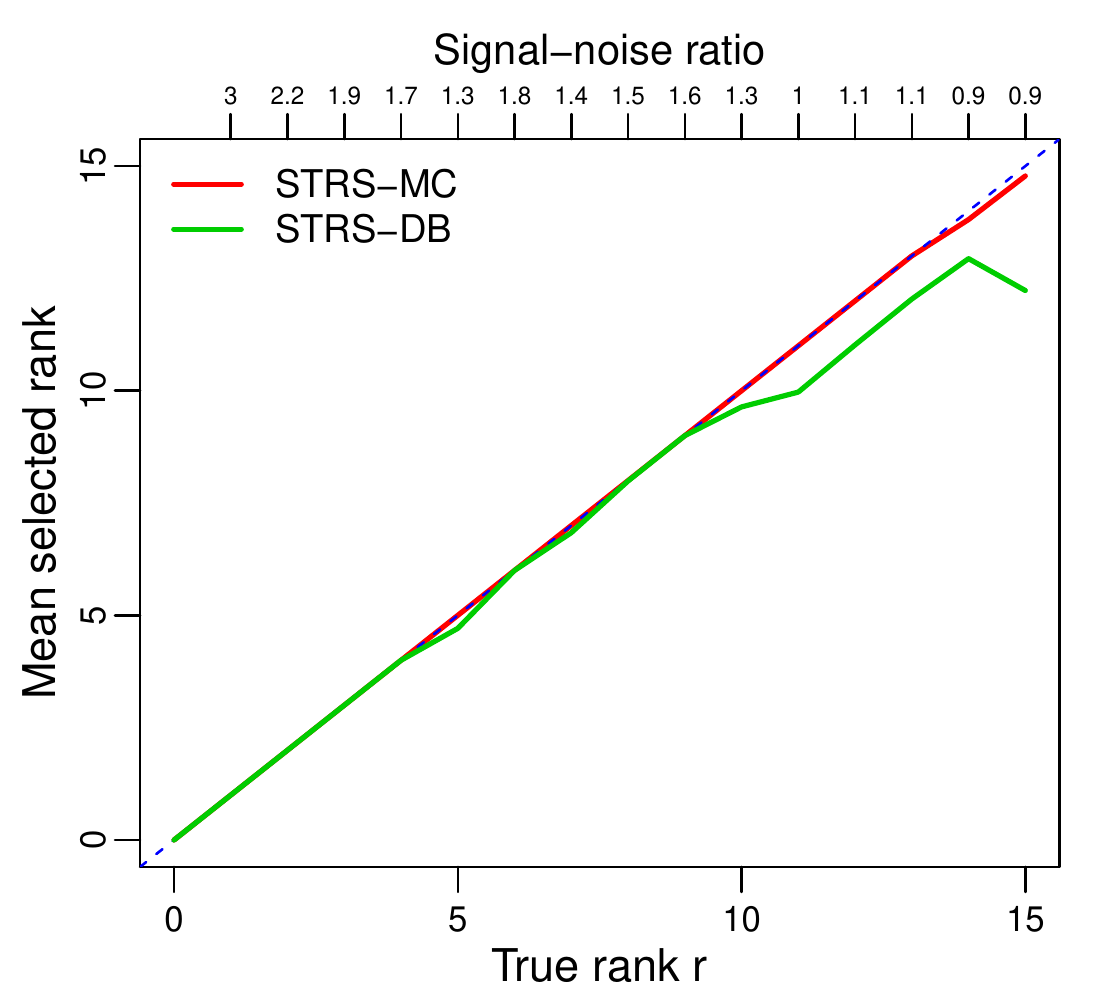} & 	\includegraphics[width=.4\linewidth]{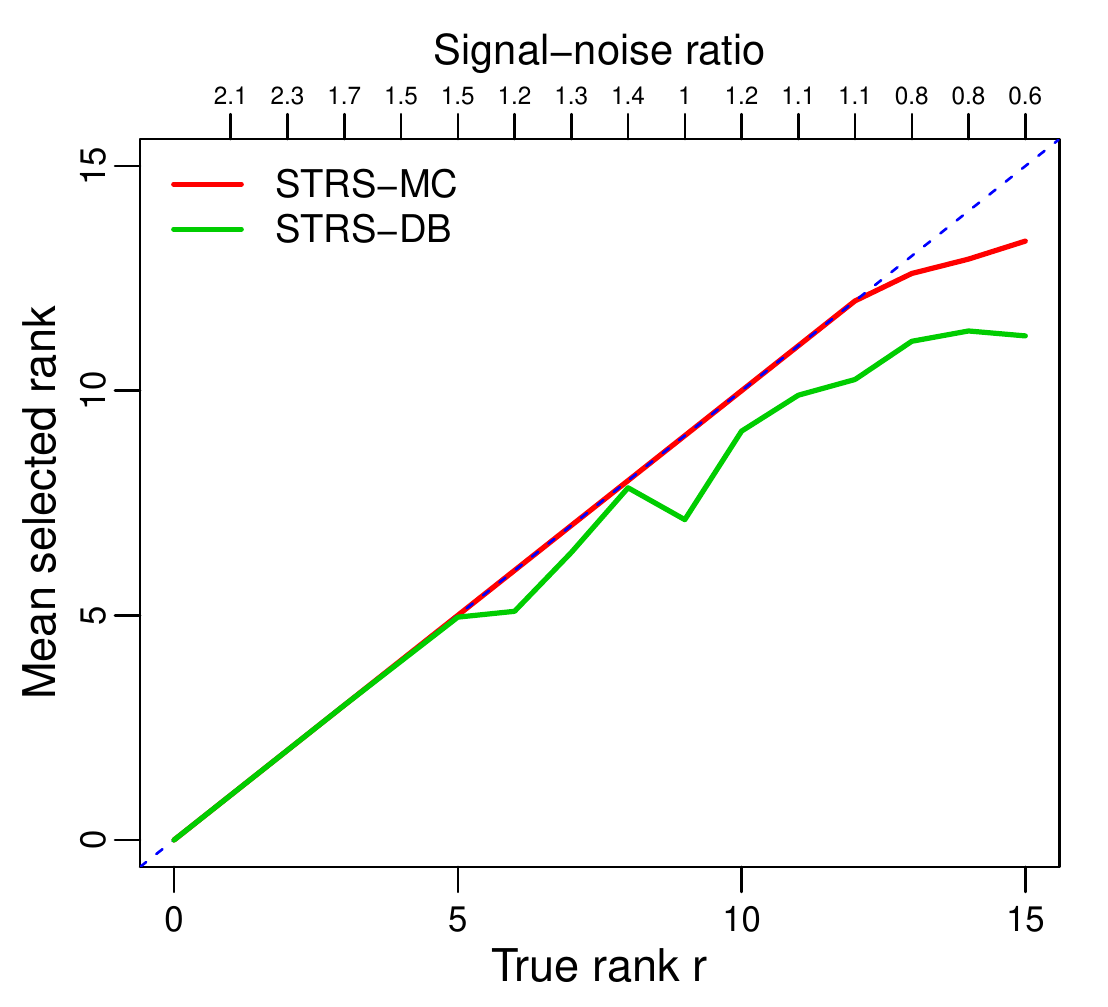}\\[-4pt]
		\includegraphics[width=.4\linewidth]{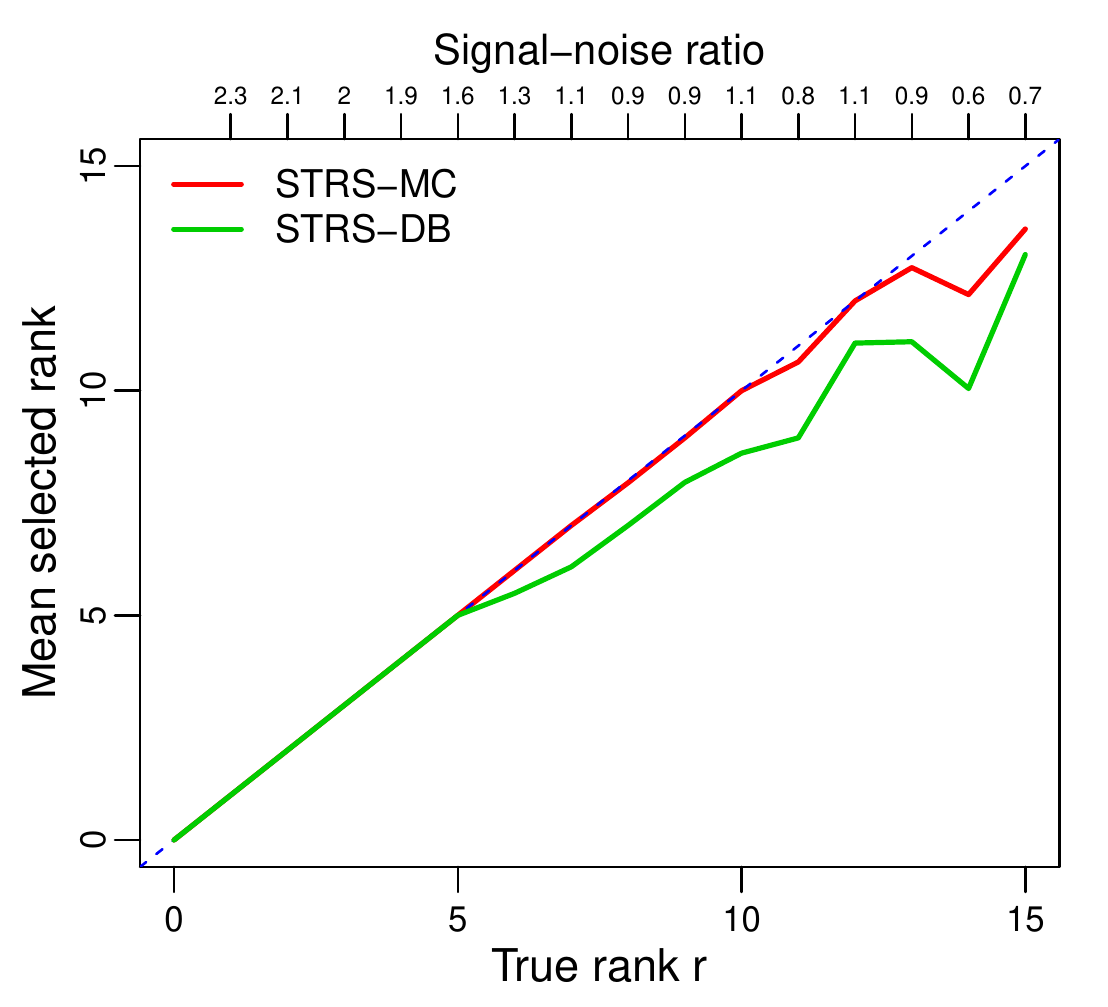} & 	\includegraphics[width=.4\linewidth]{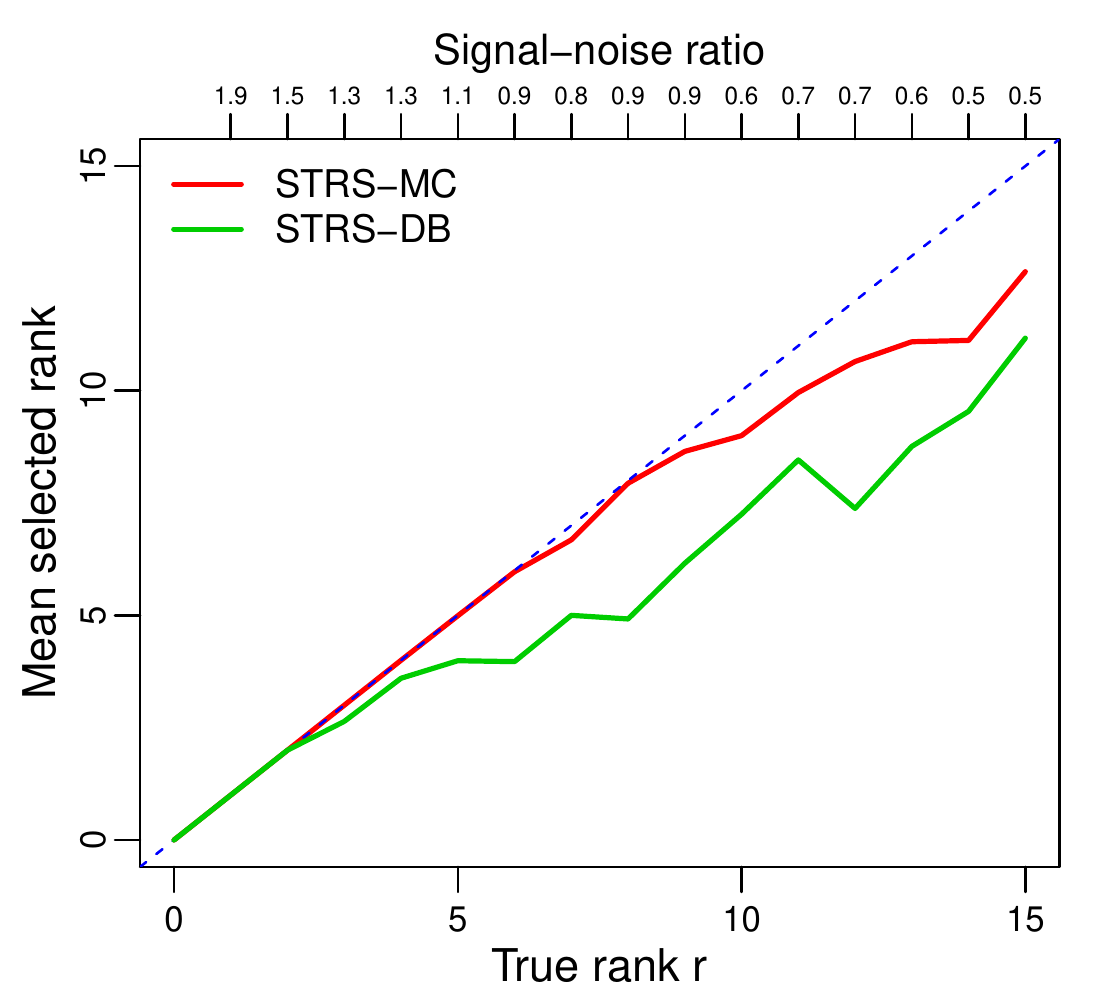}
	\end{tabular}
	\caption{ Comparison of STRS-MC and STRS-DB in the low-dimensional setting (top) and high-dimensional setting (bottom) for the uniform errors (left) and $t_6$ errors (right).}
	\label{fig_MC_DB}
	\vspace{-2mm}
\end{figure}

Both STRS-MC and STRS-DB work perfectly when the SNR is not small (say greater than $1.6$). STRS-DB seems to require a slightly larger SNR, as expected, since the deterministic bounds are upper bounds of those in STRS-MC leading to a larger updated $\lambda$.

More importantly, we emphasize that STRS-MC is using Monte Carlo simulations based on $N(0,1)$. It also supports our conjecture that STRS-MC works for  other distributions with heavier tails.%% without knowing the distribution itself.

\subsection{Simulation for KF methods}\label{sec_sim_KF}
\cite{Giraud} proposed to minimize
\begin{eqnarray}\label{drie}
\frac{ \| Y- (PY)_k \|^2 }{ nm-1 - C \left( \EE [ \| G \|_{(2,k)} ] \right)^2}
\end{eqnarray}
with $ \| G \|_{2,k}^2 = \sum_{i=1}^k d_i^2 (G)$ and some  tuning parameter $C>1$. It does not require to estimate $\sigma^2$ but still needs the selection of leading constant $C$. \cite{Giraud} recommended to use $C=2$ based on \cite{BM}. We use KF to denote this method. In particular, we define KF-2 for choosing leading constant $C = 2$ and KF-CV for choosing $C$ via cross-validation. The simulation setting is slightly different from the low dimensional one in Experiment 1 and aims to give an example of an easy situation where KF fails to select the correct rank. Here we have $n=300$, $m=40$ and $p=q=35$. For illustration, we only vary the true rank from $10$ to $35$. We set $\eta=0.1$ and $b_0=20$. Note $b_0=20$ ensures a large SNR which should be the most ideal case for rank recovery. We compare the performance of KF-2, KF-CV and STRS. The plot of rank recovery and mean selected rank versus the true rank for different methods are shown in Figure \ref{KFplot}. From the result, it is clear that neither KF-2 nor KF-CV consistently recovers the true rank while STRS does. But since KF was not developed for rank recovery, nor did it claim to have this property, this does not contradict the results in \cite{Giraud}.

\begin{figure}[ht]
	\centering
	\vspace{-3mm}
	\begin{tabular}{cc}
		\centering
		\includegraphics[width=0.42\linewidth]{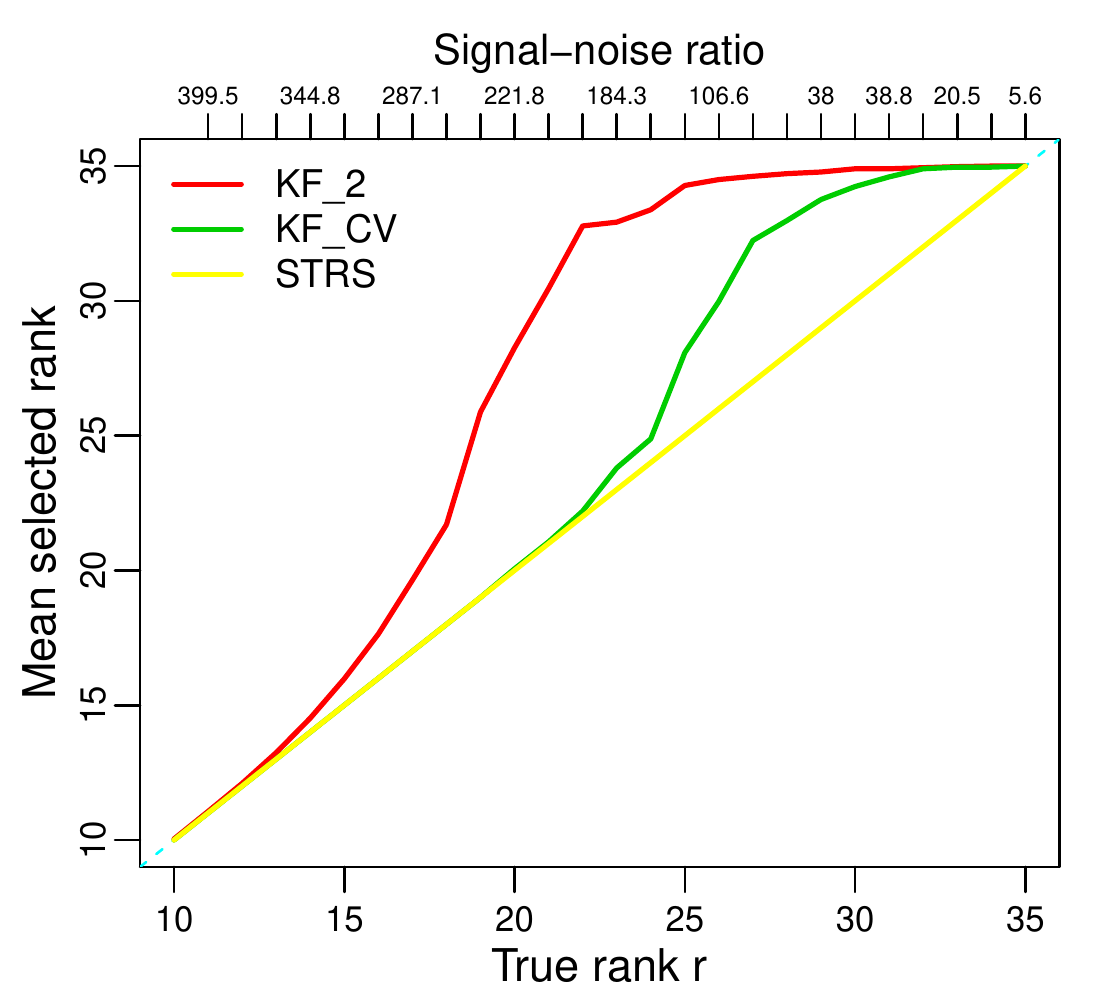}&
		\includegraphics[width=0.42\linewidth]{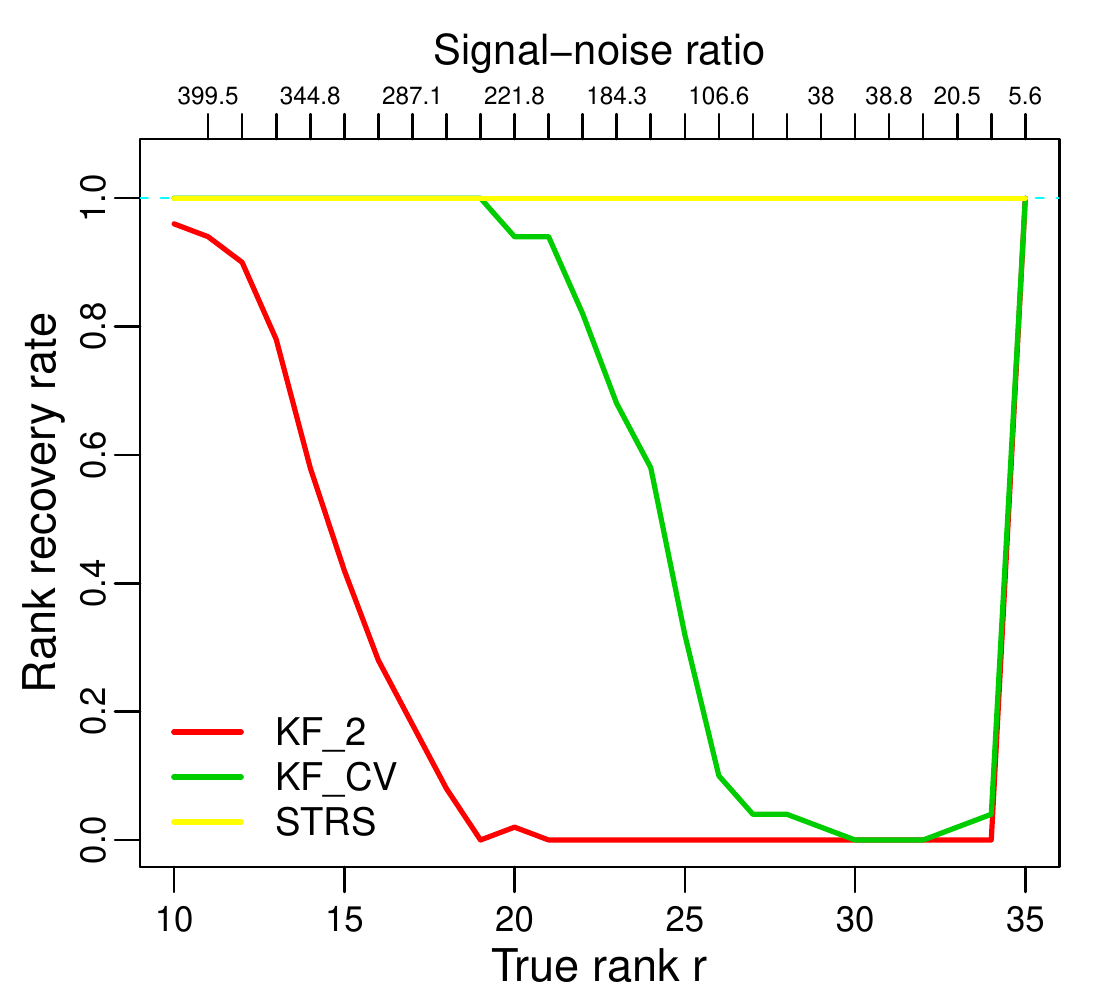}		
	\end{tabular}
	\caption{Plot of rank recovery rate and mean selected rank for KF and STRS. }
	\label{KFplot}
	\vspace{-3mm}
\end{figure}

\subsection{Simulations to  compare   the error  $\|X\wh A-XA\|$} % and $\|\wh A-A\|$}
\label{fitpred}

In this section, we compare three methods, STRS-MC, BSW-1.3 and KF-2, based on two criterions: (1) the fit $\|X\wh A-XA\|/\sqrt{nm}$; (2) the selected rank.

In the low-dimensional setting, when $X^TX$ is invertible, we also compute the prediction error $\|\wh A-A\|/\sqrt{pm}$. We consider both situations when the model is correctly specified and when the model approximately holds. 

\subsubsection{Exact low rank model}\label{elr}
We first consider the scenario when $A$   has an exact low rank structure, in both low- and high-dimensional settings. In the low-dimensional setting, we choose $\rho = 0.1$, $n = 200$, $m = p = q = 50$, $r = 10$ and $b_0\in\{0.02, 0.022, 0.024, \ldots, 0.046\}$. In the high-dimensional setting, we specify $\rho = 0.1$, $n = 150$, $p = 300$, $m = q = 50$, $r = 10$ and $b_0\in \{0.0015, 0.0016, \ldots, 0.003\}$. Different grids of $b_0$ are chosen to maintain  similar signal-to-noise ratio. 
The random additive errors in both settings are generated from $N(0,1)$. Within each setting, we repeat $100$ times. The averaged results are reported in Figure \ref{fig_correct} demonstrating how  criterions (1) and (2) of the three methods vary with $b_0$ in both low- and high-dimensional settings. 
STRS-MC dominates the other two methods as it produces a smaller error and   it always selects a rank closer to the true rank than the other two methods.
\begin{figure}[H]
	\centering
	\vspace{-3mm}
	\begin{tabular}{ccc}
		\includegraphics[width =.4\textwidth]{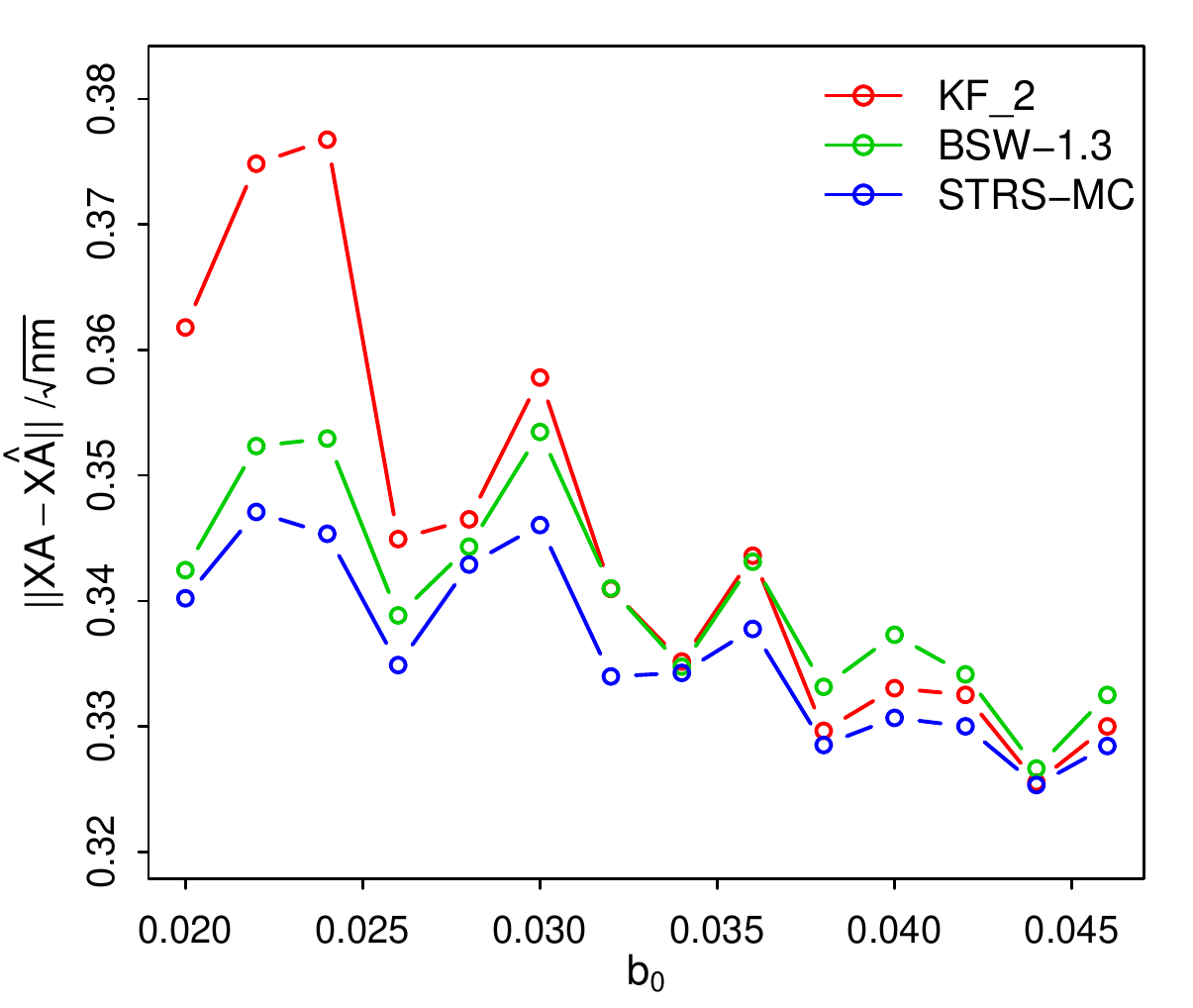} &
		\includegraphics[width =.4\textwidth]{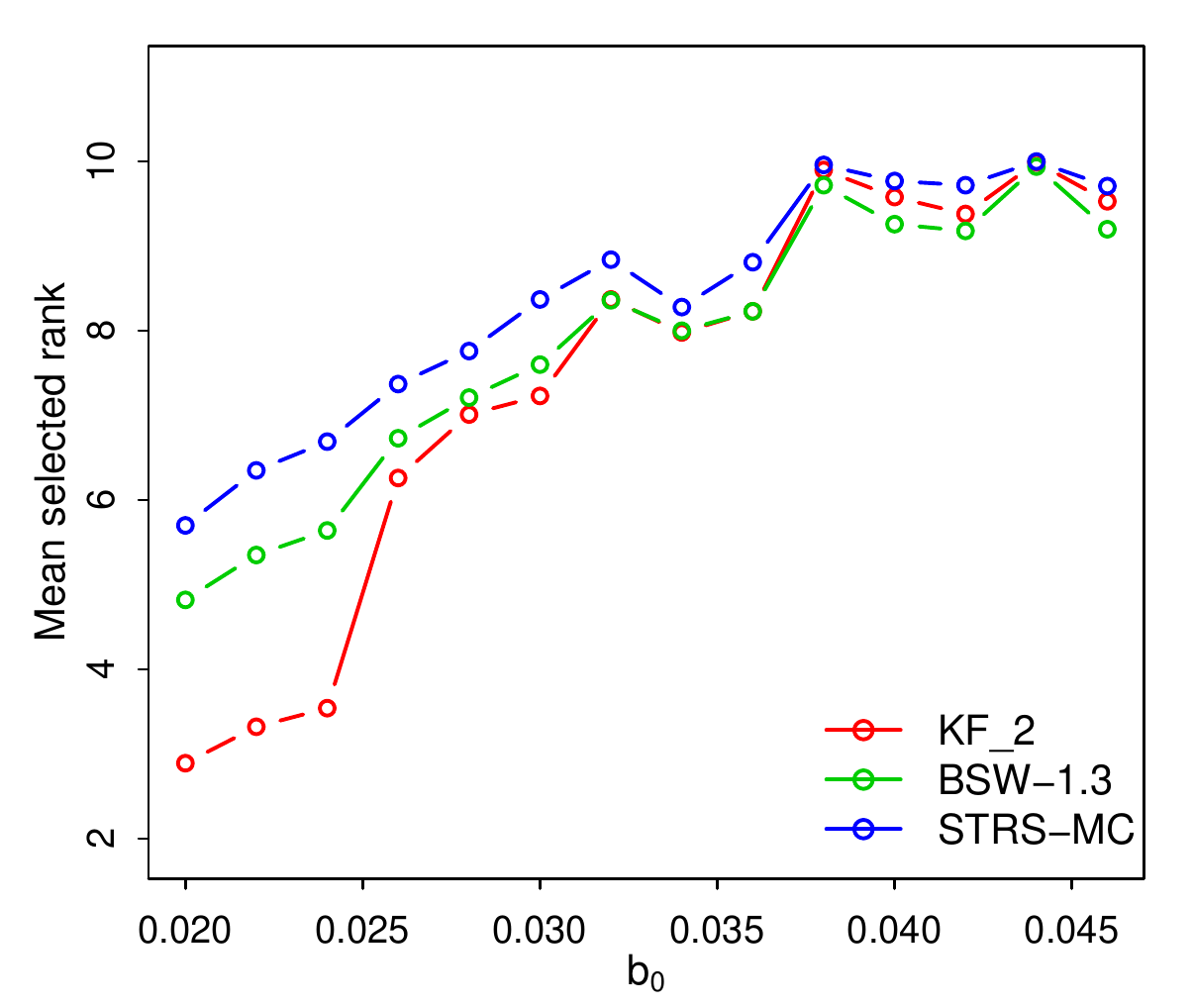}
		\vspace{-2mm}
	\end{tabular}
	\begin{tabular}{cc}
		\includegraphics[width =.4\textwidth]{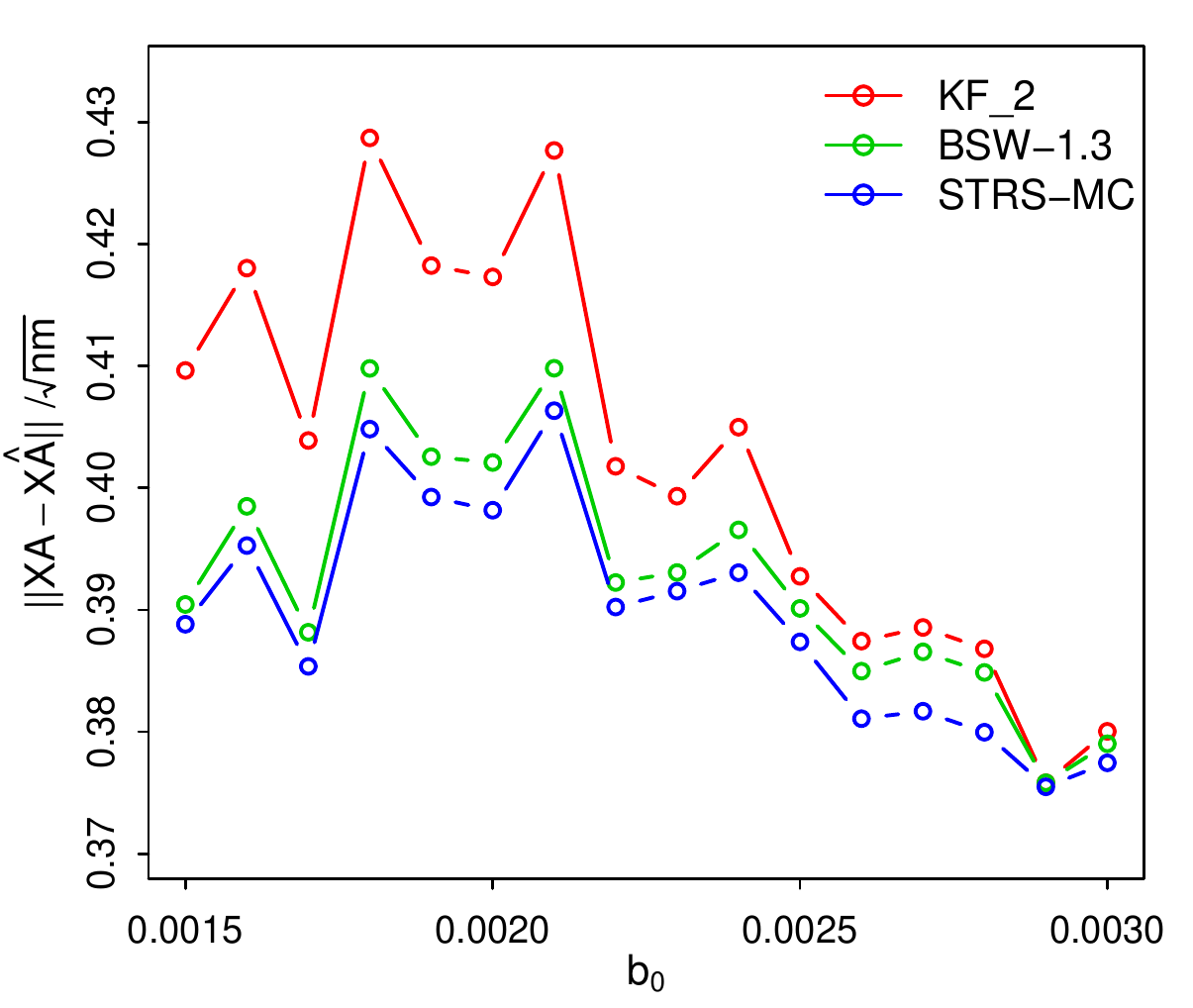}&
		\includegraphics[width =.4\textwidth]{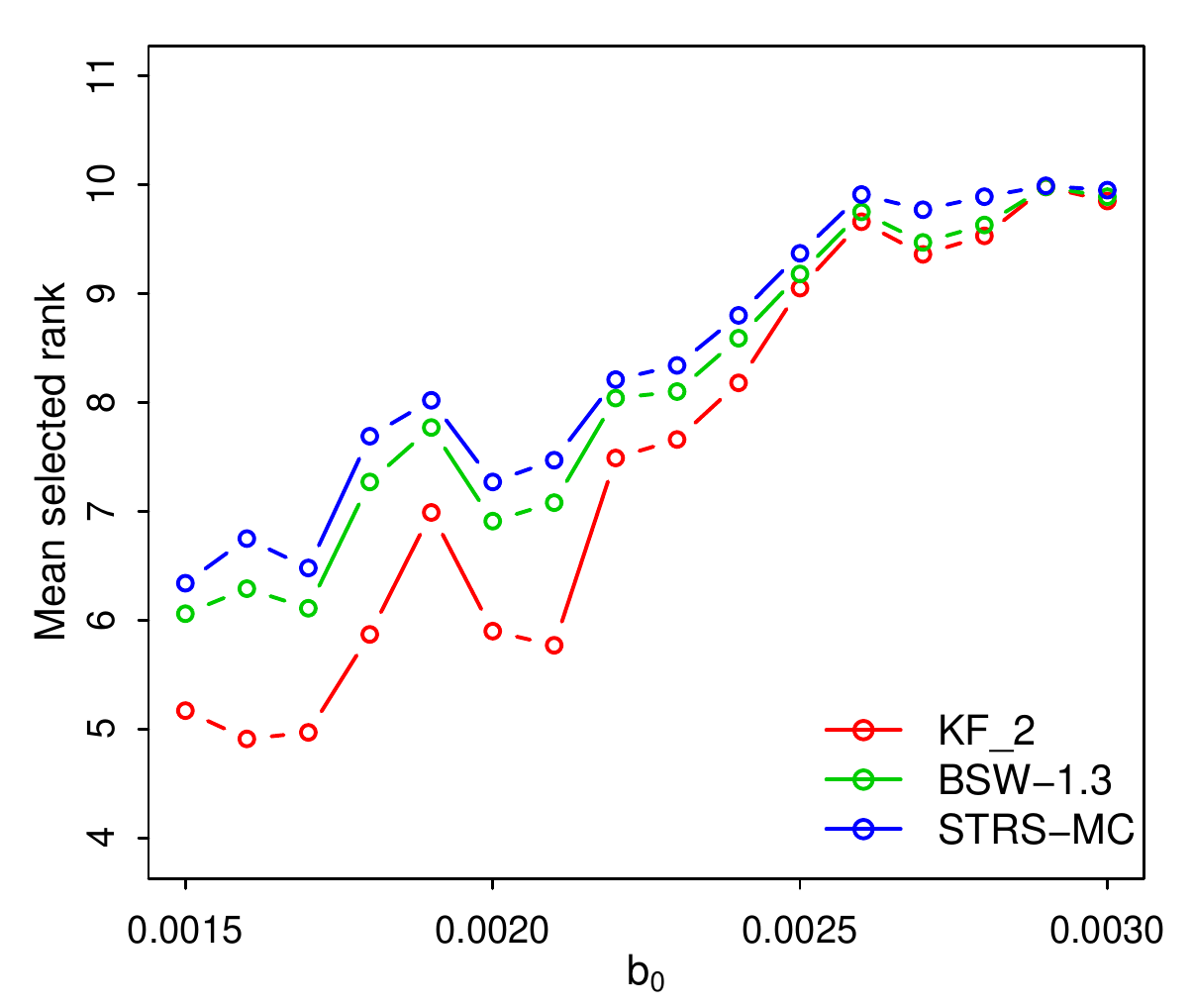} 
	\end{tabular}
	\vspace{-1mm}
	\caption{Criterions (1) and (2) of STRS-MC, KF-2 and BSW-1.3 as $b_0$ varies in the low-dimensional setting (first row) and  high-dimensional setting (second row).}
	\label{fig_correct}
	\vspace{-2mm}
\end{figure}

\subsubsection{Approximate low rank model}\label{apr}

We devote this part to testing the performance of different approaches when the model is mis-specified in that  $A$ doesn't have an exact low rank structure, but rather  has a small effective rank with non-zero decaying singular values. 
%%Thus, $XA$ or $A$ could have full rank. Since $r(A) = r(XA)$ and the signal condition in (\ref{RS2}) treats $XA$ together, we focus on when $XA$ has the approximate low rank.
To generate this type of $A$, we first generate an exact low rank matrix $A^*$ with specified rank $r$ based on our data generating mechanism described in Section \ref{SimSetup}. 
Next, we compute its singular value decomposition $A^* = UDV^T$ with $D = \textrm{diag}(d_1, \ldots, d_r, 0, \ldots, 0)$ and add polynomial decaying noise   to the (zero valued) singular values $d_j$ for $j\ge r$. Specifically, we take $d_j = d_r\cdot \gamma(j-r+1)^{-\beta}$ for $j\ge r+1$, with $\gamma\in (0,1)$ and some positive integer $\beta$. The matrix $A = U\wt DV^T$ is our approximate low rank matrix with $\wt D = \textrm{diag}(d_1,\ldots, d_r, d_{r+1},d_{r+2},\ldots)$. Since the results are similar for different choices of $\gamma$ and $\beta$, we only present the results of $\gamma = 0.8$ and $\beta=  1$. The rest of our simulation setup stays the same as  in the  exact low rank model case above. Criterions (1) and (2) for both low- and high-dimensional settings are shown in Figure \ref{fig_misspec}. 
We find the same conclusion as before when the model is mis-specified:  STRS-MC continues to  outperform the other two methods by yielding the smallest $\|X\wh A-XA\|$ and selecting a rank closer to the effective rank (of 10).

\begin{figure}[H]
	\centering
	\vspace{-2mm}
	\begin{tabular}{ccc}
		\includegraphics[width =.4\textwidth]{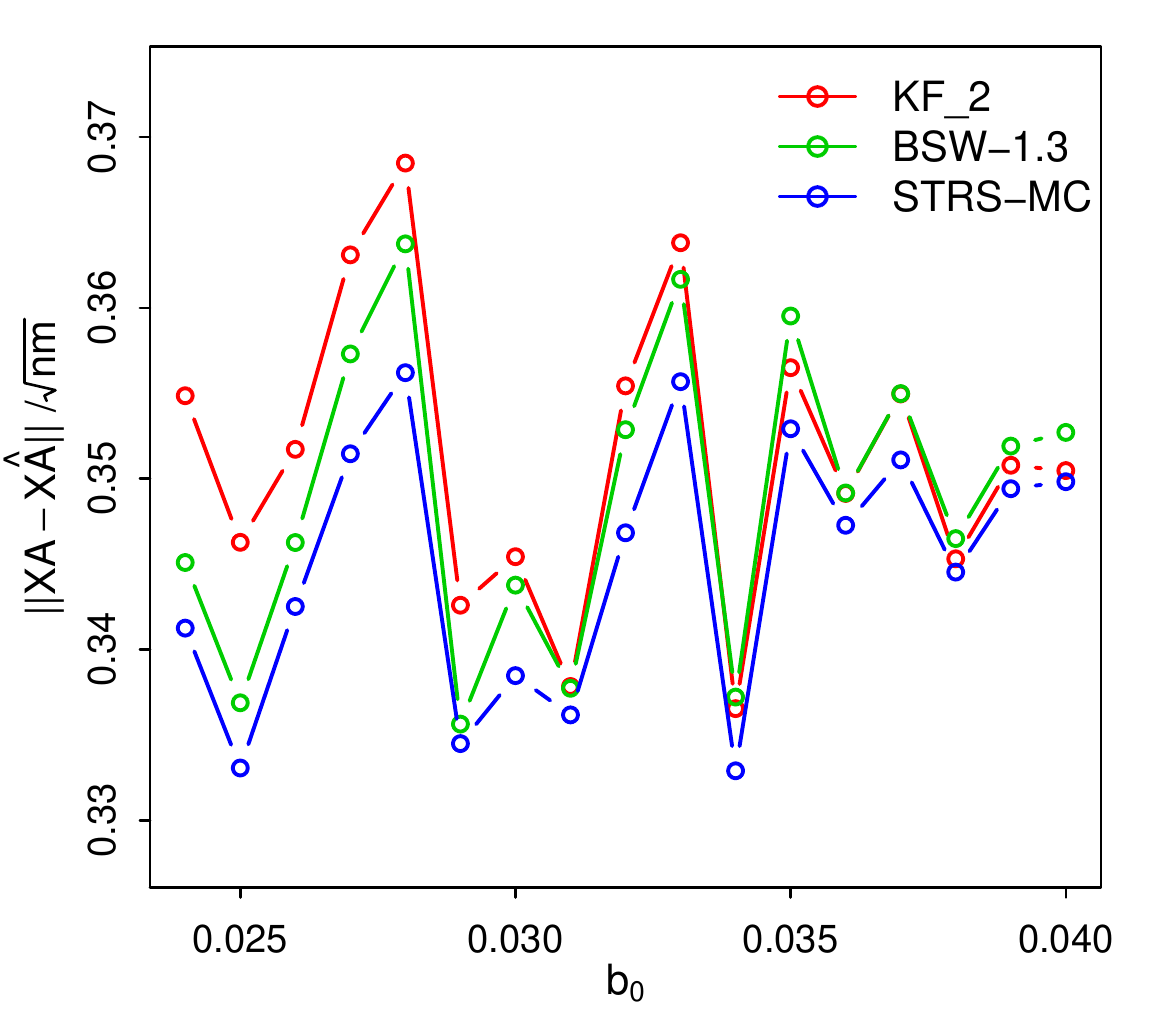} &
		\includegraphics[width =.4\textwidth]{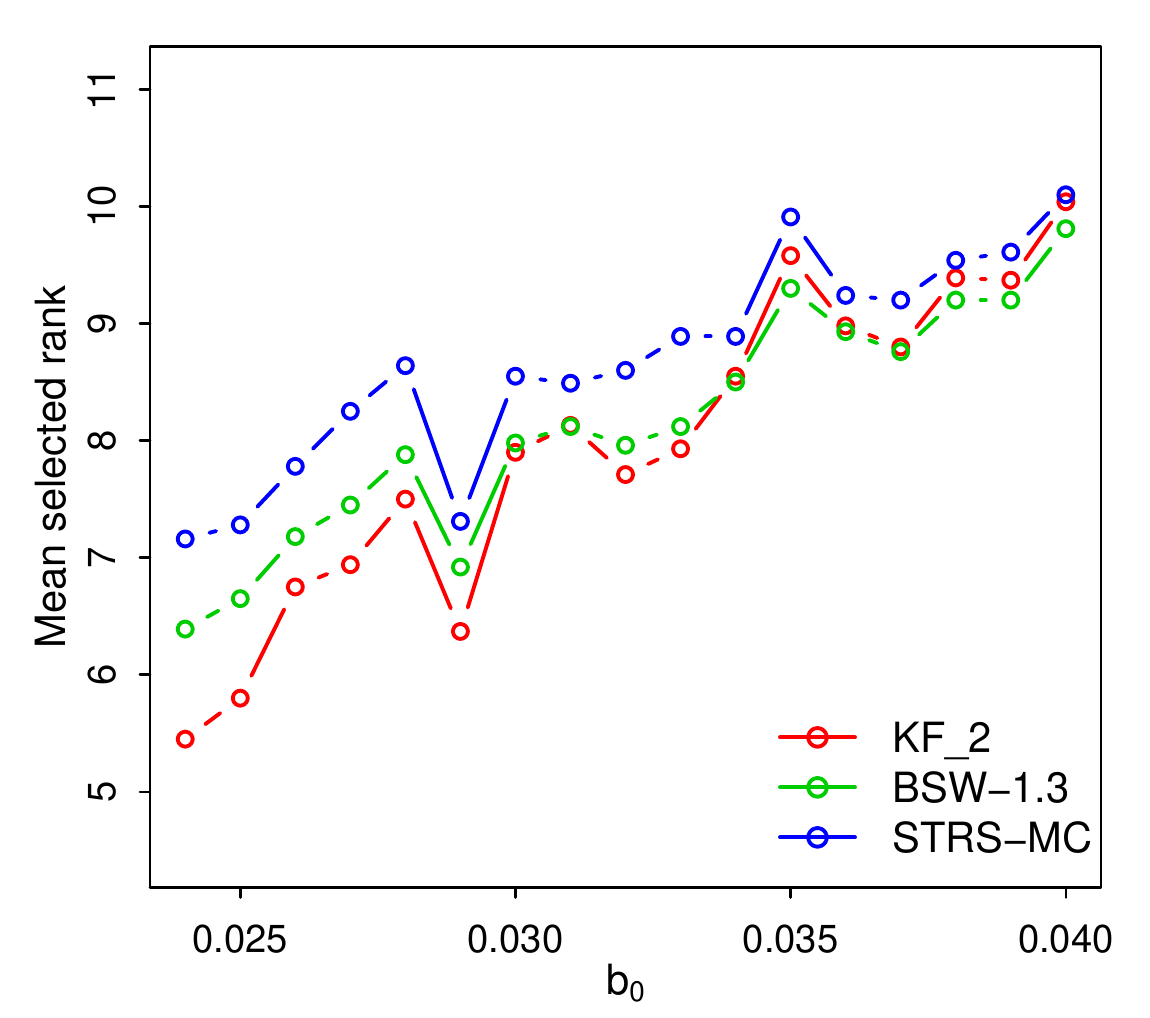}
		\vspace{-2mm}
	\end{tabular}
	\begin{tabular}{cc}
		\includegraphics[width =.4\textwidth]{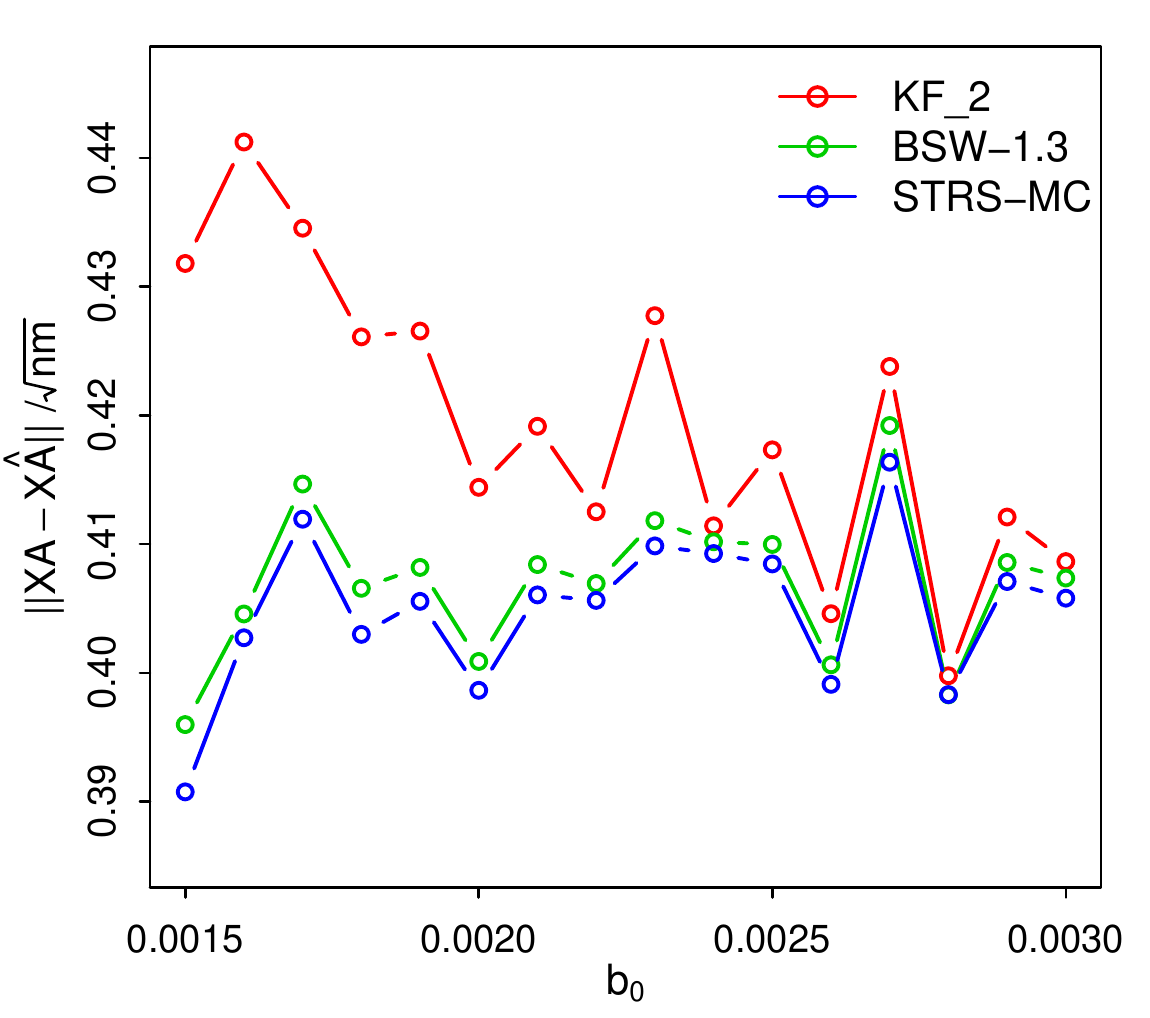}&
		\includegraphics[width =.4\textwidth]{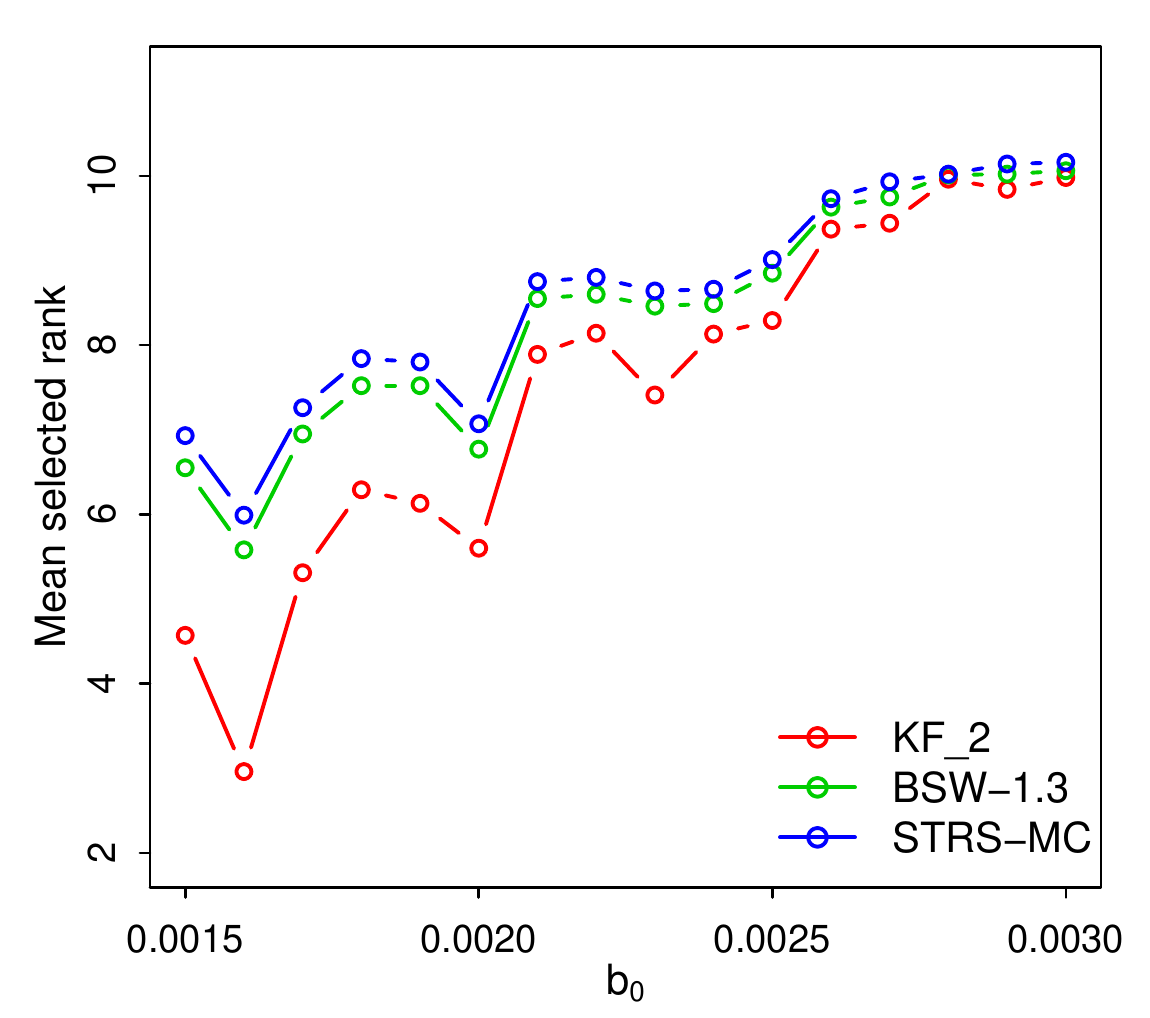} 
	\end{tabular}
	\vspace{-1mm}
	\caption{Criterions (1) and (2) of STRS-MC, KF-2 and BSW-1.3 as $b_0$ varies in  the low-dimensional setting (top) and the  high-dimensional setting (bottom).}
	\label{fig_misspec}
\end{figure}

\subsubsection{The prediction error $\|\wh A-A\|$}
We compute the prediction error $\|\wh A-A\|/\sqrt{pm}$ in the low-dimensional setting for both the exact low rank model and the approximate low rank model. Figure \ref{fig_pred} shows that STRS-MC dominates the other two methods, demonstrating the importance of selecting a better rank. 

\begin{figure}[H]
	\centering
	\begin{tabular}{cc}
		\includegraphics[width =.42\textwidth, height = 0.243\textheight]{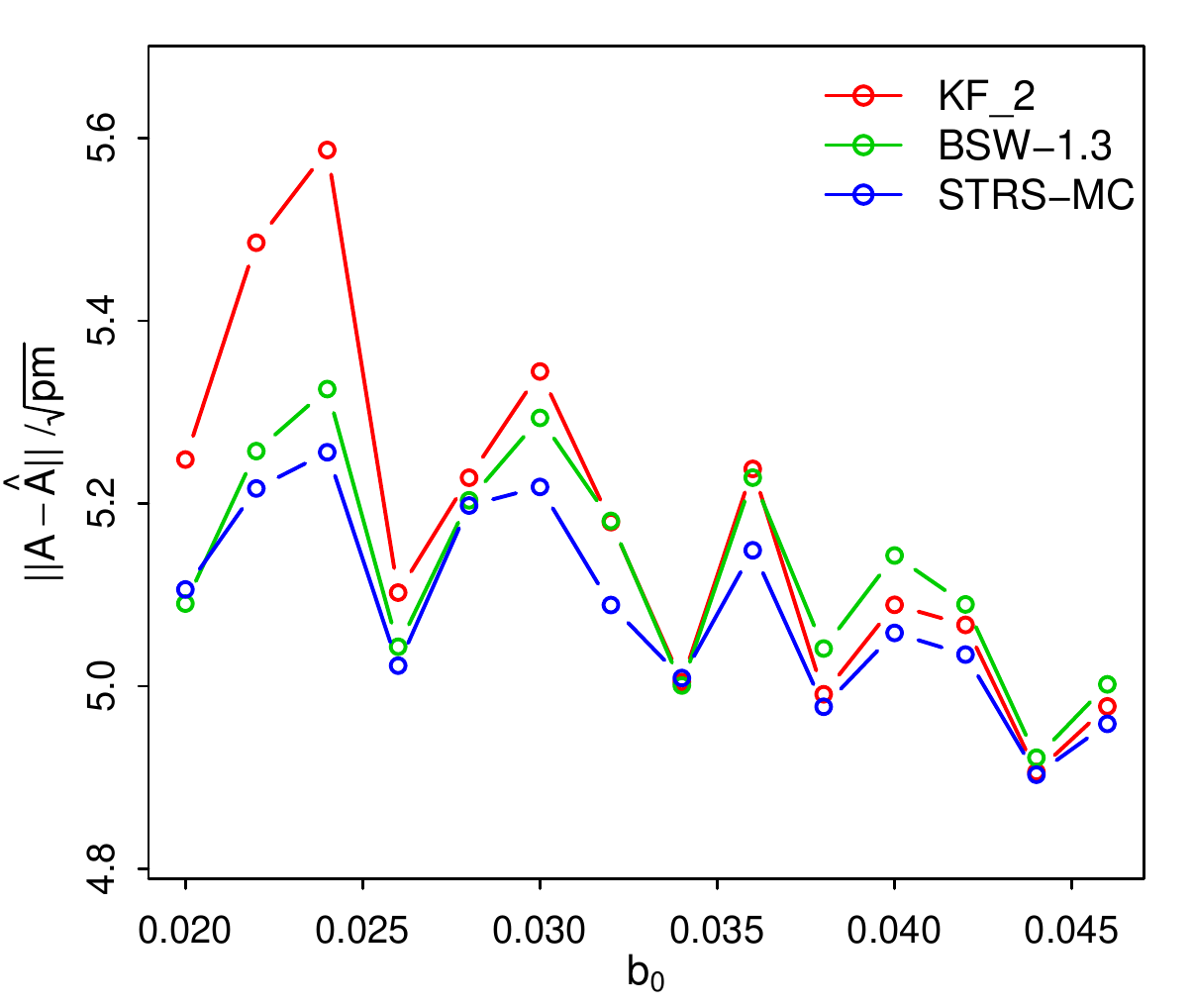}&  
		\includegraphics[width =.42\textwidth]{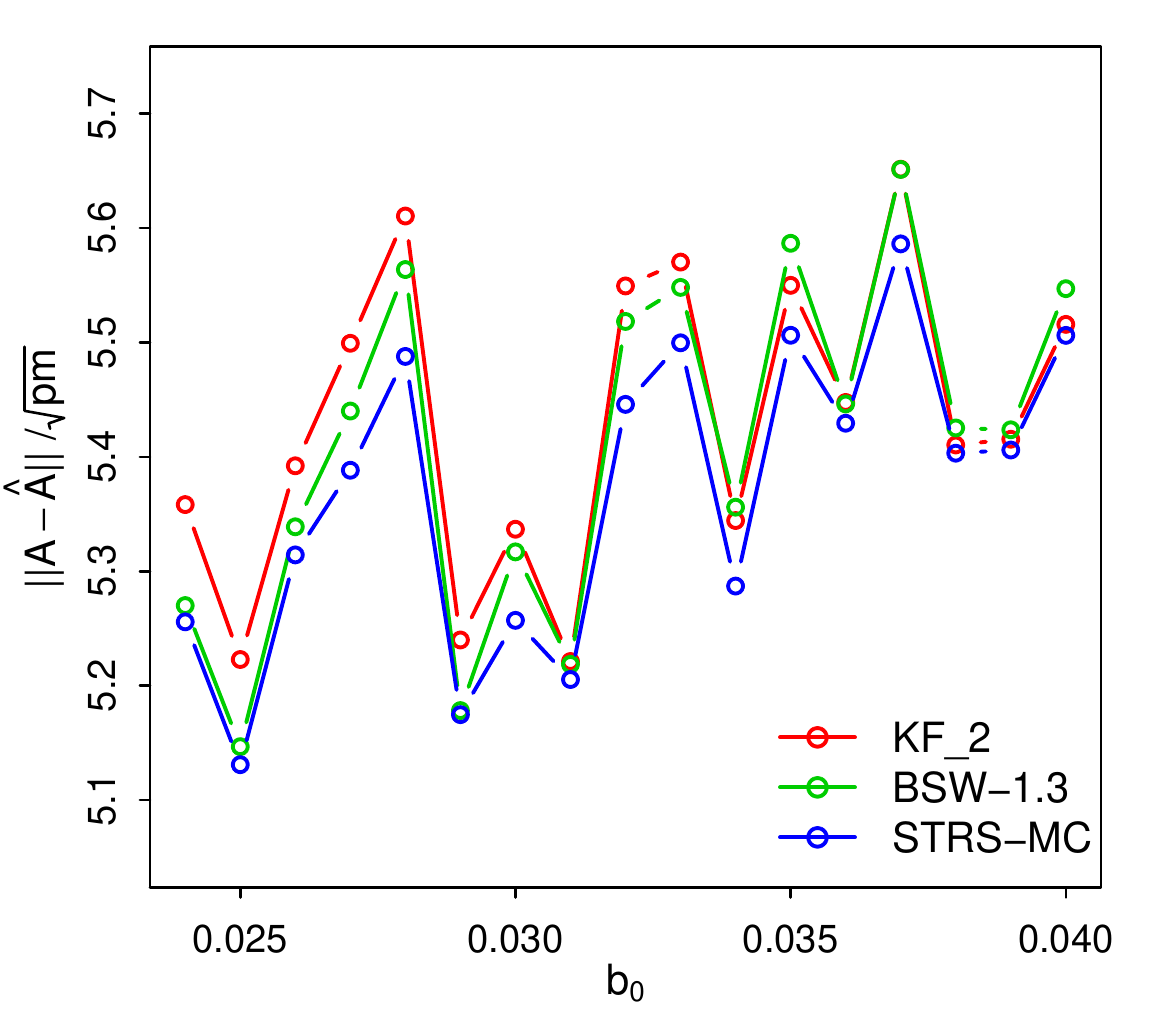}
	\end{tabular}
	\caption{Plots of $\|\wh A-A\|$ in the exact low-rank model (left) and  in the approximate low-rank model (right).}
	\label{fig_pred}
\end{figure}
	
\end{document}